\definecolor{light-blue}{rgb}{0.8,0.85,1}
\definecolor{light-red}{rgb}{1,.4,.4}
\definecolor{purp}{rgb}{.7,.3,1}
\definecolor{yel}{rgb}{1,1,.5}
\definecolor{cy}{rgb}{0,1,1}
\newtheorem{theorem}{Theorem}[section]
\newtheorem{corollary}[theorem]{Corollary}
\newtheorem{physics}[theorem]{Physical Assertion}
\newtheorem{lemma}[theorem]{Lemma}
\newtheorem{proposition}[theorem]{Proposition}
\theoremstyle{definition}
\newtheorem{remark}[theorem]{Remark}
\numberwithin{figure}{section}
\numberwithin{equation}{section}
\newcommand{\co}{\colon\,}
\newcommand{\bT}{\mathbb T}
\newcommand{\bR}{\mathbb R}
\newcommand{\bC}{\mathbb C}
\newcommand{\bZ}{\mathbb Z}
\newcommand{\bP}{\mathbb P}
\newcommand{\SO}{\mathop{\rm SO}}
\newcommand{\NS}{\operatorname{NS}}
\newcommand{\Kum}{\operatorname{Kum}}
\renewcommand\Re{\operatorname{Re}}
\newcommand{\beq}{\begin{small} \begin{equation}}
\newcommand{\eeq}{\end{equation} \end{small}}
\newcommand{\beqn}{\begin{small} \begin{equation*}}
\newcommand{\eeqn}{\end{equation*} \end{small}}
\newcommand{\sign}{\operatorname{sign}}
\newcommand\scalemath[2]{\scalebox{#1}{\mbox{\ensuremath{\displaystyle #2}}}}
\title[K3 orientifolds]{Orientifolds for F-theory on K3 surfaces}
\author[C. Doran]{Charles Doran}
\address{Department of Mathematical and Statistical Sciences\\
University of Alberta\\
Edmonton, AB T6G 2G1, Canada\\
and Mathematics Program\\
  Bard College\\
  P.O. Box 5000\\
  Annandale, NY 12504-5000, USA}
\email[Charles Doran]{cdoran@bard.edu}
\author[A. Malmendier]{Andreas Malmendier}
\address{Mathematics and Statistics Department\\
  Utah State University\\
  Logan, UT 84322-3900, USA}
\email[Andreas Malmendier]{andreas.malmendier@usu.edu}
\thanks{AM partially supported by Simons Collaboration Grant 635124.}
\author[S. M\'endez-Diez]{Stefan M\'endez-Diez}
\address{Mathematics Program\\
  Bard College\\
  P.O. Box 5000\\
  Annandale, NY 12504-5000, USA}
\email[Stefan M\'endez-Diez]{smendezdiez@bard.edu }
\author[J. Rosenberg]{Jonathan Rosenberg}
\address{Department of Mathematics\\
University of Maryland\\
College Park, MD 20742-4015, USA} 
\email[Jonathan Rosenberg]{jmr@umd.edu}
\begin{document}

\begin{abstract}
We study F-theory orientifolds, starting with products of
two elliptic curves, but focusing mostly on a family of K3 surfaces, lattice polarized by the rank-17 lattice $\langle 8\rangle \oplus 2 D_8(-1)$, generalizing the family (to which it degenerates)
of Kummer surfaces of products of two non-isogenous elliptic curves.
After a thorough study of the complex geometry of this family
and its elliptic fibrations, we proceed to study real structures
on the K3 surfaces in the family which are equivariant with
respect to an elliptic fibration. We also study the physics
of the associated F-theory orientifolds with a particular focus on the impact of the real structure on the charge spectrum.  We also study how
these orientifolds degenerate to the case of isotrivial
Kummer surface fibrations.
\end{abstract}
\keywords{orientifold, $T$-duality,
  K3 surface, real structure, algebraic $K$-theory, $KR$-theory}
\subjclass[2010]{81T30 14J28 14J33 19L50 19E08}

\maketitle

\section{Introduction}

Compactification of F-theory on an elliptically
fibered K3 surface provides a framework to encode type IIB string theory 
on elliptic curves, with the K\"{a}hler modulus of the elliptic curve encoded 
in the complex structure of the elliptic fibers. This paper extends that perspective 
by examining F-theory orientifolds on elliptically fibered K3 surfaces and connecting 
them to $D$-brane classifications using real $K$-theory ($KR$-theory).

In Section 2, we review the role of $KR$-theory in describing D-brane charges in type IIA and type IIB orientifold theories.  This is summarized in three “Physical Assertions” (2.1, 2.2, and 2.3): Firstly, the $D$-brane charges are classified by appropriately twisted Real $K$-theory in the sense of Atiyah.  Secondly, Bott periodicity of the $KR$-classification of $D$-branes corresponds to the physical degrees of freedom in choosing the relative sign of left and right movers.  Thirdly, in the presence of $O$-planes, no coordinate choice is needed to determine the index of $KR^{-i}$ that classifies the $D$-brane charges.

Next, in Section 3, we begin connecting this with geometry by analyzing involutions on products of elliptic curves, $E_1\times E_2$, compatible with projections onto the two factors.  These involutions are essential for distinguishing between type IIA and type IIB orientifold theories, which differ by whether their Ramond sector ground states have opposite or identical chiralities. This distinction plays a central role in classifying $D$-brane spectra and analyzing the corresponding fixed-point loci in orientifold theories.  We focus on the case of trivial $B$-field in all $T$-dual theories to illustrate the $D$-brane charge classification for both type I and type $\tilde{\rm I}$ theories and their $T$-duals.

Our analysis then moves, in Section 4, from products of elliptic curves to their associated Kummer surfaces, constructed by quotienting $E_1\times E_2$ by an involution and resolving the resulting singularities. Kummer surfaces serve as a bridge between the product space $E_1\times E_2$ and more general elliptically fibered K3 surfaces.  An important role is played by the isotrivial fibrations --- fibrations where all elliptic fibers are isomorphic.

Motivated by the physical assertion (4.1) that says that the type I string on $T^4$ is $T$-dual to the type IIB orientifold compactification on the quotient $T^4/{\mathbb Z}_2$, and viewing this, post-resolution, as a Kummer surface of a product of non-isogenous elliptic curves $E_1\times E_2$, we study the duality of involutions on Kum($E_1\times E_2$).  This includes a precise description of the contributions to $KO^\bullet$ and $KR^\bullet$ due to singular Kodaira-type $I_0^*$ fibers in the isotrivial fibrations on Kum($E_1\times E_2$).

The geometric heart of the paper is Section 5, where we introduce a natural extension of the family of K3 surfaces Kum($E_1\times E_2$), replacing the product of elliptic curves by abelian surfaces with a $(2,4)$-polarization.  Our analysis of the geometry and involution structure is complete in all details.  In Section 5.1.1, we establish a normal form with monomial deformation coefficients for this family of generically Picard rank 17 K3 surfaces $X$ with N\'{e}ron-Severi group $\langle 8\rangle \oplus 2 D_8(-1)$, singular fibration strata, and special properties of a distinguished elliptic fibration.  In Sections 5.2-5.4, we make precise the connection between moduli of $X$ and abelian surfaces with a (2,4)-polarization, including both the construction via even eights and a modular parametrization (in the sense of Clingher-Doran \cite{MR2369941}) via genus 2 theta constants that expresses the monomial deformation coefficients in terms of modular forms.  In Section 5.5, we determine the action of the three commuting anti-symplectic involutions on classes of curves in $NS(X)$ and the corresponding invariant sublattices.

Finally, in Section 6, we use the real structures — antiholomorphic involutions — on our K3 surfaces to connect the geometry with the physics, providing a natural setting for understanding the interplay between elliptic fibration structures and D-brane classifications in F-theory. We construct Real normal forms with their associated antiholomorphic involutions in Section 6.1.  In Section 6.2, we use this to make explicit the 2-torsion Brauer twist that relates our normal forms to the Jacobian (Weierstrass normal form) elliptic fibration, including the realization of a representative for the twisting class as an Azumaya algebra.  This all connects back to the physics by considering three families of real K3 surfaces whose string limits give the three different type IIB theories on ${\mathbb P}^1$ with four $I_0^*$ fibers. 

We conclude the paper with a number of open questions and directions for further research.

\section{Type IIA vs IIB Orientifolds}
\label{sec:IIAB}

In this section we will look at some of the defining features of type IIA and IIB orientifolds. In \cite{MR3267662,MR3316647} the two different orientifold theories were distinguished by whether the involution on the compactification space was holomorphic or antiholomorphic. It turns out that this was just a coincidence of the fact that the compactification manifold had complex dimension $1$. We will explore in more detail the differences between the two orientifold theories and see that this has a large impact on classifying their $D$-brane spectra. 

The distinguishing feature between type IIB and IIA theories is that
the left-moving and right-moving ground states in the Ramond sector of
IIB theories have the same chirality. For IIA theories, on the other
hand, the chiralities of the left-moving and right-moving ground states in
the Ramond sector should be opposite. If we start with a IIB theory we can
write the fermionic coordinates in the Ramond sector as
\begin{equation}
	\psi^\mu=\psi_L^\mu+\psi_R^\mu,
\end{equation}
where
\begin{equation}
	\psi_R^\mu=\sum_{n\in\bZ}d_n^\mu e^{-ni(\tau-\sigma)}\text{ and }\psi_L^\mu=\sum_{n\in\bZ}\tilde{d}_n^\mu e^{-ni(\tau+\sigma)},
\end{equation}
where $\tau$ and $\sigma$ are coordinates on the string
worldsheet $\Sigma$.
The Fourier coefficients satisfy
\[
\{d_m^\mu,d_n^\nu\}=\eta^{\mu\nu}\delta_{m+n,0},
\]
and similarly for the left-moving coefficients.
When $m=n=0$ this is just the Dirac algebra, up to a factor of $2$, and we see that
$$\Gamma^\mu=\sqrt{2}d_0^\mu$$
where $\Gamma^\mu$ are the $10$-dimensional Dirac matrices. The chirality operator is
\begin{equation}
	\Gamma_{11}=\Gamma_0\Gamma_1\cdots\Gamma_9.
	\label{eq:chiralop}
\end{equation}
If we call the left-moving ground state in the Ramond sector $|0\rangle_{L,R}$ and the right-moving Ramond ground state $|0\rangle_{R,R}$, then since this a IIB theory we can assume
\begin{equation}
	\Gamma_{11}|0\rangle_{R,R}=|0\rangle_{R,R}\text{  and  }\tilde{\Gamma}_{11}|0\rangle_{L,R}=|0\rangle_{L,R}.
\end{equation}
Since only the relative chirality matters, we can assume they both have chirality $1$. Here we have described the left- and right-moving ground states as Majorana-Weyl spinors, as is always possible in $D=10$ dimensions.

Now we can define a different type II string theory by changing $\psi_R^k\mapsto -\psi_R^k$ for some fixed $k\in\{0,1,\ldots,9\}$. Supersymmetry then requires that the bosonic coordinate transforms as $x_R^k\mapsto -x_R^k$ where $x^\mu=x_L^\mu+x_R^\mu$ are the bosonic coordinates in the original type IIB theory. Therefore the new type II string theory has coordinates
\begin{align}
	\psi^\mu &=\psi_L^\mu+\psi_R^\mu,  \;\;\mu\neq k,\label{eq:symF}\\
	\psi^k &=\psi_L^k-\psi_R^k, \label{eq:antisymF}\\
	x^\mu &=x_L^\mu+x_R^\mu, \;\;\mu\neq k,\label{eq:sym}\\
	x^k &=x_L^k-x_R^k.\label{eq:antisym} 
\end{align}
Note that changing $\psi_R^k\mapsto -\psi_R^k$ transforms $d_0^k\mapsto -d_0^k$ while leaving $d_0^\mu$ unchanged for all $\mu\neq k$. Therefore, $\Gamma_{11}\mapsto -\Gamma_{11}$, while the left-moving parity operator remains unchanged. This reverses the parity of the right-moving Ramond ground state and leaves the parity of the left-moving ground state alone. This means that this type II string theory is a type IIA theory. 

If we perform the transformations $x_R^\mu\mapsto -x_R^\mu$ and $\psi_R^\mu\mapsto -\psi_R^\mu$ for $\mu=k_1$ and $k_2$ in two distinct directions $k_1,k_2$, instead of a single direction then both $d_0^{k_1}$ and $d_0^{k_2}$ change sign. This leaves the parity operator $\Gamma_{11}$ invariant. Therefore the left and right moving Ramond ground states both have chirality $1$ and this must be a type IIB theory. More generally, if we send $\psi_R\mapsto -\psi_R$ for an odd number of dimensions then it defines a type IIA theory and if we do it for an even number of dimensions it defines a type IIB theory. Said another way, for a type IIA theory an odd number, $i$, of the bosonic coordinates must split as $x=x_L-x_R$ while the remaining $10-i$ (also odd) dimensions must split as $x_L+x_R$. For a type IIB theory an even number, $i$, of the bosonic coordinates must split as $x=x_L-x_R$ while the remaining $10-i$ (also even) dimensions must split as $x_L+x_R$.

As we will see now, from the perspective of the string sigma model this corresponds to the well known fact that there are only odd dimensional $D$-branes in the type IIB theory and even dimensional $D$-branes in the type IIA theory. It will however have more importance when we consider type II orientifolds. Let $X^\mu(\tau,\sigma)$ be the embedding of the string worldsheet into the spacetime manifold for an open string. If the string endpoints satisfy Neumann boundary conditions in direction $k$ then
\begin{equation}
\label{eq:NBC}
	X^k(\tau,\sigma)=X_L^k(\tau,\sigma)+X_R^k(\tau,\sigma),
\end{equation} 
but if they satisfy Dirichlet boundary conditions in direction $k$ then
\begin{equation}
\label{eq:DBC}
	X^k(\tau,\sigma)=X_L^k(\tau,\sigma)-X_R^k(\tau,\sigma).
\end{equation} 
If a string endpoint satisfies Dirichlet boundary conditions in $i$ directions then it is fixed in those $i$ directions and can move freely in the other $10-i$ directions that satisfy Neumann conditions. This corresponds to the string endpoint being attached to a $D(9-i)$-brane. This shows that the fact that $i$ must be even for IIB theories and odd for IIA theories is equivalent to the well known fact that the IIB theory contains odd dimensional $D$-branes while the IIA theory contains even dimensional $D$-branes. The story however becomes more interesting when considering orientifolds.

For an orientifold, in addition to the spacetime manifold $X$, we must also consider an involution $\iota\cdot\Omega$ where $\Omega$ is the worldsheet parity operator and $\iota$ is an involution of $X$. Now the worldsheet embeddings, $X^\mu\co \Sigma\to X$, are required to be equivariant so that $\iota\circ X^\mu=X^\mu\circ\Omega$. Here $\Sigma$ is the string worldsheet (an oriented $2$-manifold). The worldsheet parity operator, $\Omega$, interchanges left and right movers. Therefore, if $X^\mu$ satisfies equation \ref{eq:NBC} then $\Omega$ sends $X^\mu$ to itself, but if $X^\mu$ satisfies equation \ref{eq:DBC} then $\Omega$ sends it to $-X^\mu$. This means that $\iota$ must reflect every direction that satisfies equation \ref{eq:DBC}. When we combine this with the fact that type IIA theories must have an odd number of dimensions satisfying equation \ref{eq:DBC} and type IIB theories must have an even number of dimensions satisfying equation \ref{eq:DBC}, we see that type IIA orientifolds must have a spacetime involution that reflects an odd number of dimensions, while type IIB orientifolds require a spacetime involution that reflect an even number of dimensions.


\subsection{Classifying $\mathbf{D}$-branes}
\label{sec:DbraneClass}

$D$-branes are classified by $K$-theory \cite{MR1606278,Witten:1998}.
For the ordinary type IIA and IIB theories on a spacetime $X$, stable $D$-branes are classified by complex $K$-theory
$K^{-i}(X)$. For an orientifold theory $(X,\iota)$,
one needs to use Real $K$-theory in the sense of Atiyah
\cite{MR0206940}, $KR^{-i}(X,\iota)$, which takes
the spacetime involution $\iota$ on $X$ into account
(\cite[\S5.2]{Witten:1998} and \cite{MR3267662}). 
$D$-branes in orientifolds of the form $X/(\iota\cdot\Omega\cdot(-1)^{F_L})$ were shown in \cite[\S5.2]{Witten:1998} to be classified by $KR_\pm(X)$. 
We will use the convention that $\bR^{p,q}$ denotes
$\bR^p\oplus\bR^q$ with an involution that is $+1$ on the
first summand and $-1$ on the second summand.  Then
$S^{p,q}$ denotes the unit sphere in $\bR^{p,q}$, so that
for example $S^{2,0}$ denotes $S^1$ with trivial involution.
We have
\begin{align*}
	KR_\pm(X)&\cong KR(X\times\bR^{2,0})\\
	&\cong KR^{-2}(X).
\end{align*}
What's important is the shift of index by $2$ which is the same as the number of the dimensions reflected, or from the perspective of starting with coordinates compatible with the type I theory equivalent to the number of dimensions that need to be operated on by $(-1)^{F_L}$ to be compatible with $\iota\cdot\Omega$. This pattern can be seen with all of the cases discussed in \cite{MR3267662,MR3316647}. Putting this together with the above discussion we are led to the following.
\begin{physics}
\label{phys:index}
	$D$-brane charges are classified by $KR^{-i}$, appropriately twisted, where $i$ is given by the number of coordinates satisfying equation \ref{eq:antisym}, or equivalently for orientifolds, the number of dimensions that are reflected by the spacetime involution.
\end{physics}
For the ordinary type IIB and IIA theories this is just the statement that $D$-branes are classified by $K^{-i}$ where $i$ is even for the type IIB theory and odd for the IIA theory, but becomes much more interesting when looking at orientifolds. With orientifolds we still see that type IIB orientifolds will be classified by $KR^{-i}$ for $i$ even while type IIA orientifolds will have $i$ odd since IIA orientifolds must have an odd number of reflections to account for the left and right movers having opposite chirality. As is often the case when considering the full orientifold description and $KR$ theory, one of the first distinctions that stands out is the importance of the $8$-periodic Bott periodicity of the $KR$-theory.
\begin{physics}
	Bott periodicity of the $KR$-theory classification of $D$-branes corresponds to the $8$ physical degrees of freedom in choosing which coordinates satisfy equation \ref{eq:antisym}.
\end{physics}
This follows immediately from physical assertion \ref{phys:index} when combined with the fact that there are only $8$ physical degrees of freedom in which to choose the relative sign of the left and right movers. This latter point can most easily be seen in the light cone gauge where the freedom is in the $8$ transverse directions. 

As explained in \cite{MR3316647}, all of the twisting data can be completely determined by the spacetime and the involution. When we combine this with physical assertion \ref{phys:index} we now see that the full $KR$-theory classification can be completely determined by the spacetime and involution alone and does not require any additional input about the type of string theory being considered. There is something still unsatisfying about needing a specific coordinate choice to determine the index $i$, especially when other compactification manifolds might not split as nicely. Luckily, in the presence of $O$-planes,the
index can also be determined by their dimension.
\begin{physics}
\label{phys:Opindex}
	If an orientifold theory contains $O$-planes of dimension $p$ then $D$-branes are classified by the appropriately twisted version of $KR^{-i}$ where $i=9-p$.
\end{physics}
This follows from physical assertion \ref{phys:index} when we take into account the effect a reflection has on the dimension of the fixed set. If we have no reflections and the spacetime involution is not free then the entire space is fixed by the involution, showing the presence of a space filling $O9$-plane. For every reflection we now include it reduces the dimension of the fixed set, or equivalently the $O$-planes, by $1$, showing that the number of reflections equals $9-p$. This fails to work when the involution is free and there are no $O$-planes. As an example consider the type IIB orientifold with spacetime involution $z\mapsto z+\frac{1}{2}$ and the $T$-dual type IIA orientifold with involution $z\mapsto\bar{z}+\frac{1}{2}$. Both of these involutions are free so there are no $O$-planes. Clearly we cannot use the dimension of the $O$-planes to determine the index shift for $KR$-theory. However, we can still use physical assertion \ref{phys:index}. The involution of the the type IIB orientifold when written in real coordinates is $(x,y)\mapsto(x+\frac{1}{2},y)$, while for the type IIA orientifold is $(x,y)\mapsto(x+\frac{1}{2},-y)$. We see that the IIB orientifold has zero reflections while the IIA orientifold has $1$. Following physical assertion \ref{phys:index} we would expect $D$-branes in the IIB orientifold to be classified by $KR^0$ and by $KR^{-1}$ in the IIA orientifold, matching the results found in \cite{MR3267662}.

\section{Involutions on the product of two elliptic curves}
\label{sec:twoellcurves1}

In this section we will consider involutions, $\iota$, on the product of two elliptic curves, $Y=E_1\times E_2$, that are equivariant with respect to the projections $P_i\co Y\to E_i$ for $i=1,2$. This means that $\iota$ must restrict on each
$E_i$ to one of the involutions considered in \cite{MR3267662,MR3316647}, allowing us to build off of that work while giving enough cases to give enough insight on how to move to $K3$ surfaces. Another reason for making this restriction is that we are ultimately interested in the $F$-theory lifts of the elliptic curve orientifolds.

We will first restrict to cases where the $B$-field is trivial in all $T$-dual theories. That means we consider cases where both elliptic curves $E_j$ have purely imaginary complex and complexified K\"ahler structures.  So we can take
each $E_j$ to be of the form $\bC/(\bZ+\tau_j\bZ)$
with $\tau_j$ purely imaginary.

\subsection{The type I theory and its $T$-duals}
\label{sec:typeI}
Let us begin by considering the type I theory and its $T$-duals. The type I theory is the type IIB orientifold with trivial spacetime involution, so when compactified on $E_1\times E_2$ is the orientifold $\bR^{6,0}\times \left(S^{2,0}\right)^4$. There is an $O9$-plane wrapping the four compact dimensions. Therefore the index shift in the $KR$-theory $D$-brane classification is zero and we see that stable $Dp$-branes are classified by
\begin{align}
	KR^{p-5}(E_1\times E_2,\iota) &\cong KR^{p-5}\left(\left(S^{2,0}\right)^4\right)\nonumber\\
	&\cong KO^{p-5}\oplus4KO^{p-6}\oplus6KO^{p-7}\oplus4KO^{p-8}\oplus KO^{p-9}.
\end{align}
$T$-dualities of the type I theory are fairly straightforward. For each direction we perform a $T$-duality $S^{2,0}\to S^{1,1}$. The index shift is by one for each copy of $S^{1,1}$ or equivalently each $T$-duality performed. The matching of $D$-brane charges then follows from the fact that everything splits, along with the isomorphism
\begin{equation}
	KR^*\left(S^{2,0}\right)\cong KR^{*-1}\left(S^{1,1}\right).
\end{equation}
This is a good example of the index shifts discussed in section \ref{sec:DbraneClass} and the relationship between holomorphic/antiholomorphic involutions on the individual elliptic curves and type IIA/B orientifolds discussed more broadly in section \ref{sec:IIAB}. Furthermore, a similar structure will appear in all of the other cases.

There are four different directions we can $T$-dualize. If we give the elliptic curve $E_i$ the complex coordinates $z_i$ then the type I theory is described by the trivial spacetime involution $(z_1,z_2)\mapsto(z_1,z_2)$. Performing a single $T$-duality will give a type IIA orientifold with spacetime involution given by one of the four possibilities 
\begin{equation}
	(z_1,z_2)\mapsto(\pm\bar{z}_1,z_2),(z_1,\pm\bar{z}_2)
	\label{eq:6dIAinv}
\end{equation} depending on which of the four circles we $T$-dualize. In all cases we obtain $\bR^{6,0}\times\left(S^{2,0}\right)^3\times S^{1,1}$ as a Real space, with two $O8^+$-planes, one at each of the two fixed points in $S^{1,1}$ and wrapping the other three compact dimensions. This means that the index shift for classifying $D$-brane charges is $i=1$. $i$ being odd matches with the fact that these are all type IIA orientifolds. Note that these four theories are all related to each other by two $T$-dualities. To go from one where the $m$th circle is $S^{1,1}$ to one where the $n$th circle is $S^{1,1}$ we just perform $T$-dualities along the $m$th and $n$th circles. As mentioned in section \ref{sec:IIAB}, since we need $i$ to be odd for a type IIA orientifold and the compactification manifold has complex dimension $2$, the involutions \ref{eq:6dIAinv} are neither holomorphic nor antiholomorphic, but split as holomorphic on one elliptic curve and antiholomorphic on the other. $Dp$-branes in these IIA orientifold theories are classified by
\begin{align}
	KR^{p-6}(E_1\times E_2,\iota) &\cong KR^{p-6}\left(\left(S^{2,0}\right)^3\times S^{1,1}\right)\nonumber\\
	&\cong KR^{p-6}\left(\left(S^{2,0}\right)^3\right)\oplus KR^{p-5}\left(\left(S^{2,0}\right)^3\right)\nonumber\\
	&\cong KO^{p-6}\oplus 3KO^{p-7}\oplus 3KO^{p-8}\oplus KO^{p-9}\nonumber\\
    &\qquad\oplus KO^{p-5}\oplus 3KO^{p-6}\oplus 3KO^{p-7}\oplus KO^{p-8}\nonumber\\
	&\cong KO^{p-5}\oplus4KO^{p-6}\oplus6KO^{p-7}\oplus4KO^{p-8}\oplus KO^{p-9}.
\end{align}
All four of these theories can be viewed as the type IA theory compactified on a $3$-torus, as can be seen from the brane content.

If we now perform a second $T$-duality along a different circle from the first one we $T$-dualized there are six possibilities. The possible spacetime involutions are the two involutions of the form
\begin{equation}
	(z_1,z_2)\mapsto(\pm z_1,\mp z_2),
\end{equation}
or the four involutions of the form
\begin{equation}
	(z_1,z_2)\mapsto(\pm\bar{z}_1,\pm\bar{z}_2),
\end{equation}
depending on which two circles were $T$-dualized when starting from the type I theory. These are all type IIB orientifolds since they were obtained by $T$-dualizing a type IIA orientifold. As noted in section \ref{sec:IIAB}, a type IIB orientifold must have either a holomorphic or antiholomorphic involution since the compactification manifold has complex dimension $2$. In this case we see both possibilities arise. As noted earlier, however, they are all related by a change of coordinates. This is most easily seen by noting that as Real spaces they are all $\bR^{6,0}\times\left(S^{2,0}\right)^2\times\left(S^{1,1}\right)^2$. They all have four $O7^+$-planes, one located at each of the four fixed points of $S^{1,1}\times S^{1,1}$ and wrapping $S^{2,0}\times S^{2,0}$. This shows us that the index shift is $i=2$ and the $Dp$-brane charges are classified by
\begin{align}
	KR^{p-7}(E_1\times E_2,\iota) &\cong KR^{p-7}\left(\left(S^{2,0}\right)^2\times\left(S^{1,1}\right)^2\right)\nonumber\\
	&\cong KR^{p-7}\left(\left(S^{2,0}\right)^2\right)\oplus 2KR^{p-6}\left(\left(S^{2,0}\right)^2\right)\oplus KR^{p-5}\left(\left(S^{2,0}\right)^2\right)\nonumber\\
	&\cong KO^{p-5}\oplus4KO^{p-6}\oplus6KO^{p-7}\oplus4KO^{p-8}\oplus KO^{p-9}.
\end{align}

$T$-dualizing one of the remaining two compact directions will give a type IIA orientifold. There are four possible involutions,
\begin{equation}
	(z_1,z_2)\mapsto(-z_1,\pm\bar{z}_2),(\pm\bar{z}_1,-z_2).
\end{equation}
Again, since these are IIA orientifolds and the compact space has complex dimension $2$, the spacetime involution is neither holomorphic nor antiholomorphic, but splits as holomorphic on one elliptic curve and antiholomorphic on the other. The compact space for all of these orientifolds is $S^{2,0}\times\left(S^{1,1}\right)^3$ as a Real space. There are eight $O6^+$-planes, one each at the fixed points of $\left(S^{1,1}\right)^3$ and wrapping $S^{2,0}$. Therefore $i=3$ and the $Dp$-branes are classified by
\begin{align}
	KR^{p-8}(E_1\times E_2,\iota) &\cong KR^{p-8}\left(S^{2,0}\times\left(S^{1,1}\right)^3\right)\nonumber\\
	&\cong KR^{p-8}\left(S^{2,0}\right)\oplus 3KR^{p-7}\left(S^{2,0}\right)\nonumber\\
    &\qquad\oplus 3KR^{p-6}\left(S^{2,0}\right)\oplus KR^{p-5}\left(S^{2,0}\right)\nonumber\\
	&\cong KO^{p-5}\oplus4KO^{p-6}\oplus6KO^{p-7}\oplus4KO^{p-8}\oplus KO^{p-9}.
\end{align}

$T$-dualizing the final remaining copy of $S^{2,0}$ gives the type IIB orientifold with spacetime involution
\begin{equation}
	(z_1,z_2)\mapsto(-z_1,-z_2).
\end{equation}
This is $\left(S^{1,1}\right)^4$ as a Real space and has sixteen $O5^+$-planes, one at each of the fixed points. Therefore $i=4$ and the $Dp$-branes are classified by
\begin{align}
	KR^{p-9}(E_1\times E_2,\iota) &\cong KR^{p-9}\left(\left(S^{1,1}\right)^4\right)\nonumber\\
	&\cong KO^{p-5}\oplus4KO^{p-6}\oplus6KO^{p-7}\oplus4KO^{p-8}\oplus KO^{p-9}.
\end{align}
As we can see, the $D$-brane charges in all sixteen of these orientifold theories match, though their sources in terms of wrappings differ between theories. Mathematically, the matching of $D$-branes between the type I theory and the last IIB orientifold we considered, $(z_1,z_2)\mapsto(-z_1,-z_2)$, is described by:
	\begin{theorem}[\cite{MR3305978}]
If $E_1$ and $E_2$ are complex elliptic curves, then there is an
isomorphism
  \[
KO^{4-\bullet}(E_1\times E_2)\xrightarrow{\text{Poinc.\ duality}}
KO_\bullet(E_1\times E_2)
\xrightarrow{\cong} KR^{-\bullet}(E_1\times E_2, \iota),
\]
where $\iota$ is multiplication by $-1$ and the map $KO_\bullet(E_1\times E_2)
\xrightarrow{\cong} KR^{-\bullet}(E_1\times E_2, \iota)$ is
given by the real Baum-Connes isomorphism for a free abelian
group.
\end{theorem}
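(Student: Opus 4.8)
The plan is to realize the displayed chain as the composite of an honest $KO$-theoretic Poincaré duality with the real assembly map for $\bZ^4$, so there are really three things to do: construct each arrow, identify its target, and check that the degree bookkeeping is consistent.

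\emph{The duality arrow.} Topologically $E_1\times E_2$ is the four-torus $\bT^4$, which is parallelizable and therefore spin; in particular it is $KO$-orientable and carries a fundamental class $[E_1\times E_2]\in KO_4(E_1\times E_2)$. Cap product with this class is the Poincaré duality isomorphism $KO^k(E_1\times E_2)\xrightarrow{\,\cap[E_1\times E_2]\,}KO_{4-k}(E_1\times E_2)$, and setting $\bullet=4-k$ produces the first map $KO^{4-\bullet}(E_1\times E_2)\xrightarrow{\cong}KO_\bullet(E_1\times E_2)$. I would quote the standard construction of $KO$-Poincaré duality for spin manifolds here rather than reprove it.

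\emph{The assembly arrow.} Since $\bT^4$ is aspherical with $\pi_1=\bZ^4$ and contractible universal cover $\bR^4$, it is a model for $B\bZ^4$. The real Baum-Connes assembly map $\mu_\bR\co KO_\bullet(B\bZ^4)\to KO_\bullet\bigl(C^*_r(\bZ^4;\bR)\bigr)$ is thus defined, and because $\bZ^4$ is abelian, hence amenable, the real Baum-Connes conjecture holds, so $\mu_\bR$ is an isomorphism; for this particular group one can also argue directly by iterating the real Pimsner-Voiculescu sequence four times, reducing to the Bott map for $\bZ$. It remains to identify the target. The Fourier transform carries the complexification $C^*_r(\bZ^4;\bC)$ to $C(\widehat{\bZ^4})=C(\bT^4)$, and under this identification the real subalgebra $C^*_r(\bZ^4;\bR)$ becomes the fixed algebra of the Real structure $f\mapsto\overline{f\circ\tau}$ on $C(\bT^4)$, where $\tau$ is the involution of $\widehat{\bZ^4}$ Pontryagin-dual to inversion on $\bZ^4$. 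By the definition of $KR$-theory for Real $C^*$-algebras this gives $KO_\bullet\bigl(C^*_r(\bZ^4;\bR)\bigr)\cong KR^{-\bullet}(\bT^4,\tau)$, which is the second map.

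\emph{Matching the involution and the grading.} The one genuinely delicate point is to check that $\tau$ is exactly multiplication by $-1$. Tracking Pontryagin duality, inversion $g\mapsto g^{-1}$ on $\bZ^4$ dualizes to complex conjugation on each circle factor of $\widehat{\bZ^4}$, which in angle coordinates is $\theta\mapsto-\theta$; under the identification $\widehat{\bZ^4}\cong\bT^4\cong E_1\times E_2$ this is precisely $(z_1,z_2)\mapsto(-z_1,-z_2)=\iota$. Hence $\tau=\iota$ and the two isomorphisms compose to the asserted chain. The degree shifts are then consistent: duality on the $4$-manifold contributes the $4-\bullet$, the assembly map preserves homological degree, and passage to the Real function algebra converts homological $KO_\bullet$ into cohomological $KR^{-\bullet}$. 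I expect the involution-matching — pinning down that the Real structure inherited from the group algebra is the geometric $\iota$ rather than some twist of it — to be the main obstacle; everything else is an application of standard $KO$- and $KR$-machinery.
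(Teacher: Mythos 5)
Your proposal is correct, and it fleshes out exactly the argument that the theorem statement itself prescribes: $KO$-Poincar\'e duality for the spin manifold $\bT^4$ followed by the real Baum--Connes isomorphism for $\bZ^4$, with the Fourier-transform identification of $C^*_r(\bZ^4;\bR)$ as the Real function algebra of the dual torus carrying the inversion involution, which is precisely $\iota$. The paper itself gives no proof --- it imports the result from \cite{MR3305978} --- but your route (including the careful matching of the Pontryagin-dual involution with multiplication by $-1$, which is indeed the only delicate point) is the standard one used there, so there is nothing to correct.
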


\subsection{The type \~{I} theory and its $T$-duals}

Let us look next at the type \~I theory and its $T$-duals. It is given by a free involution. That is we start out with the type IIB superstring theory compactified on $E_1\times E_2$ which we give complex coordinates $(z_1,z_2)$ and mod out by the action of $\iota\cdot\Omega$ where $\iota$ is the spacetime involution that fixes the non-compact dimensions and acts on the compact ones as
\begin{equation}
	(z_1,z_2)\mapsto(z_1,z_2)+\delta,
\end{equation}
where $\delta$ is a non-trivial point of order $2$. There are fifteen different choices for $\delta$, but for concreteness we will look at the case
\begin{equation}
	(z_1,z_2)\mapsto(z_1,z_2+\tfrac{1}{2}).
\end{equation}
There are no $O$-planes since the spacetime involution is fixed-point free. Therefore we cannot determine the index shift by the dimension of the $O$-planes, but we can see that the involution contains no reflections, so the index shift is $i=0$. As a Real space the spacetime manifold is 
$\bR^{6,0}\times\left(S^{2,0}\right)^3\times S^{0,2}$. 
Stable $Dp$-brane charges are classified by
\begin{align}
	KR^{p-5}(E_1\times E_2,\iota) &\cong KR^{p-5}\left(\left(S^{2,0}\right)^3\times S^{0,2}\right)\nonumber\\
	&\cong KSC^{p-5}\left(\left(S^1\right)^3\right)
    \qquad(\text{by \cite[Proposition 3.5]{MR0206940}})
    \nonumber\\
	&\cong KSC^{p-5}\oplus3KSC^{p-6}\oplus3KSC^{p-7}\oplus KSC^{p-8}.
\end{align}

If we $T$-dualize a single compact direction there are four possible type IIA orientifolds we can get depending on which dimension we $T$-dualize. Three of the possibilities are free:
\begin{equation}
	(z_1,z_2)\mapsto
    (\pm\bar{z}_1,z_2+\tfrac{1}{2}),(z_1,\bar{z}_2+\tfrac{1}{2}).
	\label{eq:freeIIA1}
\end{equation}
The other possibility,
\begin{equation}
	(z_1,z_2)\mapsto(z_1,-\bar{z}_2+\tfrac{1}{2}),
	\label{eq:tIAinv}
\end{equation}
occurs when you $T$-dualize the $S^{0,2}$. It has two disjoint fixed sets corresponding to $O8$-planes that must have opposite sign, so that the system has net zero $O$-plane charge. The compact space is $\left(S^{2,0}\right)^3\times S^{1,1}_{(+,-)}$ as a Real space. This can be viewed as the 
type $\widetilde{IA}$ theory compactified on 
$\left(S^{2,0}\right)^3$. It has an $O8^+$-plane located 
at one of the fixed points of $S^{1,1}$ wrapping 
$\left(S^{2,0}\right)^3$, and an $O8^-$-plane located at the other fixed point wrapping $\left(S^{2,0}\right)^3$. 

From the point of view of physics we see that the $O$-planes must have opposite charge for charge conservation. From the mathematical perspective this can be seen in two ways. The first is that the involution exchanges $2$-torsion points. Let $\delta_1$ be a $2$-torsion point of $E_1$ and $\delta_2=(\delta_a,\delta_b)$ be a $2$-torsion point in $E_2$ 
with $\delta_a=0$ or $\frac{1}{2}$. Then the spacetime involution acts on the $2$-torsion points as
\begin{equation}
	(\delta_1,0,\delta_b)\longleftrightarrow(\delta_1,\tfrac{1}{2},\delta_b).
\end{equation}
This tells us that the $O$-planes must come in pairs of opposite sign. Perhaps a better mathematical way to describe the sign choice that doesn't depend on a choice of zero is to look at the defining equations of $E_2$ in generalized Legendre form. As explained in \cite{MR3316647}, the zeros of the defining equation will correspond to $O^+$-planes if they are real. If the zero is imaginary then its complex conjugate must also be a zero, corresponding to a pair of $O$-planes of opposite sign.

It is not difficult to carry out calculations similar to those
in section \ref{sec:typeI} to check that the $D$-brane charges
agree in all the dual theories; taking the $O$-plane charges
into account in this is essential.

\subsection{Non-Trivial $B$-field}

There is one orientifold theory on an elliptic curve
with topologically nontrivial $B$-field, namely what
Witten \cite{MR1615617} calls the ``type I theory with
no vector structure.''  Its T-duality grouping includes
two other theories, the IIB theory on $S^{1,1}\times S^{1,1}$
with sign choice $(+,+,+,-)$, and the IIA theory on
a real elliptic curve of species $1$ (see 
\cite[\S5.2.3]{MR3316647}).  One can carry out similar
$D$-brane charge
calculations for this case,  but we omit the details.

\section{Involutions on the Kummer surface of a
  product of two elliptic curves}
\label{sec:twoellcurves}

The simplest K3 surfaces to describe are produced by the
\emph{Kummer construction} (see \cite{MR3077251})
on $A=E_1\times E_2$, a product of two elliptic curves.
Consider the involution $\iota\co x\mapsto -x$ on $A$;
this fixes the $2$-torsion subgroup, which is elementary
abelian of rank $4$, and thus $\iota$ has $16$ fixed points.
The Kummer construction is based on the quotient surface $A'=A/\{\pm 1\}$.
A disk around each fixed point looks like $\bC^2/\{\pm1\}$,
so we can cut out these disks and glue in copies of the disk
bundle of $T_{\text{holo}}\bP^1(\bC)$, which has the same boundary
($\bR\bP^3$).  From the point of view of algebraic geometry, 
this is
equivalent to blowing up the $16$ double points, getting a nonsingular
K3 surface $X=\Kum(E_1\times E_2)$.  Note that the complex structure
of $X$ and the Neron-Severi lattice $\NS(X)$ are completely determined by the
two complex parameters $\tau_1$ and $\tau_2$ (in $\mathbb{H}/\mathrm{PSL}_2(\bZ)$, $\mathbb{H}$
the upper half-plane) giving the complex structures on $E_1$ and $E_2$.
As explained in \cite{MR3077251}, choices of K\"ahler metrics on
the elliptic curves $E_1$ and $E_2$ (plus a choice of a cut-off parameter,
which in the end doesn't matter when we take it sufficiently large)
determine a Ricci-flat K\"ahler metric on $X$ with the same volume
as the singular variety $A'$. (First one glues in an Eguchi-Hanson
metric in place of the disk around each singular point of $A'$,
then one deforms to a smooth Ricci-flat K\"ahler metric.)
When one takes $B$-fields into account
also, $E_1$ and $E_2$ acquire ``complexified K\"ahler parameters''
$\rho_1$ and $\rho_2$, also lying in $\mathbb{H}/\mathrm{PSL}_2(\bZ)$, 
and mirror symmetry for elliptic curves
interchanges $\tau_j$ and $\rho_j$ \cite{MR1633036}.  There is a similar
induced mirror symmetry for Kummer surfaces.

The Kummer construction has appeared many times before in the
physics literature.  In particular, there is believed to be a
duality between type I string theory on a $4$-torus and
type IIA or type IIB string theory on a K3 surface
\cite{MR2033773,MR1402865,MR1408165,MR2010972,MR3065080}.
This duality is described by first showing that the type IIA theory
compactified on K3 is related to the $\SO(32)$ heterotic theory
theory by string-string duality. This gives a duality between the
strongly coupled heterotic theory and a weakly coupled IIA
theory. Points in the moduli space of the heterotic theory with
enhanced gauge symmetry correspond in the IIA theory to the K3
developing $A$-$D$-$E$ type singularities. So at a certain point in
the moduli space, the IIA theory is compactified on the ``orbifold limit
of $K3$,'' $\bT^4/\bZ_2$ with the $\bZ_2$ action given by
reflection \cite{MR1408388}.  (From the point of view of complex
geometry, this is the singular blow-down $A'$ of $\Kum(E_1\times E_2)$.)
$\bT^4/\bZ_2$ is variously referred to as an orbifold and
orientifold in different papers in the literature; we will use term
orientifold here. (We do not require the involution to reverse
orientation on the target space, but we do require equivariance with
respect to the worldsheet parity
operator, which reverses orientation on the string worldsheet.)

One way to see that the type IIA theory dual to the heterotic theory
must be an orientifold compactification is that the type IIA theory
must have $\mathcal{N}=1$ supersymmetry, just as the
heterotic theory does, if they are to be dual.  Since ordinary IIA
theory has $\mathcal{N}=2$ supersymmetry, an orientifolding is needed
to break half of the supersymmetry.
This duality is very similar to the duality
between the type IIB theory compactified on $\bT^2/\bZ_2$ and the
$\SO(32)$ heterotic theory compactified on $\bT^2$. The $\SO(32)$
heterotic theory compactified on $\bT^2$ is $S$-dual to the type I
theory compactified on $\bT^2$. It was shown in
\cite{Olsen:1999,MR1697703} that the type I
theory compactified on $\bT^2$ is $T$-dual to a type IIB orientifold
compactification on $\bT^2/\bZ_2$. Since the orientifold and orbifold
theories are not equivalent we see the duality must relate the
heterotic theory with a type II orientifold. We will now further use
the heterotic/type I $S$-duality in our study of string-string
duality. 

Starting with the weakly coupled type IIA theory compactified on K3,
string-string duality relates this to a strongly coupled $\SO(32)$
heterotic theory compactified on $\bT^4$, which is then $S$-dual to
the weakly coupled type I theory compactified on $\bT^4$. Since this
sequences of dualities relates the weakly coupled type IIA theory
compactified on K3 to the weakly coupled type I theory compactified
on $\bT^4$, we would expect these two theories to be related by a
$T$-duality.
\begin{physics}
	The type I theory compactified on $\bT^4$ is $T$-dual to the
        type IIB orientifold compactification on $\bT^4/\bZ_2$ with
        the $\bZ_2$ action given by reflection. 
\end{physics}

In this section, we study duality of involutions on Kummer surfaces of
products of two elliptic curves, building on the results in
\cite{MR3267662,MR3316647}.

The Kummer surface $X=\Kum(E_1\times E_2)$ can also be viewed as
an elliptically fibered K3 surface, in fact an \emph{isotrivial} one.
This means that there is a morphism of projective varieties (over $\bC$)
$f\co X\to \bP^1$ for which the general fibers $f^{-1}(x)$ are elliptic
curves with constant $j$-invariant (hence all isomorphic to one another).
To see this, consider the composite
\begin{equation}
\label{eq:isotrivial}  
f\co X \xrightarrow{\text{blow-down}} Y'\xrightarrow{\text{proj}_2}
E_2/\{\pm 1\}\cong \bP^1.
\end{equation}
It is obvious that the generic fibers are copies of $E_1$ and that the
singular fibers lie over the images in $\bP^1$ of the four $2$-torsion
points in $E_2$.  The singular fibers are all of Kodaira type
$I_0^*$ (with Euler characteristic $6$ and monodromy $-1$);
conversely, any isotrivial elliptically fibered K3 surface
with four $I_0^*$ fibers is isomorphic to a Kummer surface
(coming from a two-dimensional abelian variety divided out
by the action of $\{\pm 1\}$)
\cite[Proposition 1]{Sawon}.

Recall that a holomorphic involution on a single elliptic curve $E_j$
must be one of the following: (a) the trivial involution $1$, (b)
an involution with exactly four fixed points, such as $x\mapsto -x$, or
(c) a free involution\footnote{In this case,
  the involution does not respect the group structure of
  the elliptic curve.}, given by translation by an element of order $2$.
If we disregard twisting, the $KR$-groups are given  by
\[
KO^\bullet(T^2)=KO^\bullet \oplus KO^{\bullet-1}  \oplus KO^{\bullet-1} \oplus
KO^{\bullet-2}
\]
in case (a), $KO^{\bullet+2}(T^2)$ (the same groups, shifted up in degree by
$2$) in case (b), and $KSC^\bullet \oplus KSC^{\bullet-1}$ in case (c).

For the isotrivial fibration \eqref{eq:isotrivial},
the singular $I_0^*$ fibers are not irreducible, and
have a central $\bP^1$ (of
multiplicity $2$ as a divisor) surrounded by $4$ other $\bP^1$'s, each
joined at a distinct point of the central $\bP^1$.
The complement of the central $\bP^1$
has the structure of the abelian Lie group
$\bC \times (\bZ/2 \mathbb{Z}) \times (\bZ/2 \mathbb{Z})$
\cite[Table VII.3.4]{Miranda} and the Mordell-Weil group
is $(\bZ/2\mathbb{Z}) \times (\bZ/2\mathbb{Z})$. Corresponding to the three kinds of
holomorphic involutions on the elliptic curve $E_1$, we have 
on the Kummer surface $X$
the following holomorphic involutions compatible with
\eqref{eq:isotrivial} (and trivial on the base $\bP^1$):
(a) the trivial involution $1$, (b) the involution $\iota_1$ induced
by multiplication by $(-1,1)$ on $E_1\times E_2$, (c) a Nikulin
involution $\iota_2$ associated to translation by a non-zero element of the
Mordell-Weil group, i.e., a section $\tau$ of $f$ of order exactly $2$.
The involution $\iota_2$ is induced by $(z,w)\mapsto (z+z_0,w)$ on
$E_1\times E_2$, with $z_0$ a $2$-torsion point in $E_1$. Note that
$\iota_2$ has exactly two fixed points in the central $\bP^1$ of
each singular fiber, for the requisite count of $4\cdot 2 = 8$
fixed points in $X$.  The involution $\iota_1$ is given by $x\mapsto -x$
on the complement $\bC \times (\bZ/2\mathbb{Z}) \times (\bZ/2\mathbb{Z})$ of the central $\bP^1$,
and thus has fixed points at $0$ and $\infty$ on each of the
four peripheral $\bP^1$'s in a singular fiber.  Since these points at
$\infty$ correspond to the points of attachment to the central $\bP^1$,
$\iota_1$ restricted to the central $\bP^1$ is a holomorphic involution
with at least $4$ fixed points, and thus must be the identity.  So the fixed
set for $\iota_1$ consists of the four central $\bP^1$'s in the
four singular
fibers along with the four global sections of $f$
(which intersect the singular fibers at
$\{0\}\times (\bZ/2\bZ) \times (\bZ/2\bZ) \subset
\bC \times (\bZ/2\bZ) \times (\bZ/2\bZ)$).  In other words, it
consists of a union of eight disjoint rational curves.

\begin{theorem}
\label{thm:KRI0*}
Let $C$ be an $I_0^*$ fiber of a Kummer elliptic fibration
\textup{\eqref{eq:isotrivial}} and let $\iota_1$ and $\iota_2$ be
the involutions discussed above.  Then
\[
KO^\bullet(C)\cong KO^\bullet(S^2)\oplus 4KO^{\bullet-2}
\cong KO^\bullet \oplus 5 KO^{\bullet-2},
\]
\[
KR^\bullet(C, \iota_1)\cong  KO^\bullet(S^2)\oplus 4KO^{\bullet+2}
\cong KO^\bullet\oplus KO^{\bullet-2} \oplus 4 KO^{\bullet+2},
\]
and
\[
KR^\bullet(C, \iota_2)\cong KR^\bullet(\bP^1(\bC), x\mapsto -x) 
\oplus 2 K^{\bullet} \cong KO^\bullet\oplus KO^{\bullet+2}
\oplus 2 K^{\bullet}.
\]
\end{theorem}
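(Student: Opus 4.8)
The plan is to describe the $I_0^*$ fiber $C$ explicitly as a CW-complex (or as a union of projective lines glued at points) and then compute each $K$-theory via a cofiber/Mayer--Vietoris argument. Recall $C$ has a central $\bP^1$ of multiplicity $2$ meeting four peripheral $\bP^1$'s, each at a single distinct point. Topologically, each $\bP^1(\bC)\cong S^2$, and $C$ is the one-point-union-type configuration where the four peripheral $S^2$'s are wedged onto the central $S^2$ at four distinct points. The first, non-equivariant computation of $KO^\bullet(C)$ should follow from a Mayer--Vietoris sequence (or repeated cofiber sequences) built from this decomposition: collapsing the four attaching points gives a quotient that is a wedge of the central $S^2$ with four copies of $S^2/\{\pt\}\simeq S^2$, and the connecting maps vanish because the inclusion of a point into each $S^2$ splits. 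This yields $KO^\bullet(C)\cong KO^\bullet(S^2)\oplus 4KO^{\bullet-2}$, and then the standard splitting $KO^\bullet(S^2)\cong KO^\bullet\oplus KO^{\bullet-2}$ gives the stated final form.

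\textbf{The involution $\iota_1$.} For $KR^\bullet(C,\iota_1)$, I would use the description already established in the excerpt: $\iota_1$ acts as the identity on the central $\bP^1$ and has two fixed points ($0$ and $\infty$) on each of the four peripheral $\bP^1$'s. So equivariantly the central $S^2$ carries the trivial involution, contributing $KO^\bullet(S^2)$, while each peripheral component is a $\bP^1(\bC)$ with an antiholomorphic-type (complex-conjugation-like) involution having the two attaching/fixed points. The key observation is that $\bP^1(\bC)$ with the relevant involution is $KR$-equivalent to $S^{2,0}$ sitting inside the peripheral sphere in a way that contributes the shifted summand $KO^{\bullet+2}$; this is exactly the case-(b) computation recalled before the theorem, where an involution with four fixed points on $T^2$ shifts $KO$ up in degree by $2$. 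Assembling via the same equivariant Mayer--Vietoris as above (with the attaching points being fixed, hence not obstructing the splitting) yields $KO^\bullet(S^2)\oplus 4KO^{\bullet+2}$, and expanding $KO^\bullet(S^2)$ gives the stated form $KO^\bullet\oplus KO^{\bullet-2}\oplus 4KO^{\bullet+2}$.

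\textbf{The involution $\iota_2$.} This is the case I expect to be the main obstacle, because $\iota_2$ does \emph{not} fix the central $\bP^1$ pointwise — it acts on it as a holomorphic involution $x\mapsto -x$ with two fixed points, and it permutes or acts freely on parts of the peripheral configuration. The plan is to unwind how translation by a $2$-torsion section acts on the fiber: on the central $\bP^1$ it is the order-two map $x\mapsto -x$ contributing $KR^\bullet(\bP^1(\bC), x\mapsto -x)\cong KO^\bullet\oplus KO^{\bullet+2}$, while the involution should pair up the four peripheral $\bP^1$'s into two orbits of size two that are interchanged freely. A free interchange of two copies of a complex space $Z$ with the swap involution gives, by the Atiyah induction/free-action isomorphism $KR^\bullet(Z\sqcup Z,\text{swap})\cong K^\bullet(Z)$, an ordinary complex $K$-theory contribution; with $Z\simeq S^2$ reducing to $2K^\bullet$ after accounting for the attaching data. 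The main work, and the likely obstacle, is verifying precisely that $\iota_2$ swaps the peripheral lines in two free pairs (rather than fixing points on them) and checking that the attaching-point data is compatible so the decomposition splits cleanly; once the orbit structure is pinned down, the Mayer--Vietoris/cofiber assembly and the free-action isomorphism deliver $KR^\bullet(\bP^1(\bC), x\mapsto -x)\oplus 2K^\bullet\cong KO^\bullet\oplus KO^{\bullet+2}\oplus 2K^\bullet$.
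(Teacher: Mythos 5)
Your proposal is correct and follows essentially the same route as the paper: the paper's (very short) proof observes that the inclusion of the central $\bP^1$ into $C$ is split, equivariantly for $\iota_1$ and $\iota_2$, by the map collapsing each peripheral $\bP^1$ to its attaching point, and then evaluates the resulting summands using $KR^\bullet(\bC, x\mapsto -x)=KR^\bullet(\bR^{0,2})=KO^{\bullet+2}$ and $KR^\bullet(\bC\times S^{0,1})\cong K^\bullet$; your Mayer--Vietoris assembly produces the same splitting and the same summands. Two small points. First, in the $\iota_1$ case the involution on each peripheral $\bP^1$ is the \emph{holomorphic} involution $x\mapsto -x$, which away from the attaching point is the Real space $\bR^{0,2}$; it is not ``antiholomorphic-type (complex-conjugation-like).'' Complex conjugation on $\bC$ is $\bR^{1,1}$ and would produce no degree shift, so that parenthetical, taken literally, would give $KO^\bullet$ instead of the correct $KO^{\bullet+2}$ --- your actual computation (via the case-(b) analogy of a holomorphic involution with fixed points) lands on the right shift, so only the description needs fixing. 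Second, the step you flag as the main obstacle for $\iota_2$ --- that the four peripheral lines are swapped freely in two pairs --- is already forced by the setup preceding the theorem: the complement of the central $\bP^1$ is the group $\bC\times(\bZ/2\bZ)\times(\bZ/2\bZ)$ and $\iota_2$ is translation by a nonzero element of the Mordell--Weil group $(\bZ/2\bZ)^2$, which acts on the component group by a fixed-point-free translation (consistent with all eight fixed points of $\iota_2$ lying on the central $\bP^1$'s of the four singular fibers); so the free-pair structure, and hence the $2K^\bullet$ contribution via $KR^\bullet(Z\times S^{0,1})\cong K^\bullet(Z)$, comes for free.
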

\begin{proof}
  The inclusion of the central $\bP^1$ is split by the map collapsing
  each peripheral $\bP^1$ to its point of attachment with the central
  $\bP^1$, and the splitting is also equivariant for $\iota_1$ and $\iota_2$.
  The three calculations follow immediately, since
  $KR^\bullet(\bC, x\mapsto -x)=KR^\bullet(\bR^{0,2})=KO^{\bullet+2}$
  and $KR^\bullet(\bC\times S^{0,1})\cong K^\bullet$.
\end{proof}
\begin{remark}
\label{rem:I0*}
If we ignore $2$-torsion, Theorem \ref{thm:KRI0*} shows that
for the trivial and $\iota_1$ involutions (and the result would be
the same even if we add a $2$-torsion twist), the $KR$ theory is
of rank $1$ in degrees $0$ mod $4$, of rank $5$ in degrees $2$ mod
$4$, and of rank $0$ in odd degrees.
For the $\iota_2$ involution, however, the $KR$ theory is
of rank $3$ in all even degrees, of rank $0$ in odd degrees. Thus
there cannot be an isomorphism between the $KR$-theory of some
twist of $(C,\iota_1)$ and a shift of some
twist of $(C,\iota_2)$.
\end{remark}

\section{Elliptic K3 surfaces of Picard rank 17}
\label{sec:rank17fam}

We consider complex projective K3 surfaces that admit at
least one Jacobian elliptic fibration. To establish some notation, we
define a \emph{Jacobian elliptic fibration} on a K3 surface $X$ to be a
pair $(\pi_X,\sigma_X)$ consisting of a proper map
of analytic spaces $\pi_X \co X \to
\mathbb{P}^1 =\mathbb{P}(u, v) $, whose general fiber in $\mathbb{P}^2
= \mathbb{P}(X, Y, Z)$ is a smooth curve of genus one, and a section
$\sigma_X \co \mathbb{P}^1 \to X$ of the
elliptic fibration $\pi_X$. The group of sections of the
Jacobian fibration (with $\sigma_X$ taken to be the zero-section)
is the \emph{Mordell-Weil group}
$\mathrm{MW}(X)$.  A Jacobian
elliptic K3 surface $X$ always admits the holomorphic
antisymplectic involution acting as the hyperelliptic involution in each
smooth fiber. A Weierstrass model exhibits $X$ as a double cover of the
Hirzebruch surface $\mathbb{F}_4$, branched on the section
$\sigma_X$ and the trisection defined by its affine
Weierstrass equation \cite[Remark 11.2.2]{Huybrechts}. 
\par We denote the N\'eron-Severi lattice of the K3 surface
$X$ by $\mathrm{NS}(X)$. This lattice is known to
be even and have a signature of $(1, \rho_X-1)$, where
$\rho_X$ is the Picard rank  of $X$.  A lattice
polarization on $X$ is defined as a primitive lattice
embedding $i\colon L \hookrightarrow \mathrm{NS}(X)$, where
$i(L)$ contains a pseudo-ample class. In general, $L$-polarized K3
surfaces form interesting families in the context of mathematical
physics since they are classified, up to isomorphism, by a coarse moduli
space, which is a quasi-projective variety of dimension $20-\rho_L$.  
\par {\small \emph{Notation:} In the following, we utilize the
  following notations: $L_1 \oplus L_2$ refers to the orthogonal
  direct sum of the two lattices $L_1$ and $L_2$, $L(\lambda)$ is
  obtained by multiplying the form of lattice $L$ with $\lambda \in
  \mathbb{Z}$, $\langle M \rangle$ denotes a lattice with Gram matrix $M$
  in some basis. 
$L^\vee$ is the dual lattice of $L$. The lattices $A_n$, $D_m$, and
  $E_k$ are the positive definite root lattices for the respective
  root systems. $H$ is the unique even unimodular hyperbolic rank-two
  lattice, and $N$ is the negative definite rank-eight Nikulin
  lattice, as defined, for instance, in \cite[Sec.~5]{MR728142}.} 
\medskip
\subsection{Imposing the existence of holomorphic involutions} 
\par We will determine specific families of $L$-polarized K3 surfaces,
by imposing additional symmetries that the elliptic fibration shall
exhibit. To do so, we focus on Jacobian elliptic K3 surfaces
$X$ that admit compatible holomorphic involutions.
We consider (i) van-Geemen-Sarti involutions \cite{MR2274533}, i.e.,
symplectic involutions that arise as fiberwise translation by a
section of order two in
$\mathrm{MW}(X)$, and (ii)
antisymplectic involutions that arise as involutions on the base curve
$\mathbb{P}(u, v)$. In the latter case,  the K3 surface $X$
is a base change of order two on a rational elliptic fibration
$\pi_R\colon R \to \mathbb{P}^1=\mathbb{P}(U,V)$,
and the involution is the covering involution of a double cover
$\mathbb{P}(u, v) \to \mathbb{P}(U,V)$ of the corresponding base curve. One easily
checks that the minimal Picard rank  for a K3 surface $X$
admitting either of such holomorphic involutions is 10. In particular,
the lattice polarization is $L = H \oplus N$ in case (i), and $L = H
\oplus E_8(-2)$ in case (ii); see \cite{MR2274533,MR4069236}.  
\par For a Jacobian elliptic K3 surface $X$ to admit two
compatible holomorphic involutions, a Picard rank  of at least 14 is
required. For the existence of two van Geemen-Sarti involutions, the
polarizing lattice has to contain $H \oplus 2 D_4(-1) \oplus
4 A_1(-1)$. Similarly, for the existence of two (additional)
commuting antisymplectic involutions or one van Geemen
Sarti-involution and one antisymplectic involution, the polarizing
lattice has to be $H \oplus K_0(-1)$ or $H \oplus N_0(-1)$,
respectively. Here, $K_0$ and $N_0$ are positive definite lattices of
rank 12,  and determinant $2^6$ and $2^8$, respectively; their Gram matrices
were determined in \cite{MR4444083}. 
\par For a Jacobian elliptic K3 surface $X$ to admit three
compatible holomorphic involutions, a Picard rank  of at least 16 is
required. This can be seen by constructing explicitly the families of Jacobian elliptic K3 surfaces
with either two symplectic involutions and one antisymplectic involution or one 
symplectic involution and two antisymplectic involutions.
The former case is realized as follows:
an $H \oplus 2 D_4(-1) \oplus 4 A_1(-1)$-polarized
K3 surface $X$ admits an additional antisymplectic
involution if $X$ is also a base change of order 2 on the
rational elliptic fibration with singular fibers $6 I_2$ and
Mordell-Weil group $(\mathbb{Z}/2\mathbb{Z})^2 \oplus (A_1^\vee)^{\oplus
  2}$. Then, $X$ admits a Jacobian elliptic fibration with
singular fibers $12 I_2$ and Mordell-Weil group $(\mathbb{Z}/2\mathbb{Z})^2
\oplus \langle 1 \rangle^{\oplus 2}$, whence $X$ must have
Picard rank  16. The general member $X$ of this family
admits the additional symplectic involution obtained by the
composition of the aforementioned antisymplectic involution and the
fiberwise hyperelliptic involution. Via the Nikulin construction, carried out using the latter involution, one
obtains a K3 surface $X'$ equipped with a Jacobian elliptic
fibration with singular fibers $2 I_0^* + 6 I_2$ and Mordell-Weil
group $(\mathbb{Z}/2\mathbb{Z})^2$. One easily checks that this implies a
polarizing lattice for $X'$ equal to $H \oplus E_8(-1) \oplus 6 A_1(-1)$. It is known that
such a K3 surface $X'$ has a birational model as double
cover of the projective plane branched on 6 lines in general
position. The family was studied in \cite{MR1233442,
  MR1267602}. 
\par Similarly, if a van-Geemen-Sarti involution and
two commuting antisymplectic involutions (induced by involutions on
the base curve) are supported on a K3 surface $X$, one can
iterate the above construction and obtain a tower of K3 surfaces. Via the Nikulin construction one first
obtains a K3 surface $X'$ as above, and then using the
second antisymplectic involution directly, one obtains a rational
elliptic fibration $\pi_{R''}\colon R'' \to
\mathbb{P}^1$.  The rational elliptic surface has to have at least one
singular fiber of type $I_0^*$ and a Mordell-Weil group containing a
factor of $\mathbb{Z}/2\mathbb{Z}$. Using results in \cite{MR1104782}, it
follows that the family of K3 surfaces $X'$ of lowest Picard rank  is
obtained if one starts with the rational elliptic surface
$R''$ with singular fibers $I_0^* + 2 I_2 + 2 I_1$ and
Mordell-Weil group $\mathbb{Z}/2\mathbb{Z} \oplus (A_1^\vee)^{\oplus 2}$. In
this case, $X'$ supports a Jacobian elliptic fibration with
singular fibers $2 I_0^* + 4 I_2 + 4 I_1$ and Mordell-Weil group 
$\mathbb{Z}/2\mathbb{Z}  \oplus \langle 1 \rangle^{\oplus 2}$, whence
$X$ has Picard rank  16. 
\par We are interested in the situation when the Jacobian elliptic K3
surface $X$ admits two van-Geemen-Sarti involutions and two
(additional) commuting antisymplectic involutions (induced by
involutions on the base curve). We have the following: 
\begin{proposition}
\label{lem:family}
The general $\langle 8 \rangle \oplus 2 D_8(-1)$-polarized K3
surface admits a Jacobian elliptic fibration with 2 van-Geemen-Sarti
involutions and 2 antisymplectic involutions induced by holomorphic
involutions on the base curve $\mathbb{P}^1$.
\end{proposition}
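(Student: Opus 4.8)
The statement concerns the general member of the family, so the plan is to exhibit the distinguished Jacobian fibration together with all four involutions explicitly and then certify, by a Shioda--Tate and discriminant-form computation, that $\NS(X)$ is exactly $\langle 8\rangle \oplus 2D_8(-1)$. I would construct the family directly as a normal form, using the tower assembled in the preceding paragraphs as the conceptual backbone. Concretely, I would fix a coordinate $t$ on the base $\bP^1$ in which a Klein four-group $(\bZ/2\bZ)^2 \subset \mathrm{PGL}_2$ of involutions acts --- e.g.\ the group generated by $t\mapsto -t$ and $t\mapsto \lambda/t$ --- and then seek a Weierstrass equation, with monomial deformation coefficients, whose coefficients are invariant under this group (so that its discriminant locus is) and whose cubic is chosen to split. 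The two generators of the base group then provide the two antisymplectic involutions, and the two $2$-torsion sections coming from the splitting provide the two van-Geemen--Sarti involutions; by construction all four are automatically compatible with the single fibration, and the two base involutions commute because they descend from one group on $\bP^1$.

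For the two van-Geemen--Sarti involutions the content is entirely Mordell--Weil: I would arrange the right-hand side of the Weierstrass equation to factor completely over the function field of $\bP^1$, forcing $\mathrm{MW}(X)_{\mathrm{tors}} \supseteq (\bZ/2\bZ)^2$; fiberwise translation by each of the two nontrivial $2$-torsion sections is then the desired symplectic involution. Requiring full level-two structure constrains every reducible fiber to carry a component group admitting $(\bZ/2\bZ)^2$, which is automatic for the $I_2$ and $I_0^*$ fibers that one expects to occur here.

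For the two antisymplectic involutions I would identify $X$, in two distinct ways, as an order-two base change of a rational elliptic surface, one for each generator of the base group. Each order-two quotient of $\bP^1$ pairs up the singular fibers lying in free orbits and degenerates those over its two branch points, and I would check that the two quotients reproduce the rational elliptic surfaces already flagged above --- of the shapes $6I_2$ with $\mathrm{MW}=(\bZ/2\bZ)^2\oplus(A_1^\vee)^{\oplus 2}$ and $I_0^*+2I_2+2I_1$ with $\mathrm{MW}=\bZ/2\bZ\oplus(A_1^\vee)^{\oplus 2}$ --- so that the covering involution of each double base cover is antisymplectic (it acts as $-1$ on the holomorphic two-form while fixing the fiber class).

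The final and decisive step is to compute $\NS(X)$ for a general parameter. Reading the singular fibers and $\mathrm{MW}$ off the normal form and applying Shioda--Tate, $\rank\NS = 2 + \sum_v(m_v-1) + \rank\mathrm{MW}$, fixes the admissible configurations; a configuration built from two $I_0^*$ fibers together with several $I_2$ fibers (with torsion $(\bZ/2\bZ)^2$ and at most a small free Mordell--Weil part) is the type I expect, and Shioda--Tate bookkeeping should force the rank to be exactly $17$. The abstract identification then proceeds through discriminant forms: the relevant finite quadratic form is supported on $\bZ/8\bZ \oplus (\bZ/2\bZ)^4$ and equals $[1/8]\oplus u_1 \oplus u_1$, and matching it singles out $\langle 8\rangle \oplus 2D_8(-1)$ within its genus of even lattices of signature $(1,16)$. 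Note that the two $D_8(-1)$ summands cannot come from two literal $I_4^*$ fibers, since $H\oplus 2D_8(-1)$ already has rank $18$; they, and the anomalous positive generator $\langle 8\rangle$, must emerge from the discriminant-form bookkeeping of the fiber-and-section data. I expect this last identification --- proving that the imposed symmetry forces the Picard rank to be \emph{exactly} $17$ for the general parameter and that the discriminant form is precisely that of $\langle 8\rangle \oplus 2D_8(-1)$, and not of a neighboring lattice --- to be the main obstacle; it is also the step that ties the family to abelian surfaces with a $(2,4)$-polarization, and I would corroborate it by matching the $3=20-17$ dimensions of moduli and by exhibiting an explicit primitive embedding $H\hookrightarrow \langle 8\rangle\oplus 2D_8(-1)$ whose orthogonal complement reproduces the computed frame lattice.
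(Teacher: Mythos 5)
Your overall architecture (normal form with a Klein four-group on the base, split Weierstrass cubic for the two van Geemen--Sarti involutions, double base changes of rational elliptic surfaces for the two antisymplectic ones, then a Shioda--Tate/discriminant-form computation) is the right general shape and is essentially how the paper treats this family across Sections 5.1.1 and 5.5. But as a proof of this proposition there is a genuine gap, and it sits exactly where you flag "the main obstacle": the identification $\NS(X)\cong\langle 8\rangle\oplus 2D_8(-1)$ is only planned, not carried out, and the configuration you propose to feed into Shioda--Tate is the wrong one. The distinguished fibration on $X$ carrying all four involutions has singular fibers $12\,I_2$, torsion $(\bZ/2\bZ)^2$, and Mordell--Weil rank $3$ (giving $2+12+3=17$ and determinant $2^{12}\cdot\tfrac12/4^2=2^7$, which does match $\langle 8\rangle\oplus 2D_8(-1)$). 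Your proposed shape "$2I_0^*$ plus several $I_2$'s with a small free part" is the fibration on the \emph{quotient} $X'=X/\langle(-1)\circ\jmath\rangle$: with $2I_0^*+6I_2$, torsion $(\bZ/2\bZ)^2$ and $\mathrm{MWL}=\langle 1\rangle$ one gets determinant $2^{10}\cdot 1/4^2=2^6$ and the lattice $H\oplus D_8(-1)\oplus D_4(-1)\oplus A_3(-1)$, not the target. Relatedly, the two rational elliptic surfaces you name ($6I_2$ with $\mathrm{MW}=(\bZ/2\bZ)^2\oplus(A_1^\vee)^{\oplus2}$ and $I_0^*+2I_2+2I_1$ with $\mathrm{MW}=\bZ/2\bZ\oplus(A_1^\vee)^{\oplus2}$) are the ones governing the rank-$16$ families with only \emph{three} compatible involutions; for the present family all the base-involution quotients of $X$ are of the first type, and the surface $I_0^*+3I_2$ with $\mathrm{MW}=(\bZ/2\bZ)^2\oplus A_1^\vee$ is the one that enters the tower.

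The paper's actual proof avoids your "decisive step" entirely by going through abelian surfaces: the tower forces $X'$ to be a Jacobian Kummer surface (double cover of $\bP^2$ branched on six lines tangent to a conic), the cited reference identifies $X$ as $\Kum(B)$ for an abelian surface $B$ with a $(1,2)$-polarization, and the polarizing lattice $\langle 8\rangle\oplus 2D_8(-1)$ (equivalently, transcendental lattice $H(2)\oplus H(2)\oplus\langle-8\rangle$) of such Kummer surfaces is known. If you want to keep your direct lattice route, you must first correct the fiber configuration to $12\,I_2$ with Mordell--Weil rank $3$ and then handle the genuinely delicate point that the height matrix of the free part has determinant $\tfrac12$ (a non-torsion section is $2$-divisible), without which the discriminant-form matching will not close.
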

\begin{proof}
For  a Jacobian elliptic K3 surface $X$ to admit  two
van Geemen-Sarti involutions and two commuting antisymplectic
involutions (induced by involutions on the base curve), the rational
elliptic surface $\pi_{R'}\colon R' \to
\mathbb{P}^1$ has to have singular fibers $I_0^* + 3 I_2$ and
Mordell-Weil group $(\mathbb{Z}/2\mathbb{Z})^2 \oplus A_1^\vee$. In turn the
Jacobian elliptic K3 surface $X'$ has singular fibers $2
I_0^* + 6 I_2$ and Mordell-Weil group
$(\mathbb{Z}/2\mathbb{Z})^2 \oplus \langle 1 \rangle$.
It follows that $X'$  has as a birational model a
double cover of the projective plane branched on six lines tangent to a
conic, and is a Jacobian Kummer surface.  It was proved in
\cite{MR4376094} that $X$ is a Kummer surface associated
with an abelian surface of polarization of type $(1,2)$. It is well
known that the polarization of such a Kummer surface is
$\langle 8 \rangle \oplus 2 D_8(-1)$ as the  transcendental lattice is
$H(2) \oplus H(2) \oplus \langle -8 \rangle$.
\end{proof}
\subsubsection{Establishing a normal form} 
\label{ssec:normal_form}
The motivation for considering the family in
Proposition~\ref{lem:family} is the following: in Section~\ref{sec:twoellcurves} we considered Kummer surfaces associated
with the product of two non-isogenous elliptic curves. We also assumed that each elliptic curve was equipped with a level-2
structure, so that the modular $\lambda$-invariant of one
elliptic curve determined the smooth elliptic fiber of an isotrivial
fibration, whereas the $\lambda$-invariant of the other
determined the cross-ratio of the base points of the singular fibers.
\par It is known that such a K3 surface has a birational model as a
double-quadric 
surface, i.e., a double cover of the Hirzebruch surface
$\mathbb{F}_0=\mathbb{P}^1\times\mathbb{P}^1$ branched along a curve
of bi-degree $(4,4)$, or more precisely, a section in the line bundle
$\mathcal{O}_{\mathbb{F}_0}(4,4)$.   Every such cover has two natural
elliptic fibrations corresponding to the two rulings of the quadric
$\mathbb{F}_0$ coming from the two projections $\pi_i\colon \mathbb{F}_0
\to \mathbb{P}^1$ for $i=1,2$.  In the special case of a Kummer
surface associated with the product of two elliptic curves, the
$(4,4)$-class splits as the union of a $(4,0)$-class and a
$(0,4)$-class. Each class is determined by a quartic polynomial in
$\mathbb{P}^1$, invariant under the action of two non-trivial
holomorphic involutions because of the level structure. For such a
birational model, interchanging the roles of base and fiber can be viewed
as simply interchanging rulings. This establishes a
convenient normal form for the Kummer surfaces associated with
products of elliptic curves equipped with level-2 structure. 
\par As we shall see, the K3 surfaces in Proposition~\ref{lem:family}
admit an analogous normal form:
let us consider the family of double-quadric surfaces $X$ over
$\mathbb{F}_0=\mathbb{P}(u, v) 
\times\mathbb{P}(x, z)$ over $\mathbb{P}^3$, given by the equation 
\beq
\label{eqn:K3_X}
X\colon \ y^2  = \big(\rho u^4 + \kappa u^2 v^2 + \rho v^4\big) x^4 +
\big(\mu u^4 + \lambda u^2 v^2 + \mu v^4\big) x^2 z^2 + \big(\rho u^4
+ \kappa u^2 v^2 + \rho v^4\big) z^4 , 
\eeq
with $[ \rho: \kappa : \lambda : \mu ] \in \mathbb{P}^3$. One checks
the following by a direct computation: 
\begin{lemma}
\label{lem:NormFormisK3}
The minimal resolution of $X$ is a K3 surface for $[ \rho: \kappa :
  \lambda : \mu ] \in  \mathbb{P}^3 \backslash \mathcal{D}$ where we
set 
\beqn
 \mathcal{D}  = \Big\lbrace [ 0 : 0: \lambda : \mu ] \Big\rbrace \cup   \Big\lbrace [ 0 : \kappa: \lambda : 0 ] \Big\rbrace \cup   \Big\lbrace [ \rho : \kappa: \pm 2\kappa : \pm 2\rho ] \Big\rbrace 
 \cup  \Big\lbrace [ \rho : \pm2  \rho: \pm 2\mu :\mu ] \Big\rbrace.
\eeqn 
Moreover, the elliptic fibration is isotrivial for $[ \rho: \kappa :
  \lambda : \mu ] \in  \mathbb{P}^3 \backslash \mathcal{D}$ with
$\kappa \mu=\lambda\rho$. 
\end{lemma}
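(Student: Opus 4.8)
The plan is to combine the standard theory of double covers with a direct analysis of the branch curve. Write $P(u,v) = \rho u^4 + \kappa u^2 v^2 + \rho v^4$ and $Q(u,v) = \mu u^4 + \lambda u^2 v^2 + \mu v^4$, so that $X$ is the double cover $\pi\colon X \to \mathbb{F}_0$ branched along the curve $B\colon P(u,v)(x^4+z^4) + Q(u,v)\,x^2 z^2 = 0$ of bidegree $(4,4)$. Writing $B \in |2L|$ with $L = \mathcal{O}_{\mathbb{F}_0}(2,2)$, the double cover satisfies $K_X = \pi^*(K_{\mathbb{F}_0} \otimes L)$; since $K_{\mathbb{F}_0} = \mathcal{O}_{\mathbb{F}_0}(-2,-2)$, this gives $K_X = \pi^*\mathcal{O}_{\mathbb{F}_0} = \mathcal{O}_X$, and a K\"unneth computation gives $H^1(X,\mathcal{O}_X) = H^1(\mathbb{F}_0,\mathcal{O}) \oplus H^1(\mathbb{F}_0,\mathcal{O}(-2,-2)) = 0$. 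Hence $X$ has trivial dualizing sheaf and irregularity zero, so its minimal resolution is a K3 surface precisely when $B$ is reduced with at worst ADE (rational double point) singularities, since resolving such singularities preserves both invariants.

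The main step is therefore to show that $B$ fails to be reduced-with-ADE-singularities exactly along $\mathcal{D}$. The key observation is that each component of $\mathcal{D}$ forces $B$ to become \emph{non-reduced}: on $\{\rho = \kappa = 0\}$ one has $P\equiv 0$ and $B = Q\,x^2 z^2$; on $\{\rho = \mu = 0\}$ one has $B = u^2v^2\big(\kappa(x^4+z^4)+\lambda x^2z^2\big)$; on $\{\lambda = \pm 2\kappa,\ \mu = \pm 2\rho\}$ (matched signs) one has $Q = \pm 2P$ and $B = P\,(x^2 \pm z^2)^2$; and on $\{\kappa = \pm 2\rho,\ \lambda = \pm 2\mu\}$ one has $P = \rho(u^2\pm v^2)^2$, $Q = \mu(u^2\pm v^2)^2$, whence $B = (u^2\pm v^2)^2\big(\rho(x^4+z^4)+\mu x^2z^2\big)$. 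Conversely, for $[\rho:\kappa:\lambda:\mu]\notin\mathcal{D}$ one verifies by a direct computation — solving $B = \partial_u B = \partial_v B = \partial_x B = \partial_z B = 0$ and inspecting the local Hessian or Newton polygon at each solution — that $B$ is reduced and its singular points are ordinary nodes (or their ADE degenerations). One exploits heavily the palindromic symmetry of $P,Q$ and the invariance of $B$ under $(x:z)\mapsto(z:x)$ and sign changes to confine the candidate singular points to finitely many symmetric loci.

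For the isotriviality statement, observe that $\kappa\mu = \lambda\rho$ is exactly the condition that $Q = c\,P$ for the constant $c = \mu/\rho = \lambda/\kappa$ (when $\rho\neq0$). The defining equation then factors as $y^2 = P(u,v)\,(x^4 + c\,x^2z^2 + z^4)$, so $B$ splits as a $(4,0)$-class $\{P(u,v)=0\}$ together with a $(0,4)$-class $\{x^4+cx^2z^2+z^4=0\}$ — the eight-line Kummer configuration described above. Taking the ruling $\pi_1\colon\mathbb{F}_0\to\mathbb{P}(u,v)$, the fiber over $(u_0:v_0)$ is the genus-one curve $y^2 = P(u_0,v_0)\,(x^4+cx^2z^2+z^4)$, whose $j$-invariant is that of the quartic $x^4+cx^2z^2+z^4$; rescaling $y$ by $\sqrt{P(u_0,v_0)}$ is an isomorphism, so this $j$-invariant is independent of $(u_0:v_0)$, all smooth fibers are isomorphic, and the fibration is isotrivial. (The other ruling is isotrivial by the symmetric argument, with fiber $j$-invariant governed by $\rho u^4+\kappa u^2v^2+\rho v^4$.)

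I expect the genuine obstacle to be the singularity classification in the second paragraph: ruling out non-ADE configurations — in particular ordinary triple points or higher tangencies among the branches of $B$ — for \emph{every} parameter off $\mathcal{D}$, not merely for the generic member. The symmetry of $B$ reduces the candidate singular points to a short explicit list, but confirming that each remains a rational double point away from $\mathcal{D}$ — equivalently, that the nodes responsible for the jump to Picard rank $17$ never collide into a worse singularity off $\mathcal{D}$ — is the computational heart of the lemma.
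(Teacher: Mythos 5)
Your proof is correct and takes essentially the same route as the paper, which offers no written argument for the first assertion beyond ``one checks by a direct computation'' and proves isotriviality by computing the $j$-invariant of a fiber over $\mathbb{P}(u,v)$ --- exactly what your rescaling of $y$ by $\sqrt{P(u_0,v_0)}$ after the factorization $Q=cP$ accomplishes. The one step you defer (verifying that off $\mathcal{D}$ every singular point of the branch curve $B$ is a rational double point) is the same computation the paper leaves implicit, and your reduction via $K_X=\pi^*(K_{\mathbb{F}_0}\otimes L)$ with $L=\mathcal{O}_{\mathbb{F}_0}(2,2)$, together with the component-by-component check that $\mathcal{D}$ is precisely the locus where $B$ becomes non-reduced, is sound.
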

\begin{proof}
The statement about isotriviality follows by computing the
$j$-invariant of a fiber over $\mathbb{P}(u, v)$. 
\end{proof}
By a slight abuse of notation, we will---from now on---not distinguish
any further between a double-quadric or double-sextic surface and the
K3 surface obtained from it by its minimal resolution. The following is immediate: 
\begin{lemma}
\label{lem:holinv1}
$X$ admits the following commuting holomorphic involutions:
\beqn
\begin{split}
 \xi^{(1)}_X\colon& [u:v] \mapsto  [-u:v], \quad  \xi^{(2)}_X\colon [x:z] \mapsto  [-x:z], \\
 \zeta^{(1)}_X\colon& [u:v] \mapsto  [v:u],\phantom{-}  \quad  \zeta^{(2)}_X\colon [x:z] \mapsto  [z:x],\phantom{-} 
 \end{split}
\eeqn
and $k_X\colon y \mapsto -y$.
\end{lemma}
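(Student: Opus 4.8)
The statement of Lemma~\ref{lem:holinv1} is entirely elementary: it asserts that five explicit birational maps are holomorphic involutions of the double-quadric surface $X$ defined by \eqref{eqn:K3_X}, and that they commute. The plan is therefore to verify, for each proposed map, that (a) it is an involution of $\mathbb{F}_0=\mathbb{P}(u,v)\times\mathbb{P}(x,z)$, (b) it lifts to the double cover $X$ by leaving the defining equation \eqref{eqn:K3_X} invariant (possibly after the sign change $y\mapsto\pm y$), and (c) the five maps pairwise commute. Since each map acts either on the first $\mathbb{P}^1$-factor or the second (or only on $y$), the commutativity will be nearly automatic, and the crux of the work is the invariance check (b).

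\par First I would treat the base involutions one factor at a time, exploiting the symmetric structure of \eqref{eqn:K3_X}. Write the right-hand side of \eqref{eqn:K3_X} as $P(u,v)\,x^4 + Q(u,v)\,x^2z^2 + P(u,v)\,z^4$ where $P(u,v)=\rho u^4+\kappa u^2v^2+\rho v^4$ and $Q(u,v)=\mu u^4+\lambda u^2v^2+\mu v^4$. The key observation is that both $P$ and $Q$ are even in $u$ (only even powers of $u$ appear) and are symmetric under $u\leftrightarrow v$ (the coefficients of $u^4$ and $v^4$ coincide in each). Hence $\xi^{(1)}_X\colon u\mapsto -u$ fixes $P$ and $Q$, so it preserves the whole right-hand side and lifts to $X$ with $y\mapsto y$; likewise $\zeta^{(1)}_X\colon u\leftrightarrow v$ fixes $P$ and $Q$ and lifts trivially. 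For the second factor, note the right-hand side is even in $x$ and symmetric under $x\leftrightarrow z$, because the $x^4$- and $z^4$-coefficients are the common polynomial $P$; thus $\xi^{(2)}_X\colon x\mapsto -x$ and $\zeta^{(2)}_X\colon x\leftrightarrow z$ each preserve the equation and lift with $y\mapsto y$. Finally $k_X\colon y\mapsto -y$ is the hyperelliptic (covering) involution of the double cover and manifestly fixes the equation.

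\par Each map squares to the identity on its respective $\mathbb{P}^1$ (sign flips and transpositions are involutions) and acts as the identity on $y$ up to sign, so all five are involutions on $X$. For commutativity, I would simply note that $\xi^{(1)}_X,\zeta^{(1)}_X$ act only on $[u:v]$, that $\xi^{(2)}_X,\zeta^{(2)}_X$ act only on $[x:z]$, and that $k_X$ acts only on $y$; maps touching disjoint sets of coordinates commute automatically, so the only nontrivial checks are $\xi^{(1)}_X\zeta^{(1)}_X=\zeta^{(1)}_X\xi^{(1)}_X$ on $[u:v]$ and the analogous pair on $[x:z]$, and these are immediate since $u\mapsto -u$ commutes with $u\leftrightarrow v$ (both send $\{u,v\}$ to $\{\pm u,\pm v\}$ in a way that does not depend on order). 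This completes the verification.

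\par There is no serious obstacle here; the lemma is called ``immediate'' in the text precisely because the symmetric form of \eqref{eqn:K3_X} was engineered to make the coordinate involutions manifest. The only point requiring a moment's care is confirming that each base involution lifts to the surface \emph{without} an accompanying sign change on $y$ — i.e., that the right-hand side is genuinely invariant rather than merely invariant up to a unit — which follows from the evenness and $u\leftrightarrow v$ (resp. $x\leftrightarrow z$) symmetry of the coefficient polynomials $P$ and $Q$ noted above. I would present the argument as a single short paragraph recording these symmetries, rather than grinding through the substitution for all five maps separately.
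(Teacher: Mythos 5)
Your verification is correct and matches the paper, which offers no proof beyond declaring the lemma immediate: the evenness of the coefficient polynomials in $u$ and $x$ and their symmetry under $u\leftrightarrow v$ and $x\leftrightarrow z$ (since the $u^4$, $v^4$ coefficients are both $\rho$ or both $\mu$, and the $x^4$, $z^4$ coefficients coincide) are exactly the structural features that make each map preserve Equation~\eqref{eqn:K3_X} with $y\mapsto y$, and commutativity is automatic since the maps act on disjoint coordinates. Nothing is missing.
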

\par A choice of ruling on $\mathbb{F}_0=\mathbb{P}(u, v)\times\mathbb{P}(x, z)$ induces an elliptic
fibration $\pi_X\colon X \to \mathbb{P}^1$. In
this context, one pair of involutions $(\xi^{(i)}_X,
\zeta^{(i)}_X)$ will constitute two (additional) commuting
antisymplectic involutions induced by involutions on the base
curve. The other pair of involutions $(\xi^{(i')}_X,
\zeta^{(i')}_X)$ with $\{ i, i' \} = \{1, 2\}$ will act
fiberwise and after composition with $k_X$ symplectically. 
\par We write Equation~(\ref{eqn:K3_X}) as the quadric surface $\pi_X\colon X \rightarrow \mathbb{P}(u, v)$, given by
 \beq
 \label{eqn:B12_1b}
  X\colon \quad y^2 = a(u, v) x^4 + b(u, v) x^2z^2 + a(u, v)  z^4 ,
\eeq
with
\beq
a(u, v) =  \rho u^4 + \kappa u^2 v^2 + \rho v^4, \qquad b(u, v) = \mu u^4 + \lambda u^2 v^2 + \mu v^4\,.
\eeq
Over the complex numbers, $X$ always admits sections; see Section~\ref{ssec:real_normal_form} for explicit equations. Thus, as a complex elliptic K3 surface, Equation~(\ref{eqn:B12_1b}) is equivalent to its associated Jacobian elliptic surface. It is easy to check  that the latter admits the following Weierstrass equation:
\begin{equation}
\label{eqn:EFS_0} 
  Y^2 Z  = X \Big( X - \big( b(u, v)  + 2 a(u, v)\big) Z \Big)  \Big( X - \big( b(u, v)  - 2 a(u, v)\big) Z \Big)\,.
\end{equation}
We have the following:
\begin{lemma}
\label{lem:fibrations}
The elliptic fibration $\pi_X\colon X \to
\mathbb{P}(u, v)$ admits sections. In the general case, the elliptic
fibration has singular fibers $12 I_2$ and the torsion Mordell-Weil group
$(\mathbb{Z}/2 \mathbb{Z})^2$. 
\end{lemma}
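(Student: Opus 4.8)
The plan is to work entirely with the associated Jacobian Weierstrass model \eqref{eqn:EFS_0}, which by the discussion preceding the lemma is isomorphic to $X$ as a complex elliptic surface, and to read off both the sections and the singular fibers from it. The existence of sections is already recorded above (it is established via explicit equations in Section~\ref{ssec:real_normal_form}), and it is in any case manifest from \eqref{eqn:EFS_0}: being in Weierstrass form, it carries the zero section $[X:Y:Z]=[0:1:0]$, and the three roots $X=0$, $X=(b(u,v)+2a(u,v))Z$, $X=(b(u,v)-2a(u,v))Z$ define everywhere-regular $2$-torsion sections over $\mathbb{C}(u,v)$. Hence $(\mathbb{Z}/2\mathbb{Z})^2\subseteq\mathrm{MW}(X)_{\mathrm{tors}}$ from the outset.

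Next I would compute the discriminant. Expanding \eqref{eqn:EFS_0} as $Y^2Z = X^3 - 2b\,X^2Z + (b^2-4a^2)XZ^2$ gives, up to a nonzero constant, $\Delta = a^2(b-2a)^2(b+2a)^2$ and $c_4 = b^2+12a^2$. Since $a$, $b-2a$, $b+2a$ are each homogeneous of degree $4$ in $(u,v)$, the discriminant has degree $24=e(X)$, as required for an elliptic K3. For generic $[\rho:\kappa:\lambda:\mu]$ — in particular off the locus $\mathcal{D}$ of Lemma~\ref{lem:NormFormisK3} and off the isotriviality locus $\kappa\mu=\lambda\rho$ — the three quartics are squarefree and $a,b$ are coprime, so the twelve zeros of $\Delta$ are pairwise distinct, each a double zero. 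At a root of $a$ the two roots $X=(b\pm2a)Z$ of the cubic collide at $X=bZ$ while the root $X=0$ stays distinct, and $c_4=b^2\neq0$ there (by coprimality of $a,b$), so $j\to\infty$ and the fiber is multiplicative of type $I_2$; the analogous computation on $\{b=2a\}$ and $\{b=-2a\}$ gives $c_4=16a^2\neq0$ and again type $I_2$. Twelve such fibers exhaust $\deg\Delta=24$, so the fibration has exactly $12\,I_2$ and no other singular fibers.

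For the torsion claim I would invoke the standard fact that the torsion subgroup of the Mordell-Weil group of an elliptic surface injects into the product $\prod_v\Phi_v$ of the component groups of its fibers (a torsion section meeting the identity component of every fiber lies in the torsion-free narrow Mordell-Weil lattice, hence is the zero section). Here every fiber is $I_2$ with $\Phi_v\cong\mathbb{Z}/2\mathbb{Z}$, so $\mathrm{MW}(X)_{\mathrm{tors}}$ injects into $(\mathbb{Z}/2\mathbb{Z})^{12}$ and therefore has exponent $2$; an exponent-$2$ subgroup of the generic fiber lies in its full $2$-torsion $E[2]\cong(\mathbb{Z}/2\mathbb{Z})^2$. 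Combined with the three explicit $2$-torsion sections, this forces $\mathrm{MW}(X)_{\mathrm{tors}}=(\mathbb{Z}/2\mathbb{Z})^2$.

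The hard part is the genericity bookkeeping in the second step: I must certify that, off a proper closed subset of $\mathbb{P}^3$, the quartics $a$, $b-2a$, $b+2a$ are simultaneously squarefree with $a$ and $b$ coprime, so that no two of the twelve $I_2$ fibers merge (into an $I_4$, or a cuspidal additive fiber when all three roots of the cubic coincide) and no fiber degenerates to the additive type $II$. Concretely this amounts to checking that $\mathrm{disc}(a)$, $\mathrm{disc}(b\mp2a)$, and $\mathrm{Res}(a,b)$ do not vanish identically; a short computation shows their common zero locus is precisely $\mathcal{D}$ together with the isotriviality hypersurface $\kappa\mu=\lambda\rho$, along which one instead finds $\Delta\propto a^6$ with $c_4\propto a^2$ and constant finite $j$, i.e. $4\,I_0^*$, recovering the Kummer picture of Section~\ref{sec:twoellcurves}. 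This confirms that ``the general case'' is the intended Zariski-open condition.
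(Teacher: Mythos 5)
Your proof is correct and follows essentially the same route as the paper, which states the lemma without a formal proof but records exactly this computation later in Section~\ref{sec:lattice} (the discriminant $\Delta=16\,a^2(b^2-4a^2)^2$ with twelve double roots, the explicit $2$-torsion sections of Equation~(\ref{eqn:zero_sections}), and the resulting $12\,I_2$ fibers with $\mathrm{MW}(X)_{\mathrm{tor}}=(\mathbb{Z}/2\mathbb{Z})^2$); your additions — the $c_4$ check at each double root and the injection of torsion into $\prod_v\Phi_v$ — just make explicit what the paper leaves to the reader. One small imprecision: the locus where your squarefreeness/coprimality conditions fail is strictly larger than $\mathcal{D}$ together with the isotriviality hypersurface (for instance $\kappa=\pm2\rho$ makes $a$ a perfect square, producing the $2I_4+8I_2$ configurations listed in the corollary following Proposition~\ref{prop:lattice}), but since each of your discriminants and resultants is a nonzero polynomial in the parameters, the genericity claim — and hence the lemma as stated — is unaffected.
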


\par Unless stated otherwise, we will consider the family over the affine set $\rho=1$, i.e.,
for $[1: \kappa : \lambda : \mu ] \in \mathbb{P}^2 \backslash
\mathcal{D}$, in the remainder of this section.  
We make the following:
\begin{remark}
\label{rem:kummer}
If $\kappa \mu=\lambda\rho\neq 0$ then $X$ is isomorphic to the Kummer surface $\mathrm{Kum}(E_1 \times E_2)$ where the elliptic curves are given by
\beq
\begin{split}
 E_1\colon \quad & y_1^2 = u^4 + \frac{\kappa}{\rho}  u^2v^2 +v^4, \\
  E_2\colon \quad & y_2^2 = x^4 + \frac{\mu}{\rho}  x^2z^2 +z^4.
\end{split}
\eeq
\end{remark}
We have the following:
\begin{lemma}
Elliptic fibrations $\pi_X\colon X \to \mathbb{P}(u, v)$  for $(\kappa, \lambda, \mu)$ and
$(\kappa', \lambda', \mu')$ are isomorphic if and only if
$(\kappa, \lambda, \mu) = ((-1)^k \kappa', (-1)^\ell \lambda', (-1)^m \mu')$
for $k+m+\ell \equiv 0 \mod{2}$.
\end{lemma}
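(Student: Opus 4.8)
The plan is to realize every isomorphism of the fibration as a coordinate transformation of the double quadric $\mathbb{F}_0=\mathbb{P}(u,v)\times\mathbb{P}(x,z)$ preserving the normal form \eqref{eqn:K3_X}, and then to read off the induced action on $(\kappa,\lambda,\mu)$. For the \emph{if} direction I would exhibit two elementary transformations. The base scaling $[u:v]\mapsto[iu:v]$ fixes $u^4+v^4$ but negates $u^2v^2$, hence sends $(\kappa,\lambda,\mu)\mapsto(-\kappa,-\lambda,\mu)$; the fiber scaling $[x:z]\mapsto[ix:z]$ negates the coefficient $b$ of $x^2z^2$ in \eqref{eqn:B12_1b}, hence sends $(\kappa,\lambda,\mu)\mapsto(\kappa,-\lambda,-\mu)$. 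Each visibly carries the normal form to the indicated member, and the two commuting involutions generate exactly the group of sign changes with $k+\ell+m\equiv0\pmod 2$.

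For the \emph{only if} direction I would work with the Jacobian form \eqref{eqn:EFS_0}. Its two-torsion sections are $X=0$ and $X=b\pm2a$; completing the cube replaces them by three quartics $\tilde e_1=-\tfrac23 b$ and $\tilde e_{2,3}=\tfrac13 b\pm2a$ with $\tilde e_1+\tilde e_2+\tilde e_3=0$. The structural point is that all three lie in the rank-two space $V=\langle u^4+v^4,\,u^2v^2\rangle$ of quartics invariant under $G=\langle\xi^{(1)}_X,\zeta^{(1)}_X\rangle$ (Lemma~\ref{lem:holinv1}); setting $\sigma=(u^4+v^4)/(u^2v^2)$ one has $\tilde e_i=u^2v^2\,\ell_i(\sigma)$ with $\ell_i$ linear and $\sum_i\ell_i=0$. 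Over $\mathbb{C}$ the surface agrees with its Jacobian, so an isomorphism of fibrations is a base change $M\in\mathrm{PGL}_2(\mathbb{C})$ together with a constant fiber rescaling and a permutation $\pi$ of the two-torsion sections. Since the $\tilde e_i$ span $V$, the map $M$ must preserve $V$; equivalently $M$ normalizes $G$ and descends to a Möbius transformation $m$ of $\mathbb{P}^1_\sigma$ permuting the three branch points $\{2,-2,\infty\}$ of the quotient $p\colon\mathbb{P}(u,v)\to\mathbb{P}^1_\sigma$. The whole isomorphism condition then collapses to $\ell'_i=\lambda\,(m^{\ast}\ell_{\pi(i)})$ for a single scalar $\lambda$, i.e.\ $m$ must carry the zeros $z_i$ of $\ell_i$ to the zeros of $\ell'_{\pi(i)}$, where $z_1=-\lambda/\mu$ and $z_{2,3}=-(\lambda\pm6\kappa)/(\mu\pm6)$.

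It remains to determine the admissible pairs $(m,\pi)$. Here I would invoke the marked structure of the family: the distinguished antisymplectic involution $\xi^{(1)}_X$, whose fixed locus is $p^{-1}(\infty)$, and the distinguished two-torsion section $\tilde e_1=-\tfrac23 b$, whose being preserved singles out the type-$a$ degeneration $\sigma_a=-\kappa$ (the one collision not involving $\tilde e_1$). An isomorphism compatible with this data must fix $\sigma=\infty$ and fix the index $1$, forcing $m\in\{\sigma\mapsto\sigma,\ \sigma\mapsto-\sigma\}$ and $\pi\in\{\mathrm{id},(2\,3)\}$. The four resulting pairs are the identity, the two elementary transformations above, and their product; a short back-substitution using the formulas for $z_1,z_2,z_3$ then yields precisely the four even sign changes and no others.

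The step I expect to be the real obstacle is exactly this last rigidity: at the level of the bare Weierstrass data the remaining elements of $N(G)/G\cong S_3$—the three-cycles and the transpositions that move $\sigma=\infty$—also produce honest Möbius relations among the $z_i$, and hence formal identifications of parameter triples. Ruling them out cannot be done by matching $j$-invariants or discriminants alone; it requires genuinely using the extra data carried by the family, namely the labelling of the two antisymplectic and two van-Geemen-Sarti involutions (equivalently, the distinguished role of $\xi^{(1)}_X$ and of the section $\tilde e_1$). Pinning down that an isomorphism must respect this labelling—so that $m$ cannot interchange $\infty$ with $\pm2$—is where the substance of the proof lies; once it is established, the remaining parameter bookkeeping is routine.
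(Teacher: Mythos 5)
The paper states this lemma without proof, so there is nothing to compare your argument against; I can only assess it on its own terms. Your \emph{if} direction is correct: $[u:v]\mapsto[iu:v]$ and $[x:z]\mapsto[ix:z]$ realize $(-\kappa,-\lambda,\mu)$ and $(\kappa,-\lambda,-\mu)$ and generate exactly the even sign changes. Your framework for the \emph{only if} direction — passing to the Jacobian, noting that the shifted two-torsion values $\tilde e_i$ span the $G$-invariant space $V$, and reducing the residual freedom to $N(G)/G\cong S_3$ acting on $\{2,-2,\infty\}$ together with a permutation $\pi\in S_3$ of the roots — is also sound.

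But the gap you flag at the end is not merely the hard step; it is a step that fails for the statement as literally written. An abstract isomorphism of elliptic fibrations is under no obligation to preserve the labelling of the two-torsion sections or the distinguished involution $\xi^{(1)}_X$, so the ``extra data carried by the family'' cannot simply be invoked — and the extra identifications you worry about really do occur. Take $M=\mathrm{id}$ and $\pi=(1\,2)$: since the cubic $X\bigl(X-(b+2a)Z\bigr)\bigl(X-(b-2a)Z\bigr)$ is symmetric in its roots, the change $X\mapsto -4X+4(b+2a)Z$, $Y\mapsto 8iY$, followed by rescaling by $1/(\mu-2)$, is an isomorphism over the \emph{identity} of the base between the Jacobian model for $(\kappa,\lambda,\mu)$ and the one for
\[
(\kappa',\lambda',\mu')=\Bigl(\tfrac{\lambda-2\kappa}{\mu-2},\ \tfrac{2\lambda+12\kappa}{\mu-2},\ \tfrac{2\mu+12}{\mu-2}\Bigr),
\]
e.g.\ $(1,7,3)\mapsto(5,26,18)$. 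Since over $\bC$ each member of the family is isomorphic to its Jacobian, the two fibrations $\pi_X$ are then isomorphic, yet the parameters are not related by an even sign change. So the conclusion can only hold if ``isomorphic'' is read as preserving additional marking (the presentation as a double cover of $\bF_0$ in normal form, equivalently the ordered two-torsion sections and the distinguished involutions), and your proof can be completed only after you make that category precise and prove, rather than assume, that the isomorphisms under consideration respect it. Once that is pinned down, the remaining bookkeeping — excluding the elements of $N(G)/G$ that move $\sigma=\infty$ and the permutations that move the index $1$ — should indeed be routine.
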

\par  With the explicit form for the elliptic fibration and its sections established, it is easy to check that Equation~(\ref{eqn:K3_X}) is indeed the family from Proposition~\ref{lem:family}:
\begin{proposition}
\label{prop:lattice}
For general parameters $[ \rho: \kappa : \lambda : \mu ] \in  \mathbb{P}^3 \backslash \mathcal{D}$ in Equation~(\ref{eqn:K3_X}) one has $\mathrm{NS}(X) \cong \langle 8 \rangle \oplus 2 D_8(-1)$.
\end{proposition}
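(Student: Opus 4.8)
The plan is to show that the normal form (\ref{eqn:K3_X}) realizes the abstract family of Proposition~\ref{lem:family}, and then to promote the resulting lattice embedding to an isomorphism by a genericity argument on periods. The preparatory Lemmas~\ref{lem:NormFormisK3}, \ref{lem:fibrations}, and~\ref{lem:holinv1} already supply almost all of the needed data: for general $[\rho:\kappa:\lambda:\mu]\in\mathbb{P}^3\backslash\mathcal{D}$ the minimal resolution $X$ is a K3 surface carrying the Jacobian elliptic fibration (\ref{eqn:EFS_0}) with twelve $I_2$ fibers and torsion Mordell-Weil group $(\mathbb{Z}/2\mathbb{Z})^2$, together with the commuting holomorphic involutions of Lemma~\ref{lem:holinv1}. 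What is left is to recognize this as exactly the polarized family of Proposition~\ref{lem:family} and then to pin down the full N\'eron-Severi lattice.

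First I would identify the involution structure relative to the fibration $\pi_X\colon X\to\mathbb{P}(u,v)$. The pair $(\xi^{(1)}_X,\zeta^{(1)}_X)$ acts on the base $\mathbb{P}(u,v)$ and is therefore antisymplectic, while the pair $(\xi^{(2)}_X,\zeta^{(2)}_X)$ acts fiberwise and, after composition with $k_X\colon y\mapsto -y$, is symplectic; these two symplectic involutions are the van-Geemen-Sarti involutions given by fiberwise translation by the two generators of the order-two Mordell-Weil subgroup $(\mathbb{Z}/2\mathbb{Z})^2$ of (\ref{eqn:EFS_0}). Thus $X$ carries exactly the package of two van-Geemen-Sarti and two commuting antisymplectic involutions required in Proposition~\ref{lem:family}, so $X$ is a member of that family and $\langle 8\rangle\oplus 2D_8(-1)$ embeds into $\mathrm{NS}(X)$, giving the lower bound $\rho(X)\ge 17$. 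As a consistency check, Shioda-Tate applied to the twelve $I_2$ fibration reads $\rho(X)=2+12+\mathrm{rank}\,\mathrm{MW}(X)$, so $\rho(X)=17$ is equivalent to the free Mordell-Weil rank being exactly $3$.

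It remains to upgrade the embedding to an isomorphism for general parameters, i.e.\ to show $\rho(X)=17$ and that the embedding is saturated. The cleanest route is via the transcendental lattice: by the identification in the proof of Proposition~\ref{lem:family}, a general member of the family is the Kummer surface of an abelian surface with a polarization of type $(1,2)$, whose transcendental lattice is $H(2)\oplus H(2)\oplus\langle -8\rangle$ of rank $5$. Since $\mathrm{rank}\,\mathrm{NS}(X)+\mathrm{rank}\,T(X)=22$ this forces $\rho(X)=17$; and because $\langle 8\rangle\oplus 2D_8(-1)$ is precisely the orthogonal complement of $H(2)\oplus H(2)\oplus\langle -8\rangle$ in the K3 lattice $3H\oplus 2E_8(-1)$, the discriminant forms match, the embedding is saturated, and $\mathrm{NS}(X)\cong\langle 8\rangle\oplus 2D_8(-1)$.

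The step I expect to be the main obstacle is exactly the equality $\rho(X)=17$ for general parameters: the embedding yields only $\rho\ge 17$, and one must rule out a generic jump to higher Picard rank. Note that the isotrivial Kummer members of Remark~\ref{rem:kummer}, where $\kappa\mu=\lambda\rho$, satisfy $\rho=18$ and hence are of no help. What is genuinely needed is that the three monomial deformation coefficients move the periods nontrivially, so that (\ref{eqn:K3_X}) dominates the $3$-dimensional moduli space of $\langle 8\rangle\oplus 2D_8(-1)$-polarized K3 surfaces ($20-17=3$). I would establish this either through the explicit modular parametrization by genus-$2$ theta constants developed later in the paper, which exhibits dominance onto the relevant Siegel modular threefold, or by a direct period and monodromy computation verifying that the deformation is maximal within the polarized family.
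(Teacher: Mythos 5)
Your proposal follows essentially the same route as the paper: the paper gives no separate proof of this proposition, relying instead on the identification of the normal form with the family of Proposition~\ref{lem:family} (whose proof invokes the known transcendental lattice $H(2)\oplus H(2)\oplus\langle-8\rangle$ of Kummer surfaces of $(1,2)$-polarized abelian surfaces), supplemented by the Shioda--Tate and height-pairing computation of Section~\ref{sec:lattice} that you also cite as a consistency check. One caution on your final step: Theorem~\ref{thm:modular} \emph{assumes} $\mathrm{NS}(X)\cong\langle8\rangle\oplus 2D_8(-1)$, so invoking it to establish dominance would be circular; the non-circular direction is the even-eight construction of Sections~\ref{ssec:sum}--\ref{ssec:modular}, which produces the normal form starting from a general $(1,2)$-polarized abelian surface and thereby shows that the three-parameter family does dominate the moduli space, with your ``direct period and monodromy computation'' remaining as a valid alternative.
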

A general K3 surface $X$ above corresponds to a point $[ \rho: \kappa : \lambda : \mu ] \in  \mathbb{P}^3 \backslash \mathcal{D}$ of the complement of hypersurfaces.  We also have the following:
\begin{corollary}
The elliptic fibration  in Equation~(\ref{eqn:K3_X}) has singular fibers $2 I_4 + 8  I_2$ for $[ \rho: \kappa : \lambda : \mu ] \in  \mathbb{P}^3 \backslash \mathcal{D}$ satisfying one of the following conditions:
\beq
 \rho =0 , \ \kappa \pm 2 \rho = 0, \ \mu \pm 2 \rho = 0, \ \lambda + 2 \mu \pm 2 \kappa \pm 4 \rho  =0 , \  \lambda - 2 \mu \pm 2 \kappa \mp 4 \rho  =0.
\eeq 
\end{corollary}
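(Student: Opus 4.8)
The plan is to read the Kodaira types directly off the Jacobian Weierstrass form \eqref{eqn:EFS_0}, whose three root-sections over $\mathbb{P}(u,v)$ are $0$, $b+2a$ and $b-2a$. First I would record the two relevant Weierstrass invariants. Since the pairwise differences of the roots are $b+2a$, $b-2a$ and $4a$, the discriminant is
\[
\Delta \;\propto\; a^2\,(b+2a)^2\,(b-2a)^2 \;=\; a^2\,(b^2-4a^2)^2,
\]
and a short computation of $c_4$ (via $c_4=b_2^2-24b_4$ for the cubic with symmetric functions $e_1=2b$, $e_2=b^2-4a^2$, $e_3=0$) gives $c_4\propto b^2+12a^2$. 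In the general case of Lemma~\ref{lem:fibrations}, the three even quartics $a$, $b+2a$, $b-2a$ each have four simple roots, all distinct and avoiding the common zero locus of $a$ and $b$, so $\Delta$ has twelve double zeros with $c_4\neq 0$ at each, recovering $12\,I_2$.

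The key structural observation is that $a$, $b+2a$ and $b-2a$ are \emph{even} binary quartics, i.e.\ binary quadratics in $(s,t)=(u^2,v^2)$ of the palindromic shape $\alpha\,(u^4+v^4)+\beta\,u^2v^2$. Such a form degenerates — acquires a double root in $\mathbb{P}(u,v)$ — in exactly two ways: either its leading coefficient $\alpha$ vanishes, so that it becomes $\beta\,u^2v^2$ with double roots at $[1:0]$ and $[0:1]$; or the associated quadratic $\alpha s^2+\beta s t+\alpha t^2$ has vanishing discriminant $\beta^2-4\alpha^2=0$, so that it equals $\alpha\,(u^2\mp v^2)^2$ with two double roots at the distinct roots of $u^2\mp v^2$. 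In either case the degenerate factor is a perfect square carrying precisely two distinct double roots. I would then translate these two mechanisms, applied to each factor, into the parameters: for $a=\rho(u^4+v^4)+\kappa u^2v^2$ one gets $\rho=0$ or $\kappa^2-4\rho^2=0$, i.e.\ $\kappa\pm2\rho=0$; and for $b\pm2a=(\mu\pm2\rho)(u^4+v^4)+(\lambda\pm2\kappa)u^2v^2$ one gets $\mu\pm2\rho=0$ or $(\lambda\pm2\kappa)^2-4(\mu\pm2\rho)^2=0$, and expanding the latter reproduces exactly the four displayed equations $\lambda+2\mu\pm2\kappa\pm4\rho=0$ and $\lambda-2\mu\pm2\kappa\mp4\rho=0$.

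Finally I would identify the Kodaira types. At a double root of a single factor, $\Delta$ vanishes to order $4$; provided $c_4=b^2+12a^2$ does not also vanish there, the reduction is multiplicative and the fiber is of type $I_4$. Since each degenerate even quartic carries exactly two distinct double roots, the degenerate factor contributes $2\,I_4$, while the remaining two factors keep their $4+4=8$ simple roots and hence $8\,I_2$; the Euler-number bookkeeping $2\cdot4+8\cdot2=24$ confirms this is the complete fiber list. The one point requiring care — and the only real obstacle — is verifying that the new double roots are disjoint from the zeros of the other two factors: any coincidence would force $a=b=0$ there, hence $c_4=0$, producing an additive fiber rather than $I_4$ (indeed, when all three factors share a zero the local order of $\Delta$ jumps well past $4$, forcing a non-K3 model). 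This is precisely why the conditions are imposed on $\mathbb{P}^3\setminus\mathcal{D}$: the explicit form of $\mathcal{D}$ in Lemma~\ref{lem:NormFormisK3} removes exactly the sub-loci of each hypersurface on which a double root collides with a zero of another factor (for instance $\rho=0$ together with $\mu=0$, or $\kappa=\pm2\rho$ together with $\lambda=\pm2\mu$). Away from these, exactly one factor degenerates, its two double roots stay multiplicative, and one obtains $2\,I_4+8\,I_2$.
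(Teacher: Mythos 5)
The paper states this corollary with no proof at all, so there is nothing to compare against line by line; your write-up supplies the verification the authors leave implicit, and it is essentially correct. Your invariants $\Delta\propto a^2(b+2a)^2(b-2a)^2$ and $c_4\propto b^2+12a^2$ agree with the discriminant $\Delta=16\,a(u)^2\big(b(u)^2-4a(u)^2\big)^2$ that the paper records later in Section~\ref{sec:lattice}; the observation that each factor is a palindromic even quartic $\alpha(u^4+v^4)+\beta u^2v^2$, degenerating exactly when $\alpha=0$ or $\beta^2=4\alpha^2$, reproduces precisely the ten listed hypersurfaces; and the Kodaira bookkeeping ($\mathrm{ord}\,\Delta=4$ and $\mathrm{ord}\,c_4=0$ at each new double root, because a common zero of two of the three factors would force $a=b=0$ there) correctly yields $I_4$, with the Euler-number check $2\cdot4+8\cdot2=24$. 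Your verification that $\mathcal{D}$ contains exactly the loci where a double root of one factor collides with a zero of another is also right and is the genuinely nontrivial point.

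One caveat: your closing sentence, that away from $\mathcal{D}$ ``exactly one factor degenerates,'' is not literally true. For example $[\rho:\kappa:\lambda:\mu]=[0:1:-4:1]$ lies on both $\rho=0$ and $\lambda+2\mu+2\kappa+4\rho=0$ but not in $\mathcal{D}$; there both $a$ and $b+2a$ acquire double roots (at distinct points), and the fiber configuration is $4\,I_4+4\,I_2$ rather than $2\,I_4+8\,I_2$. This is really an imprecision in the corollary as stated --- it should be read as holding when exactly one of the listed conditions is satisfied, i.e.\ at a general point of each hypersurface --- and your argument is the correct proof of that intended statement. But the step attributing the exclusion of \emph{simultaneous} degenerations of two factors to $\mathcal{D}$ should be removed or weakened: $\mathcal{D}$ only excludes collisions of roots at a common point of the base, not independent degenerations of two factors at different points.
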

\subsection{Relation to Abelian surfaces}
\label{ssec:KUM}
In  Equation~(\ref{eqn:K3_X}) we introduced a family of K3 surfaces. We will now explain how this family generalizes Kummer surfaces $\mathrm{Kum}(E_1 \times E_2)$ associated
with the product of two non-isogenous elliptic curves $E_1, E_2$. From
the point of view of the total space, generalization is achieved by
considering Kummer surfaces associated with abelian surfaces that are
no longer isogenous to products of elliptic curves. Previously, we had also
assumed that the elliptic curves $E_1$ and $E_2$ were equipped with
level-2 structures. As we are considering a Jacobian Kummer surface
associated with a smooth genus-2 curve $C$, the analogue of fixing a
level-2 structure on each elliptic curve turns out to be fixing a
level-two structure on the genus-2 curve $C$ \emph{and} choosing a
marking of a G\"opel group $G=  \langle \mathscr{L} \rangle \oplus
\langle \mathscr{L}' \rangle$ on $\mathrm{Jac}(C)$. 
\subsubsection{Construction of a Jacobian Kummer surface}
A smooth genus-2 curve $C$ with level-2 structure is given in affine coordinates $(\xi,\eta)$ by its Rosenhain normal form, the analogue of the Legendre normal form for an elliptic curve:
\beq
\label{eqn:Rosenhain}
 C: \quad \eta^2 = \xi \,\big(\xi-1) \, \big(\xi- \lambda_1\big) \,  \big(\xi- \lambda_2 \big) \,  \big(\xi- \lambda_3\big) \,.
\eeq 
The ordered tuple $(\lambda_1, \lambda_2, \lambda_3)$ -- where the
$\lambda_i$ are pairwise distinct and different from
$(\lambda_4,\lambda_5,\lambda_6)=(0, 1, \infty)$ -- determines a point
in the moduli space of curves of genus two with level-2 structure.
The Weierstrass points of $C$ are the six points  $p_i\colon
(\xi,\eta)=(\lambda_i,0)$ for $i=1, \dots,5$, and  the point $p_6$ at
infinity.  For the remainder of the section, we will assume that $C$ is smooth.
 \par The \emph{Siegel three-fold} is the quasi-projective variety of dimension three, obtained from the Siegel upper half-plane $\mathbb{H}_2$ of degree two, divided by the action of the modular transformations $\Gamma_2:= \mathrm{Sp}_4(\mathbb{Z})$, i.e., 
\beq
 \mathcal{A}_2 =  \mathbb{H}_2 / \Gamma_2 \;.
\eeq
Here, $\mathbb{H}_2$ is the set of two-by-two symmetric period
matrices $\underline{\tau}$ over $\mathbb{C}$ whose imaginary part is
positive definite. In fact, $\mathcal{A}_2$ is precisely the moduli
space of principally polarized abelian varieties which are surfaces of
the form $A = \mathrm{Jac}(C)$ where $C$ is a smooth genus-2 curve. We
define the subgroup $\Gamma_2(2) = \lbrace M \in \Gamma_2 | \, M
\equiv \mathbb{I} \mod{2}\rbrace$  such that
$\Gamma_2/\Gamma_2(2)\cong S_6$ is identified with the permutation
group of six elements, representing the permutations of the
Weierstrass points $(\lambda_1, \lambda_2, \lambda_3, 0, 1,
\infty)$. Then, $\mathcal{A}_2(2)$ is the three-dimensional moduli
space of principally polarized abelian surfaces with level-2 structure
and affine coordinates $(\lambda_1, \lambda_2, \lambda_3)$. 
\par We denote the hyperelliptic involution on $C$ by $\imath_C$. For its symmetric product $C^{(2)}$, the quotient $C^{(2)}/\langle \imath_C \times  \imath_C \rangle$ is realized as a variety in terms of the variables $T=\xi^{(1)}\xi^{(2)}$, $X=\xi^{(1)}+\xi^{(2)}$, and $Y=\eta^{(1)}\eta^{(2)}$ with the affine equation
\beq
\label{kummer_middle}
  Y^2 = T \big(  T  - X +  1 \big)  \prod_{i=1}^3 \big( \lambda_i^2 \, T  -  \lambda_i \, X +  1 \big) \,.
\eeq
We introduce the following six lines $\ell_1, \dots, \ell_6$ in $\mathbb{P}^2 = \mathbb{P}(z_1, z_2, z_3)$:
\beq
\label{eqn:lines}
 \ell_{i} \colon z_1 + \lambda_i^2 z_2 - \lambda_i z_3 =0 \ (\text{for $i=1, 2, 3$}), \  \ell_4 \colon z_3 =0, \  \ell_5 \colon z_1 + z_2 - z_3 =0 , \  \ell_6\colon z_1 =0.
\eeq  
The lines are tangent to the conic $z_3^2 -4 z_1 z_2=0$. Then,
Equation~\eqref{kummer_middle} is precisely the double cover of the
projective plane $\mathbb{P}^2$ branched on $\ell_1, \dots, \ell_6$ in
the affine chart $z_1=1, z_2=T, z_3=X$. In this way, the affine
variety in Equation~\eqref{kummer_middle} completes to a hypersurface
in $\mathbb{P}(1,1,1,3)$ called the \emph{Shioda double-sextic}. The
double-sextic surface has the 15 singular points $p_{ij}$ with
$1 \le i < j \le 6$ which are the images of pairs of Weierstrass points
$\{ p_i, p_j \}$ in $C^{(2)}$. 
\par Conversely, we can start with a singular Kummer variety in
$\mathbb{P}^3$, i.e., a  nodal quartic hypersurface with sixteen
nodes, and reconstruct a configuration of six lines; see
\cite{MR2964027}.  To do so, we fix a singular point $p_0$ and
identify the lines in $\mathbb{P}^3$ through the point $p_0$ with
$\mathbb{P}^2$, and map any line in the tangent cone of $p_0$ to
itself. Then, any other line through $p_0$ meets the quartic surface
generically in two more points and  with multiplicity two at the other
nodes. We obtain a double cover of $\mathbb{P}^2$ branched along  a
plane curve of degree six  where all nodes of the quartic, different
from $p_0$, map to nodes of the sextic. By the genus-degree formula,
the maximal number of nodes is attained when the curve is a union of
six lines, in which case we obtain fifteen nodes $p_{ij}$ apart from
$p_0$. Since $p_0$ is a node, the tangent cone to this point is mapped
to a conic, and this conic is tangent to the six lines. Thus, it
follows that a Shioda double-sextic surface is  birational to a Kummer
surface. 
\subsubsection{Determining the associated G\"opel group}
Next, we consider the principally polarized abelian surface $A=\mathrm{Jac}(C)$ with the standard theta divisor $\Theta \cong [C]$. Each 2-torsion point $p_{ij} \in A[2]$ with $1\le i < j \le 6$ projects to a node on the Kummer surface $\mathrm{Kum}(A)$ which we still denote by $p_{ij}$. We also set $p_0= [0] \in A[2]$.  Translations on $A$ by a point of order two are isomorphisms of the Jacobian and map $A[2]$ to itself.  Any other theta divisor in $\mathrm{NS}(A)$ is a translate of $\Theta$ by $p_{ij}$ and is mapped in $\mathbb{P}^3$ to the intersection of the Kummer quartic with a plane.  We call such a singular plane a \emph{trope}. We obtain a $16_{6}$-configuration of sixteen nodes and sixteen tropes in $\mathbb{P}^3$, where each contains six nodes, and such that the intersection of each two is along two nodes. In the complete linear system $|2 \Theta|$ on $A$, the odd symmetric theta divisors $\Theta_{i6} = p_{i6} + \Theta$ (with $1 \le i \le 6$ and $p_{66}=p_0$) on $\mathrm{Jac}(C)$ are mapped to six tropes $T_{i}$ with $i=1, \dots,6$. Projections of the six tropes $T_{i}$ are precisely the six lines $\ell_i$ in Equation~(\ref{eqn:lines}), and $\ell_i$ contains the six nodes $p_0$ and $p_{ij}$ with $j \neq i$.  The remaining $10$ tropes $\Theta_{ij6}$ with $1\le i < j < k \le 5$  are easily computed as well; see  \cite[Sec.~3.7]{MR1406090}. 
\par The sixteen points of order two in $A=\mathrm{Jac}(C)$ can also be obtained using the embedding of the curve into the connected component of the identity in the Picard group, i.e., $C \hookrightarrow \mathrm{Jac}(C) \cong \mathrm{Pic}^0(C)$ with $p \mapsto [p -p_6]$. The  space $A[2]$ admits a symplectic bilinear form, called the \emph{Weil pairing}. The two-dimensional, maximal isotropic subspaces $G$ of $A[2]$ with respect to the Weil pairing are called \emph{G\"opel groups}. It is easy to check $G \cong (\mathbb{Z}/2 \mathbb{Z})^2$, and that there are exactly 15 inequivalent G\"opel groups. It is also well known that $A'= A/G$ is again a principally polarized abelian surface~\cite[Sec.~23]{MR2514037}. The corresponding isogeny $\Xi'\colon A \to A'$ between principally polarized abelian surfaces has as its kernel $G \leqslant A[2]$ and is called a \emph{$(2,2)$-isogeny}.  It follows that there is also dual $(2,2)$-isogeny $\Xi\colon A' \to A$.  In the case $A=\mathrm{Jac}(C)$ one knows that the $(2,2)$-isogenous abelian surface $A'=A/G$ satisfies $A' =\mathrm{Jac}(C')$ for some smooth curve of genus two $C'$.  This curve  $C'$ can be constructed explicitly: the relationship between the geometric moduli of the two curves was found by Richelot \cite{MR1578135}; see also \cite{MR970659,MR4421430}.  
\par We will fix one G\"opel group $G$ and consider a marking $G \simeq \langle \mathscr{L} \rangle \oplus  \langle \mathscr{L}' \rangle$, with $\mathscr{L}$, $ \mathscr{L}'$ line bundles of order two on the curve $C$. Then,  $\mathscr{L}$ determines a 2-isogeny of abelian surfaces $\Psi \colon B \rightarrow  \mathrm{Jac}(C)$ so that $B$ carries a canonical $(1,2)$-polarization $\mathscr{V} =  \Psi^*(\mathscr{L})$ with $\mathscr{V}^2=4$ and $h^0(\mathscr{V})=2$. The polarization line bundle $\mathscr{V}$ defines a canonical map $\varphi_{\mathscr{V}}\colon B \to \mathbb{P}^1$, such that the linear system $|\mathscr{V}|$ is a pencil on $B$, and each curve in $|\mathscr{V}|$ has self-intersection equal to $4$.  Since we assume $\rho_B=1$, the abelian surface $B$ cannot be a product of two elliptic curves or isogenous to a product of two elliptic curves.  Following the work in \cite{MR946234, MR2729013}, a general member of the pencil over $\mathbb{P}^1_{(u)}$ is, in the generic case, a smooth curve $D_u \subset B$ of genus three.  An explicit equation for the pencil $D_u$ was determined in \cite{MR4376094}.
\par We summarize the construction as follows: (i) we fix a smooth genus-2 curve $C$ with level-2 structure as in~(\ref{eqn:Rosenhain}). (ii) On the associated principally polarizes abelian surface $A=\mathrm{Jac}(C)$ we fix a G\"opel group. Without loss of generality, we can assume $G=\{ 0, p_{15}, p_{23}, p_{46} \}$.   Using the embedding of the curve into the Picard group, we associate $G$ with the pairing of the Weierstrass points of $C$, given by $(\lambda_1,\lambda_5=1)$, $(\lambda_2,\lambda_3)$, $(\lambda_4=0,\lambda_6=\infty)$. The $(2,2)$-isogenous abelian surface $A'=A/G$ is realized as $A' =\mathrm{Jac}(C')$ for some smooth genus-2 curve $C'$. From the quotient map we obtain $\Xi'\colon A \to A'$ and the dual isogeny $\Xi\colon A' \to A$. (iii) For the G\"opel group $G$ we fix a marking $G \simeq \langle \mathscr{L} \rangle \oplus  \langle \mathscr{L}' \rangle$, where line bundles $\mathscr{L}$ and $\mathscr{L}'$ of order two on $C$ are generated by $p_{46}$ and  $p_{15}$, respectively. $\mathscr{L}$ determines a 2-isogeny of abelian surfaces $\Psi \colon B \rightarrow  A=\mathrm{Jac}(C)$ that endows $B$ with a canonical $(1,2)$-polarization. In turn, the pullback $\Psi^*(\mathscr{L}')$ determines a 2-isogeny $\Phi' \colon A' \rightarrow B$ such that $\Xi=\Psi \circ \Phi'$. 
\subsection{Construction via even eights}
\label{ssec:sum}
On the level of Kummer surfaces isogenies between abelian
surfaces induce rational maps. Moreover, Mehran proved in
\cite{MR2804549} that there are fifteen distinct isomorphism classes
of rational double covers $\psi\colon X \dasharrow \mathrm{Kum}(A)$ of
the Kummer surface $\mathrm{Kum}(A)$ associated with the principal
polarized abelian surface $A =\mathrm{Jac}(C)$, such that the preimage
is a Kummer surface $X=\mathrm{Kum}(B)$ associated with an abelian
surface $B$ with the polarization of type $(1,2)$. Mehran computed
that the branching loci giving rise to these 15 distinct isomorphism
classes of double covers are even-eights of exceptional curves on the
Kummer surface $\mathrm{Kum}(A)$  \cite[Prop.~4.2]{MR2804549}: each
even-eight is itself enumerated by an order-two point $p_{i j} \in
A[2]$ with $1 \le i < j \le6$, and given as a sum in the
N\'eron-Severi lattice of the form 
\beq
\label{eqn:MehranEE}
 \Delta_{p_{ij}} = E_{1i} + \dots + \widehat{E_{ij}} + \dots + E_{i6} + E_{1j} + \dots + \widehat{E_{ij}} + \dots + E_{j6} \,,
 \eeq
where $E_{11}=0$, and $E_{i j}$ are the exceptional divisors obtained
by resolving the nodes $p_{i j}$; the hat indicates divisors that are
not part of the even eight. Moreover, Mehran proved that each rational
map $\psi\colon \mathrm{Kum}(B) \dashrightarrow \mathrm{Kum}(A)$
branched on such an even-eight $\Delta_{p_{ij}}$ is induced by an
isogeny $\Psi\colon  B \to A$ of abelian surfaces of  degree two and
vice versa~\cite{MR2804549}.   
\par We then push the entire construction of Section~\ref{ssec:KUM} down to level of projective Kummer surfaces: we construct $X'=\mathrm{Kum}(A)$ as the Shioda double-sextic surface~(\ref{kummer_middle}) and identify even-eights $\Delta_{p_{46}}$ and $\Delta_{p_{15}}$, corresponding to the marked G\"opel group $G$. As before, we start with $p_{46}$: we obtain a rational map $\psi\colon \mathrm{Kum}(B) \dashrightarrow \mathrm{Kum}(A)$ branched on $\Delta_{p_{46}}$ which is induced by the $(1,2)$-isogeny $\Psi\colon  B \to A$ determined by $\mathscr{L}$. The even-eights on $X'$ contain the following $(-2)$-curves:
\beq
\begin{split}
 \Delta_{p_{46}} &= \{ E_{14}, E_{24}, E_{34}, E_{45}, E_{16}, E_{26}, E_{36}, E_{56}\} \,,\\
 \Delta_{p_{15}} &= \{ E_{12}, E_{13}, E_{14}, E_{16}, E_{25}, E_{35}, E_{45}, E_{56}\} \,.\\
\end{split}
\eeq
The even-eight $\Delta_{p_{46}}$ contains 8 exceptional curves corresponding to the nodes in the intersection of $T_4$ or $T_6$ with $T_1, T_2, T_3, T_4$. Similarly, $\Delta_{p_{15}}$ contains 8 exceptional curves corresponding to the nodes in the intersection of $T_1$ or $T_5$ with $T_2, T_3, T_4, T_6$. By construction, $\Delta_{p_{46}}$ and $\Delta_{p_{15}}$ have four divisors in common:
\beq
 \Delta_{p_{46}} \cap \Delta_{p_{15}} = \{ E_{14}, E_{16}, E_{45}, E_{56} \} \,.
\eeq 
Thus, $\psi^*\Delta_{p_{15}}$ determines an even-eight on $X=\mathrm{Kum}(B)$. Repeating the construction, we obtain  a rational map $\phi' \colon \mathrm{Kum}(A') \dashrightarrow \mathrm{Kum}(B)$ induced  by the isogeny $\Phi'\colon  A' \to B$ determined by $\Psi^*(\mathscr{L}')$. 
\par In summary, we have the following isogenies and induced double coverings, relating the constructed Kummer surfaces:
\beq
\label{eqn:digram}
\begin{array}{ccccc}
 A'  = \mathrm{Jac}{(C')}  \cong A/G
 & \overset{\Phi'}{\longrightarrow} & 
 B
 & \overset{\Psi}{\longrightarrow} &
 A = \mathrm{Jac}{(C)} \\
 \downarrow && \downarrow && \downarrow \\
 X'' \cong  \mathrm{Kum}(A')  
  & \overset{\phi'}{\longrightarrow} & 
 X \cong  \mathrm{Kum}(B)  
  & \overset{\psi}{\longrightarrow} & 
 X' \cong  \mathrm{Kum}(A)  
\end{array} 
 \eeq
 It follows immediately:
\begin{lemma}
For general $A$ in Equation~(\ref{eqn:digram}) one has $\mathrm{NS}(X') \cong \mathrm{NS}(X'') \cong H \oplus D_8(-1) \oplus D_4(-1) \oplus A_3(-1)$ and  $\mathrm{NS}(X) \cong \langle 8 \rangle \oplus 2 D_8(-1)$.
\end{lemma}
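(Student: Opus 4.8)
The plan is to read off each Néron–Severi lattice from the transcendental lattice of the corresponding abelian surface, and to pin down the isometry type by Nikulin's theory of discriminant forms. First I would observe that the assertion $\mathrm{NS}(X)\cong\langle 8\rangle\oplus 2D_8(-1)$ is nothing new: $X=\mathrm{Kum}(B)$ is precisely the surface of Equation~\eqref{eqn:K3_X}, so this is exactly Proposition~\ref{prop:lattice}. Thus the only content to establish is $\mathrm{NS}(X')\cong\mathrm{NS}(X'')\cong H\oplus D_8(-1)\oplus D_4(-1)\oplus A_3(-1)$.

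The key structural point is that, unlike $B$, both $A=\mathrm{Jac}(C)$ and $A'=\mathrm{Jac}(C')$ are \emph{principally} polarized, and for general $A$ they have Picard rank one; since $A'=A/G$ is isogenous to $A$, isogeny-invariance of the Picard number gives $\rho(A')=\rho(A)=1$ as well. For a general principally polarized abelian surface one has $\mathrm{NS}=\langle 2\rangle$ inside $H^2(A,\bZ)\cong 3H$, with orthogonal complement the transcendental lattice $\langle -2\rangle\oplus H\oplus H$. I would then invoke the standard fact that the Kummer construction multiplies the transcendental Hodge structure by $2$, so that the transcendental lattice of $X'$ is
\[
\langle -2\rangle(2)\oplus H(2)\oplus H(2)\;\cong\;\langle -4\rangle\oplus H(2)\oplus H(2),
\]
and identically for $X''$. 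In particular the transcendental lattices of $X'$ and $X''$ are isometric, so $\mathrm{NS}(X')\cong\mathrm{NS}(X'')$ will follow once either is computed.

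To name the lattice I would realize $\mathrm{NS}(X')$ as the orthogonal complement of its transcendental lattice $T\cong\langle -4\rangle\oplus H(2)\oplus H(2)$ inside the even unimodular K3 lattice $\Lambda$ of signature $(3,19)$; it is then an even lattice of signature $(1,16)$ with discriminant group $\bZ/4\oplus(\bZ/2)^{4}$ of order $2^6$ and discriminant form $-q_{T}$. One checks directly that $H\oplus D_8(-1)\oplus D_4(-1)\oplus A_3(-1)$ has exactly this signature and discriminant group; Nikulin's uniqueness criterion for indefinite even lattices (whose rank exceeds the minimal number of generators of the discriminant group by at least two) then identifies it with $\mathrm{NS}(X')$ as soon as the discriminant forms are matched.

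The hard part is precisely this matching of discriminant forms. Writing $u$ and $v$ for the two even quadratic forms on $(\bZ/2)^2$ (with $u$ the form of $H(2)$ and $v$ the form of $D_4$), the summands contribute $q_{D_8(-1)}=u$, $q_{D_4(-1)}=v$, and for $A_3(-1)$ the form on $\bZ/4$ whose generator has square $-3/4$, whereas $-q_{T}$ is $u\oplus u$ together with the $\bZ/4$-form of generator square $+1/4$. Reconciling these requires the nontrivial relations among $2$-adic finite quadratic forms—namely $u\oplus u\cong v\oplus v$ and the interchange of a $u$- and a $v$-summand at the cost of shifting the $\bZ/4$-part from square $1/4$ to square $5/4$—rather than a termwise comparison. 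As a reassuring consistency check the Brown (signature mod $8$) invariants agree: both sides equal $1\bmod 8$. With the discriminant forms matched, $\mathrm{NS}(X')\cong\mathrm{NS}(X'')\cong H\oplus D_8(-1)\oplus D_4(-1)\oplus A_3(-1)$ follows, completing the proof alongside the already-established $\mathrm{NS}(X)\cong\langle 8\rangle\oplus 2D_8(-1)$.
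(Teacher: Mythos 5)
Your argument is correct and follows the same route the paper implicitly takes: the paper offers no written proof (the lemma is introduced with ``It follows immediately''), relying on the classical fact that a generic Jacobian Kummer surface has transcendental lattice $H(2)\oplus H(2)\oplus\langle-4\rangle$ and hence the stated rank-17 N\'eron--Severi lattice, together with its own Proposition on the $(1,2)$-polarized case for $X$. You have simply supplied the bookkeeping the paper leaves to the literature --- the doubling of the transcendental lattice under the Kummer construction, Nikulin's uniqueness criterion, and the $2$-adic discriminant-form relation $q_\theta(4)\oplus u\cong q_{5\theta}(4)\oplus v$ --- and all of these steps check out.
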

In terms of a double-sextic surface we have the following:
 \begin{remark}
 \label{rem:mds}
For a Shioda double-sextic surface, fixing a marked G\"opel group $G \simeq \langle \mathscr{L} \rangle \oplus  \langle \mathscr{L}' \rangle$ can be viewed as follows:
we start with the equation
 \beq
  X'\colon \quad y^2 = \ell_1 \, \ell_2  \, \ell_3  \, \ell_4  \, \ell_5  \, \ell_6 \,,
\eeq 
where the six lines in $\mathbb{P}^2=\mathbb{P}(z_1, z_2, z_3)$ are tangent to a common conic. (i) We consider the six lines the projections of the odd tropes $T_i$ with $1 \le i \le 6$. (ii) We choose two pairs of lines, say $\{\ell_4, \ell_6 \}$ and $\{\ell_1, \ell_5 \}$.  (iii) We obtain even-eights $\Delta_{p_{46}}$ and  $\Delta_{p_{15}}$ and, from them, the rational double covers $\psi\colon X \dasharrow X'$ and $\phi'\colon X'' \dasharrow X$, respectively. 
\end{remark}
\subsection{Identification of coefficients as modular forms}
\label{ssec:modular}
We will now show how the triple $(X, \psi, \phi')$ yields the normal form in Equation~(\ref{eqn:K3_X}).
\par From a marked double-sextic surface as in Remark~\ref{rem:mds} one obtains a Weierstrass model of the form:
\beq
\label{eqn:Xp}
 X'\colon \quad y^2 z= x \big( x -  4 st  f(s, t) \, z \big) \big( x -  4 st  g(s, t) \, z\big) \,.
 \eeq
Here, we used linear transformations to move lines $\{\ell_4, \ell_6 \}$ to the central components of fibers in the Weierstrass model over $s=0$ and $t=0$, respectively, and the factors of $4$ are for convenience. Similarly, we used linear transformations to move  lines $\{\ell_1, \ell_5 \}$ to sections of the Weierstrass model given by $[x: y: z] =  [0: 0: 1]$ and $[x: y: z] =  [0: 1: 0]$, respectively. In other words, the zero-section $\sigma_X$ and 2-torsion section correspond to the lines $\ell_5$ and $\ell_1$, respectively. Moreover, the polynomials $f(s, t), g(s, t)$ have degree 2  and satisfy
\beq
 f(s, t) = f(t, s) \,, \qquad g(s, t) = g(t, s) \,,
\eeq 
because the lines $\ell_1, \dots, \ell_6$ are tangent to a common conic. The Weierstrass model~(\ref{eqn:Xp}) has  singular fibers $2 I_0^* + 6 I_2$ and torsion Mordell-Weil group $(\mathbb{Z}/2\mathbb{Z})^2$.  
\par The even-eight $\Delta_{p_{46}}$ consist of the eight
$(-2)$-curves that comprise the non-central components of the
$D_4$-fibers. The reason is that these are precisely the components of
the $D_4$-fibers that are met by the sections corresponding to lines
$\ell_1, \ell_2, \ell_3, \ell_5$. Similarly, the even-eight
$\Delta_{p_{15}}$ consist of the eight $(-2)$-curves that comprise the
4 components of $A_1$-fibers and the 4 non-central components of  two
$D_4$-fibers which are not met by the section corresponding to
$\ell_1, \ell_5$. The latter are the four components even-eights
$\Delta_{p_{46}}$ and $\Delta_{p_{15}}$ have in common.  
\par Double covers branched on the even-eights $\Delta_{p_{46}}$ and $\Delta_{p_{15}}$ are now easily constructed. For $\Delta_{p_{46}}$ the map $\psi=\psi_{\Delta_{p_{46}}} \colon X \dasharrow X'$ is given by
\beq
 \psi \colon  \Big( [u:v], \ [X:Y:Z] \Big)  \mapsto   \Big( [s:t], \ [x:y:z] \Big) =   \Big( [u^2:v^2], \ [u^2v^2X:u^3v^3Y:Z] \Big) .
\eeq
so that the preimage of $X'$ is the K3 surface $X= \mathrm{Kum}(B)$ with Weierstrass model
\beq
\label{eqn:X}
 X\colon \quad Y^2 Z= X \big( X -   4 f(u^2, v^2) \, Z \big) 	 \big( X -   4 g(u^2, v^2) \, Z\big) \,.
 \eeq
Similarly, the double cover $\psi_{\Delta_{p_{15}}}$ can be constructed explicitly: it is a fiberwise 2-isogeny for $X'$. However, as explained in Section~\ref{ssec:sum}, we rather consider the pullback $\phi'$ of $\psi_{\Delta_{p_{15}}}$ via $\psi_{\Delta_{p_{46}}}$, which is a map onto $X$. Then, the preimage of $X$ under $\phi'$ is the K3 surface $X'' = \mathrm{Kum}(A')$; see \cite{MR3995925}; it has the Weierstrass model
\beq
 X''\colon \quad y^2 z= x \Big( x^2 + 2 \big( f(u^2, v^2)  + g(u^2, v^2) \big) x z + \big(f(u^2, v^2) - g(u^2, v^2)\big)^2 z^2 \Big) \,.
\eeq
The following was proved in \cite{MR4444083}:
\begin{lemma}
Let $f, g$ be degree-2 polynomials such that $f(s, t)=f(t, s)$, $g(s, t)=g(t, s)$. Then the K3 surface with Weierstrass model~(\ref{eqn:X}) is isomorphic to the double-quadric surface over $\mathbb{F}_0 = \mathbb{P}(u,v) \times \mathbb{P}(x, z)$, given by
\beq
\label{eqn:double_quadric}
 y^2= \big(f(u^2, v^2) - g(u^2, v^2)\big) \big( x^4 + z^4 \big) + 2 \big( f(u^2, v^2)  + g(u^2, v^2) \big) x^2 z^2   .
\eeq 
\end{lemma}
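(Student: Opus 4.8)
The plan is to regard both the Weierstrass model~\eqref{eqn:X} and the double-quadric~\eqref{eqn:double_quadric} as genus-one fibrations over the common base $\mathbb{P}^1=\mathbb{P}(u,v)$, and to compare their generic fibers as curves over the function field $\mathbb{C}(t)$, $t=u/v$. Writing $F=f(u^2,v^2)$ and $G=g(u^2,v^2)$, both lying in $\mathbb{C}(t)$, the generic fiber of~\eqref{eqn:double_quadric} is the even quartic $\mathcal{C}\colon y^2=(F-G)x^4+2(F+G)x^2z^2+(F-G)z^4$, while that of~\eqref{eqn:X} is the Weierstrass cubic $\mathcal{E}\colon Y^2Z=X(X-4FZ)(X-4GZ)$, which visibly carries full rational $2$-torsion. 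The strategy has three steps: (i) show $\mathrm{Jac}(\mathcal{C})\cong\mathcal{E}$ over $\mathbb{C}(t)$; (ii) use a section to identify $\mathcal{C}$ with its Jacobian; (iii) invoke that a birational map between smooth $K3$ surfaces is an isomorphism.

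For step (i) I would compute the classical invariants $I,J$ of the binary quartic defining $\mathcal{C}$, obtaining $I=\tfrac{4}{3}(F^2-FG+G^2)$ and $J=\tfrac{4}{27}(F+G)(2F-G)(F-2G)$, and compare with the standard $c_4=256(F^2-FG+G^2)$, $c_6=2048(F+G)(2F-G)(F-2G)$ of $\mathcal{E}$ (after completing the cube); as a check, $c_4^3-c_6^2$ equals $1728$ times the discriminant $16^4F^2G^2(F-G)^2$. Since the Jacobian of a binary quartic has $c_4$ and $c_6$ equal to \emph{universal} constants times $I$ and $J$, the ratios $c_4(\mathrm{Jac}\,\mathcal{C})/c_4(\mathcal{E})$ and $c_6(\mathrm{Jac}\,\mathcal{C})/c_6(\mathcal{E})$ are nonzero elements of $\mathbb{C}$. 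Equivalently, using the homogeneous identity $4(F^2-FG+G^2)^3-\bigl((F+G)(2F-G)(F-2G)\bigr)^2=27F^2G^2(F-G)^2$, both curves share the $j$-invariant $j=256\,(F^2-FG+G^2)^3/\bigl(F^2G^2(F-G)^2\bigr)$. Thus $\mathrm{Jac}(\mathcal{C})$ and $\mathcal{E}$ differ by a quadratic twist whose twisting factor is a \emph{constant} $d\in\mathbb{C}^\times$; as $\mathbb{C}$ is algebraically closed, $d$ is a square in $\mathbb{C}(t)$ and the twist is trivial, so $\mathrm{Jac}(\mathcal{C})\cong\mathcal{E}$ over $\mathbb{C}(t)$.

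For steps (ii)--(iii), I would use that the double-quadric admits a section over $\mathbb{C}$ (recorded earlier, with explicit equations deferred to Section~\ref{ssec:real_normal_form}), so that $\mathcal{C}$ has a $\mathbb{C}(t)$-rational point and hence $\mathcal{C}\cong\mathrm{Jac}(\mathcal{C})\cong\mathcal{E}$. The classical chord-and-tangent reduction of an even quartic to the Weierstrass form of its Jacobian, based at this section, then produces an explicit fiber-preserving birational map between the two surfaces: a direct computation based at the naive point $(x,z,y)=(0,1,\sqrt{F-G})$ gives a map of the shape $X=4F-2\sqrt{F-G}\,y/x^2$, and re-basing at the honest $\mathbb{C}(t)$-section replaces $\sqrt{F-G}$ by a constant (consistent with the twist being constant), rendering the map defined over $\mathbb{C}(t)$. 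Since both surfaces are smooth minimal $K3$ surfaces, this birational map extends to an isomorphism, proving the lemma.

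The step I expect to be the main obstacle is twist-triviality in (i): equality of $j$-invariants only yields an isomorphism over $\overline{\mathbb{C}(t)}$, and a quadratic twist — which preserves even the full rational $2$-torsion that both $\mathcal{E}$ and $\mathrm{Jac}(\mathcal{C})$ possess — could in principle obstruct a $\mathbb{C}(t)$-isomorphism. The decisive point is that the twisting factor is a nonzero complex constant rather than a genuine function on the base; equivalently, that the double-quadric fibration carries an honest $\mathbb{C}(t)$-section rather than merely sharing a Jacobian with~\eqref{eqn:X}. Verifying this — the $j$-invariant identity itself being a routine resultant computation — is where the real content lies.
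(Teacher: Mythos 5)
Your argument is correct, but note that the paper does not actually prove this lemma: it is quoted verbatim from \cite{MR4444083}, and the surrounding text treats the underlying computation (the passage from the quartic normal form \eqref{eqn:B12_1b} to the Weierstrass form \eqref{eqn:EFS_0}, which specializes to \eqref{eqn:X} since $b+2a=4f(u^2,v^2)$ and $b-2a=4g(u^2,v^2)$ when $a=f-g$, $b=2(f+g)$) as ``easy to check,'' i.e.\ as an explicit coordinate change. Your route is genuinely different and essentially self-contained: you identify the Jacobian of the binary quartic via the classical invariants $I,J$ (your values $I=\tfrac43(F^2-FG+G^2)$, $J=\tfrac4{27}(F+G)(2F-G)(F-2G)$ and the $c_4,c_6$ of \eqref{eqn:X} check out, the ratios being $3^4$ and $3^6$ so the requisite scaling $u=3$ lies in $\mathbb{C}$ and the twist is trivial), and then upgrade the Jacobian identification to an isomorphism using a section. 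This buys a conceptual explanation of \emph{why} the two models agree (same $I,J$ up to universal constants, trivial torsor), at the cost of having to separately justify the existence of a $\mathbb{C}(t)$-point; the explicit-coordinate-change approach of the cited source produces the isomorphism directly and over the field generated by the coefficients. Two small points to tighten: (i) the symmetry hypotheses $f(s,t)=f(t,s)$, $g(s,t)=g(t,s)$ enter precisely through the existence of sections --- they make \eqref{eqn:double_quadric} palindromic as a quartic in $(u,v)$, so that at a constant root $\sigma$ of the $u^4$-coefficient the right-hand side degenerates to a constant times $u^2v^2$, giving the sections of \eqref{eqn:sections}; you should say this rather than only deferring to the forward reference. (ii) The displayed map $X=4F-2\sqrt{F-G}\,y/x^2$ is not defined over $\mathbb{C}(t)$ and ``re-basing replaces $\sqrt{F-G}$ by a constant'' is only a gesture; the clean statement is simply that a genus-one curve with a rational point is isomorphic to its Jacobian, after which the chord-and-tangent construction based at the honest section does the work. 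Neither point is a gap in the logic, and the extension from generic fibers to the (resolved, relatively minimal) K3 surfaces is standard as you say.
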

Comparing Equations~(\ref{eqn:double_quadric}) and~(\ref{eqn:K3_X}),  we immediately find
\beq
\label{eqn:coeffs}
 f(u^2, v^2) = \frac{1}{4} \Big( b(u,v)  + 2 \, a(u, v)  \Big)\,, \qquad g(u^2, v^2) = \frac{1}{4} \Big( b(u,v)  - 2 \, a(u, v)  \Big) \,.
\eeq
We have shown that the normal form for $X$ in
Equation~(\ref{eqn:K3_X}) is equivalent to a particular choice on the
quotient surface $X'$, namely, the choice of two pairs of lines, among
the six lines that comprise the brach locus of the Shioda
double-sextic surface $X'$; see Remark~\ref{rem:mds}. In turn, the ordering of the pairs of
lines determines which ruling for $\mathbb{F}_0=\mathbb{P}(u,
v)\times\mathbb{P}(x, z)$ will be considered the projection of $X$ onto the
base curve. We will now use the connection with Abelian
surfaces, to establish the coefficients of
the normal form as modular forms. 
\par For a period matrix $\underline{\tau} \in \mathbb{H}_2$ we introduce the even genus-2 theta functions $\theta_i(\vec{z}, \underline{\tau})$, $i=1, \dots, 10$. We write for the \emph{theta constants}
\beq
\label{Eqn:theta_short}
 \theta_i \quad \text{instead of} \quad 
  \theta\!\begin{bmatrix} a^{(i)}_1 & a^{(i)}_2 \\ b^{(i)}_1 & b^{(i)}_2 \end{bmatrix}\!\!(\vec{0}, \underline{\tau})
 \quad \text{where $i=1,\dots ,10$,}
\eeq
with  
\beq
\begin{split}
\theta_1 	= \theta\!\begin{bmatrix} 0 &  0 \\ 0 & 0 \end{bmatrix}\!\!, \,  
\theta_2	= \theta\!\begin{bmatrix} 0 &  0 \\ 1 & 1 \end{bmatrix}\!\!, \,  
\theta_3 &	= \theta\!\begin{bmatrix} 0 &  0 \\ 1 & 0 \end{bmatrix}\!\!, \,  
\theta_4 	= \theta\!\begin{bmatrix} 0 &  0 \\ 0 & 1 \end{bmatrix}\!\!, \,   \;
\theta_5 	= \theta\!\begin{bmatrix} 1 &  0 \\ 0 & 0 \end{bmatrix}\!\!, \\
\theta_6	= \theta\!\begin{bmatrix} 1 &  0 \\ 0 & 1 \end{bmatrix}\!\!,  \, 
\theta_7	= \theta\!\begin{bmatrix} 0 &  1 \\ 0 & 0 \end{bmatrix}\!\!, \,  
\theta_8 &	= \theta\!\begin{bmatrix} 1 &  1 \\ 0 & 0 \end{bmatrix}\!\!, \,  
\theta_9 	= \theta\!\begin{bmatrix} 0 &  1 \\ 1 & 0 \end{bmatrix}\!\!, \,  
\theta_{10}=\theta\!\begin{bmatrix} 1 &  1 \\ 1 & 1 \end{bmatrix}\!\!.
\end{split}
\eeq
We also introduce the Siegel modular threefolds $\mathcal{A}_2(2n,4n)$ corresponding to the Igusa subgroups
\beq
 \Gamma_2(2n, 4n) = \lbrace M \in \Gamma_2(2n) | \, \mathrm{diag}(B) =  \mathrm{diag}(C) \equiv \mathbb{I} \mod{4n}\rbrace.
 \eeq
We have the following:
\begin{lemma}[\cite{MR4323344}]
The modular group $\Gamma_2(2)$ is the group of isomorphisms which fixes the 4th power of the theta constants $\theta_i$  for $1\le i \le 10$, $\Gamma_2(2,4)$ fixes their 2nd power, and $\Gamma_2(4,8)$ fixes the theta constants of level $(2,2)$.
\end{lemma}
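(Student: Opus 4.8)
The plan is to deduce all three assertions from Igusa's transformation formula for the even theta constants under $\Gamma_2=\mathrm{Sp}_4(\bZ)$. First I would record that formula in the shape
\[
\theta[M\cdot m]\big(\vec 0,\, M\cdot\underline{\tau}\big)=\kappa(M)\,\zeta_m(M)\,\det(C\underline{\tau}+D)^{1/2}\,\theta[m]\big(\vec 0,\,\underline{\tau}\big),
\]
where $M=\left(\begin{smallmatrix}A&B\\C&D\end{smallmatrix}\right)$, the symbol $M\cdot m$ denotes the affine-linear action of $M$ on characteristics, $\kappa(M)$ is an eighth root of unity depending only on $M$, and $\zeta_m(M)$ is an eighth root of unity given by Igusa's explicit expression in the entries of $M$ and the characteristic $m$. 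Because of the automorphy factor $\det(C\underline{\tau}+D)^{1/2}$ each $\theta_i$ is a modular form of weight $\tfrac12$, so $\theta_i^2$ and $\theta_i^4$ have weights $1$ and $2$; the content of the lemma is precisely the triviality of the respective multiplier systems on the three nested congruence subgroups, together with the fact that these are the \emph{full} stabilizers.

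For the first statement I would restrict to $M\in\Gamma_2(2)$. Since $M\equiv\mathbb{I}\pmod 2$, the linear part of the action on the half-integer characteristics is the identity modulo $\bZ$ and the additive term $\tfrac12\,\mathrm{diag}(C{}^tD,\,A{}^tB)$ is integral, so $M\cdot m\equiv m$ and no permutation of the $\theta_i$ takes place. It then remains to verify, using the explicit phase together with $A\equiv D\equiv\mathbb{I}$, $B\equiv C\equiv0\pmod 2$, that $(\kappa(M)\zeta_m(M))^4=1$ for every even $m$; this yields $\theta_i^4(M\cdot\underline{\tau})=\det(C\underline{\tau}+D)^2\,\theta_i^4(\underline{\tau})$. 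For the converse I would use that $\Gamma_2/\Gamma_2(2)\cong\mathrm{Sp}_4(\bF_2)\cong S_6$ acts faithfully on the ten even characteristics (the only normal subgroups of $S_6$ being $1$, $A_6$, $S_6$, and neither proper one acting trivially), so any $M\notin\Gamma_2(2)$ permutes the set $\{\theta_i^4\}$ nontrivially and hence cannot fix each member.

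For the remaining two statements there is no permutation to contend with, and everything is governed by the phase $\kappa(M)\zeta_m(M)$ on $\Gamma_2(2)$. I would observe that the assignment $M\mapsto(\kappa(M)\zeta_m(M))_{m}$ factors through the finite abelian $2$-group $\Gamma_2(2)/\Gamma_2(4,8)$ and defines a family of characters, then match these against the defining diagonal congruences of the Igusa subgroups. Squaring the multiplier annihilates the mod-$8$ contribution and leaves a fourth root of unity whose triviality is equivalent to the congruences cutting out $\Gamma_2(2,4)$, giving that $\Gamma_2(2,4)$ is exactly the stabilizer of the squares $\theta_i^2$; the multiplier itself is trivial precisely under the finer mod-$8$ congruences defining $\Gamma_2(4,8)$, which is the statement that $\Gamma_2(4,8)$ fixes the level-$(2,2)$ theta constants $\theta_i$ themselves.

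The main obstacle is the bookkeeping of the eighth-root phase $\zeta_m(M)$ across the three layers $\Gamma_2(2)\supset\Gamma_2(2,4)\supset\Gamma_2(4,8)$: one must show that its fourth power, its square, and itself become trivial exactly on these successive groups, i.e., that the arithmetic of the relevant roots of unity modulo $4$ and modulo $8$ corresponds precisely to the diagonal congruence conditions in the definition of $\Gamma_2(2n,4n)$. Once this correspondence is verified for all ten even $m$ simultaneously, the three stabilizer identifications follow at once.
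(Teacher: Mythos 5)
The paper does not actually prove this lemma; it is quoted verbatim from \cite{MR4323344} (and ultimately goes back to Igusa), so there is no internal proof to compare yours against. Your outline is the standard route to the result: Igusa's transformation formula with its eighth-root-of-unity multiplier $\kappa(M)\zeta_m(M)$, triviality of the action on characteristics for $M\in\Gamma_2(2)$, the faithful action of $\Gamma_2/\Gamma_2(2)\cong\mathrm{Sp}_4(\bF_2)\cong S_6$ on the ten even characteristics to get the converse for the fourth powers, and the descent of the multiplier to characters of the finite $2$-group $\Gamma_2(2)/\Gamma_2(4,8)$ to handle the two finer levels. That is exactly how the cited sources argue, and the structure of your argument is sound.

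The one place where your write-up is a plan rather than a proof is the step you yourself flag as ``bookkeeping'': showing that the fourth power, the square, and the multiplier itself become trivial \emph{exactly} on $\Gamma_2(2)$, $\Gamma_2(2,4)$, and $\Gamma_2(4,8)$ requires substituting Igusa's explicit formula for $\zeta_m(M)$ (a Gauss-sum-type expression in the entries of $M$ and the characteristic $m$) and matching the resulting congruences mod $4$ and mod $8$ against the diagonal conditions $\mathrm{diag}(B)\equiv\mathrm{diag}(C)\equiv 0$ defining the Igusa subgroups. In particular the ``exactly'' direction (that nothing strictly larger than $\Gamma_2(2,4)$, resp.\ $\Gamma_2(4,8)$, fixes the squares, resp.\ the constants themselves) needs you to exhibit, for each coset, a characteristic $m$ on which the multiplier is genuinely nontrivial; you assert this but do not carry it out. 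You also implicitly use that $\Gamma_2(4,8)$ contains the commutator subgroup of $\Gamma_2(2)$ so that the multipliers are honest characters --- true, but itself a theorem of Igusa worth citing rather than assuming. With those computations filled in (or delegated to Igusa's papers, as the present paper does), the argument is complete.
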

Under duplication of the modular variable $\underline{\tau} \mapsto 2 \underline{\tau}$, the theta constants $\theta_1$, $\theta_5$, $\theta_7$, $\theta_8$ play a role dual to $\theta_1, \theta_2, \theta_3, \theta_4$. We renumber the former and use the symbol $\Theta$ to mark the fact that they are evaluated at an $(2, 2)$-isogenous abelian surface. That is, we will denote theta constants with doubled modular variable by 
\beq
\label{Eqn:Theta_short}
 \Theta_i  \quad \text{instead of} \quad 
  \theta\!\begin{bmatrix} b^{(i)}_1 & b^{(i)}_2 \\ a^{(i)}_1 & a^{(i)}_2 \end{bmatrix}\!\!(\vec{0}, 2\underline{ \tau})
 \quad \text{where $i=1,\dots ,10$.}
\eeq
The relevance of $\Theta_i$ for $i=1, \dots,4$ is seen as follows:
\begin{lemma}[\cite{MR4323344}]
The holomorphic map $\mathbb{H}_2 \to \mathbb{P}^3$, $\underline{\tau} \mapsto [\Theta_1 : \Theta_2: \Theta_3 : \Theta_4]$, induces an isomorphism
between the Satake compactification $\overline{\mathcal{A}_2(2,4)}$ and $\mathbb{P}^3$.
\end{lemma}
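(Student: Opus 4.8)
The plan is to identify the full graded ring of modular forms for $\Gamma_2(2,4)$ with a polynomial ring in the four theta constants $\Theta_1,\dots,\Theta_4$, and then to read off the asserted isomorphism by passing to $\mathrm{Proj}$. The governing principle is that the Satake (equivalently, Baily--Borel) compactification $\overline{\mathcal{A}_2(2,4)}$ is canonically $\mathrm{Proj}$ of the graded ring $\bigoplus_{k\ge 0} M_{k/2}(\Gamma_2(2,4))$ of modular forms, so it suffices to show this ring is $\mathbb{C}[\Theta_1,\dots,\Theta_4]$ with each $\Theta_i$ placed in degree one.

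First I would verify modularity: each $\Theta_i$ (for $1\le i\le 4$) is a modular form of weight $1/2$ for $\Gamma_2(2,4)$. This follows from the theta transformation formula together with the preceding lemma, since the substitution $\underline{\tau}\mapsto 2\underline{\tau}$ intertwines the natural $\Gamma_2(2,4)$-action on the $\Theta_i$ with the $\Gamma_2(4,8)$-stabilization of the level-$(2,2)$ theta constants recorded there; the even characteristics guarantee that the $\Theta_i$ do not vanish identically and transform with a common weight-$1/2$ automorphy factor. Consequently $\underline{\tau}\mapsto[\Theta_1:\Theta_2:\Theta_3:\Theta_4]$ descends to a holomorphic map $\mathcal{A}_2(2,4)\to\mathbb{P}^3$. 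Second I would establish that $\Theta_1,\dots,\Theta_4$ are algebraically independent, equivalently that this map is dominant. This is a leading-term computation: writing $\underline{\tau}=\left(\begin{smallmatrix}\tau_1 & z\\ z & \tau_2\end{smallmatrix}\right)$ and taking Fourier--Jacobi expansions, the lowest-order terms of the $\Theta_i$ already span a $3$-dimensional image, so the Jacobian has generic rank $3$ and no polynomial relation can hold among the four forms. This yields an injection $\mathbb{C}[\Theta_1,\dots,\Theta_4]\hookrightarrow\bigoplus_k M_{k/2}(\Gamma_2(2,4))$.

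The crux, and the step I expect to be the main obstacle, is to prove this injection is onto, i.e.\ that the four theta constants generate \emph{all} modular forms rather than a proper subring. I would argue by comparing Hilbert series. The degree-$k$ piece of $\mathbb{C}[\Theta_1,\dots,\Theta_4]$ has dimension $\binom{k+3}{3}$, with Hilbert series $(1-t)^{-4}$; I would match this against $\dim M_{k/2}(\Gamma_2(2,4))$, computed from the known dimension formula for half-integral-weight Siegel modular forms of this Igusa level (via Hirzebruch--Mumford proportionality, with the boundary contributions accounted for). Agreement of the two Hilbert series forces surjectivity, and combined with the algebraic independence above gives a graded isomorphism $\mathbb{C}[\Theta_1,\dots,\Theta_4]\xrightarrow{\cong}\bigoplus_k M_{k/2}(\Gamma_2(2,4))$.

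Taking $\mathrm{Proj}$ then produces $\overline{\mathcal{A}_2(2,4)}\cong\mathbb{P}^3$, realized by the stated map. Alternatively, one could bypass the exact dimension count by showing directly that the induced morphism on the Satake compactification is a proper bijection onto $\mathbb{P}^3$ --- injectivity being a point-separation statement provable from Riemann's theta relations, which recover all even theta constants, and hence the principally polarized abelian surface together with its level structure, from $\Theta_1,\dots,\Theta_4$ --- and then invoking normality of $\mathbb{P}^3$; in that route the obstacle shifts to verifying the point-separation step.
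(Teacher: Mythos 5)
The paper does not prove this lemma at all: it is imported verbatim from \cite{MR4323344} (ultimately this is the classical Igusa--Runge theorem on second-order theta constants), so there is no in-paper argument to compare yours against. Your outline is the standard strategy for that classical result --- exhibit the $\Theta_i$ as weight-$\tfrac12$ forms on $\Gamma_2(2,4)$ with a common automorphy factor, show the projective map is dominant, show the four forms generate the full graded ring, and take $\mathrm{Proj}$ --- and the first two steps are fine (note that algebraic independence of four forms on the $3$-dimensional $\mathbb{H}_2$ is only possible because weight-homogeneity reduces any relation to a homogeneous one, i.e.\ to non-dominance of the projective map; you implicitly use this and it is correct).

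The genuine gap is the surjectivity step, in both of the routes you offer. The Hilbert-series route presupposes an independently known dimension formula for $\dim M_{k/2}(\Gamma_2(2,4))$; no such off-the-shelf formula exists for half-integral weight at this Igusa subgroup --- in the literature these dimensions are \emph{deduced from} the structure theorem you are trying to prove, so as written the argument is circular. The alternative route (finite birational morphism onto the normal variety $\mathbb{P}^3$ is an isomorphism) is the one actually used by Igusa and Runge, but it relocates rather than discharges the real content: you must prove (i) base-point-freeness, i.e.\ that $\Theta_1,\dots,\Theta_4$ have no common zero on all of $\overline{\mathcal{A}_2(2,4)}$ including the boundary, so that the map is a morphism and is proper/finite; and (ii) injectivity on the whole Satake compactification, not just the open part. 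On the open part, injectivity does follow from the duplication identities recorded in the paper as Equations~(\ref{Eq:degree2doubling})--(\ref{Eq:degree2doublingR}), which recover the level-$2$ theta constants $\theta_j^2$ (hence the point of $\mathcal{A}_2(2)$, and with the $(2,4)$-refinement the point of $\mathcal{A}_2(2,4)$) as quadratics in the $\Theta_i$; but separating boundary points of the Satake compactification requires a separate analysis of the degenerate Fourier--Jacobi limits, which your sketch does not address. Until (i) and (ii) are supplied, the proof is incomplete.
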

We can now prove the main result of this section, relating the coefficients in Equation~(\ref{eqn:K3_X}) to modular forms:
 \begin{theorem}
\label{thm:modular}
Assume that for $[ \rho: \kappa : \lambda : \mu ] \in  \mathbb{P}^3 \backslash \mathcal{B}$ in Equation~(\ref{eqn:K3_X}) one has $\mathrm{NS}(X) \cong \langle 8 \rangle \oplus 2 D_8(-1)$. There is a period matrix $\underline{\tau}$ with $[\underline{\tau}] \in \mathcal{A}_2(2,4)$ such that 
\beq
\label{eqn:params}
\begin{array}{rclcrcl}
 \rho & = & \Theta_1^2 \Theta_2^2 - \Theta_3^2 \Theta_4^2 , & \quad & \kappa & =&  - 2 \big(\Theta_1^2 \Theta_2^2 + \Theta_3^2 \Theta_4^2\big),\\[0.2em]
 \lambda & = &  \Theta_1^4 +  \Theta_2^4 - \Theta_3^4 - \Theta_4^4 , & \quad & \mu &=& - 2 \big(\Theta_1^4 +  \Theta_2^4 + \Theta_3^4 + \Theta_4^4\big).
\end{array}
\eeq
\end{theorem}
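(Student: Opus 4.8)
The plan is to exploit the fact that both the parameter space of the normal form \eqref{eqn:K3_X} and the Siegel modular threefold $\mathcal{A}_2(2,4)$ are isomorphic to $\mathbb{P}^3$, and to make the identification between them explicit by tracing through the Kummer/abelian-surface dictionary of Sections \ref{ssec:KUM}--\ref{ssec:modular}. The existence of a period matrix $\underline\tau$ attached to $X$ is not the issue: the construction identifies $X$ with $\mathrm{Kum}(B)$ for a $(1,2)$-polarized abelian surface $B$, which is $(2,2)$-isogenous to $A = \mathrm{Jac}(C)$, so the point $[\underline\tau] \in \mathcal{A}_2(2,4)$ is supplied by the period of $A$ together with the level structure coming from the marked G\"opel group. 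The real content is to verify the closed formulas \eqref{eqn:params}, so I would regard the claimed map $[\Theta_1 : \Theta_2 : \Theta_3 : \Theta_4] \mapsto [\rho : \kappa : \lambda : \mu]$ as a rational map $\mathbb{P}^3 \to \mathbb{P}^3$ defined by forms of degree $4$, and show it coincides with the geometric parametrization.

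First I would reduce the statement to a computation of four modular quantities. By Equations \eqref{eqn:Xp}--\eqref{eqn:coeffs}, the coefficients $\rho, \kappa, \lambda, \mu$ are governed by the two symmetric degree-$2$ polynomials $f(s,t), g(s,t)$ in the Weierstrass model of $X'$. Writing $f(s,t) = f_0(s^2 + t^2) + f_1 st$ and $g(s,t) = g_0(s^2 + t^2) + g_1 st$, the relations \eqref{eqn:coeffs} give $\rho = f_0 - g_0$, $\kappa = f_1 - g_1$, $\lambda = 2(f_1 + g_1)$ and $\mu = 2(f_0 + g_0)$. Thus the whole theorem reduces to expressing $f_0, f_1, g_0, g_1$ as modular forms in the $\Theta_i$.

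Next I would feed in the classical expressions for the Rosenhain roots $\lambda_1, \lambda_2, \lambda_3$ of $C$ in \eqref{eqn:Rosenhain} as ratios of even genus-$2$ theta constants (Thomae/Picard), together with the explicit branch lines $\ell_i$ of \eqref{eqn:lines}. Pushing these through the even-eight double covers $\psi_{\Delta_{p_{46}}}$ and $\psi_{\Delta_{p_{15}}}$ determined by the marked G\"opel group $G = \langle \mathscr{L} \rangle \oplus \langle \mathscr{L}' \rangle$ produces the coefficients of $f$ and $g$ as explicit rational functions of the $\lambda_i$, hence of theta constants. The marking is exactly what forces the passage from $\underline\tau$ to $2\underline\tau$: the $2$-isogeny $\Psi \colon B \to A$ trades the principal polarization for the $(1,2)$-polarization, and it is the theta constants $\theta_1, \theta_5, \theta_7, \theta_8$ at the doubled variable, renumbered as the $\Theta_i$ in \eqref{Eqn:Theta_short}, that carry the resulting data. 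With the Satake isomorphism $\overline{\mathcal{A}_2(2,4)} \cong \mathbb{P}^3$ in hand, matching the two copies of $\mathbb{P}^3$ then reduces to verifying explicit polynomial identities among theta constants.

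The main obstacle will be this last identity: organizing the theta-function bookkeeping so that the products of $\lambda_i$-expressions collapse, via Riemann's addition relations and the quadratic level-$(2,2)$ theta identities, into the symmetric combinations $\Theta_1^2 \Theta_2^2 \pm \Theta_3^2 \Theta_4^2$ and $\Theta_1^4 + \Theta_2^4 \pm (\Theta_3^4 + \Theta_4^4)$. I would control this calculation using the degeneration to the product case: when $\kappa\mu = \lambda\rho$ (Remark \ref{rem:kummer}), $B$ becomes isogenous to $E_1 \times E_2$, the period matrix $\underline\tau$ becomes diagonal, every $\Theta_i$ factors into Jacobi theta constants, and both sides of \eqref{eqn:params} must reduce to the classical Legendre parametrization of $\mathrm{Kum}(E_1 \times E_2)$. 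Matching this boundary behavior fixes the normalization and gives a decisive check on the general formula.
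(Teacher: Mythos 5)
Your proposal follows essentially the same route as the paper's proof: reduce the normal-form coefficients to data attached to the Rosenhain roots via the marked G\"opel-group/even-eight construction, express those roots as theta quotients by Picard's lemma, and convert to the $\Theta_i$ at the doubled period via the second principal transformation of degree two. The only difference is bookkeeping --- the paper routes through the intermediate parameters $\Lambda_i = (\lambda_i + \lambda_j\lambda_k)/l$ with $l^2 = \lambda_1\lambda_2\lambda_3$ (the square root $l$ being where membership in $\mathcal{A}_2(2,4)$ is concretely used) together with an explicit birational substitution onto the Shioda double-sextic, whereas you work directly with the coefficients $f_0, f_1, g_0, g_1$ and the line configuration; the explicit change of variables you defer as ``the main obstacle'' is precisely the computation the paper supplies.
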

\begin{proof}

We introduce parameters $\Lambda_1, \Lambda_2, \Lambda_3$ such that
\beq
\label{eqn:moduli_as_Lambda}
 \kappa = - \Lambda_1, \quad \mu = \frac{2(2\Lambda_1-\Lambda_2 - \Lambda_3)}{\Lambda_2 - \Lambda_3}, \quad  \lambda = \frac{2(2\Lambda_2 \Lambda_3- \Lambda_1\Lambda_2 - \Lambda_1\Lambda_3)}{\Lambda_2 - \Lambda_3},
\eeq 
or, equivalently,
\beq
 \Lambda_1=-\kappa, \quad \Lambda_2 = - \frac{\lambda + 2\kappa}{\mu+2}, \quad  \Lambda_3 = - \frac{\lambda - 2\kappa}{\mu-2}.
\eeq
Here, we assume that the parameters $\Lambda_1, \Lambda_2, \Lambda_3$ are pairwise different. The Weierstrass form for the elliptic fibration in Equation~(\ref{eqn:B12_1b}) is equivalent to the following (affine) Weierstrass form:
\beq
\label{eqn:B12}
X\colon \quad y^2  =  \prod_{\substack{ \{i, j, k\} = \{ 1, 2, 3\},\\ j<k}} \left( x + \Lambda_i (u^4 +1) + \Lambda_j \Lambda_k u^2\right) \,.
\eeq
On $X$ in Equation~(\ref{eqn:B12}) the involution $k_X \circ \xi^{(i)}_X$ acts as $(u, y) \mapsto (-u, -y)$. The minimal resolution of the quotient is the Jacobian elliptic K3 surface given by
\beq
\label{eqn:K3_Xp}
\begin{aligned}
  X'\colon  \quad \tilde{y}^2  =  t \prod_{\substack{ \{i, j, k\} = \{ 1, 2, 3\},\\ j<k}} \left( x + \Lambda_i (t^2 +1) + \Lambda_j \Lambda_k t \right) \,.
\end{aligned}  
\eeq
 The rational quotient map $\psi \colon X \dashrightarrow X'$ is given by
\beqn
   (u, y, x) \mapsto   (t, \tilde{y}, x)  = (u^2, uy, x).
\eeqn  
In turn, the relation between Equation~(\ref{eqn:K3_Xp}) and~(\ref{kummer_middle}) is as follows: we introduce a square root $l$ such that $l^2=\lambda_1 \lambda_2 \lambda_3$ and assume that $(\Lambda_1, \Lambda_2, \Lambda_3)$ in Equation~(\ref{eqn:K3_Xp}) are related to $(\lambda_1, \lambda_2, \lambda_3)$  by
\beq
\label{eqn:Lambdas}
\Lambda_1 = \frac{\lambda_1 + \lambda_2\lambda_3}{l} \, \quad
\Lambda_2 = \frac{\lambda_2 + \lambda_1\lambda_3}{l} \, \quad
\Lambda_3 = \frac{\lambda_3 + \lambda_1\lambda_2}{l}.
\eeq
The substitution, given by
\beq
\begin{split}
x & = \frac{(\lambda_1^2 T- \lambda_1 X+1)\prod_{i=2,3} (\lambda_i-1) (\lambda_i T-1)}{l (T-X+1)} , \\
& - \frac{(\lambda_1+\lambda_2\lambda_3)(l^2 T^2+1)}{l} - \frac{(\lambda_2+\lambda_1\lambda_3)(\lambda_3+\lambda_1\lambda_2)T}{l},\\
\tilde{y} & = \frac{Y \prod_{i=1,2,3} (\lambda_i-1) (\lambda_i T-1)}{l (T-X+1)^2},\\
t & = l T,
\end{split}
\eeq
in Equation~(\ref{eqn:K3_Xp}) yields Equation~(\ref{kummer_middle}).  
\par The $\lambda$-parameters in the Rosenhain normal~(\ref{eqn:Rosenhain})  can be expressed as ratios of genus-2 theta constants by Picard's lemma. There are 720 choices for such expressions since the forgetful map $\mathcal{A}_2(2) \to \mathcal{A}_2$ is a Galois covering of degree $720 = |S_6|$ since $S_6$ acts on the Weierstrass points in~(\ref{eqn:Rosenhain}) by permutations. Any of the $720$ choices may be used. For example, the one used in \cite{MR2367218} is
\beq
\label{eqn:choice}
\lambda_1 = \frac{\theta_1^2\theta_3^2}{\theta_2^2\theta_4^2} \,, \quad \lambda_2 = \frac{\theta_3^2\theta_8^2}{\theta_4^2\theta_{10}^2}\,, \quad \lambda_3 =
\frac{\theta_1^2\theta_8^2}{\theta_2^2\theta_{10}^2}\,.
\eeq
Note that using Equation~(\ref{eqn:choice}) the square root $l$ can be identified with $l=\theta_1^2\theta_3^2\theta_8^2/(\theta_2^2\theta_4^2\theta_{10}^2)$ if we assume $[\underline{\tau}] \in \mathcal{A}_2(2,4)$; see \cite{MR4421430}. We then have
\beq
\label{eqn:Lambda}
\Lambda_1 = \frac{\theta_8^2}{\theta_{10}^2} + \frac{\theta_{10}^2}{\theta_8^2} \,, \quad  \Lambda_2 = \frac{\theta_1^2}{\theta_2^2} + \frac{\theta_2^2}{\theta_1^2} \,, \quad  \Lambda_3 = \frac{\theta_3^2}{\theta_4^2} + \frac{\theta_4^2}{\theta_3^2}  \,.
\eeq 
\par The following identities are called the \emph{second principal transformations of degree two}~\cite{MR141643, MR168805} for theta constants:
\beq
\label{Eq:degree2doubling}
\begin{array}{lllclll}
\theta_1^2 & = & \Theta_1^2 + \Theta_2^2 + \Theta_3^2 + \Theta_4^2 \,, &\qquad
\theta_2^2 & =&  \Theta_1^2 + \Theta_2^2 - \Theta_3^2 - \Theta_4^2 \,, \\[0.2em]
\theta_3^2 & = &  \Theta_1^2 - \Theta_2^2 - \Theta_3^2 + \Theta_4^2 \,, &\qquad
\theta_4^2 &= &  \Theta_1^2 - \Theta_2^2 + \Theta_3^2 - \Theta_4^2 \,.
\end{array}
\eeq
We also have the following identities:
\beq
\label{Eq:degree2doublingR}
\begin{array}{lllclll}
\theta_5^2 & = & 2 \, \big( \Theta_1 \Theta_3 + \Theta_2  \Theta_4 \big) \,, &\qquad
\theta_6^2 & =&  2 \, \big( \Theta_1 \Theta_3 - \Theta_2  \Theta_4 \big)\,, \\[0.4em]
\theta_7^2 & = & 2 \, \big( \Theta_1 \Theta_4 + \Theta_2  \Theta_3 \big) \,, &\qquad
\theta_8^2 &= &  2 \, \big( \Theta_1 \Theta_2 + \Theta_3  \Theta_4 \big)\,, \\[0.4em]
\theta_9^2 & = & 2 \, \big( \Theta_1 \Theta_4 - \Theta_2  \Theta_3 \big)\,, &\qquad
\theta_{10}^2 &= & 2 \, \big( \Theta_1 \Theta_2 - \Theta_3  \Theta_4 \big) \,.
\end{array}
\eeq
 Plugging Equation~(\ref{eqn:Lambda}) into Equation~(\ref{eqn:moduli_as_Lambda}) and using
Equations~(\ref{Eq:degree2doubling}) and~(\ref{Eq:degree2doublingR})
yields the result. 
\end{proof}
\subsection{Construction of the polarizing and invariant lattice}
\label{sec:lattice}
We have established that the general $X$ is a Kummer surface associated with an abelian surface of the polarization of type $(1,2)$. As established above,  the polarizing lattice of such Kummer surfaces is $L = \langle 8 \rangle \oplus 2 D_8(-1)$ and the general transcendental lattice is $H(2) \oplus H(2) \oplus \langle -8 \rangle$. Let us also denote the sixteen $(-2)$-classes on the Kummer surface by $K_j$ with $0 \le j \le 15$. 

\par We choose one of the sections in Lemma~\ref{lem:fibrations} as zero section $\mathsf{O}$ and denote  its divisor class by $[\mathsf{O}]=K_0$.  As observed before, Equation~(\ref{eqn:B12_1b}) is then transformed into Equation~(\ref{eqn:EFS_0}), which is equivalent to  the following Weierstrass model:
\begin{equation}\label{eqn:EFS}
X\colon \quad Y^2 Z  = X \Big( X^2 - 2 b(u, v) \, XZ + \big(b(u)^2 - 4 a(u, v)^2\big) Z^2 \Big) \,,
\end{equation}
This is precisely Equation~(\ref{eqn:X}) when using Equation~(\ref{eqn:X}). It  determines the Jacobian elliptic fibration  $\pi_X \colon X \to \mathbb{P}^1 =\mathbb{P}(u, v) $ with general fiber in $\mathbb{P}^2 = \mathbb{P}(X, Y, Z)$. 

\par We will consider the fibration in the affine chart $Z=1$, $v=1$. Then,  $a(u)$ and $b(u)$ are the even polynomials of degree four in $u$ given in Equation~(\ref{eqn:B12_1b}), such that there are no singular fibers over $u=0, \infty$, and
\beqn
a(u) = u^4 a(1/u)\,, \quad b(u) = u^4 b(1/u)\,.
\eeqn
The discriminant of the elliptic fibration is $\Delta=16 a(u)^2 \big(b(u)^2-4a(u)^2\big)^2$ and has twelve roots of order two. That is, the fibration has twelve singular fibers of Kodaira type $I_2$ and no other singular fibers. The generic fiber $\mathsf{F}_u =\pi_X^{-1}(u)$ is smooth. The torsion Mordell Weil group is $ \mathrm{MW}(X)_{\mathrm{tor}} = (\mathbb{Z}/2\mathbb{Z})^2$ and $\operatorname{rank} \mathrm{MW}(X)=3$.  Garbagnati \cite{MR2600955, MR3010125} proved that the smooth fiber class $F = [ \mathsf{F}_u ]$ with $F^2=0$ and $F \circ K_0$=1 is given by
\beqn
F = \frac{H-K_0 - K_1 - K_2 - K_3}{2} \,,
\eeqn
and $K_j$ for $4 \le j \le 15$ are realized as the non-neutral components of the reducible fibers of type $A_1$ in the elliptic fibration in~(\ref{eqn:EFS}), and $H$ is a hyperplane class.
\par Moreover, the elliptic fibration is invariant under the action of the hyperelliptic involution $(u,X,Y) \mapsto (u,X,-Y)$ -- which we denote by $p \mapsto -p$ for a point $p \in F$ in a fiber $F$ and three additional involutions. Using Lemma~\ref{lem:holinv1} we check that these involutions are  given by
\beq
\begin{split}\label{eqn:involutions}
\jmath_1:& \quad (u,X,Y) \mapsto \Big(u' = -u,X,Y\Big)\,,\\
\jmath_2:& \quad (u,X,Y) \mapsto \Big(u''= \frac1u,\frac{X}{u^4},\frac{Y}{u^6}\Big) \,,\\
\jmath_3:& \quad (u,X,Y) \mapsto \Big(u''' = -\frac1u,\frac{X}{u^4},-\frac{Y}{u^6}\Big) \,.
\end{split}
\eeq
The involutions $u \mapsto -u$ and $u \mapsto 1/u$ and their composition map singular fibers of Equation~(\ref{eqn:EFS}) to singular fibers, and smooth fibers to smooth fibers. The zero section $\mathsf{O}$, given as the point at infinity in each fiber, and the 2-torsion sections $\mathsf{T}_1, \mathsf{T}_2, \mathsf{T}_3$, given by
\beq
\label{eqn:zero_sections}
\mathsf{T}_1: (X,Y)=(0,0)\,, \quad \mathsf{T}_2: (X,Y)=(b-2a,0)\,, \quad \mathsf{T}_3: (X,Y)=(b+2a,0) \,,
\eeq
are invariant under the involutions $\jmath_1, \jmath_2, \jmath_3$, and the hyperelliptic involution. 
\par  Each 2-torsion section intersects the non-neutral components $K_j$ of eight reducible fibers of type $A_1$ -- which we represent as sets $W_k = \{K_i \mid i \in I_k\}$ for index sets $I_k$ such that $|I_k|=8$ for $k= 1, 2, 3$ -- partitioning the twelve rational curves $K_j$ with $4 \le j \le 15$ into three sets of eight curves with pairwise intersections consisting of four curves, i.e., $|W_j \cap W_k|=4$ and $W_1 \cap W_2 \cap W_3=\emptyset$. None of the twelve reducible fibers are invariant under the action of the involutions $\jmath_1, \jmath_2$.  However, the sets $W_k$ and $W_j \cap W_k$  for $1\le j,k \le 3$ are invariant under $\jmath_1, \jmath_2$.  We may define divisors $\bar{K}_{W_k} = \frac{1}{2} \sum_{n \in I_k} K_n$ with $1 \le k \le 3$, which are known to be elements of the Kummer lattice \cite{MR2600955, MR3010125}, with $\bar{K}_{W_j}\circ \bar{K}_{W_k}=-2-2\delta_{jk}$ for $1\le j,k \le 3$. We also define divisors $\bar{K}_{W_j \cap W_k} = \frac{1}{2} \sum_{n \in I_j \cap I_k} K_n$ with $\bar{K}_{W_j \cap W_k}^2=-2$. By construction, the elements $\bar{K}_{W_k}$ and $\bar{K}_{W_j \cap W_k}$ for $1 \le j, k \le 3$ are invariant under the action of the involutions $\jmath_1, \jmath_2$. 
\par The singular fibers of the fibration~(\ref{eqn:EFS}) arise where the 2-torsion sections collide. This happens as follows:
\begin{center}
\begin{tabular}{crcc}
colliding sections & equation & \# of points & fiber components\\
\hline
$\mathsf{T}_1=\mathsf{T}_2$ & $b-2a=0$ & $4$ & $W_1 \cap W_2$ \\
$\mathsf{T}_1=\mathsf{T}_3$ & $b+2a=0$ & $4$ & $W_1 \cap W_3$ \\
$\mathsf{T}_2=\mathsf{T}_3$ & $a=0$ & $4$ & $W_2 \cap W_3$
\end{tabular}
\end{center}
\par In \cite{MR4484238} non-torsion sections $\mathsf{S}_1, \mathsf{S}_2, \mathsf{S}_3$ for the elliptic fibration $(\pi,\mathsf{O})$ were constructed which are orthogonal with respect to height pairing.  For two arbitrary sections $S'$ and $S''$ of the elliptic fibration, one defines the \emph{height pairing} using the formula
\beq
\label{eqn:height_pairing}
\langle S', S'' \rangle = \chi(\mathcal{O}_X) + \mathsf{O}\circ S' + \mathsf{O}\circ S'' - S' \circ S'' - \sum_{\{s|\Delta=0\}} C_s^{-1}(S',S'') \,,
\eeq
where the holomorphic Euler characteristic is $\chi(\mathcal{O}_X) =2$, and the inverse Cartan matrix $C_s^{-1}$ of a fibre of type $A_1$ located over point $s$ of the discriminant locus $\Delta=0$ contributes $(\frac{1}{2})$ if and only if both $S'$ and $S''$ intersect the non-neutral component. 
\par The sections $\mathsf{S}_1$ and $\mathsf{S}_2$ do not intersect the zero section $\mathsf{O}$ and intersect the non-neutral components of six reducible fibers of type $A_1$ each -- which we represent as complementary sets $V_k = \{K_i \mid i \in J_k\}$ for index sets $J_k$ such that $|V_k|=6$ for $k= 1, 2$ -- partitioning the twelve rational curves $K_j$ with $4 \le j \le 15$ into two disjoint sets of six curves.  The sets $V_1$ and $V_2$ are invariant under the action of the involution $\jmath_1$ and interchanged under the action of $\jmath_2$, and we have $|V_j \cap W_k|=4$ for $j= 1, 2$ and $k=1, 2, 3$.  The section $\mathsf{S}_3$ intersects the non-neutral components of all reducible fibers, and the zero section such that $\mathsf{S}_3 \circ \mathsf{O} =2$.
The sections $\{\mathsf{O}, \mathsf{T}_1, \mathsf{T}_2, \mathsf{T}_3, \mathsf{S}_1, \mathsf{S}_2, \mathsf{S}_3\}$ are linearly independent sections in the Mordell-Weil group, but they do \emph{not} form a basis. Given the explicit form of the sections $\{\mathsf{O}, \mathsf{T}_1, \mathsf{T}_2, \mathsf{T}_3, \mathsf{S}_1, \mathsf{S}_2, \mathsf{S}_3\}$, we compute the intersection pairings for their divisor classes and their corresponding height pairings; the results are shown in Table~\ref{tab:intersection}.  We have the following:
\begin{lemma}
\label{lem:antisymplectic}
The involutions $\jmath_1, \jmath_2, \jmath_3$ are three commuting anti-symplectic involutions of the elliptic fibration with section $(\pi,\mathsf{O})$ with $\jmath_3=-\jmath_1\jmath_2$. The  involutions $\jmath_l$ for $1\le l \le3$ act on the sections $\{\mathsf{O}, \mathsf{T}_1, \mathsf{T}_2, \mathsf{T}_3, \mathsf{S}_1, \mathsf{S}_2, \mathsf{S}_3\}$ as follows:
\begin{center}
\begin{tabular}{c|ccccrrr}
                    & $\mathsf{O}$	& $\mathsf{T}_1$	& $\mathsf{T}_2$	& $\mathsf{T}_3$	& $\mathsf{S}_1$	& $\mathsf{S}_2$	& $\mathsf{S}_3$\\
\hline
$\jmath_1$	& $\mathsf{O}$	& $\mathsf{T}_1$	& $\mathsf{T}_2$	& $\mathsf{T}_3$	& $\mathsf{S}_1$	& $\mathsf{S}_2$	& $-\mathsf{S}_3$\\
$\jmath_2$	& $\mathsf{O}$	& $\mathsf{T}_1$	& $\mathsf{T}_2$	& $\mathsf{T}_3$	& $\mathsf{S}_2$	& $\mathsf{S}_1$	& $\mathsf{S}_3$\\
$\jmath_3$	& $\mathsf{O}$	& $\mathsf{T}_1$	& $\mathsf{T}_2$	& $\mathsf{T}_3$	& $-\mathsf{S}_2$	& $-\mathsf{S}_1$	& $\mathsf{S}_3$
\end{tabular}
\end{center}
\end{lemma}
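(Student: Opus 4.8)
The plan is to reduce all three assertions to direct computations on the Weierstrass model~\eqref{eqn:EFS}, invoking the closed-form sections of \cite{MR4484238} only to fix signs at the very end. Writing $a(u),b(u)$ for the even, palindromic quartics of~\eqref{eqn:B12_1b} (so that $a(-u)=a(u)$, $b(-u)=b(u)$ and $u^4a(1/u)=a(u)$, $u^4b(1/u)=b(u)$), I would first substitute each coordinate change of~\eqref{eqn:involutions} into~\eqref{eqn:EFS} and check that the equation is reproduced; this shows each $\jmath_l$ is an automorphism of $X$, and since the induced base maps $u\mapsto -u,\,1/u,\,-1/u$ each fix the point at infinity $[0:1:0]$ of every fibre, each $\jmath_l$ preserves the fibration $\pi_X$ and the zero section $\mathsf{O}$. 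Squaring the substitutions gives the identity, so the $\jmath_l$ are involutions; a one-line composition shows $\jmath_1\jmath_2=\jmath_2\jmath_1\colon(u,X,Y)\mapsto(-1/u,X/u^4,Y/u^6)$, whence $\jmath_3=k_X\circ\jmath_1\circ\jmath_2$, i.e.\ $\jmath_3=-\jmath_1\jmath_2$. Because each $\jmath_l$ scales $Y$ by a (possibly signed) factor, it commutes with $k_X\colon Y\mapsto -Y$; hence $k_X$ is central, and combined with $\jmath_1\jmath_2=\jmath_2\jmath_1$ this makes all three $\jmath_l$ commute. Finally, on the holomorphic two-form, which in the chart $Z=1$ is $\omega_X=du\wedge dX/Y$, a short calculation gives $\jmath_1^*\omega_X=\jmath_2^*\omega_X=-\omega_X$; since $k_X^*\omega_X=-\omega_X$ and $\jmath_1\jmath_2$ is therefore symplectic, also $\jmath_3^*\omega_X=-\omega_X$, so all three are anti-symplectic.

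The action on the torsion sections is then immediate from~\eqref{eqn:zero_sections}: the loci $X=0$, $X=b-2a$, $X=b+2a$ and the section at infinity are each carried to themselves under $u\mapsto\pm u^{\pm1}$ combined with $X\mapsto X/u^4$, precisely because $a,b$ are even and palindromic. Thus $\mathsf{O},\mathsf{T}_1,\mathsf{T}_2,\mathsf{T}_3$ are fixed by every $\jmath_l$, accounting for their four columns; moreover, once the $\jmath_1$- and $\jmath_2$-rows are known on $\mathsf{S}_1,\mathsf{S}_2,\mathsf{S}_3$, the $\jmath_3$-row is forced by $\jmath_3=-\jmath_1\jmath_2$ together with $k_X|_{\mathrm{MW}}=-1$, so no separate computation for $\jmath_3$ is needed.

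It remains to treat $\mathsf{S}_1,\mathsf{S}_2,\mathsf{S}_3$ under $\jmath_1,\jmath_2$. Since each $\jmath_l$ fixes $\mathsf{O}$ and the fibre class $F$ and permutes the reducible fibres and their components, $\jmath_l(\mathsf{S}_i)$ is again a section meeting a prescribed set of non-neutral components, so the combinatorics recorded above determines the images up to sign and torsion: $\jmath_1$ fixes the six-curve sets $V_1,V_2$ while $\jmath_2$ interchanges them, and $\mathsf{S}_1,\mathsf{S}_2$ meet exactly the curves of $V_1,V_2$ whereas $\mathsf{S}_3$ meets all reducible fibres; hence $\jmath_1\mathsf{S}_1\in\{\pm\mathsf{S}_1\}$, $\jmath_1\mathsf{S}_2\in\{\pm\mathsf{S}_2\}$, $\jmath_2\mathsf{S}_1\in\{\pm\mathsf{S}_2\}$, and $\jmath_l\mathsf{S}_3\in\{\pm\mathsf{S}_3\}$, each modulo torsion. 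To pin the signs I would substitute the explicit sections of \cite{MR4484238} into~\eqref{eqn:involutions}: on a fibre, $\jmath_1$ is the identity in Weierstrass coordinates (transported from $u$ to $-u$), so $\jmath_1\mathsf{S}=+\mathsf{S}$ or $-\mathsf{S}$ according as the $(X,Y)$-coordinate functions of $\mathsf{S}$ are both even, or $X$ even and $Y$ odd, in $u$; the formulas exhibit $\mathsf{S}_1,\mathsf{S}_2$ as the first type and $\mathsf{S}_3$ as the second, giving the $\jmath_1$-row, while the palindromic rescaling realizing $\jmath_2$ interchanges $\mathsf{S}_1$ and $\mathsf{S}_2$ and fixes $\mathsf{S}_3$, giving the $\jmath_2$-row.

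The main obstacle is precisely this final sign determination. Intersection data cannot distinguish $\mathsf{S}$ from $-\mathsf{S}$, since they pair identically with every symmetric class (for instance $\mathsf{S}\circ\mathsf{O}=(-\mathsf{S})\circ\mathsf{O}$ because $k_X$ is an isometry fixing $\mathsf{O}$), so the signs (and the absence of torsion translations in the images) cannot be read off the lattice and genuinely require the closed-form expressions of \cite{MR4484238} together with a parity check of their coordinate functions under $u\mapsto -u$ and $u\mapsto 1/u$. Everything preceding this step is a mechanical verification on~\eqref{eqn:EFS}.
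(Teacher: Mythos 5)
The paper states this lemma without a written proof---it is presented as a direct check using the explicit coordinate form of the involutions in \eqref{eqn:involutions} and the explicit sections constructed in \cite{MR4484238}---and your proposal is a correct and complete reconstruction of exactly that verification: substitution into \eqref{eqn:EFS} for well-definedness, the pullback of $du\wedge dX/Y$ for anti-symplecticity, evenness/palindromy of $a,b$ for the torsion columns, and a parity check of the section coordinates under $u\mapsto -u$ and $u\mapsto 1/u$ for the signs on $\mathsf{S}_1,\mathsf{S}_2,\mathsf{S}_3$. Your observation that the $\jmath_3$-row is forced by $\jmath_3=-\jmath_1\jmath_2$, and that the final signs genuinely require the closed-form sections rather than lattice data alone, matches the paper's implicit reliance on the explicit formulas, so there is nothing to add.
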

\par We can now  complete the description of the  sixteen $(-2)$-curves in terms of the elliptic fibration: the elliptic fibration $(\pi,\mathsf{O})$ in Equation~(\ref{eqn:EFS}) admits sections $\{\mathsf{S}'_1 ,\mathsf{S}'_2 , \mathsf{S}'_3\}$  such that the remaining divisor classes $K_1, K_2, K_3$ are represented as follows:
\begin{equation}\label{eqn:KummerCurves}
K_0 = [\mathsf{O}]\, \quad K_1 = [ \mathsf{S}'_1] \,,\quad K_2 = [\mathsf{S}'_2] \,, \quad K_3 = [\mathsf{S}_3'] \,.
\end{equation}
In turn, the corresponding sections are obtained as linear combinations of the three non-torsion sections $\mathsf{S}_1, \mathsf{S}_2, \mathsf{S}_3$, using the elliptic-curve group law in $F$ as follows:
\beq
\mathsf{S}'_1 = 2\mathsf{S}_1, \quad \mathsf{S}'_2 = \mathsf{S}_1+\mathsf{S}_2+\mathsf{S}_3, \quad \mathsf{S}'_3 = \mathsf{S}_1-\mathsf{S}_2+\mathsf{S}_3 .
\eeq
This follows from Table~\ref{tab:intersection}.  All other possible choices are obtained by the action of the hyperelliptic involution and the involutions $\jmath_k$ for $k=1, 2, 3$.

\begin{table}
\parbox{.45\linewidth}{
\scalemath{0.6}{
\begin{tabular}{c||r|r|r|r|r|r|r|r|r|r|r|l}
$\circ$ 	& $F$ 	& $\mathsf{O}$	& $\mathsf{T}_1$	& $\mathsf{T}_2$	& $\mathsf{T}_3$	& $\mathsf{S}'_1$	& $\mathsf{S}'_2$ 	& $\mathsf{S}'_3$ 	& $\mathsf{S}_1$	& $\mathsf{S}_2$	& $\mathsf{S}_3$ & $\mathsf{S}'_2/2$\\
\hline\hline
$F$		& 0		& $1$	& $1$	& $1$	& $1$	& $1$	& $1$	& $1$	& $1$	& $1$	& $1$ 	& $1$\\
$\mathsf{O}$	& $1$	& $-2$	& $0$	& $0$	& $0$	& $0$	& $0$	& $0$	& $0$	& $0$	& $2$	& $0$\\
$\mathsf{T}_1$	& $1$	& $0$	& $-2$	& $0$	& $0$	& $2$	& $2$	& $2$	& $0$	& $0$	& $0$	&$0$\\
$\mathsf{T}_2$	& $1$	& $0$	& $0$	& $-2$	& $0$	& $2$	& $2$	& $2$	& $0$	& $0$	& $0$	& $0$\\
$\mathsf{T}_3$	& $1$	& $0$	& $0$	& $0$	& $-2$	& $2$	& $2$	& $2$	& $0$	& $0$	& $0$	& $0$\\
$\mathsf{S}'_1$	& $1$	& $0$	& $2$	& $2$	& $2$	& $-2$	& $0$	& $0$	& $0$	& $2$	& $4$	& $1$\\
$\mathsf{S}'_2$	& $1$	& $0$	& $2$	& $2$	& $2$	& $0$	& $-2$	& $0$	& $1$	& $1$	& $2$	& $0$\\
$\mathsf{S}'_3$	& $1$	& $0$	& $2$	& $2$	& $2$	& $0$	& $0$	& $-2$	& $1$	& $3$	& $2$	& $1$\\
$\mathsf{S}_1$	& $1$	& $0$	& $0$	& $0$	& $0$	& $0$	& $1$	& $1$	& $-2$	& $2$	& $1$	& $0$\\
$\mathsf{S}_2$	& $1$	& $0$	& $0$	& $0$	& $0$	& $2$	& $1$	& $3$	& $2$	& $-2$	& $1$	& $0$\\
$\mathsf{S}_3$	& $1$	& $2$	& $0$	& $0$	& $0$	& $4$	& $2$	& $2$	& $1$	& $1$	& $-2$ 	& $0$\\
$\mathsf{S}'_2/2$ & $1$ 	& $0$	& $0$	& $0$	& $0$	& $1$	& $0$	& $1$	& $0$	& $0$	& $0$	& $-2$
\end{tabular}}}
\qquad
\parbox{.45\linewidth}{
\scalemath{0.6}{
\begin{tabular}{c||r|r|r|r|r|r|r|r|r|r|l}
$\langle\bullet,\bullet\rangle$ 	& $\mathsf{O}$	& $\mathsf{T}_1$	& $\mathsf{T}_2$	& $\mathsf{T}_3$	& $\mathsf{S}'_1$	& $\mathsf{S}'_2$ 	& $\mathsf{S}'_3$	& $\mathsf{S}_1$	& $\mathsf{S}_2$	& $\mathsf{S}_3$ & $\mathsf{S}'_2/2$\\
\hline\hline
$\mathsf{O}$	& $0$	& $0$ 	& $0$	& $0$ 	& $0$	& $0$	& $0$	& $0$ 	& $0$	& $0$	& $0$\\
$\mathsf{T}_1$	& $0$	& $0$ 	& $0$	& $0$	& $0$	& $0$	& $0$	& $0$ 	& $0$	& $0$	& $0$\\
$\mathsf{T}_2$	& $0$	& $0$ 	& $0$	& $0$ 	& $0$	& $0$	& $0$	& $0$ 	& $0$	& $0$	& $0$\\
$\mathsf{T}_3$	& $0$	& $0$ 	& $0$	& $0$ 	& $0$	& $0$	& $0$	& $0$ 	& $0$	& $0$	& $0$\\
$\mathsf{S}'_1$	& $0$	& $0$ 	& $0$	& $0$	& $4$	& $2$	& $2$	& $2$ 	& $0$	& $0$	& $1$\\
$\mathsf{S}'_2$	& $0$	& $0$ 	& $0$	& $0$ 	& $2$	& $4$	& $2$	& $1$ 	& $1$	& $2$	& $2$\\
$\mathsf{S}'_3$	& $0$	& $0$ 	& $0$	& $0$ 	& $2$	& $2$	& $4$	& $1$ 	& $-1$	& $2$	& $1$\\
$\mathsf{S}_1$	& $0$ 	& $0$	& $0$	& $0$	& $2$ 	& $1$	& $1$	& $1$	& $0$	& $0$	& $\frac{1}{2}$\\
$\mathsf{S}_2$	& $0$ 	& $0$	& $0$	& $0$	& $0$ 	& $1$	& $-1$	& $0$	& $1$	& $0$	& $\frac{1}{2}$\\
$\mathsf{S}_3$	& $0$ 	& $0$	& $0$	& $0$	& $0$ 	& $2$	& $2$	& $0$	& $0$	& $2$	& $1$\\
$\mathsf{S}'_2/2$	& $0$	& $0$ 	& $0$	& $0$ 	& $1$	& $2$	& $1$	& $\frac{1}{2}$ 	& $\frac{1}{2}$	& $1$ & $1$
\end{tabular}}}
\caption{Intersection and Height Pairings}
\label{tab:height}\label{tab:intersection}
\end{table} 
\par The divisor classes for the 2-torsion sections $\mathsf{T}_1$, $\mathsf{T}_2$, $\mathsf{T}_3$ are expressible in terms of the fiber class $F$, the class of the zero section $K_0$, the classes of the non-neutral components of the reducible fibers $K_k$ for $4 \le k \le 15$, and the classes of the sections $\mathsf{S}_1$, $\mathsf{S}_2$, $\mathsf{S}_3$. That is, for $1\le j \le 3$ we make the ansatz
\beq
\begin{split}
[\mathsf{T}_j] & = \beta_j F + \alpha_{j0} K_0 + \sum_{l=1}^{3} \alpha_{jl}   [\mathsf{S}'_l] + \sum_{k=4}^{15} \alpha_{jk} K_k ,
\end{split}
\eeq
with constants $\alpha, \beta, \gamma \in \mathbb{Q}$. We then check the following:
\begin{corollary}
\label{cor:torsion_sections}
The divisor classes of the 2-torsion sections $\mathsf{T}_k$ are given by
\beq
\label{eqn:secT}
[ \mathsf{T}_k ] = 2 F + K_0 - \bar{K}_{W_k} \quad \text{for  $k = 1,  2, 3$.}
\eeq
\end{corollary}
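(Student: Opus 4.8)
The plan is to determine $[\mathsf{T}_k]$ from its intersection numbers against a spanning set of $\mathrm{NS}(X)\otimes\mathbb{Q}$, using that the intersection form on $\mathrm{NS}(X)$ is non-degenerate. Since $\pi_X$ has Mordell--Weil rank $3$ and the classes $[\mathsf{S}'_1],[\mathsf{S}'_2],[\mathsf{S}'_3]$ are obtained from $\mathsf{S}_1,\mathsf{S}_2,\mathsf{S}_3$ by a rationally invertible change of basis, the Shioda--Tate theorem shows that the $17$ classes
\[
\mathcal{B}=\{F,\,K_0,\,[\mathsf{S}'_1],\,[\mathsf{S}'_2],\,[\mathsf{S}'_3],\,K_4,\dots,K_{15}\}
\]
span $\mathrm{NS}(X)\otimes\mathbb{Q}$. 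Both the candidate class $R_k:=2F+K_0-\bar{K}_{W_k}$ and the class $[\mathsf{T}_k]$ lie in this span, so it suffices to verify $R_k\circ Z=[\mathsf{T}_k]\circ Z$ for each $Z\in\mathcal{B}$.

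First I would record the intersection numbers needed on the candidate side. All of them follow from the fact that $K_0,\dots,K_{15}$ are the sixteen mutually disjoint $(-2)$-curves of the Kummer surface, so that $K_i\circ K_j=-2\delta_{ij}$, together with $F^2=0$, $F\circ K_0=F\circ[\mathsf{S}'_l]=1$, and $F\circ K_m=0$ for $4\le m\le 15$ since these $K_m$ are vertical. Writing $\bar{K}_{W_k}=\tfrac12\sum_{n\in I_k}K_n$ with $I_k\subseteq\{4,\dots,15\}$, this gives $\bar{K}_{W_k}\circ F=\bar{K}_{W_k}\circ K_0=0$, while $\bar{K}_{W_k}\circ K_m$ equals $-1$ when $m\in I_k$ and $0$ otherwise. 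The one value that uses the geometry of the sections is $\bar{K}_{W_k}\circ[\mathsf{S}'_l]$: since $[\mathsf{S}'_l]=K_l$ with $l\in\{1,2,3\}$ is itself one of the disjoint Kummer curves and every $n\in I_k$ has $n\ge 4\ne l$, each term $K_n\circ K_l$ vanishes and hence $\bar{K}_{W_k}\circ[\mathsf{S}'_l]=0$.

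The four comparisons are then immediate. On the candidate side, $R_k\circ F=1$, $R_k\circ K_0=2-2=0$, $R_k\circ[\mathsf{S}'_l]=2-0=2$, and $R_k\circ K_m$ equals $1$ for $m\in I_k$ and $0$ otherwise. On the side of $[\mathsf{T}_k]$, one has $[\mathsf{T}_k]\circ F=1$ because $\mathsf{T}_k$ is a section, while $[\mathsf{T}_k]\circ K_0=0$ and $[\mathsf{T}_k]\circ[\mathsf{S}'_l]=2$ are the values recorded in Table~\ref{tab:intersection}; finally $[\mathsf{T}_k]\circ K_m$ equals $1$ precisely when $m\in I_k$, since $\mathsf{T}_k$ meets transversally exactly the eight non-neutral components indexed by $W_k$. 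As these agree for every $Z\in\mathcal{B}$, non-degeneracy forces $[\mathsf{T}_k]=2F+K_0-\bar{K}_{W_k}$.

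The only genuinely geometric input, and therefore the main obstacle, is the incidence data underlying the last equality: that $\mathsf{T}_k$ passes through exactly the non-neutral components in $W_k$, and that each $[\mathsf{S}'_l]$ is disjoint from $K_4,\dots,K_{15}$. Both are part of the $A_1$-component-group bookkeeping established just before the statement (for instance $\mathsf{S}'_1=2\mathsf{S}_1$ is forced into the neutral component of every fiber), so no new computation is required. As a cross-check one can avoid Table~\ref{tab:intersection} altogether by appealing to the Shioda homomorphism: $\mathsf{T}_k$ is torsion, hence maps to $0$ in $\mathrm{NS}(X)\otimes\mathbb{Q}$; with $\chi(\mathcal{O}_X)=2$ and $\mathsf{T}_k\circ\mathsf{O}=0$ the coefficient of $F$ is $2$, and orthogonality to the non-neutral component $K_n$ of each $I_2$ fiber it meets forces the correction $-\tfrac12 K_n$, summing to $-\bar{K}_{W_k}$ and reproducing the formula directly.
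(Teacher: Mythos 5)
Your proposal is correct and follows essentially the same route as the paper: the paper also makes an ansatz for $[\mathsf{T}_k]$ over the generating set $\{F,K_0,[\mathsf{S}'_l],K_4,\dots,K_{15}\}$ and pins down the coefficients from the intersection numbers $[\mathsf{T}_k]\circ F=1$, $[\mathsf{T}_k]\circ K_0=0$, and $[\mathsf{T}_k]\circ K_j=1$ exactly for $j\in I_k$. You merely phrase the same computation as a non-degeneracy argument (and add the Shioda-homomorphism cross-check, which is a pleasant independent confirmation but not a different proof).
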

\begin{proof}
The proof follows from $[ \mathsf{T}_k ] \circ F =1$, $[ \mathsf{T}_k ] \circ K_0 =0$, $[ \mathsf{T}_k ] \circ K_j =1$ for $j \in I_k$ and $[ \mathsf{T}_k ] \circ K_j =0$ for $j \not \in I_k$. A computation then yields the result.
\end{proof}
We denote the trivial lattice generated by all fibre components and the zero section by $\operatorname{Triv}(X)$ and the Mordell-Weil lattice by $\operatorname{MWL}(X)$, i.e.,  $\operatorname{MW}(X)/\operatorname{MW}(X)_{\mathrm{tor}}$ equipped with the height pairing in Equation~(\ref{eqn:height_pairing}). Then, we have the following formula relating the determinants of their discriminant groups:
\beq
\label{eqn:determinants}
\big| \operatorname{disc}\big( \mathrm{NS}(X) \big) \big| =  \big| \operatorname{disc}\big( \operatorname{Triv}(X) \big) \big|  \cdot  \big| \operatorname{disc}\big( \operatorname{MWL}(X)\big) \big| / \big( \# \operatorname{MW}(X)_{\mathrm{tor}} \big)^2.
\eeq
In our situation, this equality requires $\operatorname{MWL}(X)$ to have rational (non-integral) discriminant: the elliptic fibration has singular fibers $12 I_2$ and torsion Mordell-Weil group $(\mathbb{Z}/2\mathbb{Z})^2$; thus, any basis for the three non-torsion sections must form a height matrix with determinant $\frac{1}{2}$. In our situation, it is impossible to have a section of nonintegral height\footnote{We thank Adam Logan for explaining this point to us}. However, one checks that $\mathsf{S}'_2 = \mathsf{S}_1 + \mathsf{S}_2 + \mathsf{S}_3$ is divisible by 2, and one obtains a new section $\mathsf{S}'_1/2$: it has height $1$ and pairing 1/2, 1/2, 1 with $\mathsf{S}_1, \mathsf{S}_2, \mathsf{S}_3$, respectively.  The section $\mathsf{S}'_1/2$ intersects six reducible fibers of type $A_1$ -- which we represent as set $V_4 = \{K_i \mid i \in J_4\}$ for index sets $J_4$ such that $|V_4|=6$, $| V_1 \cap V_4| = | V_2 \cap V_4|  =3$ and $| W_i \cap V_4| =4$ for $i= 1, 2, 3$. A minimal set of generators for the Mordell-Weil lattice $\operatorname{MWL}(X)$ is given by $\{\mathsf{S}_1, \mathsf{S}_2, \mathsf{S}'_2/2\}$ and the height matrix is $[1,0,1/2;0,1,1/2;1/2,1/2,1]$ whose determinant equals $\frac{1}{2}$. This is in agreement with Equation~(\ref{eqn:determinants}) since the  determinant of the polarizing lattice is $2^{12}\cdot \frac{1}{2}/4^2=2^7$. 
\par We have the following result that establishes the polarizing lattice of the K3 surface $X$ in terms of the Jacobian elliptic fibration $\pi_X \colon X \to \mathbb{P}^1 =\mathbb{P}(u, v)$:
\begin{proposition}
Given the Jacobian elliptic fibration $\pi_X \colon X \to \mathbb{P}^1 =\mathbb{P}(u, v)$ in Equation~(\ref{eqn:EFS}), the N\'eron-Severi lattice $\mathrm{NS}(X)$ is the overlattice spanned by (i) the lattice generated by $F,  K_0$, the non-neutral components of the reducible fibers of type $A_1$, the sections $\mathsf{S}_1, \mathsf{S}_2, \mathsf{S}'_2/2$, and (ii) the 2-torsion sections $\mathsf{T}_1,  \mathsf{T}_2,  \mathsf{T}_3$.
\end{proposition}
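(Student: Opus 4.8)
The plan is to derive the statement from the Shioda--Tate theorem together with the explicit Mordell--Weil generators assembled above. The Shioda--Tate sequence for the Jacobian fibration $\pi_X$ reads
\[
0 \longrightarrow \operatorname{Triv}(X) \longrightarrow \mathrm{NS}(X) \longrightarrow \mathrm{MW}(X) \longrightarrow 0,
\]
where $\operatorname{Triv}(X)$ is spanned by the fiber class $F$, the zero section $K_0=[\mathsf{O}]$, and the twelve non-neutral components $K_4,\dots,K_{15}$ of the $I_2$ fibers. Hence $\mathrm{NS}(X)$ is generated, as an abelian group over $\operatorname{Triv}(X)$, by the divisor classes of any generating set of $\mathrm{MW}(X)$. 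I would therefore reduce the proposition to exhibiting such a generating set and matching it with the divisors described in (i) and (ii).

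For the free part $\operatorname{MWL}(X)\cong\mathbb{Z}^3$, the set $\{\mathsf{S}_1,\mathsf{S}_2,\mathsf{S}'_2/2\}$ is a basis, as established by the height-matrix computation (of determinant $\tfrac12$) preceding the statement. For the torsion part $\mathrm{MW}(X)_{\mathrm{tor}}\cong(\mathbb{Z}/2\mathbb{Z})^2$, the sections $\mathsf{T}_1,\mathsf{T}_2$ are generators, with $\mathsf{T}_3=\mathsf{T}_1+\mathsf{T}_2$ under the fiberwise group law. Thus the images of $[\mathsf{S}_1],[\mathsf{S}_2],[\mathsf{S}'_2/2],[\mathsf{T}_1],[\mathsf{T}_2],[\mathsf{T}_3]$ generate $\mathrm{MW}(X)$, and together with $\operatorname{Triv}(X)$ they generate $\mathrm{NS}(X)$. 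Read through the sequence, the sublattice (i) is precisely the preimage of the free summand $\mathbb{Z}^3\subset\mathrm{MW}(X)$, while adjoining the $2$-torsion classes (ii) fills in the quotient $(\mathbb{Z}/2\mathbb{Z})^2$; hence the overlattice spanned by (i) and (ii) is all of $\mathrm{NS}(X)$, as claimed.

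To certify that the span of (i) and (ii) is \emph{saturated}---equal to $\mathrm{NS}(X)$ rather than a proper finite-index sublattice---I would track indices via the discriminant formula~(\ref{eqn:determinants}). The sublattice (i) has index $\#\mathrm{MW}(X)_{\mathrm{tor}}=4$ in $\mathrm{NS}(X)$, and adjoining $\mathsf{T}_1,\mathsf{T}_2$ realizes exactly this index-$4$ extension. This is consistent with $|\operatorname{disc}(\mathrm{NS}(X))|=2^{12}\cdot\tfrac12/4^2=2^7=|\operatorname{disc}(\langle 8\rangle\oplus 2D_8(-1))|$, confirming that no further overlattice is possible.

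The main obstacle is not the Shioda--Tate bookkeeping, which is formal, but the two saturation assertions feeding into it. First, one must know that $\{\mathsf{S}_1,\mathsf{S}_2,\mathsf{S}'_2/2\}$ generates the \emph{entire} Mordell--Weil lattice and not a proper sublattice; this is exactly where the divisibility of $\mathsf{S}'_2=\mathsf{S}_1+\mathsf{S}_2+\mathsf{S}_3$ by $2$, together with the impossibility of a section of nonintegral height, is essential. Second, one must verify that after adjoining the torsion sections the resulting lattice is primitively embedded in $\mathrm{NS}(X)$; the discriminant count pins this down by forcing the index of (i) to be exactly $4$, so that the span of (i) and (ii) cannot be properly contained in $\mathrm{NS}(X)$.
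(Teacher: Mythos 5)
Your proof is correct, but it takes a genuinely different route from the paper's. You run the argument through the Shioda--Tate sequence: $\mathrm{NS}(X)=\operatorname{Triv}(X)+\langle\text{generators of }\mathrm{MW}(X)\rangle$, with $\{\mathsf{S}_1,\mathsf{S}_2,\mathsf{S}'_2/2\}$ generating the free part and $\mathsf{T}_1,\mathsf{T}_2$ the torsion, so that the span of (i) and (ii) is all of $\mathrm{NS}(X)$ essentially by definition, with the discriminant count $2^{12}\cdot\tfrac12/4^2=2^7$ serving as a consistency check. The paper instead splits off the hyperbolic summand via $\langle 8\rangle\oplus 2D_8(-1)\cong H\oplus 2D_4(-1)\oplus A_7(-1)$, reduces to showing that the definite complement of the described overlattice is isomorphic to $2D_4\oplus A_7$, and certifies this by a lattice-theoretic genus computation. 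Your approach is more structural and makes the index-$4$ overlattice mechanism (the torsion classes $[\mathsf{T}_k]=2F+K_0-\bar{K}_{W_k}$ supplying the half-integral vectors) completely transparent; the paper's approach independently pins down the isomorphism type of the definite part. The one dependency worth flagging is that your step ``$\{\mathsf{S}_1,\mathsf{S}_2,\mathsf{S}'_2/2\}$ generates all of $\operatorname{MWL}(X)$'' rests on the determinant-$\tfrac12$ computation, which in the paper is calibrated against $|\operatorname{disc}(\mathrm{NS}(X))|=2^7$ from Proposition~\ref{prop:lattice}; since that proposition precedes the present one, this is a legitimate (not circular) dependency, but it means your proof presupposes the lattice identification rather than re-deriving it as the paper's genus computation does.
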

\begin{proof}
We observe that $\langle 8 \rangle \oplus 2 D_8(-1) \cong H \oplus 2 D_4(-1) \oplus A_7(-1)$. Thus, it is enough to prove that a definite lattice isomorphic to $2 D_4 \oplus A_7$ is the overlattice spanned by (i) the lattice generated by $K_j$ for $j=4, \dots, 15$, the classes of $2F + K_0 - [\mathsf{S}_1],2F + K_0  -  [\mathsf{S}_2], 2F + K_0 - [\mathsf{S}'_2/2]$ and (ii) the classes $\bar{K}_{W_1},  \bar{K}_{W_2}, \bar{K}_{W_3}$ in Equation~(\ref{eqn:secT}). A computation of the lattice-theoretic genus shows that the two lattices are in fact isomorphic.
\end{proof}
We also make the following:
\begin{remark}
\label{rem:eigenspaces}
The involutions $\jmath_1$ and $\jmath_2$ act on the $\mathbb{Q}$-basis $\{[\mathsf{S}_1], [\mathsf{S}_2], [\mathsf{S}'_2/2]\}$ as the following matrices:
\beq
M(\jmath_1) = \left( \begin{array}{ccr} 1 & & 1\\ & 1 & 1\\  &  & -1 \end{array}\right), \qquad M(\jmath_2) = \left( \begin{array}{ccc} & 1 & \\ 1 &  & \\  &  & 1 \end{array}\right).
\eeq
We have the following bases for the eigenspaces for the eigenvalues $+1$ and $-1$, respectively:
\beq
\left\lbrace  \langle 1, 0 , 0\rangle , \langle 0, 1, 0 \rangle;  \left\langle -\frac{1}{2}, -\frac{1}{2} , 1\right\rangle \right\rbrace, \qquad
\Big\lbrace \langle 1, 1 , 0\rangle , \langle 0, 0, 1 \rangle; \langle -1, 1 , 0\rangle   \Big\rbrace.
\eeq
\end{remark}
\par Let $R$ denote a rational elliptic surface with section.  We denote the classes of the zero-section on $R$ by $o$ and the smooth fiber of $R$ by $f$, respectively. The N\'eron-Severi lattice is given by $\mathrm{NS}(R) = \langle f, o \rangle \oplus E_8(-1)$ where the intersection pairing for  $\langle f, o \rangle$ equals $[0, 1;1, - \chi(\mathcal{O}_R)=-1]$. We  consider a morphism $\phi\colon \mathbb{P}(u, v) \to \mathbb{P}(s, t)$ of degree two. The pull-back $X$ of $R$ via $\phi$ defines a K3 surface if the fibers of $R$ at the ramification points are reduced.  In this case, pulling back $\mathrm{NS}(R)$ via $\phi^*$ we see that $\langle 2 \rangle \oplus \langle -2 \rangle \oplus E_8(-2)$ embeds into $\mathrm{NS}(X)$. The former lattice has  the invariants $(r, a, \delta) = (10, 10, 1)$ where $r$ is its rank, the determinant of its discriminant group is $2^a$, and $\delta$ is the parity; see \cite{MR2137825} for more details on the invariants $(r, a,\delta)$.
\par In our situation, we start with the following Weierstrass models
\beq
\label{eqn:S} 
\begin{split}
R_1\colon & \quad y^2 z= x \big( x -   4 f(s, t) z \big) 	 \big( x -   4 g(s, t) z\big) ,\\
R_2\colon & \quad y^2 z= x \big( x -   4 f'(s, t) z \big) 	 \big( x -   4 g'(s, t) z\big) ,\\
R_3\colon & \quad y^2 z= x \big( x -   4 f''(s, t) z \big) 	 \big( x -   4 g''(s, t) z\big) ,
\end{split}
\eeq
where the polynomials are given by
\beq
\label{eqn:S_coeffs} 
\begin{split}
f& = f_2 ( s^2 + t^2) + f_1 st, \qquad g  = g_2 ( s^2 + t^2) + g_1 st,\\
f'& =  f_2 s^2 + (f_1 - 2 f_2)t^2, \quad g'  =  g_2 s^2 + (g_1 - 2 g_2)t^2,\\
f''& =  f_2 s^2 + (f_1 + 2 f_2)t^2, \quad g''  =  g_2 s^2 + (g_1 + 2 g_2)t^2. 
\end{split}
\eeq
In the Oguiso-Shioda classification \cite{MR1104782}, all of these are rational elliptic surfaces  of type \#42 with singular fibers $6 I_2$, $\mathrm{MWL} = 2 A_1^{\vee}$, and torsion Mordell-Weil group $(\mathbb{Z}/2\mathbb{Z})^2$.  Then using the following degree-2 maps
\beq
\label{eqn:S_maps} 
\begin{split}
\phi_1 \colon& \quad [u : v] \ \mapsto \ [ s : t] = [ u^2: v^2 ],\\
\phi_2 \colon& \quad [u : v] \ \mapsto \ [ s : t] = [ u^2 + v^2 : uv ],\\
\phi_3 \colon& \quad [u : v] \ \mapsto \ [ s : t] = [ u^2 - v^2 : uv ], 
\end{split}
\eeq
the pull-back of $R_k$ via $\phi_k$ for $k=1, 2, 3$, all yield the same result, namely, the K3 surface $X$ in Equation~(\ref{eqn:EFS}) where we used Equations~(\ref{eqn:coeffs}). We have the following:
\begin{lemma}
\label{lem:inv}
The rank-10 lattices $\phi_k^* \mathrm{NS}(R_k)$ for $k=1, 2, 3$ embed into $\mathrm{NS}(X)$; they are invariant under $\jmath_k$ with  $\phi_k^* \mathrm{NS}(R_k) \cong \phi_j^* \mathrm{NS}(R_j)$ for $j,k  =1, 2, 3$.
\end{lemma}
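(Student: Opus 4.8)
The plan is to realize each base map $\phi_k$ of Equation~(\ref{eqn:S_maps}) as the quotient of $\mathbb{P}(u,v)$ by the involution underlying $\jmath_k$, so that the degree-two base change becomes a double cover of surfaces whose deck transformation is $\jmath_k$ itself; invariance of the pulled-back lattice is then purely formal. First I would record that the base involution $\bar{\jmath}_k$ induced by $\jmath_k$ in Equation~(\ref{eqn:involutions}) is $[u:v]\mapsto[-u:v]$, $[u:v]\mapsto[v:u]$, and $[u:v]\mapsto[-v:u]$ for $k=1,2,3$, and check that this is exactly the nontrivial deck transformation of $\phi_k$: the two preimages of a generic point under $\phi_1,\phi_2,\phi_3$ are interchanged by $u\mapsto -u$, $u\mapsto 1/u$, $u\mapsto -1/u$ respectively, and each $\phi_k$ is manifestly $\bar{\jmath}_k$-invariant.

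Next I would pin down the base change on surfaces. Writing $\widetilde{\phi}_k\colon X\to R_k$ for the induced degree-two morphism (so the paper's $\phi_k^\ast$ on $\mathrm{NS}$ is $\widetilde{\phi}_k^\ast$), the coefficient identities in Equation~(\ref{eqn:S_coeffs}) give $\phi_k^\ast f^{(k)}=f(u^2,v^2)$ and $\phi_k^\ast g^{(k)}=g(u^2,v^2)$ for each $k$, so the pullback of every Weierstrass model $R_k$ is the single K3 surface $X$ of Equation~(\ref{eqn:EFS}), as asserted in the text. The crucial point is that $\widetilde{\phi}_k\circ\jmath_k=\widetilde{\phi}_k$: on the base this is the deck-transformation property just verified, while on the fiber coordinates the rescalings $X\mapsto X/u^4$, $Y\mapsto \pm Y/u^6$ appearing in $\jmath_2,\jmath_3$ are precisely the weight-$4$ and weight-$6$ normalizations needed to bring $\bar{\jmath}_k[u:v]$ back to the chart $v=1$; since $f(u^2,v^2)$ and $g(u^2,v^2)$ are symmetric under $[u:v]\mapsto[v:u]$ and $[u:v]\mapsto[-v:u]$, these rescalings carry the Weierstrass equation of $X$ to itself over $R_k$. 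Hence $\jmath_k$ is the deck involution of the double cover $\widetilde{\phi}_k$.

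With this identification the three assertions follow quickly. Because $\widetilde{\phi}_k$ is finite of degree two, $\widetilde{\phi}_k^\ast$ is injective on $\mathrm{NS}$, and the projection formula gives $\widetilde{\phi}_k^\ast D\circ\widetilde{\phi}_k^\ast D'=2\,(D\circ D')$; thus $\widetilde{\phi}_k^\ast$ is an isometric embedding of $\mathrm{NS}(R_k)(2)$ onto the rank-$10$ sublattice $\phi_k^\ast\mathrm{NS}(R_k)\subset\mathrm{NS}(X)$. Since $R_k$ is a rational elliptic surface with section, $\mathrm{NS}(R_k)\cong\langle f,o\rangle\oplus E_8(-1)$, and the change of basis $(f,o)\mapsto(f+o,o)$ shows $\langle f,o\rangle(2)\cong\langle 2\rangle\oplus\langle -2\rangle$, so each image is isometric to $\langle 2\rangle\oplus\langle -2\rangle\oplus E_8(-2)$; this recovers the embedding noted before Equation~(\ref{eqn:S}) and, as the isometry class of $\mathrm{NS}(R_k)(2)$ is independent of $k$, yields the mutual isomorphisms $\phi_k^\ast\mathrm{NS}(R_k)\cong\phi_j^\ast\mathrm{NS}(R_j)$. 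For invariance, $\widetilde{\phi}_k\circ\jmath_k=\widetilde{\phi}_k$ gives $\jmath_k^\ast\circ\widetilde{\phi}_k^\ast=\widetilde{\phi}_k^\ast$, so every class in the image is fixed by $\jmath_k^\ast$; in particular $\phi_k^\ast\mathrm{NS}(R_k)$ is $\jmath_k$-invariant.

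The main obstacle I anticipate is the second step: establishing that the deck transformation of $\widetilde{\phi}_k$ is literally $\jmath_k$ and not merely some involution covering $\bar{\jmath}_k$. The real content there is the homogeneous weight bookkeeping showing that the fiber rescalings built into $\jmath_2$ and $\jmath_3$ are forced by returning to the affine chart, combined with the symmetry of $f(u^2,v^2)$ and $g(u^2,v^2)$ under $\bar{\jmath}_k$. Once that identification is secured, the lattice statements are formal consequences of the projection formula together with the unimodularity of $\mathrm{NS}$ of a rational elliptic surface.
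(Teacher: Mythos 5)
Your proposal is correct and follows essentially the same route the paper intends: the lemma is left without an explicit proof there, being presented as a consequence of the preceding general discussion of degree-two base change (``pulling back $\mathrm{NS}(R)$ via $\phi^*$ we see that $\langle 2\rangle\oplus\langle-2\rangle\oplus E_8(-2)$ embeds into $\mathrm{NS}(X)$'') together with the explicit $R_k$ and $\phi_k$. Your filling-in of the deck-transformation identification (including the sign on $Y$ in $\jmath_3$, which indeed comes from $(-1/u)^3=-1/u^3$ in the weight-$3$ affine normalization) and the projection-formula computation is exactly the argument the paper relies on, and is consistent with the later remark that $X/\langle\jmath_k\rangle\cong R_k$.
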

We call the lattice in Lemma~\ref{lem:inv} the invariant lattice $L^{\jmath}$ of the polarizing lattice $L=\langle 8 \rangle \oplus 2 D_8(-1)$. We have the following:
\begin{proposition}
\label{prop:invlattice}
The invariant lattice $L^{\jmath}$ is  $\langle 2 \rangle \oplus \langle -2 \rangle \oplus E_8(-2)$ with invariants $(r, a, \delta) = (10, 10, 1)$.
\end{proposition}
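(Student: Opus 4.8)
The plan is to compute the lattice $L^{\jmath}=\phi_k^{*}\,\mathrm{NS}(R_k)$ directly, exploiting that each base-change map $\phi_k\colon\mathbb{P}(u,v)\to\mathbb{P}(s,t)$ has degree $2$. First I would note that, since $X$ is the fiber product of $R_k$ with $\mathbb{P}(u,v)$ over $\mathbb{P}(s,t)$ along $\phi_k$, the first projection $p\colon X\to R_k$ is a finite morphism of degree $2$, and $\jmath_k$ is its deck transformation. The projection formula then gives $p^{*}D\cdot p^{*}D' = 2\,(D\cdot D')$ for all $D,D'\in\mathrm{NS}(R_k)$. Because the intersection form on $\mathrm{NS}(R_k)$ is nondegenerate, this identity forces $p^{*}$ to be injective (a class in its kernel would lie in the radical), so its image $L^{\jmath}$ is isomorphic, as an abstract lattice, to the rescaling $\mathrm{NS}(R_k)(2)$.

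Next I would substitute the explicit form $\mathrm{NS}(R_k)=\langle f,o\rangle\oplus E_8(-1)$, where $\langle f,o\rangle$ has Gram matrix $\bigl(\begin{smallmatrix}0&1\\1&-1\end{smallmatrix}\bigr)$. Rescaling by $2$, and using $E_8(-1)(2)=E_8(-2)$, yields
\[
L^{\jmath}\;\cong\;\begin{pmatrix}0&2\\2&-2\end{pmatrix}\oplus E_8(-2).
\]
A single change of basis diagonalizes the rank-$2$ block: writing $e_1,e_2$ for the generators with $e_1^2=0$, $e_1\cdot e_2=2$, $e_2^2=-2$, the pair $\{e_1+e_2,\,e_2\}$ satisfies $(e_1+e_2)^2=0+4-2=2$, $e_2^2=-2$, and $(e_1+e_2)\cdot e_2=2-2=0$. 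Since the transition matrix is unimodular, this exhibits the block as $\langle 2\rangle\oplus\langle -2\rangle$, giving $L^{\jmath}\cong\langle 2\rangle\oplus\langle -2\rangle\oplus E_8(-2)$ and confirming the embedding already recorded in the discussion preceding the proposition.

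Finally I would read off the invariants $(r,a,\delta)$. The rank is $r=2+8=10$. The discriminant group is $(\mathbb{Z}/2)\oplus(\mathbb{Z}/2)\oplus(\mathbb{Z}/2)^{8}=(\mathbb{Z}/2)^{10}$, so $|\mathrm{disc}|=2^{10}$ and $a=10$. For the parity I would observe that the discriminant quadratic form of $E_8(-2)$ is integer-valued (because $E_8$ is even, so $q$ takes values $-z^2/2\in\mathbb{Z}$ on $\tfrac12 E_8/E_8$), whereas the generator of $A_{\langle 2\rangle}$ has $q=\tfrac12\notin\mathbb{Z}$; hence the form is odd and $\delta=1$. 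This produces the asserted $(r,a,\delta)=(10,10,1)$.

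I do not expect a genuine obstacle: the substance is the degree-$2$ rescaling via the projection formula, followed by two routine lattice computations. The one point deserving care is the justification that $p^{*}$ multiplies the form exactly by the degree and is injective, so that $L^{\jmath}$ really is $\mathrm{NS}(R_k)(2)$ and not merely a sublattice thereof. If one additionally wished to identify $L^{\jmath}$ with the \emph{full} $\jmath_k$-fixed sublattice of $\mathrm{NS}(X)$ (rather than just the image of $p^{*}$, which is how $L^{\jmath}$ is defined here), one would verify primitivity, which is consistent with the rank bookkeeping: the rank-$10$ invariant part is complementary to a rank-$7$ anti-invariant part inside the rank-$17$ lattice $\mathrm{NS}(X)\cong\langle 8\rangle\oplus 2D_8(-1)$.
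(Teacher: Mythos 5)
Your computation of the abstract isomorphism type of $\phi_k^*\mathrm{NS}(R_k)$ is correct, but it is essentially the argument the paper already gives in the general discussion \emph{preceding} the proposition, where $\langle 2\rangle\oplus\langle -2\rangle\oplus E_8(-2)$ with $(r,a,\delta)=(10,10,1)$ is read off from $\mathrm{NS}(R)=\langle f,o\rangle\oplus E_8(-1)$ after the degree-$2$ rescaling. The paper's actual proof of the proposition takes a different and strictly stronger route: it computes the full $\jmath_1$-fixed sublattice of $\mathrm{NS}(X)$ from explicit generators --- $2F$, $K_0$, the sums $K_{2j}+K_{2j+1}$ of swapped fiber components, the sections $[\mathsf{S}_1],[\mathsf{S}_2]$, \emph{together with} the $2$-torsion section classes $[\mathsf{T}_k]=2F+K_0-\bar K_{W_k}$, which adjoin the half-integral classes $\bar K_{W_k}=\tfrac12\sum_{n\in I_k}K_n$ and force a nontrivial overlattice $S\supsetneq S'$ --- and then identifies $S\cong E_8(2)$ by a genus computation. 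What your approach buys is brevity and a clean conceptual reason for the factor of $2$; what the paper's approach buys is the identification of the invariant lattice as a concrete primitive sublattice of $\mathrm{NS}(X)$, which is what is actually used later (the Nikulin--Saito invariant $S=H_2(X;\bZ)^\tau$ in Section 6 and the eigenspace decomposition in Remark 5.20 both refer to the full fixed sublattice, not to an abstract copy of $\mathrm{NS}(R)(2)$).

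The gap is precisely the point you defer in your last sentence. The transfer relation $p^*p_*=1+\jmath_1^*$ gives only $2\,\mathrm{NS}(X)^{\jmath_1}\subseteq p^*\mathrm{NS}(R_1)\subseteq\mathrm{NS}(X)^{\jmath_1}$, so the pullback could a priori sit with nontrivial $2$-power index inside the fixed sublattice; rank bookkeeping cannot detect this, since a finite-index overlattice has the same rank. And in this geometry there are natural candidates for extra invariant classes, namely the $\bar K_{W_k}$ above, so the worry is not idle --- the paper's proof is exactly the verification that adjoining them still yields a lattice isomorphic to $\langle 2\rangle\oplus\langle-2\rangle\oplus E_8(-2)$. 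To finish along your lines you would either run that saturation check anyway, or compute the discriminant of the genuine fixed sublattice and compare it with $2^{10}$ (equality of discriminants plus the inclusion forces index one), or argue that the relevant overlattice classes already live in $\mathrm{NS}(R_1)$ (which has its own $6I_2$ fibers and torsion $(\mathbb{Z}/2\mathbb{Z})^2$) and pull back compatibly. In each case the overlattice computation the paper performs is relocated rather than avoided.
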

\begin{proof}
We can compute the invariant lattice $L^{\jmath}$ as the sublattice of the polarizing lattice that is left invariant by the involution $\jmath =\jmath_1$. We denote the classes of the non-neutral components of the reducible fibers for Equation~(\ref{eqn:EFS}) by $K_k$ for $4 \le k \le 15$, so that $\jmath$ interchanges $K_{2j}$ and $K_{2j+1}$ for $j=2, \dots, 7$. The lattice $L^{\jmath}$ is the overlattice spanned by (i) the lattice generated by the classes of $2F \cong [\mathsf{F}_u] +  [\mathsf{F}_{\jmath(u)}] ,  K_0$, the classes  $K_{2j}+ K_{2j+1}$ for $j=2, \dots, 7$, the classes $[\mathsf{S}_1], [\mathsf{S}_2]$, and (ii) the classes of the 2-torsion sections $\mathsf{T}_1,  \mathsf{T}_2,  \mathsf{T}_3$. This follows from Lemma~\ref{lem:antisymplectic} and Remark~\ref{rem:eigenspaces}.  This shows that $r=10$. Moreover, the intersection pairing for  $\langle 2 F, K_0 \rangle$ equals $[0, 2;2, - \chi(\mathcal{O}_X)=-2] \cong \langle 2 \rangle \oplus \langle -2 \rangle$.  This shows $\delta=1$.  The classes  $K_{2j}+ K_{2j+1}$ for $j=2, \dots, 7$ and $[\mathsf{S}_1], [\mathsf{S}_2]$ generate a lattice $S'$. We write $L^{\jmath} = \langle 2 \rangle \oplus \langle -2 \rangle \oplus S(-1)$ where $S$ is an overlattice of $S'$. We use Equation~(\ref{eqn:secT}) to find the divisor classes for  $2F + K_0 - [\mathsf{T}_k]$. The classes $\bar{K}_{W_k}$ descend to vectors $\bar{K}'_{W_k}$ over the basis $\{ K_{2j}+ K_{2j+1} \}_{j=2}^7$ with coefficients in $\mathbb{Z}/2\mathbb{Z}$.  Computing the overlattice generated by $S'$ and $\bar{K}'_{W_1}, \bar{K}'_{W_2}, \bar{K}'_{W_3}$ we obtain $S$. Computing the lattice-theoretic genus of $S$ proves $S \cong E_8(2)$.
\end{proof}
\begin{remark}
We have obtained rational projection maps
$\phi_k\colon X \dasharrow X/\langle \jmath_k \rangle \cong R_k$ so
that $\phi_k^* \mathrm{NS}(R_k)$ are invariant sublattices of
$\mathrm{NS}(X)$ under the action of $\jmath_k$. In particular, this
applies to the pull-back of their Mordell-Weill lattices
$\phi_k^* \mathrm{MWL}(R_k)$.  The composition of $\jmath_k$ with the
hyperelliptic involution $(u,X,Y) \mapsto (u,X,-Y)$ yields three
(symplectic) Nikulin involutions with rational projection maps
$\psi_k\colon X \dasharrow X/\langle (-1) \circ \jmath_k \rangle \cong X'_k$.
We saw in Section~\ref{ssec:sum} that these quotients are Jacobian Kummer
surfaces; see Equation~(\ref{eqn:Xp}). Their Mordell-Weil lattices (with
respect to the induced elliptic fibration) have rank one; in fact they are
$\mathrm{MWL}(X'_k) = \langle 1 \rangle$. Their pull-backs
$\psi_k^* \mathrm{MWL}(X'_k)$ are the anti-invariant sublattices of
$\mathrm{NS}(X)$ under the action of $\jmath_k$. 
\end{remark}
\begin{remark}
The fiberwise translation by a 2-torsion section $\mathsf{T}_i$ composed with $\jmath_k$ on $X$ with $i, k = 1, 2, 3$, induces an Enriques involution on $X$ whose quotient is an elliptic fibration with a rational bisection on an  Enriques surface; see \cite[Sec.~3.6]{MR2818742}.
\end{remark}

\section{Real Structure}
\label{sec:real}
In this section we describe the physics interpretation of our
families of elliptically fibered K3 surfaces $X$. Central
to our interpretation will be the notion of a \emph{real structure}
supported on $X$.  Here we can make use of the following idea.
An elliptic fibration $\pi_X\colon X\to \bP^1$ over $\bP^1$ can be described algebraically
in terms of an elliptic curve defined over the rational function field
$\bC(u)$ (where $u$ represents a affine coordinate on $\bP^1$).  The
$j$-invariant of this elliptic curve is then a rational function of $u$, with
poles at the locations of the singular fibers of the fibration.
Away from the poles, this specializes (for a fixed value of $u$) to
the $j$-function of the elliptic curve fiber of the fibration $X\to \bP^1$.

Now suppose the elliptically fibered K3 is equipped with a real
structure, that is, an antiholomorphic involution compatible
via the fibration projection with an antiholomorphic involution on $\bP^1$.
Then we can regard the base $\bP^1$ with its real structure as a genus $0$
smooth projective curve defined over $\bR$, necessary isomorphic either to
$\bP^1_{\bR}$ (this is called the species $1$ case) or to a conic $Q$ over $\bR$ with
no rational points (this is called the species $0$ case, represented by
the involution $[u : v] \mapsto [-\bar v : \bar u]$ on $\bP^1_{\bC}$).
Furthermore, we obtain from the elliptic fiber a smooth genus $1$ curve
defined either over $\bR(u)$ in the first case or over
$\bR(Q)=\bR(u,v)/(u^2+v^2+1)$ in the second case.

As explained by Kontsevich \cite{MR1403918}, the category of
D-branes in type IIB string theory compactified on a smooth complex projective
variety is expected to be described by the (bounded) derived category
of coherent sheaves on that variety.  This suggests that for an
orientifold theory on such a string background, with the orientifold
involution given by an antiholomorphic involution, the appropriate
category of D-branes should be the derived category of coherent
sheaves on the corresponding \emph{real algebraic} variety.  And a
sign choice on the O-planes should correspond to a derived category of
\emph{twisted} coherent sheaves on this real variety, with the
twisting given by a Brauer group class that vanishes on base change to
the algebraic closure $\bC$ of $\bR$. (If the Brauer group class is
trivial over a certain closed rational point, it should be viewed as a
$+$ sign, and if it gives a nonsplit quaternion algebra, it should be
viewed as a $-$ sign.) This philosophy was enunciated
and tested in \cite{MR3614975}, where it was found to be consistent with
the orientifold dualities on elliptic curves studied in
\cite{MR3316647}. Here we apply the more general
\cite{ramachandran2022derived}, which gives derived equivalences
between genus $1$ curves (not necessarily having rational points) over
a general perfect field and a dual theory given by a Brauer twist
over the Jacobian. 
\subsection{Real normal form}
\label{ssec:real_normal_form}
To define a real structure for the family of K3 surfaces
defined by Equation~\eqref{eqn:K3_X}, we must consider
antiholomorphic involutions that --  under the bundle projection $\pi_X$ ---
are equivariant with respect to
an antiholomorphic involution on the base curve $\bP^1 = \bP(u, v)$.
Note that there are two isomorphism classes of
antiholomorphic involutions $\iota$ on $\bP^1$, with representatives
given in homogeneous coordinates by
$[u:v]\mapsto [\overline u:\overline v]$ and
$[u:v]\mapsto [-\overline v:\overline u]$, respectively.
(Nikulin and Saito call these the \emph{usual} and \emph{spin} cases,
respectively \cite[p.\ 622]{MR2137825}.)
Computing the $j$-invariant of the smooth elliptic fibers for the K3
surface $\pi_X\colon X \to \mathbb{P}(u, v)$ in Equation~\eqref{eqn:K3_X}
defines a family of $j$-functions over $\bP^1$,
and equivariance implies the additional condition
\begin{equation}
\label{eqn:anti_j}
\overline{j\big(\iota(u, v)\big)}=j\big(u, v\big).
\end{equation}
\par Since we aim to study the action of anti-holomorphic involutions on
the fibers of $\pi_X$, it is convenient to slightly modify the normal form in
Equation~\eqref{eqn:K3_X}. To this end, we rescale the coordinates and
parameters according to
\beq
(u, x, y) \mapsto (\omega_{8,1} u, \omega_{8,2} x, \omega_{8,1}^2 \omega_{8,2}^2 \omega_4 y) \,, \quad (\kappa, \mu, \lambda) \mapsto  (\omega_{8,1}^2 \kappa, \omega^2_{8,2} \mu, \omega_{8,1}^2 \omega_{8,2}^2 \lambda) \,,
\eeq
where we assumed $\omega_{8,l}^8=\omega_{4,l}^4=1$ for $l=1,2$. We then obtain
the following modified normal form from Equation~\eqref{eqn:B12_1b}
\beq
\label{eqn:K3normab}
\begin{split}
X\colon \quad y^2& =\omega_2\Big( a(u, v) x^4+b(u, v) x^2z^2+\omega_{4,2}^2 a(u, v)z^4\Big) \\
&= \omega_2\Big( a'(x, z) u^4+b'(x, z) u^2v^2+\omega_{4,1}^2 a'(x, z)v^4\Big) 
\end{split}
\eeq
where we have set
\beqn
\begin{split}
	a(u, v) & = u^4 + \kappa u^2 v^2 + \omega_{4,1}^2 v^4,\quad	b(u, v) = \mu u^4 + \lambda u^2 v^2 + \omega_{4,1}^2 \mu v^4, \\
	a'(x, z) & = x^4 + \mu x^2 z^2 + \omega_{4,2}^2 z^4,\quad 	b'(x, z) = \kappa x^4 + \lambda x^2 z^2 + \omega_{4,2}^2 \kappa z^4, \\
\end{split}
\eeqn
and $\omega_{4,l}=\omega_{8,l}^2 \in \{ \pm 1, \pm i \}$ is a fourth root of unity and $\omega_2=\omega_{4,l}^2\in\{\pm 1\}$ with $l= 1, 2$.
\par Without loss of generality, we can assume that every K3 surface
in the family~\eqref{eqn:K3normab}  satisfying condition~\eqref{eqn:anti_j} has real coefficients
$\kappa, \lambda, \mu \in \bR$. While Equation~\eqref{eqn:anti_j} can also be satisfied
by allowing some coefficients to be purely
imaginary, such choice always turns out to be equivalent to a
surface with all real coefficients after a suitable coordinate transformation. 
Thus, from now on we will only consider K3 surfaces in Equation~\eqref{eqn:K3normab} with real
coefficients $\kappa, \lambda, \mu \in \bR$.
\par Since we have $\omega_2,\omega_{4,l}^2\in\{\pm 1\}$, the polynomials
$a=a(u, v)$ and $b=b(u, v)$ are real along the real line in
$\bP^1$. Here, the real line in $\bP^1$ consists of all
points $[u:v]\in\bP^1$ such that $[\bar{u}:\bar{v}]=[u:v]$. In this
way, the generic fiber of $\pi_X\colon X \to \mathbb{P}(u, v)$ over
the real line is a real genus-1 curve. 
This is also the case along the imaginary line. The
distinction is not significant, so we will
restrict ourselves to considering the real line. 
\par In summary, for a real structure on $X$ in Equation~\eqref{eqn:K3normab} with 
$\kappa, \lambda, \mu \in \bR$,  we are considering a suitable lift of antiholomorphic involution $\iota$,
that defines a (species-1) real structure on the base $\bP^1=\mathbb{P}(u, v)$ and is given by
\begin{align}
\label{eqn:ahi}
	\iota \colon & [u:v]\mapsto[\bar{u}:\bar{v}].
\end{align}
Equation \eqref{eqn:anti_j}  is  satisfied, since the involution $\iota$ sends $(a, b)$ to $(\bar{a},\bar{b})$ since they are polynomials with real coefficients.
\subsubsection{Involutions compatible with the real structure} 
Lemma \ref{lem:holinv1} gives the commuting holomorphic involutions
for the K3 surfaces defined by equation \eqref{eqn:K3_X}. After making
the aforementioned changes to obtain the the form of Equation \eqref{eqn:K3normab} the commuting holomorphic
involutions transform as follows:
\begin{lemma}
\label{lem:holinv}
Let $X$ be a K3 surface defined by Equation \eqref{eqn:K3normab}, then it admits the following commuting holomorphic involutions:
\beqn
\begin{split}
 \xi^{(1)}_X\colon& [u:v] \mapsto  [-u:v], \quad  \xi^{(2)}_X\colon [x:z] \mapsto  [-x:z], \\
 \zeta^{(1)}_X\colon& [u:v] \mapsto  [\omega_{4,1}v:u], \quad  \zeta^{(2)}_X\colon [x:z] \mapsto  [\omega_{4,2}z:x],
 \end{split}
\eeqn
and $k_X\colon y \mapsto -y$.
\end{lemma}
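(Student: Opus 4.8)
The plan is to deduce Lemma~\ref{lem:holinv} from Lemma~\ref{lem:holinv1} by transporting its involutions through the explicit rescaling that produces Equation~\eqref{eqn:K3normab} from Equation~\eqref{eqn:K3_X}, and then to confirm the resulting maps by a direct substitution into \eqref{eqn:K3normab}. First I would record that the rescaling $(u,x,y)\mapsto(\omega_{8,1}u,\omega_{8,2}x,\omega_{8,1}^2\omega_{8,2}^2\omega_4 y)$ together with $(\kappa,\mu,\lambda)\mapsto(\omega_{8,1}^2\kappa,\omega_{8,2}^2\mu,\omega_{8,1}^2\omega_{8,2}^2\lambda)$ is an isomorphism of the two normal forms (this is exactly the computation already used to reach \eqref{eqn:K3normab}). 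Conjugation by an isomorphism carries holomorphic involutions to holomorphic involutions, so each of $\xi^{(i)}_X$, $\zeta^{(i)}_X$, $k_X$ of Lemma~\ref{lem:holinv1} produces a holomorphic involution of \eqref{eqn:K3normab}; it remains only to compute the conjugates.

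The reflections and the hyperelliptic involution are immediate: since $a(u,v)$ and $b(u,v)$ are even in $u$ (and the right-hand side of \eqref{eqn:K3normab} is even in $x$), the maps $[u:v]\mapsto[-u:v]$ and $[x:z]\mapsto[-x:z]$ preserve the equation verbatim and commute with the scaling, so $\xi^{(1)}_X,\xi^{(2)}_X$ are unchanged, and $k_X\colon y\mapsto-y$ is obviously preserved. The substantive step is the swap involutions. Here I would verify directly that $\zeta^{(1)}_X\colon[u:v]\mapsto[\omega_{4,1}v:u]$ is an involution (applying it twice returns $[\omega_{4,1}u:\omega_{4,1}v]=[u:v]$) and that, using $\omega_{4,1}^4=1$, it sends $a\mapsto\omega_{4,1}^2 a$ and $b\mapsto\omega_{4,1}^2 b$; consequently the whole right-hand side is rescaled by $\omega_{4,1}^2$, which is absorbed by the compensating lift $y\mapsto\omega_{4,1}y$. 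The identical computation in the $(x,z)$-variables gives $\zeta^{(2)}_X\colon[x:z]\mapsto[\omega_{4,2}z:x]$ with right-hand side rescaled by $\omega_{4,2}^2$. Finally I would check commutativity: within each $\mathbb P^1$-factor $\xi^{(i)}_X$ and $\zeta^{(i)}_X$ generate a Klein four-group (one checks $\xi^{(i)}_X\zeta^{(i)}_X=\zeta^{(i)}_X\xi^{(i)}_X$ directly), the two factors act on disjoint coordinates, and all induced actions on $y$ are scalar multiplications and hence mutually commute.

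The main thing to get right is the bookkeeping of the fourth roots of unity, together with one genuine subtlety it conceals. The naive conjugate of the palindromic swap $[u:v]\mapsto[v:u]$ of Lemma~\ref{lem:holinv1} under the $u$-rescaling is $[u:v]\mapsto[v:\omega_{4,1}u]$, which agrees with the stated representative $[\omega_{4,1}v:u]$ only up to composition with $\xi^{(1)}_X$ (they coincide exactly when $\omega_{4,1}=\pm1$ and differ by $\xi^{(1)}_X$ when $\omega_{4,1}=\pm i$); since both lie in the same Klein four-group and both preserve \eqref{eqn:K3normab}, either may be taken as $\zeta^{(1)}_X$, and I would simply record the representative used in the statement. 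The related point to flag is that the lift $y\mapsto\omega_{4,1}y$ squares to $y\mapsto\omega_{4,1}^2 y$, so when $\omega_{4,1}=\pm i$ this lift has order $4$ on $X$ with square $k_X$; thus $\zeta^{(1)}_X$ is an honest involution of the base $\mathbb F_0=\mathbb P(u,v)\times\mathbb P(x,z)$ inducing an automorphism of $X$, which is the sense in which the statement is to be read, with $k_X$ listed separately as the deck transformation. Apart from this interpretive remark, the proof is the routine substitution sketched above.
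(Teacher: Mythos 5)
Your proposal is correct and takes essentially the same route as the paper, which presents Lemma~\ref{lem:holinv} as the immediate transport of the involutions of Lemma~\ref{lem:holinv1} through the rescaling leading to Equation~\eqref{eqn:K3normab}, verified by direct substitution. Your added remarks — that the naive conjugate $[u:v]\mapsto[v:\omega_{4,1}u]$ differs from the stated representative by $\xi^{(1)}_X$ when $\omega_{4,1}=\pm i$, and that the lift to $y$ then has order $4$ with square $k_X$ — are accurate clarifications of points the paper leaves implicit.
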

\par Next let us consider the existence of compatible antiholomorphic involutions, starting with involutions acting on the base curve of the elliptic fibration. As explained above, there are two isomorphism classes of such antiholomorphic involutions on the base $\bP^1$, representatives of which are given by
\begin{align}
	\iota \colon & [u:v]\mapsto[\bar{u}:\bar{v}],\\
	\eta \colon & [u:v]\mapsto[-\bar{v}:\bar{u}].
\end{align}
We ask whether $\eta$ can be lifted to give a commuting antiholomorphic involution on $X$. However, to guarantee the invariance of Equation \eqref{eqn:K3normab}, the involution $\eta$ must be modified to have the form
\begin{equation}
	\eta':[u:v]\mapsto[-\sqrt{\omega_{4,1}^2}\bar{v}:\bar{u}],
\end{equation}
so that a lift of $\eta' $, leaving the defining equation for $X$ invariant, is obtained by combining it with it with the involution $y\to \omega_{4,2} \bar{y}$ and an appropriate antiholomorphic involution on $[x:z]$. 
\par For $\omega_{4,1}^2=1$ we have $\eta=\eta'$ and
\begin{align*}
	a(\eta(u,v)) &= a(-\bar{v},\bar{u}) = \bar{v}^4+\kappa\bar{v}^2\bar{u}^2+\bar{u}^4 &= \overline{a(u,v)},\\
	b(\eta(u,v)) &= b(-\bar{v},\bar{u}) = \mu\bar{v}^4+\lambda\bar{v}^2\bar{u}^2+\mu\bar{u}^4 &= \overline{b(u,v)},
\end{align*}
since all the coefficients are real, and $\eta$ satisfies Equation~\eqref{eqn:anti_j}. Furthermore, in this case the involutions $\xi^{(1)}_X, \zeta^{(1)}_X, \iota$, and $\eta$ all commute, and 
we have
\begin{equation}
	(\iota\circ\eta)([u:v])=(\xi^{(1)}_\mathcal{X}\circ \zeta^{(1)}_\mathcal{X})([u:v])=[-v:u].
\end{equation}
We have the following:
\begin{lemma}
For general parameters  $\kappa, \lambda, \mu \in \bR$ and $\omega_{4,1}^2=1$, there is a
fixed point free antiholomorphic involution, compatible with the real structure on $X$.
\end{lemma}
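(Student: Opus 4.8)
The plan is to produce the desired involution as the lift to $X$ of the \emph{spin} (species-$0$) involution $\eta\colon[u:v]\mapsto[-\bar v:\bar u]$ on the base, rather than of the usual involution $\iota$. The guiding observation is that $\eta$ has no fixed points on $\bP^1$: if $[-\bar v:\bar u]=[u:v]$ then $-\bar v=\lambda u$ and $\bar u=\lambda v$ for some $\lambda\in\bC^\ast$, whence $-\bar v=\lambda u=\lambda\overline{\lambda v}=|\lambda|^2\bar v$, forcing $|\lambda|^2=-1$, a contradiction. Consequently \emph{any} lift of $\eta$ to $X$ that is equivariant with the fibration $\pi_X$ is automatically fixed-point free, since a fixed point $p$ would project under $\pi_X$ to a fixed point of $\eta$. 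So the whole problem reduces to exhibiting one antiholomorphic \emph{involution} of $X$ covering $\eta$, after which no analysis of the behavior over the singular fibers is needed.

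For that I would not build a lift by hand but instead set
\[
\widetilde\eta \;:=\; \widetilde\iota\circ\xi^{(1)}_X\circ\zeta^{(1)}_X,
\]
where $\widetilde\iota$ is the species-$1$ real structure on $X$ lifting $\iota$ (coordinatewise complex conjugation, which preserves Equation~\eqref{eqn:K3normab} because $\kappa,\lambda,\mu\in\bR$ and the relevant roots of unity are real in this case), and $\xi^{(1)}_X,\zeta^{(1)}_X$ are the holomorphic involutions of Lemma~\ref{lem:holinv}. On the base this composite acts as $\iota\circ\xi^{(1)}\circ\zeta^{(1)}$, which equals $\eta$ precisely by the identity $\iota\circ\eta=\xi^{(1)}_X\circ\zeta^{(1)}_X$ recorded above (apply $\iota$ on the left and use $\iota^2=\mathrm{id}$); in particular $\pi_X\circ\widetilde\eta=\eta\circ\pi_X$. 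Since $\widetilde\iota$ is antiholomorphic and $\xi^{(1)}_X,\zeta^{(1)}_X$ are holomorphic, $\widetilde\eta$ is antiholomorphic, and it satisfies the equivariance condition~\eqref{eqn:anti_j} for $\eta$ because each factor preserves $X$.

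The key verification is that $\widetilde\eta$ is an involution compatible with the real structure. Here I would check that the three factors $\widetilde\iota,\xi^{(1)}_X,\zeta^{(1)}_X$ pairwise commute as maps of $X$ (not merely on the base): when $\omega_{4,1}^2=1$ the maps $\xi^{(1)}_X$ and $\zeta^{(1)}_X$ act only on $[u:v]$ and fix $[x:z]$ and $y$, and since $\omega_{4,1}=\pm1$ is real they commute with the coordinatewise conjugation $\widetilde\iota$. Granting pairwise commutativity and that each factor squares to the identity, $\widetilde\eta^{\,2}=\widetilde\iota^{\,2}\,(\xi^{(1)}_X)^2(\zeta^{(1)}_X)^2=\mathrm{id}$, while the same commutativity gives $\widetilde\iota\,\widetilde\eta=\widetilde\eta\,\widetilde\iota$, i.e.\ $\widetilde\eta$ commutes with the real structure. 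The one place where the hypothesis $\omega_{4,1}^2=1$ is genuinely used — and the only real subtlety — is in confirming that $\widetilde\eta$ has order exactly $2$ rather than $4$: concretely $\widetilde\eta$ sends $(u,v,x,z,y)\mapsto(-\omega_{4,1}\bar v,\bar u,\bar x,\bar z,\bar y)$, so $\widetilde\eta^{\,2}$ rescales $[u:v]$ by $-\omega_{4,1}$, and the induced scaling $\lambda^2=\omega_{4,1}^2$ on the weight-$(2,2)$ coordinate $y$ is trivial exactly because $\omega_{4,1}^2=1$. Tracking this projective normalization through the fiber coordinates is the main (though modest) obstacle; once it is settled, fixed-point freeness follows at once from the base computation in the first paragraph.
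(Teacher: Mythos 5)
Your proof is correct and follows essentially the same route as the paper: the paper's argument is the discussion preceding the lemma, where $\eta$ is shown to preserve Equation~\eqref{eqn:K3normab} when $\omega_{4,1}^2=1$, to lift to an antiholomorphic involution of $X$, and to satisfy $\iota\circ\eta=\xi^{(1)}_X\circ\zeta^{(1)}_X$ with all four involutions commuting, so that fixed-point freeness is inherited from the base. Your only (harmless) variation is to realize the lift explicitly as the composite $\widetilde\iota\circ\xi^{(1)}_X\circ\zeta^{(1)}_X$ rather than building the action on $[x:z]$ and $y$ by hand, which is the same circle of ideas packaged slightly more cleanly.
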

\par When $\omega_{4,1}^2=-1$ the situation is quite different. The reason is following: if we consider $[u:v]$ the fiber coordinates for an elliptic fibration $\pi'_X\colon X \to \bP^1 = \mathbb{P}(x ,z)$, the smooth fibers are now all species $1$ real elliptic curves along
real points in the base.
By construction,  $\eta'([u:v])=[-i\bar{v}:\bar{u}]$ still is a symmetry of 
Equation~\eqref{eqn:K3normab} that  yields Equation~\eqref{eqn:anti_j} since $a(\eta'(u,v))=-\overline{a(u,v)}$ and
$b(\eta'(u,v))=-\overline{b(u,v)}$. However, $\eta'$ is now no longer an
involution since
\begin{align*}
	\eta'(\eta'([u:v])) &=\eta'([-i\bar{v},\bar{u}])
= [-iu:iv]\neq [u:v]\in\bP^1.
\end{align*}
Moreover, unlike in the case of $\omega_{4,1}^2=1$, the antiholomorphic involutions
do not commute with each other, nor do they commute with
the holomorphic involutions. In turn, $\eta \colon [u:v]\mapsto \left[-\bar{v}:\bar{u}\right]$ is
an involution of Equation~\eqref{eqn:K3normab} with
$\omega_{4,1}^2=-1$ if and only if $\kappa=\lambda=0$. For $\kappa=\lambda=0$, Equation~\eqref{eqn:K3normab} reduces to 
\begin{equation}
	y^2=\omega_2\big(x^4 + \mu x^2 z^2 + \omega_{4,2}^2
        z^4\big)\big(u^4-v^4\big).
\end{equation}
Considering $[u:v]$ the coordinates for the elliptic fiber as before,
we see that each smooth fiber has $j$-invariant $j(x,z)=1$ and complex structure $i$.
This is consistent with the fact that for species $1$ real
elliptic curves there are no free antiholomorphic involutions  except in the special case where the real elliptic
curve has complex structure $i$. 
We have the following:
\begin{lemma}
\label{lem:species0}
For parameters  $\kappa, \lambda, \mu \in \bR$ and $\omega_{4,1}^2=-1$, there is no
fixed point free antiholomorphic involution, compatible with the real structure on $X$, 
unless $\kappa=\lambda=0$.  
\end{lemma}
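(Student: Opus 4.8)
The plan is to reduce the existence of such an involution to a statement about the smooth fibers of an elliptic fibration, and then to a vanishing condition on the coefficient polynomials. Suppose $\sigma$ is a fixed-point free antiholomorphic involution on $X$ compatible with the real structure. Being compatible with the fibration, $\sigma$ descends to an antiholomorphic involution $\bar\sigma$ of the base $\bP^1$ commuting with $\iota$; up to the real automorphisms $\xi^{(1)}_X,\zeta^{(1)}_X$ of Lemma~\ref{lem:holinv}, $\bar\sigma$ is conjugate either to the usual involution $\iota$ or to the spin involution $\eta$. First I would record that if $\bar\sigma$ is of the usual (species-$1$) type it fixes a real circle, and over any of its fixed points $\sigma$ preserves the fiber and restricts there to an antiholomorphic involution which, being a restriction of a free map, must itself be free. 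Thus in every case we are forced to control free antiholomorphic involutions on the smooth fibers.

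The cleanest way to carry this out is through the second fibration $\pi'_X\colon X\to\mathbb{P}(x,z)$, whose fibers are cut out in the coordinates $[u:v]$. Because $\omega_{4,1}^2=-1$ and $\kappa,\lambda,\mu\in\bR$, the discussion preceding the lemma shows that over real base points these fibers are species-$1$ real elliptic curves. I would then invoke the classification of antiholomorphic involutions on a species-$1$ real elliptic curve: such a curve admits a fixed-point free antiholomorphic involution only when its complex structure is $i$, i.e.\ when its $j$-invariant equals $1$. Applying this to the fiber over every real point of $\mathbb{P}(x,z)$ --- a Zariski-dense subset --- forces the fiberwise $j$-invariant of $\pi'_X$ to be identically $1$, so that $\pi'_X$ is isotrivial with all fibers of complex structure $i$.

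The final step is a direct invariant-theoretic computation. Writing the affine fiber of $\pi'_X$ as $y^2=A\,s^4+B\,s^2+C$ with $s=u/v$, one has $A=\omega_2 a'(x,z)$, $B=\omega_2 b'(x,z)$ and, since $\omega_{4,1}^2=-1$, $C=-A$. The complex structure is $i$ exactly when the weight-six invariant of the binary quartic vanishes, which here reduces to $B\bigl(36A^2+B^2\bigr)=0$. Over real base points $A,B$ are real, so $36A^2+B^2$ vanishes only where $a'=b'=0$; hence $B=0$, i.e.\ $b'(x,z)=0$, at every real point, and by Zariski density $b'(x,z)\equiv 0$. Since $b'(x,z)=\kappa x^4+\lambda x^2z^2+\omega_{4,2}^2\kappa z^4$, this holds precisely when $\kappa=\lambda=0$, as claimed. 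The remaining case $\bar\sigma\sim\eta$ I would settle in parallel and purely algebraically: lifting $\eta$ so as to preserve \eqref{eqn:K3normab} forces the modification $\eta'\colon[u:v]\mapsto[-i\bar v:\bar u]$, which squares to $[u:v]\mapsto[-iu:iv]\neq\mathrm{id}$ and hence is not an involution, while the unmodified $\eta$ preserves the defining equation only when $\kappa=\lambda=0$.

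I expect the main obstacle to be the middle paragraph: justifying rigorously that $\sigma$ restricts to a genuinely free antiholomorphic involution on the species-$1$ fibers (rather than merely an antiholomorphic map), and pinning down the precise classification statement that free antiholomorphic involutions on species-$1$ real elliptic curves occur only at complex structure $i$. A secondary subtlety is bookkeeping: one must work with the fibration $\pi'_X$ (and hence with the coefficient $b'$) rather than with $\pi_X$, since it is the vanishing of $b'$ --- not of $b$ --- that yields the asserted condition $\kappa=\lambda=0$.
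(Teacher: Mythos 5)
Your final paragraph --- $\eta'$ preserves \eqref{eqn:K3normab} but squares to $[u:v]\mapsto[-u:v]\neq\mathrm{id}$, while the unmodified $\eta$ preserves the equation only when $\kappa=\lambda=0$ --- is in fact the entirety of the paper's own argument. The lemma is intended (as its citation in the Conclusion makes explicit) to rule out free antiholomorphic involutions that lift the spin involution $\eta$ on the base $\mathbb{P}(u,v)$ of the fibration $\pi_X$; the species-$1$ fiber observation appears in the paper only as a consistency check on the $\kappa=\lambda=0$ case. So the part of your proposal that matches the intended scope is correct and identical in method to the paper.

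The bulk of your proposal, however, aims at a stronger statement --- ruling out \emph{every} fixed-point-free antiholomorphic involution commuting with the real structure --- and there the argument has a genuine gap. Your reduction to the fibers of $\pi'_X\colon X\to\mathbb{P}(x,z)$ silently assumes that $\sigma$ preserves fibers of $\pi'_X$ over a dense set of real points of $\mathbb{P}(x,z)$, i.e.\ that the induced antiholomorphic map on $\mathbb{P}(x,z)$ fixes the real circle. Nothing forces this: $\sigma$ may instead act on $[x:z]$ as the antipodal map, in which case no fiber of $\pi'_X$ is preserved and your fiberwise $j$-invariant argument says nothing. Worse, in that configuration the stronger statement is actually false: for $\omega_{4,2}^2=1$ (which the lemma does not exclude) the map $(u,v,x,z,y)\mapsto(\bar u,\bar v,-\bar z,\bar x,\bar y)$ preserves \eqref{eqn:K3normab} because $a x^4+bx^2z^2+az^4$ is palindromic with real $a(u,v),b(u,v)$, squares to $[x:z]\mapsto[-x:-z]=[x:z]$, commutes with the standard real structure, and is fixed-point free since $[x:z]=[-\bar z:\bar x]$ would force $|x|^2+|z|^2=0$. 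It exists for all $\kappa,\lambda,\mu\in\bR$, so no argument can derive $\kappa=\lambda=0$ from that case; the first two paragraphs of your proposal cannot be completed as written. Within their proper scope your computations are sound: the identity $J=-2B(36A^2+B^2)$ for the quartic $Au^4+Bu^2v^2-Av^4$ correctly shows that forcing complex structure $i$ on all real fibers of $\pi'_X$ is equivalent to $b'\equiv 0$, i.e.\ $\kappa=\lambda=0$, which sharpens the paper's closing remark. But to prove the lemma as the paper means it, you should state up front that $\sigma$ is assumed to cover a free (species-$0$) involution of $\mathbb{P}(u,v)$, normalize that involution to $\eta$ using the real holomorphic automorphisms, and then your last paragraph finishes the proof.
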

\subsubsection{Relation to the Nikulin-Saito Classification}
Nikulin and Saito \cite{MR2137825,MR2329547} embarked on a large program to
classify K3 surfaces $X$ with a commuting pair consisting of a nonsymplectic
involution $\tau$ and an antiholomorphic involution $\varphi$,
so it is appropriate to discuss where our family fit into this program.  The fundamental
invariant for Nikulin and Saito is the pair $(S,\theta)$, where
$S = H_2(X;\bZ)^\tau$ and $\theta$ is the action of $\varphi$ on $S$.
Here we take $\tau = \jmath_1$ as in Lemma \ref{lem:antisymplectic}
and take for the antiholomorphic involution $\varphi$ the lift of $\imath$ in Equation~(\ref{eqn:ahi}).
We are mostly interested in the ``standard case,'' where $\varphi$
simply sends $u,v,w,x,y,z$ to their complex conjugates.
The calculation of $S$ for our family $X$ is in Proposition \ref{prop:invlattice}.
Note that it has rank $10$ and signature $(1,9)$.  (By
\cite[(RSK3), p.\ 597]{MR2137825}, $S$ always has signature $(1, \cdots)$.)
Our case is much more symmetric than the cases studied in
\cite{MR2137825}, where $S$ only has rank $2$.  However, our situation
is somewhat similar to \cite[\S7]{MR2137825},  where $S$ is a hyperbolic plane
(that's like our $\langle2\rangle \oplus \langle-2\rangle$ summand in $S$).

Now if a homology class $x\in S$ is represented by a $\varphi$-invariant
complex curve, then since an antiholomorphic involution on a Riemann surface
reverses orientation, $\theta(x)=-x$.  First consider the case
where we take $\omega_{4,2}^2=\omega_{4,1}^2=1$.  In this case, the
lattice $S$ is, up to finite index, generated by the nonsingular invariant
fiber class $F$ over $u=0$, $K_0$, $[\mathsf{S}_1], [\mathsf{S}_2]$ and
$K_{2j}+K_{2j+1},2\le j\le 7$.  Here $F$ and $K_0$ are defined over $\bR$,
and up to a factor of $2$ generate the
$\langle2\rangle \oplus \langle-2\rangle$ summand in $S$
so $\theta$ acts on this summand by $-1$.  Since the quadratic form
$E_8$ is irreducible, $\theta$ must act on the other summand by either
$1$ or $-1$, and again we see that the sign is $-1$ by looking
at the $K_j$'s.
\subsection{The associated Jacobian fibration}
We consider the (fiberwise) Jacobian elliptic surface $J_X$ which is the compactification of the Picard variety of the generic fiber of $\pi_X\colon X \to \bP^1$ or, equivalently, 
the relative Picard variety $\mathrm{Pic}^0_{X/\mathbb{P}^1}(\mathbb{C})$. This Jacobian elliptic K3 surface $\pi_{J_X}\colon J_X \to \bP^1$ can be brought into the following Weierstrass form:
\beq
Y^2 Z = 4  X^3 - \omega_2^2 \left( \omega_{4,2}^2 a^2 + \frac{1}{12} b^2 \right) X Z^2 - \omega_2^3 b \left( \frac{1}{216} b^2 - \frac{1}{6} \omega_{4,2}^2 a^2 \right)  Z^3.
\eeq
A shift $X \mapsto X/4 + \omega_2 b(u, v)/6$ (that is, a shift by a polynomial in $\mathbb{R}(u, v)$) yields the following equation which makes the existence of a 2-torsion section over $\mathbb{R}(u, v)$ apparent:
 \beq
 \label{eqn:WEQn}
Y^2 Z = 4 X   \Big( X +  \omega_2  \big( b + 2\omega_{4,2} a \big)  Z \Big)  \Big( X +  \omega_2  \big( b - 2\omega_{4,2} a \big)  Z \Big) .
\eeq
Moreover, the full level-2 structure is manifest  over $k(u, v)$ since the $2$-torsion sections over $k(u, v)$ are given by $(X,Z) =\big(0, 1\big)$, $\big(f(u, v), 1\big)$, $\big( g(u, v), 1 \big)$ and $Y=0$ with
\beq
 f(u, v) = - \omega_2  \big(  b(u, v) - 2\omega_{4,2} a(u, v) \big), \qquad g(u, v ) = - \omega_2  \big( b(u, v) + 2\omega_{4,2} a(u, v) \big) ,
\eeq
where we have set $k=\mathbb{R}[\omega_{4,2}]$ and assumed $\kappa, \lambda, \mu \in \bR$. 
\par From $\pi_{J_X}\colon J_X \to \bP^1$, the original elliptic surface $\pi_{X}\colon X \to \bP^1$ is recovered by a Brauer twist over the Jacobian.  The corresponding twisting class $\nu$ has order $2$, i.e., it is an element 
of $\mathrm{Br}_2(J_X)$ as Equation~(\ref{eqn:WEQn}) always admits bisections. Moreover, under a base change from $\bR(u, v)$ to $\bC(u, v)$  the two elliptic surfaces are isomorphic, due to the existence of sections.   In other words, the twisting class is an element $\nu \in \mathrm{Br}_2\big( \mathbb{R}(J_X) \big)$ over the \emph{real} function field $ \mathbb{R}(J_X)$ whose pull-back to $\mathrm{Br}_2\big( \mathbb{C}(J_X) \big)$, induced by the embedding $\mathbb{R} \hookrightarrow \mathbb{C}$, is trivial, i.e., $\nu_{\mathbb{C}} \equiv 0$.
\subsubsection{The associated Azumaya algebra}
As explained above, the Brauer twist has to have order $2$, since it is killed on base change from
$\bR(u)$ to $\bC(u)$.  Therefore, it is  given by a biquaternion algebra $(r(u), x - b(u)) \otimes (s(u), x - c(u))$, where $r, s\in \bR(u)^\times$
when the Jacobian fibration has the Weierstra{\ss} form $y^2=(x-a(u))(x-b(u))(x-c(u))$, with $a,b,c$ the $2$-torsion sections
of the elliptic fibration, as described in \cite[Theorem 3.6]{MR1869390}.
\par We now construct a representative $\mathcal{V}$ for the twisting class $\nu \in \mathrm{Br}_2(\mathbb{R}(J_X))$ explicitly. The branch locus $\mathcal{B}$ of $J_X$ in Equation~(\ref{eqn:WEQn})
is given by the equation
\beq
0 = 4 X   \Big( X +  \omega_2  \big( b + 2\omega_{4,2} a \big)  Z \Big)  \Big( X +  \omega_2  \big( b - 2\omega_{4,2} a \big)  Z \Big) = \det{ M(X, Z, u, v)},
\eeq
where  $M(X,Y, u, v)$  is the symmetric $(3, 3)$-matrix with coefficients in the field $\mathbb{R}(X, Z, u, v)$, given by
\beq
 M(X, Z, u, v) = \left( \begin{array}{ccc} 4 \omega_2 \omega_{4,2}^2 a(u, v) Z & 0 & 2 X + 2 \omega_2 b(u, v) Z \\ 0 & - X & 0 \\ 2 X + 2 \omega_2 b(u, v) Z & 0 & 4 \omega_2 a(u, v) Z \end{array} \right).
\eeq 
 Here, we used a classical theorem by Hermite stating that the matrix $M$ determining the branch locus $\det{M}=0$ for the Jacobian associated with the genus-1 curve $y^2 = a_0 x^4 + a_1 x^3 z + \dots + a_4 z^4$ is
\beqn
 M = \left( \begin{array}{ccc}  4 a_0 Z & a _1 Z & \frac{2}{3} a_2 + 2 X \\ a_1 Z &  \frac{2}{3} a_2 -  X & a_3 Z \\  \frac{2}{3} a_2 + 2 X  & a_3 Z & 4 a_4 Z \end{array}\right).
\eeqn
We observe that the equation $\det{M}=0$ for $\mathcal{B}$ sits in the Hirzebruch surface $\mathbb{F}_4$ since it is invariant under rescaling $(X, Z, u, v) \mapsto (\lambda X, \lambda Z, u, v)$ and  $(X, Z, u, v) \mapsto (\lambda^4 X, Z,  \lambda  u,  \lambda v)$ with $\lambda \in \mathbb{C}^\times$. Then,  $M$ defines a conic bundle $\mathcal{V} \to \mathbb{F}_4$ which ramifies over $\mathcal{B}$ so that the fiber over each point of $\mathcal{B}$ is the union of two lines. Using projective variables $U, V, W$ in the fiber, the total space of the conic bundle is
\beq
\mathcal{V}\colon \quad  0 = - \Big(U^2 - 4 V W \Big) X + 4 \omega_2 \Big(  a(u, v) \, V^2 + b(u, v)\, VW + \omega_{4,2}^2 a(u, v) W^2 \Big) Z.
\eeq
Due to the branching of order two along $\mathcal{B}$, the conic  bundle extends to $J_X$, see \cite{MR2166182}. In this way, we can obtain an (unramified) conic bundle $\mathcal{V}_{\mathrm{ext}} \to J_X$ that agrees with $\mathcal{V}$ on $\pi_X^{-1}(\mathbb{F}_{4} - \mathcal{B})$. The even Clifford bundle associated with $\mathcal{V}_{\mathrm{ext}} \to J_X$ is an Azumaya algebra, i.e., a bundle of quaternion algebras that represents the class $\nu \in \mathrm{Br}_2(\mathbb{R}(J_X))$.
\par This can be seen as follows: the generic fiber of the conic bundle $\mathcal{V}_{\mathrm{ext}} \to J_X$ is determined by the symmetric matrix $M(X,Y, u, v)$ with coefficients in the field $\mathbb{R}(X, Z, u, v)$. Over the extension field $k=\mathbb{R}[\omega_{4,2}]$, the matrix 
\beq
 R =  \frac{1}{2} \left( \begin{array}{ccc}   \frac{1}{\omega_{4,2}} & 0 & 1 \\ 0 & 4 & 0 \\ 1 & 0 & - \omega_{4,2} \end{array}\right)
\eeq 
diagonalizes $M$ and yields
\beq
\label{eqn:M_diagonalized}
 M' = R^tMR = \left( \begin{array}{ccc}  \dfrac{X +  \omega_2  \big( b + 2\omega_{4,2} a \big)Z}{\omega_{4,2}}  & 0 & 0 \\ 0 & -4 X & 0 \\ 0 & 0 & - \omega_{4,2} \left(X +  \omega_2  \big( b - 2\omega_{4,2} a \big)  Z\right)  \end{array}\right)
\eeq 
and has coefficients in $k(X, Z, u, v)$. Writing the matrix $M' = \mathrm{diag}(\alpha, \beta, \gamma)$, the associated even Clifford algebra is the quaternion algebra with symbol $(-\alpha \beta, - \beta \gamma)$, i.e., it is the rank-4 algebra over $k(J_X)$ with generators $\mathbf{i}, \mathbf{j}$ and relations 
\beqn
 \mathbf{i}^2 = - \alpha \beta, \quad  \mathbf{j}^2 = - \beta \gamma, \quad \mathbf{i} \mathbf{j} = - \mathbf{j} \mathbf{i}.
\eeqn
\par Thus, for $\nu \in \mathrm{Br}_2(\mathbb{R}(J_X))$, $M'$ is given by Equation~(\ref{eqn:M_diagonalized}) over $k= \mathbb{R}$ only if $\omega_{4,2}^2=1$:
\begin{proposition}
Let $\omega_{4,2}^2=1$ and assume that there is no real section in Equation~(\ref{eqn:K3normab}) yielding an isomorphism between a non-singular $X$ and $J_X$ over $\mathbb{R}$. Then
a representative $\mathcal{V}$ for the nontrivial twisting class $\nu \in \mathrm{Br}_2(\mathbb{R}(J_X))$ is the rank-4 algebra over $\mathbb{R}(J_X)$ with generators $\mathbf{i}, \mathbf{j}$ and relations 
\beqn
 \mathbf{i}^2 = - \alpha \beta, \quad  \mathbf{j}^2 = - \beta \gamma, \quad \mathbf{i} \mathbf{j} = - \mathbf{j} \mathbf{i},
\eeqn
where the matrix $M' = \mathrm{diag}(\alpha, \beta, \gamma)$ is given by Equation~(\ref{eqn:M_diagonalized}).
\end{proposition}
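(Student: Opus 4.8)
The plan is to reduce the statement to the single observation that the even Clifford algebra constructed in the paragraphs above descends to the real function field precisely when $\omega_{4,2}^2 = 1$, and then to read off its generic fiber. First I would note that the hypothesis $\omega_{4,2}^2 = 1$ forces $\omega_{4,2} \in \{\pm 1\} \subset \mathbb{R}$, so that the extension field $k = \mathbb{R}[\omega_{4,2}]$ coincides with $\mathbb{R}$ itself. Consequently the diagonalizing matrix $R$ has entries in $\mathbb{R}$, and the diagonalization $M' = R^t M R$ of Equation~\eqref{eqn:M_diagonalized} already holds over $\mathbb{R}(X, Z, u, v)$; in particular $\alpha, \beta, \gamma \in \mathbb{R}(X, Z, u, v)$. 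This is the only place where the hypothesis enters the descent, and it is exactly what distinguishes the present case from $\omega_{4,2}^2 = -1$, where $k = \mathbb{C}$ and the twist is trivial to begin with.

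Next I would carry out (or cite from the preceding discussion) the even Clifford algebra computation for the ternary diagonal form $\langle \alpha, \beta, \gamma \rangle$. Taking an orthogonal basis $e_1, e_2, e_3$ with $e_1^2 = \alpha$, $e_2^2 = \beta$, $e_3^2 = \gamma$ and setting $\mathbf{i} = e_1 e_2$, $\mathbf{j} = e_2 e_3$, a direct manipulation using $e_i e_j = - e_j e_i$ gives $\mathbf{i}^2 = -\alpha\beta$, $\mathbf{j}^2 = -\beta\gamma$, and $\mathbf{i}\mathbf{j} = -\mathbf{j}\mathbf{i}$, so that the even Clifford algebra is the quaternion algebra with symbol $(-\alpha\beta, -\beta\gamma)$. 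Because the entries of $M'$ now lie in $\mathbb{R}(X, Z, u, v)$, this algebra $\mathcal{V}$ is genuinely defined over $\mathbb{R}(J_X)$, rather than only over $k(J_X) = \mathbb{C}(J_X)$.

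I would then invoke the identification established above --- via Hermite's matrix $M$ cutting out the branch locus $\mathcal{B}$ and the extension of the conic bundle across $\mathcal{B}$ to $\mathcal{V}_{\mathrm{ext}} \to J_X$ using \cite{MR2166182} --- that the even Clifford bundle of $\mathcal{V}_{\mathrm{ext}}$ is an Azumaya algebra representing the twisting class $\nu \in \mathrm{Br}_2(\mathbb{R}(J_X))$. Combining this with the previous step, the generic fiber of that Azumaya algebra is exactly $\mathcal{V} = (-\alpha\beta, -\beta\gamma)$ over $\mathbb{R}(J_X)$, so $\mathcal{V}$ represents $\nu$. Finally, nontriviality follows from the hypothesis: a real section realizing an isomorphism $X \cong J_X$ over $\mathbb{R}$ would split the twist, so its absence guarantees $\nu \neq 0$ and hence that $\mathcal{V}$ is a nonsplit (division) quaternion algebra over $\mathbb{R}(J_X)$.

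The main obstacle is not the descent to $\mathbb{R}$ (which is immediate once $\omega_{4,2}$ is real) but making fully rigorous the identification of the Brauer class of the even Clifford algebra with the torsor class $\nu$ relating $X$ to its Jacobian $J_X$ --- that is, confirming that the conic bundle cut out by $M$ encodes the genus-$1$ torsor rather than a twist of it. Here I would lean on the structural results already cited (Hermite's theorem, the biquaternion description of the order-two Brauer twist in \cite[Theorem 3.6]{MR1869390}, and the conic-bundle extension in \cite{MR2166182}), and cross-check that the symbol $(-\alpha\beta, -\beta\gamma)$ is consistent with the biquaternion form $(r(u), x - b(u)) \otimes (s(u), x - c(u))$ dictated by the $2$-torsion sections of the Weierstrass model~\eqref{eqn:WEQn}.
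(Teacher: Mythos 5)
Your proposal is correct and follows essentially the same route as the paper: the paper's justification is exactly the preceding discussion (Hermite's matrix $M$, the diagonalization $M'=R^tMR$ over $k=\mathbb{R}[\omega_{4,2}]$, which is $\mathbb{R}$ itself when $\omega_{4,2}^2=1$, and the even Clifford algebra of the resulting diagonal ternary form giving the quaternion symbol $(-\alpha\beta,-\beta\gamma)$), with nontriviality of $\nu$ coming from the no-real-section hypothesis. Your explicit computation with $e_1,e_2,e_3$ and the cross-check against the biquaternion description from \cite[Theorem 3.6]{MR1869390} merely fill in details the paper leaves implicit.
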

\subsubsection{Criterion for existence of a real section}
\label{sec:RealSection}
We can now give some sufficient  criteria for the Brauer class $\nu$ to be trivial in $\mathrm{Br}_2\big( \mathbb{R}(J_X) \big)$. If we write Equation~(\ref{eqn:K3normab}) in the form
\beqn
\begin{split}
X\colon \quad y^2& =  \omega_2  a'(x, z)  \big(u^4 + \omega_{4,1}^2 v^4\big)+   \omega_2  b'(x, z) \, u^2v^2 \\
&= \omega_2 a'(x, z)  \big(u^2 \pm \omega_{4,1} v^2\big)^2+  \omega_2 \big( b'(x, z) \mp 2 \omega_{4,1} a'(x, z) \big) u^2v^2 ,
\end{split}
\eeqn
we can easily identify three sets of four (pairs of) sections for Equation~(\ref{eqn:K3normab}) over $\mathbb{C}(u, v)$. We introduce complex numbers 
$\{\sigma_{i, j} \} $ where for a fixed index $j \in \{ 1, 2, 3\}$  four roots $\sigma_{i, j}$ with $i=1, \dots,4$ are given as
solutions of  
\beq
 a'(\sigma_{i,1}, 1) = 0 , \quad b'(\sigma_{i,2}, 1) - 2 \omega_{4,1} a'(\sigma_{i,2},1)  =0, \quad b'(\sigma_{i,3}, 1) +2 \omega_{4,1} a'(\sigma_{i,3},1)  =0,
\eeq
and satisfy
\beq
  \sigma_{2,j} = -  \sigma_{1,j}, \quad  \sigma_{3,j} =   \frac{\omega_{4,2}}{\sigma_{1,j}}, \quad  \sigma_{4,j} = -  \frac{\omega_{4,2}}{\sigma_{1,j}}.
\eeq
Then, 12 sections for Equation~(\ref{eqn:K3normab}) are  given by $(x, y, z) = (\sigma_{i,j}, \pm y_{i,j}, 1)$ with
\beq
\label{eqn:sections}
\begin{split}
  y_{i, 1} &= \sqrt{\omega_2 b'(\sigma_{i,1}, 1)} \, uv, \\
  y_{i, 2} &= \sqrt{\omega_2 a'(\sigma_{i,2}, 1)} \, (u^2+ \omega_{4,1} v^2) = \sqrt{\omega_{4,1} b'(\sigma_{i,2}, 1)} \, \frac{u^2+ \omega_{4,1} v^2}{\sqrt{2}} , \\
  y_{i, 3} &= \sqrt{\omega_2 a'(\sigma_{i,3}, 1)} \, (u^2-  \omega_{4,1} v^2) = \sqrt{-\omega_{4,1} b'(\sigma_{i,3}, 1)} \, \frac{u^2- \omega_{4,1} v^2}{\sqrt{2}}  .
\end{split}
\eeq 
Thus we have the following:
\begin{lemma}
\label{lem:real_section}
If there is a root $\sigma_{i,1}\in \mathbb{R}$ with $\omega_2 b'(\sigma_{i,1}, 1) \in \mathbb{R}_{>0}$, or, $\sigma_{i,2}\in \mathbb{R}$ with $\omega_{4,1} b'(\sigma_{i,2}, 1)  \in \mathbb{R}_{>0}$, or, $\sigma_{i,3}\in \mathbb{R}$ with $-\omega_{4,1} b'(\sigma_{i,3}, 1)  \in \mathbb{R}_{>0}$ for $i=1, \dots,4$, then  Equation~(\ref{eqn:K3normab})  will admit a section over $\mathbb{R}(u, v)$.
\end{lemma}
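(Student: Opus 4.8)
The plan is to show that under each of the three displayed inequalities one of the twelve sections $(x,y,z)=(\sigma_{i,j},y_{i,j},1)$ already written down in Equation~\eqref{eqn:sections} has \emph{all} of its coordinates in $\bR(u,v)$, and hence defines a point of the generic fiber over the real function field. Since the sections are given explicitly, the whole argument is a reality check; the only work is to confirm first that the $(\sigma_{i,j},y_{i,j},1)$ really are sections over $\bC(u,v)$, and then to isolate exactly when the constant square-root prefactor and the base factor of $y_{i,j}$ are simultaneously real.

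First I would verify that they are sections. Fix $j$ and substitute $x=\sigma_{i,j}$, $z=1$ into the two equivalent forms of Equation~\eqref{eqn:K3normab}. The defining relation of $\sigma_{i,j}$ is precisely what collapses the fiber equation to a perfect square in the base variables: for $j=1$ the relation $a'(\sigma_{i,1},1)=0$ annihilates the $(u^4+\omega_{4,1}^2 v^4)$-term and leaves $y^2=\omega_2\, b'(\sigma_{i,1},1)\,u^2v^2$, while for $j=2$ (resp.\ $j=3$) the relation $b'(\sigma_{i,j},1)\mp 2\omega_{4,1}a'(\sigma_{i,j},1)=0$ cancels the cross term in the completed-square form $\omega_2 a'(u^2\pm\omega_{4,1}v^2)^2+\omega_2\big(b'\mp 2\omega_{4,1}a'\big)u^2v^2$ and leaves $y^2=\omega_2\, a'(\sigma_{i,j},1)\,(u^2\pm\omega_{4,1}v^2)^2$. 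Taking square roots reproduces the $y_{i,j}$ of Equation~\eqref{eqn:sections}, so $[\sigma_{i,j}:y_{i,j}:1]$ lies on $X$ for every $[u:v]$, i.e.\ is a section of $\pi_X$ over $\bC(u,v)$.

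Next I would run the reality analysis. In the affine chart $z=1$ a section is defined over $\bR(u,v)$ exactly when $\sigma_{i,j}\in\bR$ and $y_{i,j}\in\bR(u,v)$; the former is hypothesized, so everything reduces to the reality of $y_{i,j}$, which factors as a constant square root times a monomial in $(u,v)$. For $j=1$ the monomial is $uv\in\bR[u,v]$, so $y_{i,1}\in\bR(u,v)$ iff $\omega_2\, b'(\sigma_{i,1},1)\ge 0$, with strict positivity guaranteeing a nonzero (hence genuine) section; this is the first hypothesis. For $j=2,3$ I would rewrite the prefactor by means of the defining relation, $\omega_2\, a'(\sigma_{i,j},1)=\pm\tfrac12\,\omega_{4,1}b'(\sigma_{i,j},1)$ (with $+$ for $j=2$ and $-$ for $j=3$), so that reality and positivity of the constant $\sqrt{\omega_2\, a'(\sigma_{i,j},1)}$ are governed by $\pm\,\omega_{4,1}b'(\sigma_{i,j},1)>0$, which are exactly the second and third hypotheses.

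Assembling these, each displayed inequality produces a point of $X\big(\bR(u,v)\big)$, which proves the lemma; by the discussion preceding this subsection such a real section furnishes an isomorphism $X\cong J_X$ over $\bR(u,v)$ and thus trivializes the Brauer class $\nu\in\mathrm{Br}_2(\bR(J_X))$. I expect the main delicate point to be the reality of the base factor $u^2\pm\omega_{4,1}v^2$ appearing for $j=2,3$: it is real precisely when $\omega_{4,1}\in\bR$, i.e.\ in the species-$1$ regime $\omega_{4,1}^2=1$ (cf.\ Lemma~\ref{lem:species0}), so the $j=2,3$ criteria yield a section over $\bR(u,v)$ in that regime, whereas the $j=1$ criterion, whose base factor $uv$ is always real, works unconditionally. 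The accompanying bookkeeping obstacle is to pass cleanly through the defining relation of $\sigma_{i,j}$ so as to see that the stated inequalities $\pm\,\omega_{4,1}b'>0$ coincide with the natural condition $\omega_2\, a'>0$ on the prefactor, and then to combine reality of the prefactor with reality of the monomial correctly.
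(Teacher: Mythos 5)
Your proof is correct and takes essentially the same route as the paper, which states the lemma as an immediate consequence of the explicit sections in Equation~(\ref{eqn:sections}): one verifies that the defining relation of $\sigma_{i,j}$ collapses the fiber equation to a perfect square and then reads off that the constant square-root prefactor is real and the base factor lies in $\bR[u,v]$ under each stated positivity hypothesis. The only refinement worth noting is that your ``delicate point'' about the reality of $u^2\pm\omega_{4,1}v^2$ is already taken care of by the hypotheses themselves: since $\sigma_{i,j}\in\bR$ forces $b'(\sigma_{i,j},1)\in\bR$, the condition $\pm\omega_{4,1}\,b'(\sigma_{i,j},1)\in\bR_{>0}$ automatically forces $\omega_{4,1}=\pm 1$, so no separate restriction to the regime $\omega_{4,1}^2=1$ is needed.
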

\begin{remark}
\label{rem:duality3}
The constructed 12 sections for the fibration $\pi_X \colon X \to \mathbb{P}(u, v)$ are the 12 reducible fiber components of the fibration $\tilde{\pi}_X \colon X \to \mathbb{P}(x, z)$ which has singular fibers $12 I_2$.
This is, swapping the roles of the two rulings for $\mathbb{F}_0=\mathbb{P}(u, v)\times\mathbb{P}(x, z)$ in Equation~(\ref{eqn:K3normab}) interchanges sections and components of reducible fibers.
\end{remark}
\par Let us take a closer look at the first case of Lemma~\ref{lem:real_section} where we are considering the roots $\sigma_{i,1}$. One has the following: the polynomial $a'(x,1)$ has 2 real roots  for $\omega_{4,2} = \pm i$, 4 real roots for $\omega_{4,2}= \pm 1$ and $\mu \le -2$, and no real roots otherwise. Writing $b'(x, 1) = \kappa a'(x,1) + (\lambda - \kappa \mu) x^2$ it follows that for any real root $\sigma_{i,1}$ we have $\omega_2 b'(\sigma_{i,1}, 1) \in \mathbb{R}_{>0}$ if $\sign{\omega_2} =\sign{(\lambda - \kappa\mu)}$. Thus, we have the following:
\begin{corollary}
\label{cor1}
Assume that $\sign{\omega_2} =\sign{(\lambda - \kappa\mu)}$. Then for $\omega_{4,2}^2 = -1$ or  $\omega_{4,2}^2 = 1$, $\mu \le -2$, there is a real section in Equation~(\ref{eqn:K3normab}) yielding an isomorphism between a non-singular $X$ and $J_X$ over $\mathbb{R}$ and, consequently, $\nu \equiv 0$ for $\nu \in \mathrm{Br}_2(\mathbb{R}(J_X))$. 
\end{corollary}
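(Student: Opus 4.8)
The plan is to apply the first case of Lemma~\ref{lem:real_section}, fleshing out the root analysis sketched just above it. The task reduces to exhibiting a \emph{real} root $\sigma_{i,1}$ of $a'(x,1)=x^4+\mu x^2+\omega_{4,2}^2$ for which $\omega_2\, b'(\sigma_{i,1},1)\in\mathbb{R}_{>0}$. The only algebraic input is the identity $b'(x,1)=\kappa\, a'(x,1)+(\lambda-\kappa\mu)\,x^2$, which I would verify by expanding both sides; evaluated at a root of $a'(\,\cdot\,,1)$ it collapses to $b'(\sigma_{i,1},1)=(\lambda-\kappa\mu)\,\sigma_{i,1}^2$.

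First I would settle the existence of a real root. Setting $w=x^2$ turns $a'(x,1)=0$ into the quadratic $w^2+\mu w+\omega_{4,2}^2=0$. When $\omega_{4,2}^2=-1$ its discriminant $\mu^2+4$ is positive and the product of its roots is $-1<0$, so exactly one root $w$ is positive and yields a real $\sigma_{i,1}=\pm\sqrt{w}$. When $\omega_{4,2}^2=1$ the discriminant $\mu^2-4$ is nonnegative precisely for $|\mu|\ge 2$, the product of the $w$-roots is $1>0$ and their sum is $-\mu$, so both are positive exactly when $\mu<0$; hence $\mu\le -2$ produces real roots $\sigma_{i,1}$. This case split is exactly the stated hypothesis, and it is the one genuinely constrained step. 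Note also that $0$ is never a root, since $a'(0,1)=\omega_{4,2}^2\neq 0$, so every such $\sigma_{i,1}$ satisfies $\sigma_{i,1}^2>0$.

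With a real root in hand, the collapsed identity gives $\omega_2\, b'(\sigma_{i,1},1)=\omega_2(\lambda-\kappa\mu)\,\sigma_{i,1}^2$. Because $\omega_2\in\{\pm1\}$, the standing assumption $\sign(\omega_2)=\sign(\lambda-\kappa\mu)$ forces $\lambda-\kappa\mu\neq 0$ and $\omega_2(\lambda-\kappa\mu)>0$, whence $\omega_2\, b'(\sigma_{i,1},1)>0$. The first case of Lemma~\ref{lem:real_section} then supplies a section of $\pi_X$ over $\mathbb{R}(u,v)$. Finally, a section over $\mathbb{R}(u,v)$ makes the generic fiber a genus-$1$ curve with a rational point, hence isomorphic over $\mathbb{R}(u,v)$ to its Jacobian; at the level of minimal elliptic surfaces this is precisely the isomorphism $X\cong J_X$ over $\mathbb{R}$, and since $\nu\in\mathrm{Br}_2(\mathbb{R}(J_X))$ is by construction the obstruction to such a section, it follows that $\nu\equiv 0$. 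I expect no serious obstacle here: the computation is mechanical once the identity for $b'$ is recorded, and the only conceptual point — rather than a difficulty — is this last identification of a real section with the vanishing of the twist, which was already set up in the discussion preceding the statement.
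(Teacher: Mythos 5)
Your proposal is correct and follows the same route as the paper: the identity $b'(x,1)=\kappa\,a'(x,1)+(\lambda-\kappa\mu)x^2$, the case analysis on real roots of $a'(x,1)$ via the quadratic in $w=x^2$, and the appeal to the first case of Lemma~\ref{lem:real_section}. The paper carries out exactly this argument in the paragraph preceding the corollary; your write-up merely makes the discriminant/sign bookkeeping and the nonvanishing of $\sigma_{i,1}$ explicit.
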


\par In the second and third case of Lemma~\ref{lem:real_section} we must have $\omega_{4,1} = \pm 1$ for a section to be defined over $\mathbb{R}(u, v)$ because of Equation~(\ref{eqn:sections}). The roots $\sigma_{i, 2}$ and $\sigma_{i, 3}$ are solutions of
\beq
 x^4 + \frac{\lambda \mp 2 \omega_{4,1} \mu}{\kappa \mp 2 \omega_{4,1}} x^2 + \omega_{4,2}^2 = 0,
\eeq 
where we assume $\kappa \mp 2 \omega_{4,1} \neq 0$. As before, we have 2 real roots for $\omega_{4,2} = \pm i$, 4 real roots for $\omega_{4,2}= \pm 1$ and $(\lambda \mp 2 \omega_{4,1} \mu)/(\kappa \mp 2 \omega_{4,1})  \le -2$, and no real roots otherwise. Writing
\beq
 a'(x, 1) = \Big(x^4 + \frac{\lambda \mp 2 \omega_{4,1} \mu}{\kappa \mp 2 \omega_{4,1}} x^2 + \omega_{4,2}^2\Big) + \left( \mu -  \frac{\lambda \mp 2 \omega_{4,1} \mu}{\kappa \mp 2 \omega_{4,1}} \right) x^2 
\eeq 
we  can  easily verify whether  $\omega_2 a'(\sigma_{i,2}, 1) \in \mathbb{R}_{>0}$ or $\omega_2 a'(\sigma_{i,3}, 1) \in \mathbb{R}_{>0}$. Thus, we have the following:
\begin{corollary}
\label{cor2}
Assume that $\omega_{4,1}^2 = 1$ and $\sign{\omega_2} =\sign (  \tfrac{\lambda - \kappa \mu }{\pm 2 \omega_{4,1} - \kappa} )$. Then for $\omega_{4,2}^2 = -1$ or  $\omega_{4,2}^2 = 1$, $(\lambda \mp 2 \omega_{4,1} \mu)/(\kappa \mp 2 \omega_{4,1})  \le -2 $, there is a real section in Equation~(\ref{eqn:K3normab}) yielding an isomorphism between a non-singular $X$ and $J_X$ over $\mathbb{R}$ and, consequently, $\nu \equiv 0$ for $\nu \in \mathrm{Br}_2(\mathbb{R}(J_X))$. 
\end{corollary}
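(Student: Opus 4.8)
The plan is to mimic the proof of Corollary~\ref{cor1}, replacing the first family of sections in Equation~(\ref{eqn:sections}) by the second and third families, indexed by $\sigma_{i,2}$ and $\sigma_{i,3}$. The conceptual reduction is the one used throughout: a section defined over $\mathbb{R}(u,v)$ trivializes $X$ as a torsor over its Jacobian, identifying a non-singular $X$ with $J_X$ over $\mathbb{R}$ and hence forcing the class $\nu \in \mathrm{Br}_2(\mathbb{R}(J_X))$ to vanish. So the entire task is to exhibit a real section, i.e.\ a real root $\sigma_{i,2}$ (or $\sigma_{i,3}$) at which $\omega_2\, a'(\sigma_{i,j},1) \in \mathbb{R}_{>0}$, so that the corresponding $y_{i,j}$ in Equation~(\ref{eqn:sections}) is real.

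First I would record why the hypothesis $\omega_{4,1}^2=1$ is forced: the second and third sections in Equation~(\ref{eqn:sections}) carry the factor $u^2 \pm \omega_{4,1}v^2$, which lies in $\mathbb{R}(u,v)$ only when $\omega_{4,1}=\pm 1$. Next I would settle reality of the roots. Writing $w=x^2$ turns the defining biquadratic $x^4 + c\,x^2 + \omega_{4,2}^2 = 0$, with $c=(\lambda \mp 2\omega_{4,1}\mu)/(\kappa \mp 2\omega_{4,1})$, into $w^2 + c\,w + \omega_{4,2}^2 = 0$, whose roots have product $\omega_{4,2}^2$ and sum $-c$. When $\omega_{4,2}^2=-1$ the product is negative, so exactly one $w$-root is positive and two real $x$-roots survive; when $\omega_{4,2}^2=1$ both $w$-roots are real and positive exactly when $c\le -2$, yielding four real $x$-roots. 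This reproduces the case split in the hypotheses.

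The computational core is to evaluate $a'$ at such a root via the decomposition $a'(x,1) = (x^4 + c\,x^2 + \omega_{4,2}^2) + (\mu - c)x^2$ displayed just before the statement. The bracket vanishes at a root, so $a'(\sigma_{i,2},1) = (\mu-c)\sigma_{i,2}^2$, and a short simplification gives $\mu - c = (\lambda - \kappa\mu)/(\pm 2\omega_{4,1} - \kappa)$, with the analogous identity holding for $\sigma_{i,3}$ under the paired sign choice. Since $\sigma_{i,2}^2>0$, the quantity $\omega_2\, a'(\sigma_{i,2},1)$ is positive precisely when $\sign\omega_2 = \sign\big((\lambda-\kappa\mu)/(\pm 2\omega_{4,1}-\kappa)\big)$, which is the hypothesis. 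Under the stated conditions on $\omega_{4,2}$ this root is also real, so Equation~(\ref{eqn:sections}) delivers a real section; substituting $x=\sigma_{i,2}$, $z=1$ into Equation~(\ref{eqn:K3normab}) and using $b'(\sigma_{i,2},1)=2\omega_{4,1}a'(\sigma_{i,2},1)$ collapses the right-hand side to the perfect square $\omega_2\, a'(\sigma_{i,2},1)(u^2+\omega_{4,1}v^2)^2$, confirming it. Triviality of $\nu$ then follows as above.

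I expect the only delicate point to be the parallel bookkeeping of the two sign families: the $\mp$ distinguishing the $\sigma_{i,2}$ and $\sigma_{i,3}$ defining equations must be tracked consistently through the quartic coefficient $c$, through the sign flip in $\mu - c = (\lambda-\kappa\mu)/(\pm 2\omega_{4,1}-\kappa)$, and through the final comparison with $\sign\omega_2$. Once those conventions are aligned, the remainder is the direct substitution already packaged in Equation~(\ref{eqn:sections}).
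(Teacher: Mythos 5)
Your proposal is correct and follows essentially the same route as the paper: the authors likewise restrict to $\omega_{4,1}=\pm1$, determine the reality of the roots $\sigma_{i,2},\sigma_{i,3}$ of $x^4+cx^2+\omega_{4,2}^2=0$ by the same case split, and use the decomposition $a'(x,1)=(x^4+cx^2+\omega_{4,2}^2)+(\mu-c)x^2$ to reduce the positivity test to $\sign\omega_2=\sign\bigl((\lambda-\kappa\mu)/(\pm2\omega_{4,1}-\kappa)\bigr)$, concluding via Lemma~\ref{lem:real_section}. Your explicit computation of $\mu-c$ and the perfect-square verification simply spell out steps the paper leaves as "easily verified."
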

\subsection{Real families and their charges}
We now consider three families of real K3 surfaces whose string limits give the three different type IIB theories on $\bP^1$ with four $I_0^*$ fibers considered in \cite{MR3267662,MR3316647}. The three cases are differentiated by the charges of the $I_0^*$ fibers with the three possibilities being $(+,+,+,+)$, $(+,+,-,-)$, and $(+,+,+,-)$. Before going into each case in detail we summarize the results in Table~\ref{tab:3realfam}.
\begin{table}
	\begin{tabular}{||c|c|c|c|c|c|c||}
  \hline 
  $\omega_2$ & $\omega_{4,1}^2$ & $\omega_{4,2}^2$ & Constraints & Locations of $+$ $I_2$'s & Locations of $-$ $I_2$'s & Isotrivial Limit \\
  \hline\hline 
  $1$ & $1$ & $1$ & \thead{$\kappa,\mu<-2$\\ $\lambda>\kappa\mu$}& \thead{$\pm k,\pm k^{-1}$\\$\pm l,\pm l^{-1}$\\$\pm m,\pm m^{-1}$}& & $l,m\to k$ \\
  \hline
  $-1$ & $1$ & $1$ & \thead{$\kappa,\mu>2$\\ $\lambda>2(\kappa+\mu-2)$}& \thead{$ik, ik^{-1}$\\$ il,il^{-1}$\\$ im, im^{-1}$} & \thead{$- ik,- ik^{-1}$\\$- il,- il^{-1}$\\$-im,- im^{-1}$} & $l,m\to k$ \\
  \hline
  $1$ & $-1$ & $-1$ & $\lambda>\kappa\mu$& \thead{$\pm k,ik^{-1}$\\$\pm c,ic^{-1}$\\$\pm \bar{c},i\bar{c}^{-1}$}& \thead{$-ik^{-1}$\\$-ic^{-1}$\\$-i\bar{c}^{-1}$} & $c,\bar{c}\to k$\\
  \hline
\end{tabular}
\caption{Three families of real K3 surfaces whose isotrivial limit has $4$ $I_0^*$ fibers with charge $(+,+,+,+)$, $(+,+,-,-)$, $(+,+,+,-)$. The $+ I_2$'s are the $I_2$ fibers that will merge to form an $I_0^*$ fiber with $+$ charge, and similar for $- I_2$'s. Here $k,l,m\in\bR_{>0}$ and $c\in\bC\backslash(\bR\cup i\bR)$.}
\label{tab:3realfam}
\end{table}

In all three families let us consider the situation where $\mathbb{P}(u, v)$ is the base curve of the elliptically fibered K3 surface $\pi_X \colon X \to \mathbb{P}(u, v)$. The discriminant function then is proportional to
\beq
 a(u, v)^2 \Big( a(u, v) + \omega_{4,2} b(u, v) \Big)^2  \Big( a(u, v) - \omega_{4,2} b(u, v) \Big)^2.
\eeq 

\subsubsection{The $(+, +, +, +)$ family}
\label{sec:++++}
To obtain 12 real roots, marking the location of the 12 singular fibers of type $I_2$, we set $\omega_2=\omega_{4,1}=\omega_{4,2} =1$,
\begin{equation}
\label{eqn:spec2constraints}
\begin{split}
	\kappa, \ \mu,  <  -2 , \quad \text{and}\quad \lambda > \kappa\mu.
	\end{split}
\end{equation}
This shows that
\begin{equation}
\label{eqn:Defchipsi}
	\chi= \frac{\lambda + 2\kappa}{\mu+2}, \ \psi= \frac{\lambda - 2\kappa}{\mu-2} <  -2 .
\end{equation} 
Now introduce $k, l, m$ with
\begin{equation}
\label{eqn:Defklm}
	\begin{split}
	k&=\frac{1}{\sqrt{2}}\sqrt{-\kappa+\sqrt{\kappa^2-4}}	,\quad l=\frac{1}{\sqrt{2}}\sqrt{-\chi+\sqrt{\chi^2-4}},\\
	&\qquad\text{and}\quad  m=\frac{1}{\sqrt{2}}\sqrt{-\psi+\sqrt{\psi^2-4}}	.
	\end{split}
\end{equation}
We see that $k,l,m\in\bR_{>1}$ since $\kappa,\chi,\psi<-2$, and 
\beq
\label{eqn:relat1}
\begin{split}
 \kappa = - \left( k^2 + \frac{1}{k^2} \right) , \quad \chi   =  - \left(l^2 + \frac{1}{l^2}\right) , \quad \text{and} \quad \psi =  - \left(m^2 - \frac{1}{m^2}\right) .
\end{split}
\eeq 
It follows that the roots of $a(u,1)=0$ are given by $\{ \pm k, \pm \tfrac{1}{k} \}$, the roots of $b(u,1)+2 a(u,1)=0$ are given by $\{ \pm l, \pm \tfrac{1}{l} \}$, and the roots $b(u,1)-2 a(u,1)=0$ are given by $\{ \pm m, \pm \tfrac{1}{m} \}$.

We define the quantity
\begin{equation}
	\varepsilon_1=\lambda-\kappa\mu>0,
\end{equation}
and the family of $\varepsilon$-linear functions
\beq
\label{eqn:family}
 \lambda_\varepsilon = \kappa\mu + \varepsilon, \qquad \kappa_\varepsilon = \kappa, \qquad \mu_\varepsilon = \mu,
\eeq 
for all $\varepsilon\in[0,\infty)$. Clearly $\lambda_{\varepsilon_1}=\lambda,$ $\kappa_{\varepsilon_1}=\kappa$, and $\mu_{\varepsilon_1}=\mu$. Using Equation~(\ref{eqn:Defchipsi}) we obtain functions $\chi_\varepsilon$ and $\psi_\varepsilon$ defined on $[0,\infty)$ by
\begin{equation}
	\chi_\varepsilon=\kappa+\frac{\varepsilon}{\mu+2},\quad \psi_\varepsilon=\kappa+\frac{\varepsilon}{\mu-2}.
\end{equation}
This gives us functions $k_\varepsilon$, $l_\varepsilon$, and $m_\varepsilon$ from Equation~(\ref{eqn:Defklm}), given by
\begin{equation}
\begin{split}
	k_\varepsilon=k,\quad l_\varepsilon=k-\frac{k^3}{2(k^4-1)(\mu+2)}\varepsilon+O(\varepsilon^2),\\
	\text{and} \quad m_\varepsilon=k-\frac{k^3}{2(k^4-1)(\mu-2)}\varepsilon+O(\varepsilon^2).
	\end{split}
\end{equation}

For all $\varepsilon\in[0,\infty)$ $(\kappa_\varepsilon,\mu_\varepsilon,\lambda_\varepsilon)$ satisfy the conditions in Equation~(\ref{eqn:spec2constraints}), so
\begin{equation}
	\begin{split}
		\chi_\varepsilon,\; \psi_\varepsilon < -2,\qquad\\
		\quad\text{and}\quad k_\varepsilon,\; l_\varepsilon,\; m_\varepsilon >1.
	\end{split}
\end{equation}
Furthermore, we see that when $\varepsilon>0$
\begin{equation}
	\chi_\varepsilon<\psi_\varepsilon<\kappa_\varepsilon < -2,
\end{equation}
or equivalently,
\begin{equation}
	1 < k_\varepsilon < m_\varepsilon < l_\varepsilon.
\end{equation} 
Thus, we have the following:
\begin{proposition}
\label{prop:++++}
Let $\kappa, \mu , \lambda \in \mathbb{R}$ satisfying Equation~(\ref{eqn:spec2constraints}), and assume $\omega_2=\omega_{4,1}=\omega_{4,2} =1$. Define real-valued functions $\kappa_\varepsilon,  \mu_\varepsilon,  \lambda_\varepsilon  \in \mathbb{R}$, over $\varepsilon \in [0, \infty)$ as in Equation~(\ref{eqn:family}). Then Equation~(\ref{eqn:K3normab}) defines a family of real elliptically fibered K3 surfaces $\pi_{X_\varepsilon} \colon X_\varepsilon \to \mathbb{P}(u, v)$ whose limit for $\varepsilon=0$ is $\mathrm{Kum}(E_1 \times E_2)$ with
\beq
j (E_1) =  \frac{(12+\kappa^2)^3}{108 (4-\kappa^2)^2}, \quad
j (E_2) = \frac{(12+\mu^2)^3}{108 (4-\mu^2)^2}.
\eeq
$X_\varepsilon$ is non-singular for all $\varepsilon\in [0,\infty)$ and has $12$ $I_2$ fibers when $\varepsilon\ne 0$. Moreover, there is a real section in Equation~(\ref{eqn:K3normab}) yielding an isomorphism between every $X_\varepsilon$ and $J_{X_\varepsilon}$ over $\mathbb{R}$ for all $\varepsilon \in [0, \infty)$.
\end{proposition}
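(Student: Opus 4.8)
The plan is to establish the three assertions --- the K3/Kummer structure of the $\varepsilon=0$ limit, the $12\,I_2$ fiber configuration for $\varepsilon>0$, and the existence of a real section for every $\varepsilon$ --- in turn, leaning on the machinery already built for the normal form~\eqref{eqn:K3normab}. First I would note that for all $\varepsilon\in[0,\infty)$ the parameter point is $[1:\kappa:\lambda_\varepsilon:\mu]$ with $\kappa,\mu<-2$. Each of the four components of the locus $\mathcal{D}$ in Lemma~\ref{lem:NormFormisK3} forces either $\rho=0$, or $\mu=\pm2$, or $\kappa=\pm2$; all are incompatible with $\rho=1$ and $\kappa,\mu<-2$. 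Hence $[1:\kappa:\lambda_\varepsilon:\mu]\notin\mathcal{D}$ throughout, so by Lemma~\ref{lem:NormFormisK3} the minimal resolution $X_\varepsilon$ is a K3 surface for every $\varepsilon$. At $\varepsilon=0$ one has $\lambda_0=\kappa\mu\ne0$, i.e.\ $\kappa\mu=\lambda_0\rho$; Lemma~\ref{lem:NormFormisK3} then yields isotriviality and Remark~\ref{rem:kummer} identifies $X_0\cong\mathrm{Kum}(E_1\times E_2)$ with $E_1\colon y_1^2=u^4+\kappa u^2v^2+v^4$ and $E_2\colon y_2^2=x^4+\mu x^2z^2+z^4$.

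For the fiber configuration I would analyze the three quartic factors $a$, $b+2a$, and $b-2a$ whose squared product is the discriminant of the Weierstrass model~\eqref{eqn:WEQn} and whose roots, per the setup of this subsection, locate the singular fibers. Writing $b=\mu a+\varepsilon\,u^2v^2$ (from $\lambda_\varepsilon=\kappa\mu+\varepsilon$), each factor has nonzero leading coefficient (as $\mu\ne\pm2$), so each contributes four roots; after normalizing $b\pm2a$ to the form $u^4+\chi u^2v^2+v^4$ and $u^4+\psi u^2v^2+v^4$, these are the sets $\{\pm k,\pm k^{-1}\}$, $\{\pm l,\pm l^{-1}\}$, $\{\pm m,\pm m^{-1}\}$ of~\eqref{eqn:Defklm}, real and distinct within each factor because $\kappa,\chi_\varepsilon,\psi_\varepsilon<-2$ forces $k,l,m>1$. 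The crux is that two of the three factors can share a root only where $a=b=0$; but $a=0$ together with $b=\mu a+\varepsilon u^2v^2=0$ forces $\varepsilon u^2v^2=0$, i.e.\ $u=0$ or $v=0$, which is not a root of $a$. Thus for $\varepsilon>0$ all twelve roots are distinct, and by Lemma~\ref{lem:fibrations} the fibration has exactly $12\,I_2$ fibers; the same non-collision argument excludes the degenerations to $2I_4+8I_2$ in the corollary following Proposition~\ref{prop:lattice}. At $\varepsilon=0$ one has $b=\mu a$, hence $b\pm2a=(\mu\pm2)a$, so all three factors vanish precisely on $\{\pm k,\pm k^{-1}\}$; the discriminant valuation there is $2+2+2=6$, giving four $I_0^*$ fibers, consistent with the isotrivial Kummer limit.

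Next I would read off the $j$-invariants and produce the real section. Each smooth fiber of $X_0$ is the genus-one curve $y^2=a(u,v)\bigl(x^4+\mu x^2z^2+z^4\bigr)$, whose $j$-invariant is that of $y^2=x^4+\mu x^2+1$; computing the binary-quartic invariants $I=12+\mu^2$ and $J=2\mu(36-\mu^2)$ and using the identity $4I^3-J^2=432(\mu^2-4)^2$ reproduces the stated expression for $j(E_2)$, and symmetrically for $j(E_1)$ with $\kappa$ in place of $\mu$ (here $4-\kappa^2,4-\mu^2\ne0$ since $\kappa,\mu<-2$, so both curves are smooth). For the real section I would invoke Lemma~\ref{lem:real_section} with a real root $\sigma_{i,1}$ of $a'(x,1)=x^4+\mu x^2+1$, which exists because $\mu<-2$. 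Its $y$-coordinate is $\sqrt{\omega_2\,b'(\sigma_{i,1},1)}\,uv=\sqrt{\varepsilon}\,|\sigma_{i,1}|\,uv\in\mathbb{R}(u,v)$, since $b'(\sigma_{i,1},1)=(\lambda_\varepsilon-\kappa\mu)\sigma_{i,1}^2=\varepsilon\sigma_{i,1}^2$ and $\omega_2=1$. For $\varepsilon>0$ this is precisely Corollary~\ref{cor1} (with $\sign\omega_2=\sign(\lambda_\varepsilon-\kappa\mu)=+$ and $\mu\le-2$); for $\varepsilon=0$ it degenerates to the real $2$-torsion section $(\sigma_{i,1},0,1)$. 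In either case a rational point on the generic fiber gives an isomorphism $X_\varepsilon\cong J_{X_\varepsilon}$ over $\mathbb{R}$, so the Brauer class satisfies $\nu\equiv0$ throughout the family.

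The main obstacle I anticipate is the \emph{uniform} control of the fiber configuration over the entire half-line $\varepsilon\in[0,\infty)$ rather than at a single generic parameter: one must simultaneously guarantee that no two of the twelve roots ever collide for $\varepsilon>0$ (ruling out $I_4$ or worse) and that they collapse correctly into four $I_0^*$ fibers at $\varepsilon=0$. The implication $a=b=0\Rightarrow u=0$ or $v=0$ handles both endpoints cleanly and is the heart of the argument; the remaining verifications (non-membership in $\mathcal{D}$, the $j$-invariant identity, and the reality of the section) are essentially mechanical once this non-collision fact is in place.
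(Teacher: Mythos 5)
Your proof is correct and follows essentially the same route as the paper's: exclusion of the locus $\mathcal{D}$ via Lemma~\ref{lem:NormFormisK3}, distinctness of the twelve discriminant roots for $\varepsilon>0$, identification of the $\varepsilon=0$ limit through Remark~\ref{rem:kummer}, and Lemma~\ref{lem:real_section}/Corollary~\ref{cor1} for the real section. Two minor points where you add detail: your non-collision argument (a shared root of two discriminant factors forces $a=b=0$, hence $\varepsilon u^2v^2=0$) substitutes for the paper's explicit ordering $1<k_\varepsilon<m_\varepsilon<l_\varepsilon$ with the same effect, and you explicitly cover the $\varepsilon=0$ endpoint of the real-section claim --- where the sign hypothesis of Corollary~\ref{cor1} degenerates since $\lambda_0-\kappa\mu=0$ --- by exhibiting the real $2$-torsion section $(\sigma_{i,1},0,1)$, a case the paper's one-line citation of Corollary~\ref{cor1} does not literally address.
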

\begin{proof}
$X_\varepsilon$ is non-singular by Lemma~\ref{lem:NormFormisK3} since $[ 1: \kappa_\varepsilon : \lambda_\varepsilon : \mu_\varepsilon ]\notin\mathcal{D}$ for all $\varepsilon\in (-\varepsilon_0,\infty)$. Furthermore $k_\varepsilon,l_\varepsilon$, and $m_\varepsilon$ are all distinct for $\varepsilon\neq 0$ and all greater than $1$ for $\varepsilon\in(-\varepsilon_0,\infty)$ so we see that the discriminant has $12$ distinct real roots for $\varepsilon\in(-\varepsilon_0,\infty)\backslash\{0\}$. As explained in Lemma~\ref{lem:NormFormisK3} and Remark~\ref{rem:kummer}, the elliptic fibration is isotrivial for $[ 1: \kappa : \lambda : \mu ] \in  \mathbb{P}^3$ with $\lambda-\kappa\mu=0$. The limiting K3 is isomorphic to the Kummer surface $\mathrm{Kum}(E_1 \times E_2)$ where the elliptic curves are given by
\beq
\begin{split}
 E_1\colon \  y_1^2 = u^4 + \kappa  u^2v^2 +v^4, \quad   E_2\colon \  y_2^2 = x^4 + \mu  x^2z^2 +z^4.
\end{split}
\eeq

When $\varepsilon\in(0,\infty)$ $X_\varepsilon$ satisfies the conditions of Corollary~\ref{cor1} showing there exists a real section.
\end{proof}

In Section~\ref{sec:+++-2} we will be interested in the number of real sections that exist so let us make it explicit for this case now. When determining sections of $X_\varepsilon$ defined by Equation~(\ref{eqn:K3normab}) we consider the roots of $a_\varepsilon'(x,1)$ and $b_\varepsilon'(x,1)\pm 2a'_\varepsilon(x,1)$ as explained in section \ref{sec:RealSection}. We can write these as
\begin{equation}
	\begin{split}
		a_\varepsilon'(x,1)&=x^4+\mu_\varepsilon x^2+1,\\
		b'_\varepsilon(x,1)+2 a'_\varepsilon(x,1)&=(\kappa_\varepsilon+2)(x^4+\chi'_\varepsilon x^2+1),\\
		b'_\varepsilon(x,1)-2 a'_\varepsilon(x,1)&=(\kappa_\varepsilon-2)(x^4+\psi'_\varepsilon x^2+1),
	\end{split}
\end{equation}
where 
\begin{equation}
\label{eqn:chieps}
\begin{split}
	\chi'_\varepsilon=\frac{\lambda_\varepsilon+2\mu_\varepsilon}{\kappa_\varepsilon+2}=\mu+\frac{\varepsilon}{\kappa+2},\\\text{and} \quad \psi'_\varepsilon=\frac{\lambda_\varepsilon-2\mu_\varepsilon}{\kappa_\varepsilon-2}=\mu+\frac{\varepsilon}{\kappa-2}.
	\end{split}
\end{equation}
For $\varepsilon\in(0,\infty)$,
\begin{equation}
	\chi'_\varepsilon<\psi'_\varepsilon<\mu_\varepsilon=\mu<-2.
\end{equation}
Therefore, for all $\varepsilon>0$ there exist $k'_\varepsilon, l'_\varepsilon, m'_\varepsilon \in \bR$ with
\begin{equation}
	1<k'_\varepsilon<m'_\varepsilon<l'_\varepsilon
\end{equation}
such that the roots of $a'_\varepsilon(x,1)=0$ are given by $\{\pm k'_\varepsilon,\pm\tfrac{1}{k'_\varepsilon}\}$, the roots of $b'_\varepsilon(x,1)+2 a'_\varepsilon(x,1)=0$ are given by $\{\pm l'_\varepsilon,\pm\tfrac{1}{l'_\varepsilon}\}$, and the roots of $b'_\varepsilon(x,1)-2 a'_\varepsilon(x,1)=0$ are given by $\{\pm m'_\varepsilon,\pm\tfrac{1}{m'_\varepsilon}\}$.
\begin{corollary}
	Let $\kappa, \mu , \lambda \in \mathbb{R}$ satisfying Equation~(\ref{eqn:spec2constraints}), and assume $\omega_2=\omega_{4,1}=\omega_{4,2} =1$. Define the real-valued functions $\kappa_\varepsilon,  \mu_\varepsilon,  \lambda_\varepsilon  \in \mathbb{R}$, over $\varepsilon \in [0, \infty)$ as in Equation~(\ref{eqn:family}). Then $X_\varepsilon$ defined by Equation~(\ref{eqn:K3normab}) has $12$ pairs of real sections for $\varepsilon>0$.
\end{corollary}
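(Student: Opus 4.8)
The plan is to read the corollary off Lemma~\ref{lem:real_section}, which for each of the twelve roots $\sigma_{i,j}$ ($1\le i\le 4$, $1\le j\le 3$) produces a pair of sections $(x,y,z)=(\sigma_{i,j},\pm y_{i,j},1)$ over $\mathbb{R}(u,v)$, provided $\sigma_{i,j}$ is real and the corresponding positivity condition in Equation~\eqref{eqn:sections} is satisfied. Since there are exactly twelve such roots and each eligible root contributes one $\pm y$-pair, it suffices to verify, for $\varepsilon>0$, that all twelve roots are real and that all twelve positivity conditions hold. The stated count of $12$ pairs then follows, the pairs being distinct because the underlying roots are.

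First I would record reality. From the discussion immediately preceding the corollary, with $\omega_2=\omega_{4,1}=\omega_{4,2}=1$ and $\varepsilon>0$, the roots of $a'_\varepsilon(x,1)=0$, of $b'_\varepsilon(x,1)-2a'_\varepsilon(x,1)=0$, and of $b'_\varepsilon(x,1)+2a'_\varepsilon(x,1)=0$ are $\{\pm k'_\varepsilon,\pm 1/k'_\varepsilon\}$, $\{\pm m'_\varepsilon,\pm 1/m'_\varepsilon\}$, and $\{\pm l'_\varepsilon,\pm 1/l'_\varepsilon\}$ respectively, with $1<k'_\varepsilon<m'_\varepsilon<l'_\varepsilon$. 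In particular all twelve roots $\sigma_{i,j}$ are real, pairwise distinct, and nonzero for $\varepsilon>0$.

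Next I would check positivity, separating the $j=1$ family from the $j=2,3$ families. For $j=1$, a root $\sigma$ of $a'_\varepsilon(x,1)=0$ satisfies $b'_\varepsilon(\sigma,1)=(\lambda_\varepsilon-\kappa_\varepsilon\mu_\varepsilon)\sigma^2=\varepsilon\,\sigma^2>0$, exactly the computation behind Corollary~\ref{cor1}, so $\omega_2 b'_\varepsilon(\sigma_{i,1},1)>0$. For $j=2,3$, a root $\sigma$ of $x^4+c\,x^2+1=0$, with $c=\psi'_\varepsilon$ for $j=2$ and $c=\chi'_\varepsilon$ for $j=3$ as in Equation~\eqref{eqn:chieps}, satisfies $\sigma^4+1=-c\,\sigma^2$, whence $a'_\varepsilon(\sigma,1)=\sigma^4+\mu_\varepsilon\sigma^2+1=(\mu-c)\sigma^2$. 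Since $\chi'_\varepsilon<\psi'_\varepsilon<\mu_\varepsilon=\mu$ for $\varepsilon>0$, we get $\mu-c>0$ and hence $\omega_2 a'_\varepsilon(\sigma_{i,j},1)>0$ for every root in both families; by Equation~\eqref{eqn:sections} this is precisely the condition making the associated $y_{i,j}$ defined over $\mathbb{R}(u,v)$. Applying Lemma~\ref{lem:real_section} to each of the twelve roots then yields a pair of real sections for each.

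The only content beyond bookkeeping is the uniform sign computation in the third paragraph, and this is where I expect the one genuine subtlety to lie: a priori one worries that the positivity condition might fail at the ``small'' roots $\pm 1/m'_\varepsilon,\pm 1/l'_\varepsilon$ even if it holds at the ``large'' roots $\pm m'_\varepsilon,\pm l'_\varepsilon$ (as can occur for other sign data, where fewer than twelve pairs survive). The collapse $a'_\varepsilon(\sigma,1)=(\mu-c)\sigma^2$ dispels this, since the sign depends only on $\mu-c>0$ and on $\sigma^2>0$, hence is the same for all four roots of a family simultaneously; likewise $b'_\varepsilon(\sigma,1)=\varepsilon\,\sigma^2$ is uniformly positive across the $j=1$ family. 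Thus no root is lost, and all twelve pairs are realized for $\varepsilon>0$.
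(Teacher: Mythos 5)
Your proof is correct, and it follows the same overall skeleton as the paper's: enumerate the twelve roots $\sigma_{i,j}$, confirm they are real and distinct for $\varepsilon>0$ using $1<k'_\varepsilon<m'_\varepsilon<l'_\varepsilon$, and then verify the positivity conditions of Equation~\eqref{eqn:sections} root by root before invoking Lemma~\ref{lem:real_section}. Where you differ is in how the signs are established. The paper argues via the leading coefficients of the three quartics ($a'$ positive, $b'\pm 2a'$ negative since $\kappa<-2$) combined with the interlacing of their roots forced by $1<k'_\varepsilon<m'_\varepsilon<l'_\varepsilon$, concluding $b'(\sigma_{i,1})>0$, $b'(\sigma_{i,2})>0$, $b'(\sigma_{i,3})<0$. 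You instead evaluate exactly: $b'_\varepsilon(\sigma,1)=\varepsilon\,\sigma^2$ at roots of $a'_\varepsilon$ (the identity behind Corollary~\ref{cor1}), and $a'_\varepsilon(\sigma,1)=(\mu-c)\sigma^2$ at roots of $x^4+cx^2+1$ with $c=\psi'_\varepsilon$ or $\chi'_\varepsilon$, then use $\chi'_\varepsilon<\psi'_\varepsilon<\mu$. Your computation has the merit of being manifestly uniform over all four roots of each quartic (so the ``small'' roots $\pm 1/m'_\varepsilon,\pm 1/l'_\varepsilon$ cannot behave differently from the ``large'' ones), and it isolates exactly which sign data ($\varepsilon>0$, $\kappa\pm 2<0$) control the count --- which is useful for seeing why the count drops in the later $(+,+,+,-)$ families. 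The paper's version is more geometric but requires one to trust the interlacing picture. Both are valid; no gap.
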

\begin{proof}
	$a'(x,1),b'(x,1)\pm 2a'(x,1)$ are all quartics with four real roots. $a'(x,1)$ has positive leading coefficient whereas $b'(x,1)\pm 2a'(x,1)$ both have negative leading coefficients. This together with the fact that for all $\varepsilon>0$,
	\begin{equation}
		1 < k'_\varepsilon < m'_\varepsilon < l'_\varepsilon.
	\end{equation}
	shows that 
	\begin{equation}
		\begin{split}
			b'(\sigma_{i,1})>0& \quad\text{for}\quad \sigma_{i,1}\in \{\pm k'_\varepsilon,\pm\frac{1}{k'_\varepsilon}\},\\
			b'(\sigma_{i,2})>0& \quad\text{for}\quad \sigma_{i,2}\in \{\pm m'_\varepsilon,\pm\frac{1}{m'_\varepsilon}\},\\
			b'(\sigma_{i,3})<0& \quad\text{for}\quad \sigma_{i,3}\in \{\pm l'_\varepsilon,\pm\frac{1}{l'_\varepsilon}\}.
		\end{split}
	\end{equation}
	on $[0,\infty)$. Therefore all $12$ pairs of sections in Equation~(\ref{eqn:sections}) are real by Lemma~\ref{lem:real_section}.
\end{proof}

Notice that $j(E_i)>1$ since
$$j(E_i)-1=\frac{x_i^2(x_i^2-36)^2}{108(4-x_i^2)^2}>0,$$
where $x_1=\kappa$ and $x_2=\mu$. This corresponds to both elliptic curves being species $2$ real elliptic curves. The isotrivial limit gives back the standard F-theory description of the type IIB theory on $\bP^1$ with $4$ $O7^+$-planes. Recall that we are using notation where an $O^+$-plane carries negative $D$-brane charge, so in this case each $O7^+$-plane carries $(-4)$ units of $D7$-brane charge and requires $4$ $D7$-branes for tadpole cancellation. We see that in the isotrivial limit the $I_2$ fibers at $k_\varepsilon,l_\varepsilon$, and $m_\varepsilon$ merge to form an $I_0^*$ fiber at $k$, describing the the $O7^+$-plane. Similarly, the other $I_0^*$ fibers are formed by merging three $I_2$ fibers and are located at $\pm\tfrac{1}{k}$ and $-k$. In the string limit this gives the type IIB orientifold on $(S^{1,1}\times S^{1,1})_{(+,+,+,+)}$ described in \cite{MR3267662} and \cite{MR3316647}.

\subsubsection{The $(+,+,-,-)$ family}
\label{sec:++--}

This case is very similar to the previous one, but we now want $12$ purely imaginary roots and $(u,v,x,z,y)\to (\bar u,\bar v,\bar x,\bar z,\bar y)$ to be fixed point free. This additionally requires $y^2$ to be negative definite. Therefore, we set $\omega_2=-1$, $\omega_{4,1}^2=\omega_{4,2}^2=1$, 
\begin{equation}
\label{eqn:spec0constraints}
\begin{split}
	\kappa, \ \mu,  >  2 , \quad \text{and}\\
	 \quad \lambda>2(\kappa+\mu-2).
	\end{split}
\end{equation}
The rest follows the same as the previous case with some changes of sign, so we will go through it quickly. This shows that
\begin{equation}
	\psi= \frac{\lambda - 2\kappa}{\mu-2},  \ \chi= \frac{\lambda + 2\kappa}{\mu+2} >  2 .
\end{equation}
We again introduce $k, l, m \in \mathbb{R}_{> 1}$ with
\begin{equation}
\label{eqn:Defklm0}
	\begin{split}
	k&=\frac{1}{\sqrt{2}}\sqrt{\kappa+\sqrt{\kappa^2-4}}	,\quad l=\frac{1}{\sqrt{2}}\sqrt{\chi+\sqrt{\chi^2-4}},\\
	&\qquad\text{and}\quad  m=\frac{1}{\sqrt{2}}\sqrt{\psi+\sqrt{\psi^2-4}}	.
	\end{split}
\end{equation}
It follows that the roots of $a(u,1)=0$ are given by $\{ \pm i k, \pm \tfrac{i}{k} \}$, the roots of $b(u,1)+2 a(u,1)=0$ are given by $\{ \pm i l, \pm \tfrac{i}{l} \}$, and the roots $b(u,1)-2 a(l,1)=0$ are given by $\{ \pm i m, \pm \tfrac{i}{m} \}$.

We define the quantities
\begin{equation}
	\varepsilon_1=\lambda-\kappa\mu,\quad \varepsilon_0=(\kappa-2)(\mu-2)>0,
\end{equation}
and the family of $\varepsilon$-linear functions
\beq
\label{eqn:family0}
 \lambda_\varepsilon = \kappa\mu + \varepsilon, \qquad \kappa_\varepsilon = \kappa, \qquad \mu_\varepsilon = \mu,
\eeq 
for all $\varepsilon\in(-\varepsilon_0,\infty)$. Clearly $\lambda_{\varepsilon_1}=\lambda,$ $\kappa_{\varepsilon_1}=\kappa$, and $\mu_{\varepsilon_1}=\mu$. Furthermore, $(\kappa_\varepsilon,\mu_\varepsilon,\lambda_\varepsilon)$ satisfy the conditions in Equation~(\ref{eqn:spec0constraints}) for all $\varepsilon\in(-\varepsilon_0,\infty)$. Therefore the functions $\chi_\varepsilon$ and $\psi_\varepsilon$ defined in Equation~(\ref{eqn:chieps}) satisfy
\begin{equation}
	\chi_\varepsilon,\ \psi_\varepsilon \ > \ 2
\end{equation}
for all $\varepsilon\in(-\varepsilon_0,\infty)$. Furthermore,
\begin{equation}
	\chi'_\varepsilon,\ \psi'_\varepsilon \ > \ 2
\end{equation} 
as well. This shows us that the zeros of the functions $a_\varepsilon(u,1)$, $b_\varepsilon(u,1\pm 2a_\varepsilon(u,1)$, $a'_\varepsilon(x,1)$, and $b'_\varepsilon(x,1\pm 2a'_\varepsilon(x,1)$ are all purely imaginary, so the functions are all positive definite on $\bR$. Therefore $y$ is purely imaginary over $\bR$ since $\omega_2=-1$. This gives us the following

\begin{proposition}
	Let $\kappa,\mu,\lambda\in\bR$ such that they satisfy Equation~(\ref{eqn:spec0constraints}), and assume $\omega_2=-1$ and $\omega_{4,1}^2=\omega_{4,2}^2=1$. Define $\varepsilon_0=(\kappa-2)(\mu_2)>0$ and real valued functions $\kappa_\varepsilon,\mu_\varepsilon,\lambda_\varepsilon\in\bR$ over $\varepsilon\in(-\varepsilon_0,\infty)$ as in Equation~(\ref{eqn:family0}). Then Equation~(\ref{eqn:K3normab}) defines a family of real elliptically fibered K3 surfaces $\pi_{X_\varepsilon} \colon X_\varepsilon \to \mathbb{P}(u, v)$ whose limit for $\varepsilon=0$ is $\mathrm{Kum}(E_1 \times E_2)$ with
\beq
j (E_1) =  \frac{(12+\kappa^2)^3}{108 (4-\kappa^2)^2}, \quad
j (E_2) = \frac{(12+\mu^2)^3}{108 (4-\mu^2)^2}.
\eeq
$X_\varepsilon$ is non-singular for all $\varepsilon\in (-\varepsilon_0,\infty)$ and has $12$ $I_2$ fibers when $\varepsilon\ne 0$. Moreover, there does not exist a real section in Equation~(\ref{eqn:K3normab}) yielding an isomorphism between every $X_\varepsilon$ and $J_{X_\varepsilon}$ over $\mathbb{R}$ for all $\varepsilon \in (-\varepsilon_0, \infty)$.
\end{proposition}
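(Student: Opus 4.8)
The plan is to follow closely the proof of Proposition~\ref{prop:++++}, since the assertions on non-singularity, on the isotrivial degeneration to a Kummer surface, and on the $12\,I_2$ fibers are structurally identical for the two families; the only genuinely new statement is the non-existence of a real section, and that is where I expect the real work to be.

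First I would dispose of the geometric claims. For non-singularity it suffices, by Lemma~\ref{lem:NormFormisK3}, to verify that $[1:\kappa_\varepsilon:\lambda_\varepsilon:\mu_\varepsilon]\notin\mathcal{D}$ for every $\varepsilon\in(-\varepsilon_0,\infty)$: the first two components of $\mathcal{D}$ require a vanishing first coordinate and are avoided since $\rho=1$, while the last two force $\mu=\pm2$ or $\kappa=\pm2$ and are avoided since~\eqref{eqn:spec0constraints} gives $\kappa_\varepsilon=\kappa>2$ and $\mu_\varepsilon=\mu>2$. The discriminant vanishes exactly where $a=0$, $b+2a=0$, or $b-2a=0$, namely at the twelve purely imaginary points $\pm ik_\varepsilon,\pm i/k_\varepsilon,\pm il_\varepsilon,\pm i/l_\varepsilon,\pm im_\varepsilon,\pm i/m_\varepsilon$ of~\eqref{eqn:Defklm0}; exactly as in Proposition~\ref{prop:++++} these are pairwise distinct when $\varepsilon\ne0$, giving $12\,I_2$ fibers, and they coalesce at $\varepsilon=0$, where $\lambda_0=\kappa\mu$ makes the fibration isotrivial by Lemma~\ref{lem:NormFormisK3}. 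Remark~\ref{rem:kummer} then identifies $X_0$ with $\mathrm{Kum}(E_1\times E_2)$, and the displayed $j$-invariants are the standard ones for $E_1\colon y_1^2=u^4+\kappa u^2+1$ and $E_2\colon y_2^2=x^4+\mu x^2+1$.

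The crux is the last assertion. Rather than merely checking that the twelve explicit sections of~\eqref{eqn:sections} fail to be defined over $\mathbb{R}(u,v)$ --- which by itself leaves open the possibility of some other real section --- I would obstruct all real sections at once by showing $X_\varepsilon$ has no real point lying over the real locus of its base. Putting $\omega_2=-1$ and $\omega_{4,1}^2=\omega_{4,2}^2=1$ in~\eqref{eqn:K3normab}, for real $u,v,x,z$ one has
\[
y^2=-\bigl(a\,x^4+b\,x^2z^2+a\,z^4\bigr)=-\Bigl(a\,(x^2+z^2)^2+(b-2a)\,x^2z^2\Bigr).
\]
Under~\eqref{eqn:spec0constraints} the binary quartics $a(u,v)$ and $(b-2a)(u,v)=(\mu-2)\bigl(u^4+\psi\,u^2v^2+v^4\bigr)$, with $\psi=(\lambda-2\kappa)/(\mu-2)>2$, have positive leading coefficients and only purely imaginary zeros, hence are strictly positive for all real $(u,v)\ne(0,0)$; since $(x^2+z^2)^2\ge0$ and $x^2z^2\ge0$, the bracket is strictly positive for $(x,z)\ne(0,0)$, so $y^2<0$ and no real point of $X_\varepsilon$ sits above a real base point.

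To finish, I would observe that a section defined over $\mathbb{R}(u,v)$ is a tuple of real rational functions, hence regular at all but finitely many points of the nonempty real locus $\mathbb{P}^1(\mathbb{R})=\mathbb{RP}^1$; evaluating it at any such point would yield a real point of $X_\varepsilon$ over a real base point, contradicting the previous paragraph. Thus no real section exists for any $\varepsilon\in(-\varepsilon_0,\infty)$, equivalently the order-two twisting class $\nu\in\mathrm{Br}_2(\mathbb{R}(J_{X_\varepsilon}))$ is nontrivial throughout the family. The delicate point --- and the reason the positivity certificate carries the argument --- is that the obstruction must defeat every section, not merely the distinguished ones; the absence of real points over the real base does exactly this, and the step I would write out most carefully is that a real point of a fiber necessarily has real $[x:z]$, so that the positivity of the quartic genuinely forbids it.
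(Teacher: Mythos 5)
Your proposal is correct and follows essentially the same route as the paper: the geometric claims are reduced to the proof of Proposition~\ref{prop:++++}, and the non-existence of a real section is derived from the positivity of $a$ and $b\pm 2a$ on the real locus forcing $y^2<0$, which is exactly the paper's observation that $y$ is purely imaginary over $\bR$. Your write-up merely makes explicit the step the paper leaves implicit, namely that the emptiness of the real locus over $\mathbb{RP}^1$ obstructs every section over $\mathbb{R}(u,v)$, not just the twelve explicit ones.
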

\begin{proof}
	Everything follows exactly the same as the proof of Proposition~\ref{prop:++++} except that there is no real section over $\bR$, but this follows immediately from the fact that $y$ is purely imaginary over $\bR$.
\end{proof}

The elliptic curves in the limit, $E_i$, are now species $0$ elliptic curves. They have same $j$-invariants as we found in the previous case, because species $0$ and $2$ elliptic curves are related to each other by exchanging the real and imaginary line. In fact, for fixed $\kappa,\mu,\lambda$ satsifying Equation~(\ref{eqn:spec2constraints}) if we send $(u,x,y)\to(iu,ix,iy)$ then $(\kappa,\mu,\lambda)\to(-\kappa,-\mu,\lambda)$, which satisfy Equation~(\ref{eqn:spec0constraints}) and sends $\omega_2=1\to\omega_2=-1$. This shows that this case and the $(+,+,+,+)$ case considered in Section~\ref{sec:++++} are the same except for an exhange of the real and imaginary axes.

The string limit of $X_0$ is the type IIB orientifold compactified on $(S^{1,1}\times S^{1,1})_{(+,+,-,-)}$ considered in \cite{MR3267662} and \cite{MR3316647}. There are again four $O7$-planes, bute they now have different charge. Two of them are normal $O7^+$-planes with $(-4)$ units of $D7$-brane charge each, while the other two are $O7^-$-planes with $(+4)$ units of $D7$-brane charge each. We choose the $O7^+$-planes to be located at $ik$ and $\tfrac{i}{k}$, so the $O7^-$ planes are located at $-ik$ and $-\tfrac{i}{k}$. We would expect the $3$ $I_2$ fibers to that merge to form each $I_0^*$ fiber to have the same charge. Therefore for the $12$ $I_2$ fibers of $X_\varepsilon$ $\varepsilon\ne 0$ we would expect $6$ of them to have $+$ charge and $6$ of them to have $-$ charge.

\subsubsection{The $(+,+,+,-)$ family}
\label{sec:+++-}

In this case we set $\omega_{4,1}^2=\omega_{4,2}^2=-1$ and $\omega_2=1$. If we want there to be a real section over $\bR$ then we also must require
\begin{equation}
	\label{eqn:spec1constraints}
	\lambda>\kappa\mu,
\end{equation}
and just to avoid some singularities, let's assume $\kappa\ne \pm\mu$. Now introduce
\begin{equation}
	k=\frac{1}{\sqrt{2}}\sqrt{-\kappa+\sqrt{\kappa^2+4}}\in\bR_{>0}.
\end{equation}
It follows that the roots of $a(u,1)$ are $\{\pm k,\pm\tfrac{i}{k}\}$. However, in this case
\begin{equation}
\label{eqn:chi1}
	\chi=\frac{\lambda+2\kappa i}{\mu+2i}\quad \text{and}\quad \psi=\frac{\lambda-2\kappa i}{\mu-2i},
\end{equation}
so $\psi=\bar\chi$, and we see that the roots of $b(u,1)\pm 2ia(u,1)$ are neither real nor purely imaginary. Let
\begin{equation}
	c=\frac{1}{\sqrt{2}}\sqrt{-\chi+\sqrt{\chi^2+4}}.
\end{equation}
Then $c$ is a root of $b(u,1)+ 2ia(u,1)=0$ and the other three are $-c$, and
\begin{equation}
	\frac{\pm i}{c}=\frac{\pm 1}{\sqrt{2}}\sqrt{-\chi-\sqrt{\chi^2+4}}.
\end{equation}
The roots of $b(u,1)- 2ia(u,1)=0$ are then $\{\pm\bar{c},\pm\tfrac{i}{\bar{c}}\}$. Note that $c\in\bC\backslash(\bR\cup i\bR)$ and that $|c|\ne 1$, so the $12$ roots of the discriminant are all distinct.

We again define the quantity
\begin{equation}
	\varepsilon_1=\lambda-\kappa\mu>0,
\end{equation}
and the family of $\varepsilon$-linear functions
\beq
\label{eqn:family1}
 \lambda_\varepsilon = \kappa\mu + \varepsilon, \qquad \kappa_\varepsilon = \kappa, \qquad \mu_\varepsilon = \mu,
\eeq 
for all $\varepsilon\in[0,\infty)$. Clearly $\lambda_{\varepsilon_1}=\lambda,$ $\kappa_{\varepsilon_1}=\kappa$, and $\mu_{\varepsilon_1}=\mu$. Furthermore, $(\kappa_\varepsilon,\mu_\varepsilon,\lambda_\varepsilon)$ satisfy the conditions in Equation~(\ref{eqn:spec1constraints}) for all $\varepsilon\in[0,\infty)$. Using Equation~(\ref{eqn:chi1}) we obtain the function
\begin{equation}
	\chi_\varepsilon=\kappa+\frac{\varepsilon}{\mu+2i}=\left(\kappa+\frac{\mu}{\mu^2+4}\varepsilon\right)-\frac{2i}{\mu^2+4}\varepsilon,
\end{equation}
and similarly $\psi_\varepsilon=\bar{\chi}_\varepsilon$. 

The roots of $a'(x,1)=0$ are given by $\{\pm k',\pm\tfrac{i}{k'}\}$ where
\begin{equation}
	k'=\frac{1}{\sqrt{2}}\sqrt{-\mu+\sqrt{\mu^2+4}}\in\bR_{>0},
\end{equation}
the roots of $b'(x,1)+ 2ia'(x,1)=0$ are given by $\{\pm c',\pm\tfrac{i}{c'}\}$, and the roots of $b'(x,1)- 2ia'(x,1)=0$ are given $\{\pm\bar{c}',\pm\tfrac{i}{\bar{c}'}\}$, where
\begin{equation}
	c'=\frac{1}{\sqrt{2}}\sqrt{-\chi'+\sqrt{\chi'^2+4}},\qquad \chi'=\frac{\lambda+2\mu i}{\kappa+2i}.
\end{equation}
Again, due to symmetry $c'\in\bC\backslash(\bR\cup i\bR)$ and that $|c'|\ne 1$. We now have the following

\begin{proposition}
	\label{prop:+++-}
	Let $\kappa,\mu,\lambda\in\bR$ such that $[1:\kappa:\lambda:\mu]\notin \mathcal{D}$ from Lemma~\ref{lem:NormFormisK3}, $\lambda-\kappa\mu>0$, and $\mu\neq \pm\kappa$. Assume $\omega_2=1$ and $\omega_{4,1}^2=\omega_{4,2}^2=-1$ and define real valued functions $\kappa_\varepsilon,\mu_\varepsilon,\lambda_\varepsilon\in\bR$ over $\varepsilon\in[0,\infty)$ as in Equation~(\ref{eqn:family1}). Then Equation~(\ref{eqn:K3normab}) defines a family of real elliptically fibered K3 surfaces $\pi_{X_\varepsilon} \colon X_\varepsilon \to \mathbb{P}(u, v)$ whose limit for $\varepsilon=0$ is $\mathrm{Kum}(E_1 \times E_2)$ with
		\beq
			j (E_1) = - \frac{(\kappa^2-12)^3}{108 (\kappa^2+4)^2}, \quad
			j (E_2) = - \frac{(\mu^2-12)^3}{108 (\mu^2+4)^2}.
		\eeq
$X_\varepsilon$ is non-singular for all $\varepsilon\in [0,\infty)$ and has $12$ $I_2$ fibers when $\varepsilon\ne 0$. Moreover, there exists one set of four real sections in Equation~(\ref{eqn:K3normab}) yielding an isomorphism between every $X_\varepsilon$ and $J_{X_\varepsilon}$ over $\mathbb{R}$ for all $\varepsilon \in [0, \infty)$.
\end{proposition}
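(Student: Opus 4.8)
The plan is to mirror the proof of Proposition~\ref{prop:++++}, establishing the four assertions (K3/non-singularity, the isotrivial Kummer limit with its $j$-invariants, the twelve $I_2$ fibers, and the real sections) in turn, while keeping track of the fact that now $\omega_{4,1}^2=\omega_{4,2}^2=-1$, so that the relevant quartics no longer have purely real root sets. Taking $\omega_{4,2}=i$ without loss of generality, the discriminant of $\pi_{X_\varepsilon}$ is proportional to $a(u,v)^2(b(u,v)+2ia(u,v))^2(b(u,v)-2ia(u,v))^2$, whose zeros are the three quadruples $\{\pm k,\pm i/k\}$, $\{\pm c_\varepsilon,\pm i/c_\varepsilon\}$ and $\{\pm\bar c_\varepsilon,\pm i/\bar c_\varepsilon\}$ already exhibited in this subsection; here $k$ is independent of $\varepsilon$ (since $\kappa_\varepsilon=\kappa$), while $c_\varepsilon$ is governed by $\chi_\varepsilon=\kappa+\varepsilon/(\mu+2i)$.

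First I would prove the twelve roots are pairwise distinct for every $\varepsilon>0$. Because $\operatorname{Im}\chi_\varepsilon=-2\varepsilon/(\mu^2+4)<0$, the number $\chi_\varepsilon$ is non-real; then the quadratic $w^2+\chi_\varepsilon w-1=0$, of which $c_\varepsilon^2$ is the distinguished root, can have no real root, since a real root $w$ would force $\operatorname{Im}(\chi_\varepsilon)\,w=0$ and hence $w=0$, contradicting the constant term $-1$. Thus $c_\varepsilon^2\notin\bR$, i.e.\ $c_\varepsilon\notin\bR\cup i\bR$, which separates the $c_\varepsilon$- and $\bar c_\varepsilon$-quadruples from the on-axis set $\{\pm k,\pm i/k\}$ and from each other; internal distinctness of a quadruple then requires only $c_\varepsilon^2\neq\pm i$. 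This gives twelve fibers of type $I_2$, and the Euler-characteristic count $12\cdot 2=24$ certifies that $X_\varepsilon$ is K3 (equivalently $[1:\kappa:\lambda_\varepsilon:\mu]\notin\mathcal D$ in Lemma~\ref{lem:NormFormisK3}). For the endpoint $\varepsilon=0$ I would argue separately: there $\lambda_0=\kappa\mu$, so by Lemma~\ref{lem:NormFormisK3} the fibration is isotrivial and by Remark~\ref{rem:kummer} the surface is $\mathrm{Kum}(E_1\times E_2)$ with $E_1\colon y_1^2=u^4+\kappa u^2-1$ and $E_2\colon y_2^2=x^4+\mu x^2-1$; these are smooth genus-one curves since the resolvent discriminants are governed by $\kappa^2+4>0$ and $\mu^2+4>0$, so $X_0$ is K3 and the three quadruples have merged ($c_0=\bar c_0=k$) into four $I_0^*$ fibers. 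The stated values $j(E_i)=-(x_i^2-12)^3/(108(x_i^2+4)^2)$ with $x_1=\kappa$, $x_2=\mu$ are then the standard $j$-invariant of $y^2=x^4+x_ix^2-1$, a direct computation (obtained from the $\omega_{4,1}^2=1$ formula of Proposition~\ref{prop:++++} under $x_i^2\mapsto-x_i^2$).

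For the real sections I would use Section~\ref{sec:RealSection}. Among the twelve sections of \eqref{eqn:sections}, the families $j=2,3$ are never real here: their base points are roots of $b'\mp2ia'$ and therefore lie in $\bC\setminus(\bR\cup i\bR)$, while $y_{i,2},y_{i,3}$ carry the non-real factor $u^2\pm\omega_{4,1}v^2$. Only the family $j=1$ survives, governed by $a'(x,1)=x^4+\mu x^2-1$, which has exactly the two real roots $\pm k'$; since $b'(\pm k',1)=(\lambda-\kappa\mu)(k')^2=\varepsilon\,(k')^2$ and $\omega_2=1$, the hypothesis $\operatorname{sign}\omega_2=\operatorname{sign}(\lambda-\kappa\mu)$ of Corollary~\ref{cor1} holds for $\varepsilon>0$, and Lemma~\ref{lem:real_section} produces exactly one set of four real sections (two base points $\pm k'$ times two signs of $y$), which yields $X_\varepsilon\cong J_{X_\varepsilon}$ over $\bR$. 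At $\varepsilon=0$ these degenerate to the real $2$-torsion sections $(\pm k',0,1)$, which still provide the isomorphism over $\bR$.

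The hard part will be the uniform control of $c_\varepsilon$ as $\varepsilon$ ranges over $(0,\infty)$: one must ensure the two moving off-axis quadruples never collide with each other, with the fixed on-axis quadruple, or internally, for every $\varepsilon>0$ rather than merely generically. The sign of $\operatorname{Im}\chi_\varepsilon$ disposes of the axis collisions cleanly, but the internal condition $c_\varepsilon^2\neq\pm i$ (equivalently $|c_\varepsilon|\neq1$, i.e.\ $\chi_\varepsilon\notin i\bR$) is not settled by that sign argument at the isolated value of $\varepsilon$ with $\operatorname{Re}\chi_\varepsilon=0$ (which occurs in $(0,\infty)$ precisely when $\kappa$ and $\mu$ have opposite signs); there a short explicit check, solving $w^2+\chi_\varepsilon w-1=0$ with $\chi_\varepsilon$ purely imaginary, is needed to confirm $c_\varepsilon^2\neq\pm i$, and this is the only step that genuinely resists a purely qualitative argument.
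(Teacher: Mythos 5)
Your proposal is correct and follows essentially the same route as the paper: mirror the proof of Proposition~\ref{prop:++++}, identify the $\varepsilon=0$ limit as $\mathrm{Kum}(E_1\times E_2)$ with $E_i\colon y_i^2=x^4+x_i\,x^2-1$ (whence the stated $j$-invariants), and obtain the single set of four real sections from the two real roots $\pm k'$ of $a'(x,1)$ via Lemma~\ref{lem:real_section} and Corollary~\ref{cor1}; your observation that $b'(\pm k',1)=\varepsilon\,(k')^2$ is exactly the computation underlying the paper's appeal to Corollary~\ref{cor1}, and your root analysis is in fact more careful than the paper's, which simply asserts $|c|\neq 1$ in the setup. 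The one step you leave open closes immediately: since $w=c_\varepsilon^2$ satisfies $w^2+\chi_\varepsilon w-1=0$, one has $c_\varepsilon^2=\pm i$ if and only if $\chi_\varepsilon=\mp 2i$; for $\varepsilon>0$ the condition $\operatorname{Im}\chi_\varepsilon=-2\varepsilon/(\mu^2+4)<0$ rules out $\chi_\varepsilon=+2i$, while $\chi_\varepsilon=-2i$ would force $\varepsilon=\mu^2+4$ and $\operatorname{Re}\chi_\varepsilon=\kappa+\mu=0$, which is excluded by the hypothesis $\mu\neq\pm\kappa$. So the internal distinctness holds for every $\varepsilon>0$, and this is precisely where that hypothesis is used.
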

\begin{proof}
	For all $\varepsilon>0$ $\lambda_\varepsilon-\kappa_\varepsilon\mu_\varepsilon>0$. The proof follows the same form of the proof of Propsition~\ref{prop:++++} except for the existence of a real section, but the $\varepsilon\to 0$ limiting K3 is isomorphic to the Kummer surface $\Kum{(E_1\times E_2)}$ where the elliptic curves are now given by
	\beq
\begin{split}
 E_1\colon \  y_1^2 = u^4 + \kappa  u^2v^2 -v^4, \quad   E_2\colon \  y_2^2 = x^4 + \mu  x^2z^2 -z^4.
\end{split}
\eeq

The existence of a real section over $\bR$ follows immediately from Corollary~\ref{cor1}. We see from Equation~(\ref{eqn:sections}) that the only real sections are the four possibilities $(x,y,z) = (\pm k', \pm\sqrt{b'(\pm k',1)}uv,1 )$.
\end{proof}

Now $j(E_i)\leq 1$, since
\begin{equation}
	j(E_i)-1=-\frac{x_i^2(x_i^2+36)^2}{108(b^2+4)}\leq 0
\end{equation}
where $x_1=\kappa$ and $x_2=\mu$. $E_i$ are now both species $1$ real elliptic curves. The string limit of $X_0$ gives the type IIB orientifold compactified on $(S^{1,1}\times S^{1,1})_{(+,+,+,-)}$ that was described in \cite{MR3267662} and \cite{MR3316647}. The $O7$-planes at $\pm k$ have charge $+$ (which corresponds to $(-4)$ units of $D7$-brane charge) and the $O7$-planes at $\pm\tfrac{i}{k}$ have opposite signs. Again we'd expect the the three $I_2$ fibers that merged to form each $I_0^*$ fiber in the $\varepsilon=0$ limit to have the same charge as the $O$-plane represented by the $I_0^*$ fiber. This means that the $I_2$ fibers located at $c$ and $\bar{c}$ both have charge $+$, which is very different than in the previous case, the $(+,+,-,-)$ case, where the $I_2$ fibers at $c=i k$ and $\bar{c}=-ik$ had opposite charge. In the current case $c+\bar{c}$ is no longer trivial and it is tempting to use this to explain the difference. We might also want to look at $K_c\pm K_{-c}$, where $K_c$ is the singular fiber at $c$, in an attempt to construct the invariant lattice under $u\to\bar{u}$, similar to Section~\ref{sec:lattice}. However, the charge choices appear more directly determined by the number of real sections over $\bR$. For the $(+,+,+,+)$ case there were $12$ pairs of real sections given by
\begin{equation}
	(x,y,z) = (\pm k', \pm\sqrt{b'(\pm k',1)}uv,1 ),\quad\text{or} \quad (\pm \frac{1}{k'}, \pm\sqrt{b'(\pm \frac{1}{k'},1)}uv,1 ),
\end{equation}
for the $(+,+,-,-)$ there are no real sections, and for this case there $2$ pairs of real sections
\begin{equation}
	(x,y,z) = (\sigma_{i,1}, \pm\sqrt{b'(\sigma_{i,1},1)}uv,1 ),
\end{equation}
for $\sigma_{i,1}=\pm k'$.

To make this clearer, let us look at one more case where there is always exactly one set of $4$ real sections and we break some of the symmetry between the base and fiber to make their physical roles more clear.

\subsubsection{Another $(+,+,+,-)$ family}
\label{sec:+++-2}

In the previous three cases we had symmetry in the form of the roots of the base and fiber. This is equivalent to what was done in \cite{MR3267662} and \cite{MR3316647}, where the only cases considered had the complex structure and complexified K\"ahler structure define the same elliptic curve. Using the extra data of F-theory, there is no longer a need to maintain that symmetry. To better understand the physical roles of the base and fiber let us now break that symmetry. 

Before going into the case at hand let us note some physical implication of choosing $\omega_{4,2}^2$ to be $-1$ versus $1$.
\begin{lemma}
	Let $\omega_{4,2}^2=-1$. Then smooth elliptic fibers for the K3 surface $\pi_X\colon X \to \mathbb{P}(u, v)$ in Equation~\eqref{eqn:K3normab} are species $1$ real elliptic curves over the reals.
\end{lemma}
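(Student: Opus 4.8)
The plan is to read off the genus-$1$ fiber of $\pi_X$ over a real point of the base, to reduce the reality analysis of its four branch points to a single quadratic, and then to match the resulting branch configuration to the species classification. First I would fix a real point $[u:v]\in\mathbb{P}(u,v)$, so that $u,v\in\mathbb{R}$; since $\kappa,\lambda,\mu\in\mathbb{R}$ and $\omega_{4,1}^2\in\{\pm1\}$ is real, both $a:=a(u,v)$ and $b:=b(u,v)$ are real. Specializing Equation~\eqref{eqn:K3normab} with $\omega_{4,2}^2=-1$, the fiber is the real genus-$1$ curve $y^2=\omega_2\bigl(a\,x^4+b\,x^2z^2-a\,z^4\bigr)$, whose four branch points in $\mathbb{P}(x,z)$ are the roots of the affine quartic $f(x)=\omega_2 a\,x^4+\omega_2 b\,x^2-\omega_2 a$ obtained by setting $z=1$.

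The key step is the substitution $w=x^2$, which turns the branch equation into the quadratic $\omega_2 a\,w^2+\omega_2 b\,w-\omega_2 a=0$. Its discriminant $b^2+4a^2$ is strictly positive whenever $a\neq0$, and by Vieta's formula the product of its two roots is $w_+w_-=-\omega_2 a/(\omega_2 a)=-1<0$; hence $w_+>0>w_-$. The positive root yields the two distinct \emph{real} branch points $\pm\sqrt{w_+}$, while the negative root yields the complex-conjugate pair $\pm i\sqrt{-w_-}$. In particular the four branch points are distinct exactly when $a(u,v)\neq0$, so the smooth fibers over real points are precisely those with $a(u,v)\neq0$ (the locus $a(u,v)=0$ giving the reducible/singular fibers), and every such fiber carries exactly two real branch points together with one conjugate pair.

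It then remains to convert this branch configuration into the species label. Under the real structure restricting to the usual involution $[x:z]\mapsto[\bar x:\bar z]$ on the fiber, the real locus double-covers the set $\{f\ge0\}$ inside the real points of $\mathbb{P}(x,z)$. With exactly two real branch points this set is a single closed arc --- equal to $[r_1,r_2]$ if $\omega_2 a<0$, and to $(-\infty,r_1]\cup[r_2,\infty)$ closed up through $\infty$ if $\omega_2 a>0$ --- so the real locus is a single circle and is in particular nonempty. By the description of real elliptic curves used in \cite{MR3316647,MR2137825}, a single-circle real locus with the usual base involution is exactly the species-$1$ case, as opposed to the empty locus of species $0$ or the two circles of species $2$ occurring in the neighbouring families, which gives the claim.

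The step I expect to be the main obstacle is this final translation, namely pinning down the paper's species conventions precisely enough to be certain that the configuration ``two real branch points and one conjugate pair'' is labelled species $1$ and not conflated with an adjacent case. Concretely, I must verify that the antiholomorphic involution on $X$ under consideration restricts on the fiber to the usual conjugation $[x:z]\mapsto[\bar x:\bar z]$ (the species-$1$ base structure) rather than the spin structure $[x:z]\mapsto[-\bar z:\bar x]$, and I would cross-check the outcome against the computation $j(E_i)\le1$ recorded for the $(+,+,+,-)$ families, which characterizes species $1$ in the isotrivial limit. Once these conventions are fixed, the reality count above settles the lemma.
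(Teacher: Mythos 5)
Your proposal is correct and follows essentially the same route as the paper: over a real base point one reduces the quartic branch locus to a quadratic in $x^2$ whose roots have product $-1$, so exactly two of the four branch points $\pm k,\pm i/k$ are real, which is the species-$1$ configuration. Your version is somewhat more careful than the paper's (which simply exhibits the roots and concludes), in that you also note the positivity of the discriminant, the smoothness condition $a(u,v)\neq 0$, and the translation from ``two real branch points'' to ``one real oval,'' but no new idea is involved.
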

\begin{proof}
	The roots of $y^2=0$ for a general fiber over the reals are $\tfrac{x}{z}=\pm k, \pm\tfrac{i}{k}$ where
	\begin{equation}
		k=\frac{1}{\sqrt{2}}\sqrt{-\frac{b(u,v)}{a(u,v)}+\sqrt{\left(\frac{b(u,v)}{a(u,v)}\right)^2+4}}\in\bR_{>0},
	\end{equation}
	since$\tfrac{b(u,v)}{a(u,v)}\in\bR$ over the reals.
	The general fiber over the reals has exactly two real roots, so must be a species $1$ real elliptic curve.
\end{proof}
All species $1$ real elliptic curves are equivalent to elliptic curves with complex structure $\tau$ such that
\begin{equation}
	\Re{(\tau)}=\frac{1}{2}.
\end{equation}
Furthermore, in the string limit (if it exists) the complex structure of the $F$-theory fiber corresponds to 
\begin{equation}
\label{eqn:comKahph}
	B+ig_s(u,v),
\end{equation}
where $g_s(u,v)$ is the string coupling. Since the real part of $\tau$ is constant over the reals we immediately get
\begin{corollary}
	When $\omega_{4,2}^2=-1$ the string limit has nontrivial $B$-field, $B=\frac{1}{2}$.
\end{corollary}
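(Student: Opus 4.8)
The plan is to chain the preceding Lemma together with the standard F-theory dictionary for the axio-dilaton, so the argument is a short bookkeeping of established facts rather than a computation. First I would invoke the preceding Lemma, which establishes that for $\omega_{4,2}^2=-1$ every smooth fiber of $\pi_X\colon X\to\bP(u,v)$ over a real point of the base is a species $1$ real elliptic curve; this is the only geometric input required.

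Next I would recall the structure of real elliptic curves: a species $1$ curve (the connected-real-locus case, detected in the Lemma's proof by the quartic $y^2=0$ having exactly two real roots) is isomorphic over $\bR$ to $\bC/(\bZ+\tau\bZ)$ with modular parameter satisfying $\Re(\tau)=\tfrac{1}{2}$. The essential point is that this real part is a discrete invariant of the species and is therefore constant as the fiber moves along the real line; only $\Im(\tau)$, which governs the shape of the fiber, is free to vary.

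Finally I would apply the F-theory identification of the fiber's complex structure with the complexified type IIB coupling: in the string limit the modular parameter of the $F$-theory fiber over $[u:v]$ equals $\tau=B+i\,g_s(u,v)$, with $g_s$ the real positive string coupling and $B$ the component of the NS--NS two-form along the fiber. Equating real parts and using that $\Re(\tau)\equiv\tfrac{1}{2}$ over the real line immediately yields $B=\tfrac{1}{2}$, a topologically nontrivial $B$-field, which is the asserted corollary.

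I expect no serious mathematical obstacle here, since the statement is essentially the conjunction of the preceding Lemma with a well-known dictionary. The only step warranting care---and the one I would flag---is the physics input $\tau=B+i\,g_s$: this is the F-theory identification, valid only in the decompactification/string limit, so one must assume that the limit exists (as the statement does, parenthetically) in order for the identification of the constant $\Re(\tau)$ with the $B$-field to carry content.
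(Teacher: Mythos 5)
Your proposal matches the paper's argument exactly: it invokes the preceding lemma that the smooth fibers over real points are species $1$ real elliptic curves, uses that such curves have $\Re(\tau)=\tfrac{1}{2}$, and applies the F-theory identification $\tau=B+ig_s(u,v)$ to conclude $B=\tfrac{1}{2}$. Your caveat about the string limit needing to exist is also the same qualification the paper makes parenthetically.
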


To differentiate this from the case $\omega_{4,2}^2=1$ we note the following.
\begin{lemma}
	Let $\omega_{4,2}^2=1$. Then smooth elliptic fibers for the K3 surface $\pi_X\colon X \to \mathbb{P}(u, v)$ in Equation~\eqref{eqn:K3normab} are species $0$ or $2$ real elliptic curves over the reals.
\end{lemma}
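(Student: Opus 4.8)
The plan is to mirror the proof of the preceding lemma and read off the species of a smooth fiber from the number of its real branch points, using the same correspondence employed there: a real genus-$1$ curve realized as a double cover of $\bP^1$ has exactly two real branch points if and only if it is species $1$, while $0$ or $4$ real branch points give species $0$ or $2$. Thus it suffices to show that, over a real base point, the fiber can never have exactly two real branch points. Fix $[u:v]$ on the real line of $\mathbb{P}(u,v)$; since $\kappa,\lambda,\mu\in\bR$ and $\omega_{4,1}^2\in\{\pm1\}$, the coefficients $a=a(u,v)$ and $b=b(u,v)$ are real, and with $\omega_{4,2}^2=1$ the fiber of $\pi_X$ in the chart $z=1$ is
\[
y^2=\omega_2\bigl(a\,x^4+b\,x^2+a\bigr),
\]
a double cover of $\bP^1$ branched at the four roots of an \emph{even} quartic. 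Smoothness of the fiber gives $a\neq0$ and $b^2\neq4a^2$.

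The key observation is that, setting $w=x^2$, the branch points satisfy $a\,w^2+b\,w+a=0$, so the two values $w_1,w_2$ have product $w_1w_2=a/a=1$; this is exactly where the hypothesis $\omega_{4,2}^2=1$ enters, since for $\omega_{4,2}^2=-1$ the product would instead be $-1$, which is what produced the single real pair, and hence species $1$, in the previous lemma. I would then split on the sign of $b^2-4a^2$. If $b^2-4a^2>0$, then $w_1,w_2$ are real with $w_1w_2=1>0$, hence of the same sign; since the quartic is even its roots are $\{\pm\sqrt{w_1},\pm\sqrt{w_2}\}$, so they are all real (four real branch points, species $2$) when $w_1,w_2>0$ and all purely imaginary (no real branch points) when $w_1,w_2<0$. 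If $b^2-4a^2<0$, then $w_1,w_2$ form a complex-conjugate pair with $|w_1|^2=w_1w_2=1$, i.e.\ each lies on the unit circle away from $\pm1$, so $x=\pm\sqrt{w_j}$ is neither real nor purely imaginary and there are again no real branch points. In every case the number of real branch points is $0$ or $4$, never $2$, which rules out species $1$ and proves the lemma.

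The only delicate point I anticipate is the branch-count/species dictionary in the zero-real-root case: a double cover whose even quartic has constant sign on the real $x$-line is species $0$ when $\omega_2\bigl(a\,x^4+b\,x^2+a\bigr)<0$ there (empty real locus) and species $2$ when it is positive (two ovals), so the finer split between $0$ and $2$ is governed by the sign of $\omega_2 a$ rather than by the root count alone. For the statement as phrased this is immaterial, since both possibilities lie in $\{0,2\}$; I would note it in passing and leave the precise determination to the explicit families of Section~\ref{sec:real}, where the constraints on $(\kappa,\lambda,\mu)$ pin down the sign. The entire computational content is therefore the single identity $w_1w_2=1$ together with the evenness of the quartic, which force the four branch points to be stable as a set under $x\mapsto -x$ and under $x\mapsto 1/x$, and this symmetry is precisely what obstructs a splitting into one real and one non-real conjugate pair.
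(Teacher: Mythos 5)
Your proof is correct and follows essentially the same route as the paper: the paper's proof also reduces to the observation that the even, palindromic quartic $x^4+\tfrac{b}{a}x^2+1$ has four real roots when $\tfrac{b}{a}<-2$ and none otherwise, which is exactly your $w_1w_2=1$ computation spelled out, and then invokes the same branch-count/species dictionary. Your closing remark about the $0$ versus $2$ ambiguity being governed by the sign of $\omega_2 a$ is a fair elaboration of a point the paper leaves implicit, but it does not change the argument.
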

\begin{proof}
	The general fiber over the reals 
	\begin{equation}
		y^2=a(u,v)\left(x^4+\frac{b(u,v)}{a(u,v)}x^2z^2+z^4\right)=0
	\end{equation}
	has $4$ real roots if $\tfrac{b(u,v)}{a(u,v)}<-2$ and $0$ otherwise, so must be a species $0$ or $2$ real elliptic curve.
\end{proof}
All species $0$ and $2$ real elliptic curves are equivalent to elliptic curves with complex structure $\tau$ purely imaginary. Since the real part of $\tau$ is constant over the reals we immediately get the following from Equation~(\ref{eqn:comKahph}):
\begin{corollary}
	When $\omega_{4,2}^2=1$ the string limit has trivial $B$-field, $B=0$.
\end{corollary}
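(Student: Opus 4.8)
The plan is to reproduce, \emph{mutatis mutandis}, the argument that established the analogous $B=\tfrac12$ statement in the $\omega_{4,2}^2=-1$ case, the only change being that the species classification of the real fibers is now different. The entire corollary rests on combining the preceding lemma with the F-theory dictionary recorded in Equation~\eqref{eqn:comKahph}.

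First I would invoke the preceding lemma, which says that for $\omega_{4,2}^2=1$ each smooth fiber of $\pi_X\colon X\to\mathbb{P}(u,v)$ over a real point of the base is a real elliptic curve of species $0$ or $2$. The next step is to translate this into a statement about the fiber's complex-structure modulus: a real elliptic curve has species $0$ or $2$ exactly when its modulus can be normalized to be purely imaginary, so that $\Re(\tau)=0$ (whereas species $1$ forces $\Re(\tau)=\tfrac12$). Because the curve is defined over $\mathbb{R}$ with real $j$-invariant all along the real locus of $\mathbb{P}(u,v)$, this normalization is uniform in $(u,v)$, so $\Re(\tau)$ is the constant $0$ along the real line. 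Finally, I would feed this into Equation~\eqref{eqn:comKahph}, which identifies the fiber modulus in the string limit with the axio-dilaton $\tau=B+i\,g_s(u,v)$; taking real parts yields $B=\Re(\tau)=0$.

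The argument is essentially immediate and involves no computation, so there is no serious obstacle. The only point demanding attention is consistency of conventions: one must check that the species-to-$\Re(\tau)$ assignment used here is the very normalization already fixed when Equation~\eqref{eqn:comKahph} was used to derive the $\omega_{4,2}^2=-1$ corollary. Since that normalization sends species $1$ to $\Re(\tau)=\tfrac12$ and species $0,2$ to $\Re(\tau)=0$, the two corollaries are consistent and the conclusion $B=0$ follows.
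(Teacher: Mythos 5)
Your proposal is correct and follows essentially the same route as the paper: invoke the preceding lemma that the real fibers are species $0$ or $2$, note that such curves have purely imaginary modulus so $\Re(\tau)=0$ uniformly along the real locus, and read off $B=0$ from Equation~\eqref{eqn:comKahph}. No further comment is needed.
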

For $D=8$ type IIB orientifolds, $B=0,\tfrac{1}{2}$ are the only possibilities. These results match the three cases we looked at above since $(S^{1,1}\times S^{1,1})_(+,+,+,+)$ and $(S^{1,1}\times S^{1,1})_{(+,+,-,-)}$ both have trivial $B$-fields whereas $(S^{1,1}\times S^{1,1})_{(+,+,+,-)}$ has a non-trivial $B$-field.

Now let us consider a specific case to further understand the role of the base versus the fiber. Set $\omega_2=\omega_{4,1}^2=1$, $\omega_{4,2}^2=-1$,
\begin{equation}
\label{eqn:Bconstraints}
\begin{split}
	\kappa, \ \mu,  \ne  \pm 2 , \quad \text{and}\quad \lambda > \kappa\mu.
	\end{split}
\end{equation}
The roots of $a(u,1)=0$ and $\{\pm x,\pm \tfrac{1}{x}\}$ where $x\in \bR^*$ if $\kappa<-2$, $x=e^{i\theta}$ with $\theta\ne \tfrac{n\pi}{2}$ if $-2<\kappa<2$, and $x\in i\bR^*$ if $\kappa>2$. Furthermore the roots $b(u,1)+2ia(u,1)=0 $ are given by $\{\pm c,\pm\tfrac{1}{c}\}$, and the roots of $b(u,1)- 2ia(u,1)=0$ are given $\{\pm\bar{c},\pm\tfrac{1}{\bar{c}}\}$, where
\begin{equation}
	c=\frac{1}{\sqrt{2}}\sqrt{-\chi+\sqrt{\chi^2-4}},\qquad \chi=\frac{\lambda+2\kappa i}{\mu+2i}.
\end{equation}
Note that since $\kappa,\mu\ne \pm 2$, $c\in\bC\backslash(\bR\cup i\bR)$ for all $\lambda\ne  \kappa\mu$. Therefore $|c|\ne 1$ for all $\lambda\ne \kappa\mu$. This shows that all $12$ $I_2$ fibers will be at distinct locations for all $\lambda\in \bR$ except when the fibration is isotrivial. We again define
\begin{equation}
	\chi'= \frac{\lambda + 2\mu}{\kappa+2}, \ \psi'= \frac{\lambda - 2\mu}{\kappa-2} \in \bR ,
\end{equation} 
and introduce $k', l', m'\in\bR_{>0}$ with
\begin{equation}
\label{eqn:Defklm'}
	\begin{split}
	k'&=\frac{1}{\sqrt{2}}\sqrt{-\mu+\sqrt{\mu^2+4}}	,\quad l'=\frac{1}{\sqrt{2}}\sqrt{-\chi'+\sqrt{\chi'^2+4}},\\
	&\qquad\text{and}\quad  m'=\frac{1}{\sqrt{2}}\sqrt{-\psi'+\sqrt{\psi'^2+4}}	.
	\end{split}
\end{equation}
We see that  
\beq
\begin{split}
 \mu = - \left( k'^2 - \frac{1}{k'^2} \right) , \quad \chi'   =  - \left(l'^2 - \frac{1}{l'^2}\right) , \quad \text{and} \quad \psi' =  - \left(m'^2 - \frac{1}{m'^2}\right) .
\end{split}
\eeq 
It follows that the roots of $a'(x,1)=0$ are given by $\{ \pm k', \pm \tfrac{i}{k'} \}$, the roots of $b'(u,1)+2 a'(u,1)=0$ are given by $\{ \pm l', \pm \tfrac{i}{l'} \}$, and the roots $b'(u,1)-2 a'(u,1)=0$ are given by $\{ \pm m', \pm \tfrac{i}{m'} \}$. We now have the following:
\begin{proposition}
\label{prop:+++-2}
		Let $\kappa,\mu,\lambda\in\bR$ such that $\kappa,\mu\ne \pm 2$, $\lambda>\kappa\mu$, $\omega_2=\omega_{4,1}^2=1$ and $=\omega_{4,2}^2=-1$. Define the real valued functions $\kappa_\varepsilon,\mu_\varepsilon,\lambda_\varepsilon\in\bR$ as in Equation~(\ref{eqn:family}) for all $\varepsilon\in[0,\infty)$. Then Equation~(\ref{eqn:K3normab}) defines a family of real elliptically fibered K3 surfaces $\pi_{X_\varepsilon} \colon X_\varepsilon \to \mathbb{P}(u, v)$ whose limit for $\varepsilon=0$ is $\mathrm{Kum}(E_1 \times E_2)$ with
		\beq
j (E_1) =  \frac{(12+\kappa^2)^3}{108 (4-\kappa^2)^2}, \quad
j (E_2) = -\frac{(\mu^2-12)^3}{108 (4+\mu^2)^2}.
\eeq
$X_\varepsilon$ is non-singular for all $\varepsilon\in [0,\infty)$ and has $12$ $I_2$ fibers when $\varepsilon\ne 0$. Moreover, there is a real sections in Equation~(\ref{eqn:K3normab}) yielding an isomorphism between every $X_\varepsilon$ and $J_{X_\varepsilon}$ over $\mathbb{R}$ for all $\varepsilon \in [0,\infty)$.
\end{proposition}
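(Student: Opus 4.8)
The plan is to follow the four-part template of the proof of Proposition~\ref{prop:++++} (non-singularity, fiber count, identification of the isotrivial limit, existence of a real section), adapting each ingredient to the present choice $\omega_2=\omega_{4,1}^2=1$, $\omega_{4,2}^2=-1$. For non-singularity I would invoke Lemma~\ref{lem:NormFormisK3}: since $\rho=1$ and the hypotheses force $\kappa,\mu\neq\pm2$, the point $[1:\kappa_\varepsilon:\lambda_\varepsilon:\mu_\varepsilon]$ avoids every component of $\mathcal{D}$ — the first two components require $\rho=0$, the third requires $\mu=\pm2$, and the fourth requires $\kappa=\pm2$ — so $X_\varepsilon$ is a smooth K3 for all $\varepsilon\in[0,\infty)$, independently of $\lambda$.

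For the singular-fiber count I would use the discriminant factorization $a^2\,(a+\omega_{4,2}b)^2\,(a-\omega_{4,2}b)^2$ from the start of the subsection. With $\omega_{4,2}^2=-1$ its three quartic factors have the roots already computed in the body: $a(u,1)$ vanishes at $\{\pm x,\pm x^{-1}\}$, while $b\pm2ia$ vanish at $\{\pm c,\pm c^{-1}\}$ and $\{\pm\bar c,\pm\bar c^{-1}\}$. The key point is that $\kappa,\mu\neq\pm2$ forces $c\in\bC\setminus(\bR\cup i\bR)$ and $|c|\neq1$ whenever $\lambda\neq\kappa\mu$, so for $\varepsilon>0$ these twelve roots are pairwise distinct; combined with the generic $I_2$ structure of this fibration (Lemma~\ref{lem:fibrations}), this gives exactly twelve $I_2$ fibers, which coalesce at $\varepsilon=0$.

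For the isotrivial limit I would apply Lemma~\ref{lem:NormFormisK3} (isotriviality is exactly $\lambda-\kappa\mu=0$, i.e.\ $\varepsilon=0$) together with Remark~\ref{rem:kummer}, so that $X_0\cong\mathrm{Kum}(E_1\times E_2)$ with $E_1\colon y_1^2=u^4+\kappa u^2v^2+v^4$ and, because $\omega_{4,2}^2=-1$, $E_2\colon y_2^2=x^4+\mu x^2z^2-z^4$. Substituting $q=1$ and $q=-1$ into the $j$-invariant $j=(p^2+12q)^3/\bigl(108\,q\,(p^2-4q)^2\bigr)$ of the symmetric quartic $y^2=x^4+px^2+q$ yields the two displayed expressions; here $E_1$ is species~$2$ and $E_2$ species~$1$, consistent with the $(+,+,+,-)$ label.

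Finally, for the real section I would appeal to Corollary~\ref{cor1}: here $\omega_{4,2}^2=-1$ and $\sign(\omega_2)=\sign(\lambda-\kappa\mu)=+1$ for $\varepsilon>0$, producing the real section $(x,y,z)=(\pm k',\pm\sqrt{b'(\pm k',1)}\,uv,1)$ of Equation~\eqref{eqn:sections} and hence an isomorphism $X_\varepsilon\cong J_{X_\varepsilon}$ over $\bR$. The main obstacle I anticipate is the boundary value $\varepsilon=0$, where the sign hypothesis of Corollary~\ref{cor1} degenerates ($\lambda-\kappa\mu=0$) and the amplitude $b'(\pm k',1)=(\lambda-\kappa\mu)k'^2=\varepsilon\,k'^2$ vanishes; I would resolve this by observing that the section does not disappear but limits to the genuine real $2$-torsion section $(x,y,z)=(\pm k',0,1)$ (equivalently, that the isotrivial Kummer surface $X_0$ is itself Jacobian with a real zero-section, $k'\in\bR_{>0}$ holding for every $\mu$ since $\omega_{4,2}^2=-1$ makes $a'(x,1)=x^4+\mu x^2-1$ always have two real roots), so the isomorphism with $J_{X_\varepsilon}$ persists on the closed interval $[0,\infty)$.
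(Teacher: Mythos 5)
Your proposal is correct and follows essentially the same route as the paper: non-singularity via Lemma~\ref{lem:NormFormisK3}, the twelve distinct discriminant roots from the preceding discussion, the isotrivial limit via Remark~\ref{rem:kummer}, and the real section via Corollary~\ref{cor1} (the paper's proof simply says it "follows the same form" as Proposition~\ref{prop:++++} and then invokes Corollary~\ref{cor1} using the fact that $a'(x,1)$ always has a real root when $\omega_{4,2}^2=-1$). Your explicit treatment of the boundary case $\varepsilon=0$, where $b'(\pm k',1)=\varepsilon k'^2$ vanishes and the section degenerates to the real $y=0$ section, is a point the paper leaves implicit, and is a welcome addition rather than a deviation.
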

\begin{proof}
	The proof follows the same form of the proof of Propsition~\ref{prop:++++}, but the $\varepsilon\to 0$ limiting K3 is isomorphic to the Kummer surface $\Kum{(E_1\times E_2)}$ where the elliptic curves are now given by
	\beq
\begin{split}
 E_1\colon \  y_1^2 = u^4 + \kappa  u^2v^2 +v^4, \quad   E_2\colon \  y_2^2 = x^4 + \mu  x^2z^2 -z^4.
\end{split}
\eeq
Furthermore, since $\kappa_\varepsilon,\mu_\varepsilon\ne\pm 2$ and $\lambda_\varepsilon>\kappa_\varepsilon\mu_\varepsilon$ for all $\varepsilon\in[0,\infty)$ we see from the above discussion that $a'(x,1)$ always has a real root. Therefore $X_\varepsilon$ satisfies the conditions of Corollary~\ref{cor1} when $\varepsilon\in(0,\infty)$ showing there exists a real section.
\end{proof}

Let us now be more specific about the number of real sections that exist.
\begin{corollary}
	Let $\kappa, \mu , \lambda \in \mathbb{R}$ satisfying Equation~(\ref{eqn:Bconstraints}), $\omega_2=\omega_{4,1}^2=1$, and $\omega_{4,2} =-1$. Define the real-valued functions $\kappa_\varepsilon,  \mu_\varepsilon,  \lambda_\varepsilon  \in \mathbb{R}$, over $\varepsilon \in [0, \infty)$ as in Equation~(\ref{eqn:family}). Then the number of sections in Equations~(\ref{eqn:sections}) that are real over the reals is
	\begin{enumerate}
		\item $6$ pairs if $\kappa<-2$,
		\item $4$ pairs if $-2<\kappa<2$,
		\item $2$ pairs if $\kappa>2$.
	\end{enumerate}
\end{corollary}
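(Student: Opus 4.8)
The plan is to reduce the count to a sign analysis of three real quartics, with Lemma~\ref{lem:real_section} serving as the reality criterion. I would fix $\omega_{4,1}=1$ (the case $\omega_{4,1}=-1$ merely interchanges the roles of $\sigma_{i,2}$ and $\sigma_{i,3}$ below and yields identical totals). With $\omega_2=\omega_{4,1}=1$ and $\omega_{4,2}^2=-1$, the polynomials entering Equation~(\ref{eqn:K3normab}) are
\[
a'(x,1)=x^4+\mu x^2-1,\qquad b'(x,1)=\kappa x^4+\lambda x^2-\kappa,
\]
and the twelve candidate sections in Equation~(\ref{eqn:sections}) are indexed by the roots of the three quartics $a'$, $b'-2a'$, and $b'+2a'$. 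First I would observe that each of these, viewed as a quadratic in $x^2$, has constant term and leading coefficient of opposite sign (after dividing $b'\mp 2a'$ by the nonzero factor $\kappa\mp 2$, allowed by Equation~(\ref{eqn:Bconstraints})), so the product of its two $x^2$-roots equals $-1$. Hence each quartic has \emph{exactly} two real roots, namely $\pm k'$, $\pm m'$, $\pm l'$ in the notation of Equation~(\ref{eqn:Defklm'}), while the remaining four roots of each are purely imaginary and contribute no real section. Thus only the real roots must be tested against the positivity conditions of Lemma~\ref{lem:real_section}.

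For the pair $\pm k'$ (the real roots of $a'$) the relevant quantity is $\omega_2 b'(\pm k',1)=b'(k',1)$. Using $a'(k',1)=0$, i.e.\ $(k')^4=1-\mu(k')^2$, a one-line substitution gives
\[
b'(k',1)=(\lambda-\kappa\mu)\,(k')^2>0,
\]
since $\lambda>\kappa\mu$ by Equation~(\ref{eqn:Bconstraints}). Therefore $\pm k'$ \emph{always} yield real sections, contributing $2$ pairs in every case. For $\pm m'$ and $\pm l'$ the conditions of Lemma~\ref{lem:real_section} amount to $a'(m',1)>0$ and $a'(l',1)>0$ respectively; I would read off these signs from the factorization $a'(x,1)=\bigl(x^2-(k')^2\bigr)\bigl(x^2+1/(k')^2\bigr)$ over $\bR$, whose second factor is positive, so that for $t>0$ one has $a'(t,1)>0$ precisely when $t>k'$. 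Since $g(t)=1/t^2-t^2$ is strictly decreasing on $(0,\infty)$ with $\mu=g(k')$, $\chi'=g(l')$, $\psi'=g(m')$, the comparisons $l'>k'$ and $m'>k'$ become $\chi'<\mu$ and $\psi'<\mu$.

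The final step is to evaluate these two inequalities via the identities
\[
\chi'-\mu=\frac{\lambda-\kappa\mu}{\kappa+2},\qquad \psi'-\mu=\frac{\lambda-\kappa\mu}{\kappa-2},
\]
which, using $\lambda-\kappa\mu>0$, reduce them to $\kappa<-2$ and $\kappa<2$ respectively. Combining the three contributions: $\pm k'$ always gives $2$ pairs; $\pm l'$ adds $2$ pairs exactly when $\kappa<-2$; and $\pm m'$ adds $2$ pairs exactly when $\kappa<2$. This produces $6$ pairs for $\kappa<-2$, $4$ pairs for $-2<\kappa<2$, and $2$ pairs for $\kappa>2$, as claimed. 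The only genuinely delicate point will be the bookkeeping of which real root lies in the region $\{|x|>k'\}$ where $a'$ is positive; once the factorization of $a'$ and the monotonicity of $g$ are in hand, everything else is routine, and the argument is uniform across the family since $\lambda_\varepsilon-\kappa_\varepsilon\mu_\varepsilon=\varepsilon>0$ throughout.
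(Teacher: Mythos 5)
Your proof is correct and follows essentially the same route as the paper's: both count real sections by applying Lemma~\ref{lem:real_section} to the two real roots of each of the three quartics $a'$, $b'\pm 2a'$, and both reduce the decision to the signs of $\chi'-\mu=(\lambda-\kappa\mu)/(\kappa+2)$ and $\psi'-\mu=(\lambda-\kappa\mu)/(\kappa-2)$, which with $\lambda>\kappa\mu$ yields exactly the three-way split at $\kappa=\pm2$. Your sign bookkeeping is a bit cleaner than the paper's (you use $b'=\kappa a'+(\lambda-\kappa\mu)x^2$ at $\pm k'$ and convert the remaining conditions to $a'(m',1)>0$, $a'(l',1)>0$ via the factorization of $a'$, whereas the paper argues case by case from the leading coefficients and the ordering of $k'_\varepsilon,l'_\varepsilon,m'_\varepsilon$), and the only blemish is the slip ``remaining four roots'' where you mean the remaining \emph{two} roots of each quartic.
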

\begin{proof}
First recall that Equation~(\ref{eqn:family}) induces real $\varepsilon$-linear functions $\chi'_\varepsilon,\psi'_\varepsilon$ given by Equation~(\ref{eqn:chieps}), repeated here for convenience, 
\begin{equation}
	\chi'_\varepsilon=\mu+\frac{\varepsilon}{\kappa+2}, \quad \psi'_\varepsilon=\mu+\frac{\varepsilon}{\kappa-2},
\end{equation}
which induce equations for the roots of $a'(x,1)$ and $b'(x,1)\pm 2a'(x,1)$, $k'_\varepsilon,l'_\varepsilon,m'_\varepsilon\in\bR_{>0}$ by Equation~(\ref{eqn:Defklm'}). In all cases $a'(x,1),b'(x,1)\pm 2a'(x,1)$ are quartics with two real roots. $a'(x,1)$ always has a positive leading coefficient, but the sign of the leading coefficient $b'(x,1)\pm 2a'(x,1)$ depends on which case we're in.

First let $\kappa<-2$. Then $b'(x,1)\pm 2a'(x,1)$ both have negative leading coefficients and
\begin{equation}
		\chi'_\varepsilon<\psi'_\varepsilon<\mu_\varepsilon=\mu,
\end{equation}
for $\varepsilon>0$. This shows that
\begin{equation}
	0<k_\varepsilon'<m'_\varepsilon<l'_\varepsilon
\end{equation}
for all $\varepsilon>0$. This shows that
\begin{equation}
		\begin{split}
			b'(\sigma_{i,1})>0& \quad\text{for}\quad \sigma_{i,1}\in \{\pm k'_\varepsilon \},\\
			b'(\sigma_{i,2})>0& \quad\text{for}\quad \sigma_{i,2}\in \{\pm m'_\varepsilon\},\\
			b'(\sigma_{i,3})<0& \quad\text{for}\quad \sigma_{i,3}\in \{\pm l'_\varepsilon\}.
		\end{split}
	\end{equation}
Therefore the $6$ pairs of sections in Equation~(\ref{eqn:sections}) where $\sigma_{i,1}\in \{\pm k'_\varepsilon \}$, $\sigma_{i,2}\in \{\pm m'_\varepsilon\}$, or $\sigma_{i,3}\in \{\pm l'_\varepsilon\}$ are real by Lemma~\ref{lem:real_section}. None of the other $6$ pairs of sections in Equation~(\ref{eqn:sections}) are over the reals.

Now let $-2<\kappa<2$. Then $b'(x,1)- 2a'(x,1)$ has a negative leading coefficient and $b'(x,1)+ 2a'(x,1)$ has a positive leading coefficient. Furthermore,
\begin{equation}
	\psi'_\varepsilon<\mu_\varepsilon<\chi'_\varepsilon, \quad\text{so}\quad 0< l'_\varepsilon < k' < m'_\varepsilon 
\end{equation}
for all $\varepsilon>0$. This shows that
\begin{equation}
		\begin{split}
			b'(\sigma_{i,1})>0& \quad\text{for}\quad \sigma_{i,1}\in \{\pm k'_\varepsilon \},\\
			b'(\sigma_{i,2})>0& \quad\text{for}\quad \sigma_{i,2}\in \{\pm m'_\varepsilon\},\\
			b'(\sigma_{i,3})>0& \quad\text{for}\quad \sigma_{i,3}\in \{\pm l'_\varepsilon\}.
		\end{split}
	\end{equation}
	Therefore the only pairs of real sections in Equation~(\ref{eqn:sections}) over the reals  are the four when $\sigma_{i,1}= \pm k'_\varepsilon$ or $\sigma_{i,2}= \pm m'_\varepsilon$ by Lemma~\ref{lem:real_section}.

Finally let $\kappa>2$. Then $b'(x,1)\pm 2a'(x,1)$ both have positive leading coefficients, and
\begin{equation}
	\mu_\varepsilon<\chi'_\varepsilon<\psi'_\varepsilon, \quad\text{so}\quad 0< m'_\varepsilon < l'_\varepsilon < k'_\varepsilon 
\end{equation}
for all $\varepsilon>0$. This shows that
\begin{equation}
		\begin{split}
			b'(\sigma_{i,1})>0& \quad\text{for}\quad \sigma_{i,1}\in \{\pm k'_\varepsilon \},\\
			b'(\sigma_{i,2})<0& \quad\text{for}\quad \sigma_{i,2}\in \{\pm m'_\varepsilon\},\\
			b'(\sigma_{i,3})>0& \quad\text{for}\quad \sigma_{i,3}\in \{\pm l'_\varepsilon\}.
		\end{split}
	\end{equation}
	Therefore the only pairs of real sections in Equation~(\ref{eqn:sections}) over the reals  are the two when $\sigma_{i,1}= \pm k'_\varepsilon$ by Lemma~\ref{lem:real_section}.
\end{proof}

Note that for only two of the cases above do real sections over the reals exist for $\varepsilon<0$.
\begin{corollary}
Let $\kappa, \mu , \lambda \in \mathbb{R}$ satisfying Equation~(\ref{eqn:Bconstraints}), $\omega_2=\omega_{4,1}^2=1$, and $\omega_{4,2} =-1$. Define the real-valued functions $\kappa_\varepsilon,  \mu_\varepsilon,  \lambda_\varepsilon  \in \mathbb{R}$, over $\varepsilon \in (-\infty, 0)$ as in Equation~(\ref{eqn:family}). Then the number of sections in Equations~(\ref{eqn:sections}) that are real over the reals is 
	\begin{enumerate}
		\item $2$ pairs if $-2<\kappa<2$,
		\item $4$ pairs if $\kappa>2$.
	\end{enumerate}
\end{corollary}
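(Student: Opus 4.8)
The plan is to mirror the sign analysis carried out in the companion corollary for $\varepsilon\in[0,\infty)$, tracking carefully the two places where the sign of $\varepsilon$ enters. Recall from Lemma~\ref{lem:real_section} (with $\omega_2=\omega_{4,1}=1$) that the section attached to a real root $\sigma_{i,1}$ of $a'(x,1)$ descends to $\bR(u,v)$ precisely when $b'(\sigma_{i,1},1)>0$; the section attached to a real root $\sigma_{i,2}$ of $b'(x,1)-2a'(x,1)$ descends precisely when $b'(\sigma_{i,2},1)>0$; and the section attached to a real root $\sigma_{i,3}$ of $b'(x,1)+2a'(x,1)$ descends precisely when $b'(\sigma_{i,3},1)<0$. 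As established before Proposition~\ref{prop:+++-2}, since $\omega_{4,2}^2=-1$ each of $a'(x,1)$ and $b'(x,1)\pm 2a'(x,1)$ has exactly two real roots, namely $\pm k'_\varepsilon$, $\pm m'_\varepsilon$, $\pm l'_\varepsilon$, so it suffices to determine the sign of $b'$ at each of these six points.

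First I would record the two identities that isolate the $\varepsilon$-dependence. Writing $b'(x,1)=\kappa\,a'(x,1)+(\lambda-\kappa\mu)x^2$ and using $\lambda_\varepsilon-\kappa\mu=\varepsilon$ gives $b'(\sigma_{i,1},1)=\varepsilon\,\sigma_{i,1}^2$ at any root of $a'$; hence for $\varepsilon<0$ one has $b'(\pm k'_\varepsilon,1)<0$, and \emph{no} $\sigma_{i,1}$-section is real. This is precisely the qualitative difference from the $\varepsilon>0$ case, where those four sections were always real. At a real root of $b'\mp 2a'$ one instead has $b'(\sigma_{i,2},1)=2a'(\sigma_{i,2},1)$ and $b'(\sigma_{i,3},1)=-2a'(\sigma_{i,3},1)$, so both remaining reality conditions reduce to the positivity of $a'$ at the relevant root. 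Since $a'(x,1)=x^4+\mu x^2-1$ is negative for $0\le |x|<k'_\varepsilon$ and positive for $|x|>k'_\varepsilon$ (its only positive root in $x^2$ being $k'^2_\varepsilon$), these conditions become exactly $m'_\varepsilon>k'_\varepsilon$ for the $\sigma_{i,2}$-sections and $l'_\varepsilon>k'_\varepsilon$ for the $\sigma_{i,3}$-sections.

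The remaining step is to decide these two inequalities for $\varepsilon<0$. Because $k'_\varepsilon,m'_\varepsilon,l'_\varepsilon$ depend on $\mu,\psi'_\varepsilon,\chi'_\varepsilon$ through the strictly decreasing map $t\mapsto \tfrac12\big(-t+\sqrt{t^2+4}\big)$, one has $m'_\varepsilon>k'_\varepsilon\iff \psi'_\varepsilon<\mu$ and $l'_\varepsilon>k'_\varepsilon\iff \chi'_\varepsilon<\mu$. Substituting $\chi'_\varepsilon=\mu+\varepsilon/(\kappa+2)$ and $\psi'_\varepsilon=\mu+\varepsilon/(\kappa-2)$ from Equation~(\ref{eqn:chieps}), for $\varepsilon<0$ these read $m'_\varepsilon>k'_\varepsilon\iff \kappa>2$ and $l'_\varepsilon>k'_\varepsilon\iff \kappa>-2$. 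Assembling the count: for $-2<\kappa<2$ only the $\sigma_{i,3}$-sections (at $\pm l'_\varepsilon$) survive, giving $2$ pairs; for $\kappa>2$ both the $\sigma_{i,2}$- and $\sigma_{i,3}$-sections survive, giving $4$ pairs; in either case all $\sigma_{i,1}$-sections have dropped out. This yields the two stated alternatives.

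I expect the only genuine obstacle to be sign bookkeeping rather than any conceptual difficulty: one must correctly propagate the factors $\omega_2$ and $\omega_{4,1}$ through the three positivity conditions of Lemma~\ref{lem:real_section}, keep the root-classes $b'\mp 2a'$ paired with the correct inequalities ($b'>0$ versus $b'<0$), and remember that the monotonicity of $t\mapsto\tfrac12(-t+\sqrt{t^2+4})$ reverses the inequality between a defining parameter and its associated root. The conceptual content—that lowering $\varepsilon$ through $0$ deletes the four $a'$-sections outright, while the fate of the other eight is governed solely by whether $\kappa$ exceeds $2$ or $-2$—is exactly what separates this statement from its $\varepsilon>0$ companion.
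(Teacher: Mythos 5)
Your proof is correct and follows essentially the same route as the paper: both arguments reduce the count to comparing $m'_\varepsilon$ and $l'_\varepsilon$ with $k'_\varepsilon$, which for $\varepsilon<0$ is governed by the signs of $\varepsilon/(\kappa-2)$ and $\varepsilon/(\kappa+2)$, yielding the orderings $m'_\varepsilon<k'_\varepsilon<l'_\varepsilon$ for $-2<\kappa<2$ and $k'_\varepsilon<l'_\varepsilon<m'_\varepsilon$ for $\kappa>2$. Your bookkeeping via the identities $b'=\kappa a'+\varepsilon x^2$ at roots of $a'$ and $b'=\pm 2a'$ at roots of $b'\mp 2a'$ is a slightly cleaner way of getting the sign conditions than the paper's leading-coefficient argument, but it is the same proof in substance.
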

\begin{proof}
When $-2<\kappa<2$
\begin{equation}
	1< m'_\varepsilon < k'_\varepsilon < l'_\varepsilon 
\end{equation}
for $\varepsilon<0$. Therefore the only real section over the reals in Equation~(\ref{eqn:sections}) is when $\sigma_{i,3}=\pm l'_\varepsilon$.

When $\kappa>2$
\begin{equation}
	1< k'_\varepsilon < l'_\varepsilon < m'_\varepsilon
\end{equation}
for $\varepsilon<0$. Therefore the only real sections over the reals in Equation~(\ref{eqn:sections}) are when $\sigma_{i,2}=\pm m'_\varepsilon$ or $\sigma_{i,3}=\pm l'_\varepsilon$.
\end{proof}

This shows us that we can extend the family $X_\varepsilon$ of real elliptically fibered K3 surfaces with a real section from Proposition~\ref{prop:+++-2} to include $\varepsilon\in\bR$ when $\kappa>-2$. However, an interesting wall crossing phenomenon occurs when crossing $\varepsilon=0$. For example, when $\kappa>2$ the real sections are $(x,y,z)=(\pm m'_\varepsilon,\pm\sqrt{b'(m'_\varepsilon, 1)} \frac{u^2+ v^2}{\sqrt{2}},1)$, $(\pm l'_\varepsilon,\pm\sqrt{-b'(l'_\varepsilon, 1)} \frac{u^2- v^2}{\sqrt{2}},1) $, whereas for $\varepsilon>0$ the real section is 
\[
(x,y,z)=(k',\sqrt{ b'(k', 1)} uv,1). 
\]
For all $\kappa\in\bR$, if we take the total number of real sections for $\varepsilon>0$ plus the number of real sections for $\varepsilon<0$ we get $6$ pairs. Furthermore, in the isotrivial limit there are always $2$ real sections. This is different than the $(+,+,+,+)$ case where in the isotrivial limit there were $4$ real sections, and this is what determines the sign difference of the $O$-planes. Let us now compare these two cases a little closer.

To draw a direct comparison to the case considered in Section~\ref{sec:++++} let us set $\kappa,\mu<-2$ and $\lambda>\kappa\mu$. Then the only distinction between the two cases is the value of $\omega_{4,2}^2$. When $\omega_{4,2}^2=1$ the $B$-field is trivial and $X_\varepsilon$ has $12$ pairs of real fibers. When $\omega_{4,2}^2=-1$ the $B$-field is nontrivial and $X_\varepsilon$ has $6$ pairs of real fibers. Note that in the isotrival limit the locations of the $I_0^*$ fibers are exactly the same in these two cases, however; since the case with $\omega_{4,2}^2=-1$ has half as many real sections over the reals, its $O$-plane charge should be half as much and we see the $O$-planes should have charge $(+,+,+,-)$. In type IIB orientifold theories the charge spectrum should not depend on the complex structure of the base, but rather the complexified K\"ahler structure. What we see here is that the locations of the $I_0^*$ fibers are determined by the complex structure of the base, but there are two twisting factors that determine their sign, the Brauer group and the $B$-field. In this case we only have a twisting by the $B$-field and in the $(+,+,-,-)$ case we only had twisting by the Brauer group. Notice these two types of twisting interact through $\kappa$. The $B$-field only depends on the value of $\omega_{4,2}^2$ and the locations of the $I_0^*$ fibers depend only on $\kappa$. However, changing $\kappa$ changes which sections are real over different regions of $\varepsilon$.

\section{Conclusion}
\subsection{Physics interpretation and dualities}
We have seen in Section~\ref{sec:real} that our family
of complex elliptically fibered K3's from Section~\ref{sec:rank17fam}
admits three different families of
orientifold structures compatible with
the elliptic fibration and a species-$1$ (or ``usual''
in the terminology of Nikulin-Saito) antiholomorphic involution
on the base $\bP^1$.  These are summarized in Table \ref{tab:3realfam}.
There are no corresponding families of orientifolds for a species-$0$
(or ``spin'') antiholomorphic involution on $\bP^1$
because of Lemma \ref{lem:species0}.  All three families
from Table \ref{tab:3realfam}
admit degeneration to an isotrivial elliptic fibration on
the Kummer surface of a product of two elliptic curves,
where the O-plane charges of the $I_0^*$ fibers are one of
the three possibilities 
$(+,+,+,+)$, $(+,+,-,-)$, and $(+,+,+,-)$. 
In this way, we are led back to the study of dualities
and D- and O-plane charges on orientifolds on products of
elliptic curves, as in Section~ \ref{sec:twoellcurves1} and~\ref{sec:twoellcurves}.

Now we review with a simple but nontrivial example,
namely the case of \eqref{eqn:K3normab} with $\omega_{4,1}^2=\omega_{4,2}^2=1$,
i.e., $y^2=\pm\left(ax^4+bx^2z^2+az^4\right)$, where $a$ and $b$ are
of the form $a = u^4 + \kappa u^2 v^2 + v^4$,
$b = \mu u^4 + \lambda u^2 v^2 + \mu v^4$, $\kappa,\,\lambda,\,\mu\in \bR$
and $[u:v]\in \bP^1$ (the base of our elliptic fibration).  We take the
Galois involution on the base $\bP^1$ to be simply
$[u:v]\mapsto[\overline u:\overline v]$ and on the fibers to be given
by $(x,y,z)\mapsto(\overline x,\overline y, \overline z)$.
With $\mu>0$, $|\kappa|<2$, $|\lambda|<2\mu $, and the minus sign
in front, i.e., $y^2=-\left(ax^4+bx^2z^2+az^4\right)$, we get an elliptic
fibration with fibers over $\bP^1_\bR$ that are species $0$ curves of genus $1$,
since $a$ and $b$ are positive everywhere on $\bP^1_\bR$, 
with Jacobians having the same $j$-invariant, corresponding to changing
the sign in front. This is the case where the twisting by the Brauer group does not vanish, so we expect a physical duality between the
elliptically fibered K3 in the first case, which has no real points,
and one having real locus with species $2$ fibers and a $2$-torsion
twist.  (The twist is $2$-torsion since our genus-$1$ curve over
$\bR(u)$ has rational points over the degree-$2$ extension field $\bC(u)$.)
\subsection{Open problems and future work}

All string theories on orientifolds of $\bR^{8,0}\times E$ where $E$ is an elliptic curve were considered in \cite{MR3267662,MR3316647} except the heterotic theories. All the string theories in \cite{MR3267662,MR3316647} were related to each other by $T$-dualities and the remaining heterotic theories should be related by $S$-dualities. We plan to use heterotic/F-theory duality for the families of real K3 orientifolds considered here to describe the dual heterotic theories. This requires exploring how the real structure impacts the charge spectrum in the heterotic theory and will close the loop of all dual $D=8$ string theories.

Another immediate question to explore after clarrifying the role of the real structure in the charge spectrum of both string theory and F-theory orientifolds is what role does the real structure play in the charge spectrum of M-theory. We also plan to look at how M/F-theory duality can be used in a similar fashion to \cite{Acharya2022} with shifted dimension. This will allow us to describe the type IIA orientifold theories considered in \cite{MR3267662,MR3316647} in terms of M-theory. Understanding the role the real structure plays in the charge spectrum of these different settings will give a greater understanding of how charges transform under more general dualities than $T$-duality.

\bibliographystyle{amsplain}
\bibliography{K3orientifolds}
\end{document}